\def\be{\begin{equation}}
\def\ee{\end{equation}}
\def\bea{\begin{eqnarray}}
\def\eea{\end{eqnarray}}
\author{}
\title{}
\DeclareMathOperator*{\argmin}{\arg\!\min}
\DeclareMathOperator*{\tr}{\normalfont\textrm{trace}}
\newtheorem{assumption}{Assumption}
\newtheorem{corollary}{Corollary}
\newtheorem{example}{Example}
\newtheorem{lemma}{Lemma}
\newtheorem{proposition}{Proposition}
\newtheorem{theorem}{Theorem}
\renewcommand{\arraystretch}{1.20}
\newcommand\blfootnote[1]{
\begingroup
\renewcommand\thefootnote{}\footnote{#1}
\addtocounter{footnote}{-1}
\endgroup
}
\begin{document}

\setlength{\abovedisplayskip}{8pt}
\setlength{\belowdisplayskip}{8pt}

\begin{titlepage}

\begin{center}

\begin{spacing}{1.5}
{\Large  \textbf{Panel Data Estimation and Inference: \\Homogeneity versus Heterogeneity}}
\end{spacing}

\bigskip

$^{\ast}${\sc Jiti Gao}, $^{\dag}${\sc Fei Liu}, $^{\ast}${\sc Bin Peng} and $^{\ddag}${\sc Yayi Yan}\blfootnote{Gao and Peng would like to acknowledge the Australian Research Council Discovery Projects Program for its financial support under Grant Numbers: DP250100063.  Liu's research was financially supported by National Natural Science Foundation of China under Grant Number 72203114. Yan acknowledges the financial support by the NSFC under the grant number 72303142 and the Fundamental Research Funds for the Central Universities under grant numbers 2022110877 and 2023110099. The authors contributed equally to this paper and are credited in alphabetical order.}   

\bigskip

$^{\ast}$Monash University

\medskip

$^{\dag}$Nankai University
 
\medskip

$^\ddag$Shanghai University of Finance and Economics

\bigskip

\today

\end{center}

\begin{abstract}
 
In this paper, we define an underlying data generating process that allows for different magnitudes of cross-sectional dependence, along with time series autocorrelation. This is achieved via high-dimensional moving average processes of infinite order (HDMA($\infty$)). Our setup and investigation integrates and enhances homogenous and heterogeneous panel data estimation and testing in a unified way.  To study HDMA($\infty$), we extend the Beveridge-Nelson decomposition to a high-dimensional time series setting, and derive a complete toolkit set. We exam homogeneity versus heterogeneity using Gaussian approximation, a prevalent technique for establishing uniform inference. For post-testing inference, we derive central limit theorems through Edgeworth expansions for both homogenous and heterogeneous settings. Additionally, we showcase the practical relevance of the established asymptotic theory by (1). connecting our results with the literature on grouping structure analysis, (2). examining a nonstationary panel data generating process, and (3). revisiting the common correlated effects (CCE) estimators. Finally, we verify our theoretical findings via extensive numerical studies using both simulated and real datasets.

\medskip
	
\noindent {\it Keywords}: homogeneity, heterogeneity, weak and strong cross-sectional dependence, Gaussian approximation, (non)stationary panel  

\medskip
	
\noindent{\it JEL Classification:}  C12, C18, C23, C55
	
\end{abstract}

\end{titlepage}

\section{Introduction}\label{Sec1}

Panel data analysis has seen its popularity in the past thirty years or so. Comprehensive reviews have been conducted at different stages, while the literature evolves. See, for example, \cite{ARELLANO20013229}, \cite{Petersen2009}, \cite{CP2015}, \cite{hsiao2022analysis}, etc. Among all challenges raised in different surveys, this article aims to offer a unified framework and a set of toolkit to 

\begin{enumerate}[leftmargin=24pt, parsep=2pt, topsep=2pt]
    \item simultaneously test homogeneity (i.e., $\mathbb{H}_0$) vs. heterogeneity (i.e., $\mathbb{H}_1$);
    \item develop valid inference under either $\mathbb{H}_0$ or $\mathbb{H}_1$, and account for both weak and strong cross-sectional dependence (WCD and SCD), as well as the dependence along the time dimension.
\end{enumerate}
In what follows, we review the relevant literature, point out the challenges, and then highlight our contributions.

We start with the hypothesis testing about homogeneity against heterogeneity, which has always been a central topic in empirical studies. Without loss of generality, we consider a simple setup as follows:

\begin{eqnarray}\label{def.xit}
    x_{it}= \mu_i + \epsilon_{it},
\end{eqnarray}
where $x_{it}$, $\mu_i$, and $\epsilon_{it}$ are all scalars, $(i,t)\in [N]\times [T]$, $[L]\coloneqq \{1,\ldots,L \}$ for any given positive integer $L$, $N$ stands for the number of individuals, and $T$ stands for the total number of periods.
More often than not, the key hypotheses are 

\begin{eqnarray}\label{def.test}
&& \mathbb{H}_0:  \mu_i = \mu \text{ for all } i; \quad\quad\mathbb{H}_1:  \mu_i\neq \mu \text{ for some } i.
\end{eqnarray}
Sometimes, the hypotheses are even more straightforward, e.g.,  

\begin{eqnarray*}
&& \mathbb{H}_0:  \mu_i = 0 \text{ for all } i; \quad\quad \mathbb{H}_1:  \mu_i\neq 0 \text{ for some } i.
\end{eqnarray*}
The model \eqref{def.xit} extends the location model of \cite{Lazarus} to panel data settings. To settle \eqref{def.test}, a large literature (\citealp{PY2008,GAO2020329}; and references therein) adopts the quadratic test statistics. However, as explained in \cite{fan2015power}, the tests based on quadratic forms often suffer from low powers, and fail to detect the sparse alternatives. Therefore, \cite{fan2015power} and \cite{YU2024105458} provide power enhanced test statistics to tackle this issue. 

To the best of our knowledge, the above literature largely (if not all) ignores the dependence along both dimensions. As time series autocorrelation (TSA) has been well discussed in the literature (see \citealp{FanYao}; \citealp{Gao2007}), we justify the necessity of accounting for the cross-sectional dependence (CD) here. As surveyed by \cite{CP2015}, CD is likely to be the rule rather than the exception, and it sometimes goes beyond WCD due to omitted variables as documented in \cite{GX2021}. It is then reasonable to call for a complete toolkit set that is robust to the presence of dependence. 

To better present our motivations, we start with four datasets, of which $N$ and $T$ stand for the number of individuals and the number of time periods respectively. 

\begin{itemize}[leftmargin=12pt, parsep=2pt, topsep=2pt]
    \item[] \textbf{Dataset 1}: the U.S. macroeconomic dataset assembled by \cite{MN2016}.
    
    \item[] \textbf{Dataset 2}: the climate data of 37 stations from  the U.K. Meteorological Office.  
    
    \item[] \textbf{Dataset 3}: the bank equity return data constructed by \cite{Baron2021}.
    
    \item[] \textbf{Dataset 4}: the realized volatility data of 16 international stock markets.
\end{itemize}
We shall provide numerical evidences demonstrating that the four datasets exemplify both WCD and SCD, highlighting the need for a unified framework to model these varying dependencies. 

For each dataset, we observe $\{x_{it}\}$ as defined in \eqref{def.xit}, which yields the pairwise correlation $r_{ij}$ for $\forall i,j\in [N]$. For $\forall\tau \in[0,1]$, we can calculate:

\begin{eqnarray*}
    p(\tau) =\frac{1}{(N-1)N/2}\sum_{i = 2}^{N}\sum_{j=1}^{i-1} I(|r_{ij}| > \tau),
\end{eqnarray*}
where  $I(\cdot)$ stands for the indicator function. This allows us to plot $p(\tau)$ against $\tau$, as shown in Figure \ref{FG1} below. Here, $\tau$ represents a specific correlation threshold, while $p(\tau)$ measures the percentage of absolute correlation values that exceed the threshold. We also calculate the following measure:

\begin{eqnarray}\label{def.rho}
    \overline{\rho} =\frac{1}{N}\sum_{i=1}^N\sum_{j=1}^N |r_{ij}|,
\end{eqnarray}
which is adopted from Assumption C of \cite{BN2002}, and represents the magnitude of cross-sectional dependence of a panel dataset. Having presented these measures, we proceed.

\textbf{Dataset 1} --- We examine a time period spanning from October 2003 to September 2023, resulting in a total of $T = 240$ observations along the time dimension. After removing variables with missing values, we are left with $N = 127$ macro variables. In the first sub-figure of Figure \ref{FG1}, we observe that around 20\% of the absolute correlations $\{|r_{ij}|\mid i>j\}$  are greater than 0.8, and roughly more than 50\% of these correlations have absolute values exceeding 0.5. Additionally, $\overline{\rho}=66$ is approximately $N/2$. Thus, we find a high degree of correlation among these macro variables. An intuitive thought is that many of these macro variables are generated by the same set of unobservable shocks. 

\textbf{Dataset 2} --- We analyze temperature and sunshine data from the U.K. Meteorological office. There are 37 stations in total, widely distributed across the U.K. Each station reports both temperature and sunshine monthly, resulting in $N = 74$ individual time series. After handling missing values, we focus on the period from January 1950 to February 2023, resulting in $T = 878$ observations. In the second sub-figure of Figure \ref{FG1}, over 80\% of the absolute correlations $\{|r_{ij}|\mid i>j\}$  are greater than 0.5, and approximately 30\% of these correlations exceed 0.8. We have $\overline{\rho}=52$, which is almost the same as the sample size of the individual dimension. Thus, it is evident that these climate data exhibit a high degree of correlation.

\textbf{Dataset 3} --- We study a dataset of real bank equity returns for 46 advanced and emerging economies ($N=46$) in the period from 1870 to 2016 ($T=147$). This dataset, constructed by \cite{Baron2021}, illustrates the influence of banking crises on subsequent output gaps and credit contractions.  The estimated values for $p(\tau)$ are presented in the third sub-figure of Figure \ref{FG1}. Obviously, this dataset has the weakest cross-sectional dependence among the four datasets, and most of $r_{ij}$'s are less than 0.6, which might be a signal of WCD. The value $\overline{\rho}$ of $\eqref{def.rho}$ is 13 which is relatively small. 

\textbf{Dataset 4} --- A final example is a realized volatility dataset, which exhibits strong connections across different equity markets. This dataset consists of realized volatility data for 16 international stock market indices ($N=16$) and is computed using tick-by-tick stock index data from Refinitiv DataScope Select. The dataset covers 3809 common trading days ($T=3809$) over a 16-year period from January 4, 2005, to February 26, 2021. For this dataset, we compute the pairwise correlations between realized volatilities  and present the estimated values of $p(\tau)$ in the fourth sub-figure of Figure \ref{FG1}. More than 50\% of the correlations are above 0.5, indicating a strong connectedness in realized volatility across markets. The value $\overline{\rho}$ of $\eqref{def.rho}$ is 8.

\begin{figure}[htbp!]
\centering\includegraphics[scale = 0.25]{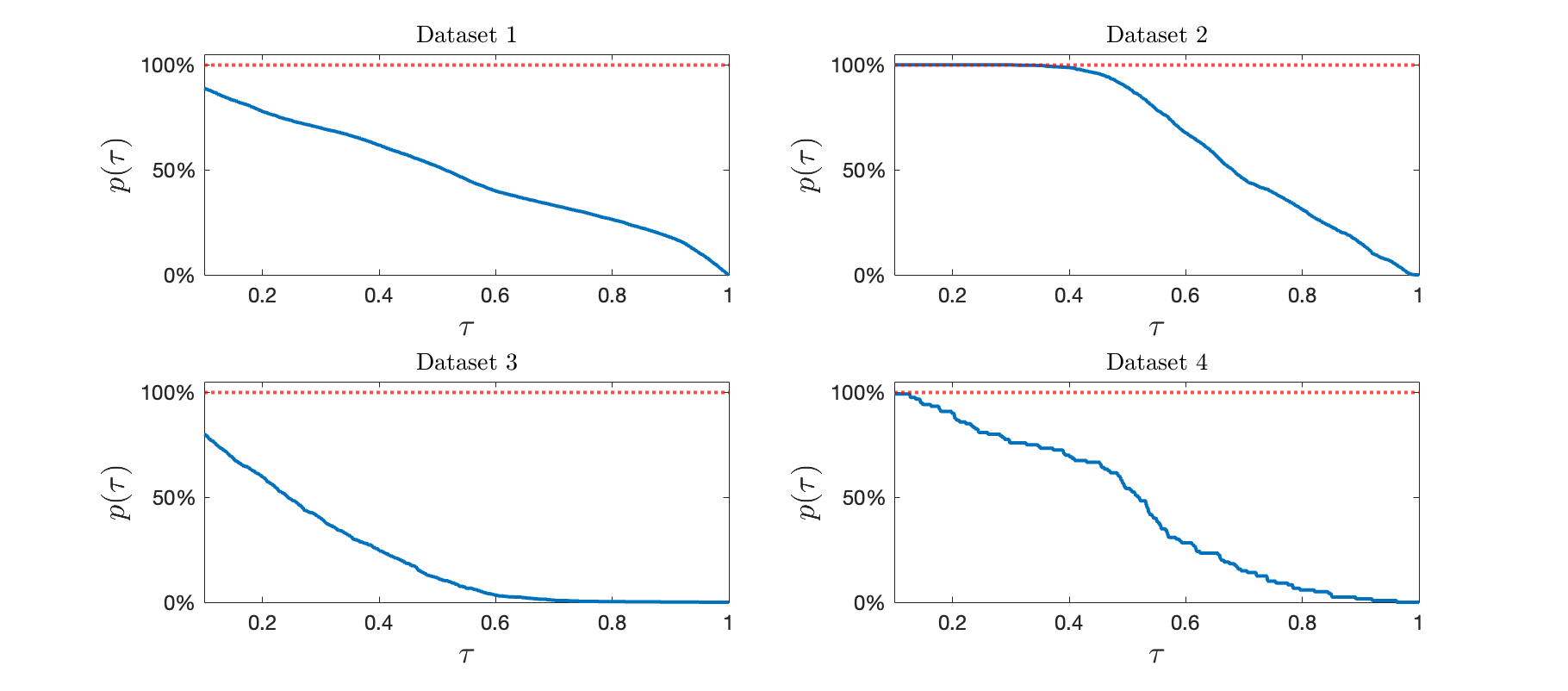}
\caption{$p(\tau)$'s of Datasets 1-4}\label{FG1}
\end{figure}
   
Four datasets offer examples of CD with different magnitude. The issue at hand is certainly a cause for concern, as inferring $\mu_i$ of \eqref{def.xit} is a cornerstone of data analysis (e.g., \citealp[Chapter 3]{Hamilton1994}; \citealp{Lazarus}; \citealp{CP2024}). Without a grasp of the dependence magnitude, the existing literature offers scant guidance on how to infer the homogenous/heterogeneous means, let alone more intricate scenarios including homogenous/heterogeneous trends (\citealp{ROBINSON20124,WSX2023}). See Example \ref{EX1} and Example \ref{EX2} of Section \ref{Sec2} for the purpose of demonstration.  To our knowledge, although \cite{CP2015} formalize the definitions of WCD and SCD, only Assumption 3 of \cite{goncalves_2011} addresses this issue using a set of high-level conditions. Yet, the underlying data generating mechanism remains underexplored.  In related research, \cite{ROBINSON2011}, \cite{ROBINSON2012} and \cite{LEE2016}  impose a linear system structure to represent cross-sectional dependence. This approach, while convenient in theory, is infeasible in practice due to the fact that there is no natural ordering in the cross-sectional dimension. When solely WCD is of interest, Assumption C of \cite{BN2002}, which regulates dependence across cross-sections and time using various moments, is often cited. Nonetheless, the underlying data generating process remains somewhat obscure.

Considering the aforementioned points, our contributions in this paper are as follows:

\begin{enumerate}[leftmargin=24pt, parsep=2pt, topsep=2pt]
    \item First, we define an underlying data generating process that allows for different magnitude of CD, along with TSA. This is achieved via high-dimensional moving average processes of infinite order (HDMA($\infty$)), which automatically generalizes the spatial structure introduced by Robinson and his co-authors in recent years. The framework is important in the sense that as noted by \citet[p. 187]{brockwell1991time} and \citet[pp. 33 \& 190]{FanYao}, the Wold decomposition theorem  ensures a formal linear representation exists for any stationary time series with no deterministic components, and HDMA($\infty$) naturally incorporates this result into a panel data framework. 
    
    \item To the best of our knowledge, HDMA($\infty$) has not been carefully explored in the literature of panel data analysis. Our setup and investigation significantly integrates and enhances both homogenous and heterogeneous panel data modelling and testing (such as \citealp{Pesaran2006, PY2008, fan2015power, YU2024105458}).  To study HDMA($\infty$), we extend the BN decomposition (e.g., \citealp{BN1981,PS1992}) to a high-dimensional time series setting, and derive a complete set of toolkit. 
    
    Our development offers theoretical justification for some high level assumptions of the literature (e.g., \citealp[Assumption 3]{goncalves_2011}, \citealp[Assumption C]{BN2002}), and also generalizes Assumption 2 of \cite{Pesaran2006} and many extensions since then. Additionally, our investigation complements the work of \cite{fan2015power}, who specifically study cases where $\frac{T}{\sqrt{N}}\to 0$, by considering a broader range of scenarios and relaxing the independence assumptions employed in \cite{PY2008} and \cite{YU2024105458}.
    
    \item We exam homogeneity against heterogeneity using Gaussian approximation, a prevalent technique for establishing uniform inference (e.g., \citealp{chernozhuokov2022improved}, and references therein). For post-testing inference, we derive Central Limit theorems through Edgeworth expansions for both homogenous and heterogeneous settings. Notably, the demand for Gaussian approximation in panel data analysis has been increasing recently, as exemplified in Section 4 of \cite{SJW_2024} and Section 4 of \cite{LLS2024}. Our study also contributes to this research direction by providing a set of foundational conditions and deriving a set of useful basic results.
    
    \item We showcase the practical relevance of the established asymptotic theory by (1). connecting our results with the literature on grouping structure analysis such as those surveyed in \cite{BM2015} and \cite{SSP2016}, (2). examining a nonstationary panel data generating process presented in \cite{PM1999}, and (3). revisiting the common correlated effects (CCE) estimators of \cite{Pesaran2006}. Typically, when investigating nonstationary panel data, one has to impose cross-sectional independence such as \cite{PM1999}, \cite{DGP2021} and \cite{HUANG2021198} due to technical constraints. Our study offers a set of complete toolkit to account for the dependence of unit root precesses.
    
    \item Finally, we evaluate our theoretical findings via extensive numerical studies using both simulated and real datasets.
\end{enumerate}

The remainder of this paper is structured as follows. In Section \ref{Sec2}, we present the underlying data generating process in detail, and show its practical relevance. The corresponding asymptotic properties under homogenous (i.e., $\mathbb{H}_0$ of \eqref{def.test}) and heterogeneous (i.e., $\mathbb{H}_1$ of \eqref{def.test}) settings are given in Sections \ref{Sec2.1} and \ref{Sec2.2} respectively. Building on Sections \ref{Sec2.1} and \ref{Sec2.2}, we provide the test statistic to exam \eqref{def.test} in Section \ref{Sec2.3}. In Section \ref{Sec3}, we revisit the CCE estimators of \cite{Pesaran2006}, and some results presented in \cite{PM1999} to showcase the practical relevance of the results of Section \ref{Sec2}. Section \ref{Sec4} conducts extensive simulation studies to exam our theoretical results. An empirical study is given in Section \ref{Sec5} to exam whether the rational expectations of financial markets are homogeneous or heterogeneous. Section \ref{Sec6} concludes with a few remarks. The preliminary lemmas and proofs are regulated to the online appendices.

\textbf{Notations} --- Before proceeding, we introduce some notations and present a few useful facts to facilitate development. Throughout, vectors and matrices are always in bold font. We let $\mathsf{i}$ be the imaginary unit; for a matrix $\mathbf{A}=\{a_{ij}\}_{m\times n}$, let $\mathbf{A}^+$ define Moore-Penrose inverse, and let 

\begin{eqnarray*}
&&\|\mathbf{A}\|_2=(\lambda_{\max}\{\mathbf{A}^\top \mathbf{A}\} )^{1/2},\quad\|\mathbf{A}\|_1 =\max_{j\in [n]}\sum_{i=1}^m |a_{ij}|,\notag \\
&&\|\mathbf{A}\|_\infty =\max_{i\in [m]}\sum_{j=1}^n |a_{ij}|,\quad |\mathbf{A}|_p =\left( \sum_{i=1}^m\sum_{j=1}^n a_{ij}^p\right)^{1/p} \quad\text{for}\quad p\ge 1,
\end{eqnarray*}
define respectively its Spectral norm, column norm, row norm, and entry wise norm; moreover, we always use $^\dag$ and $^\sharp$ to represent the column and row of a matrix, e.g.,

\[
\mathbf{A} =(\mathbf{a}_1^\dag,\ldots,\mathbf{a}_n^\dag)=(\mathbf{a}_1^\sharp,\ldots,\mathbf{a}_m^\sharp)^\top .
\]
Given two conformable matrices $\mathbf{A}$ and $\mathbf{B}$, we let $\mathbf{A}\circ \mathbf{B}$ denote its Hadamard product. For a vector $\mathbf{v} =(v_1,\ldots, v_p)^\top$, we let $|\mathbf{v}|_\infty\coloneqq \max_{i}v_i$. $\mathbf{I}_p$ stands for a $p\times p$ identify matrix, and when no misunderstanding arises, we write $\mathbf{I}$. $\mathbf{1}_p$ stands for a $p\times 1$ vector of ones. $\mathbf{e}_j$ always stands for a selection column vector with the $j^{th}$ individual being 1 and others being 0. For two positive constants $a$ and $b$, $a\asymp b$ stands for $a=O(b)$ and $b=O(a)$; for $a,b\in \mathbb{R}$, $a\wedge b=\min \{a,b\}$ and $a\vee b=\max \{a,b\}$. We always let $\Phi(x)$ and $\phi(x)$ be the CDF and PDF of the standard normal distribution, and let $\widetilde{\phi}(x)\coloneqq \exp(-x^2/2)$ for notational simplicity. Thus, $\sqrt{2\pi}\phi(x)=\widetilde{\phi}(x)$. The cumulant generating function of a random variable $x$ is defined by $C(u)=\log E[\exp(u x)],$ and we have

\begin{eqnarray*}
\kappa_r = C^{(r)}(0) =(-\mathsf{i})^r\frac{\mathrm{d}^r}{\mathrm{d} u^r}\log\psi(u)|_{u=0},
\end{eqnarray*}
where $\psi(u)$ defines the characteristic function of $x$. Finally, $E^*(\cdot)$ and $\text{Pr}^*(\cdot)$ always refer to the operations induced by the sample space.

\section{The Setup and Asymptotic Properties}\label{Sec2}

We firstly present the main results achieved in this paper, and then outline the establishment of some key results before presenting any asymptotic results.

\begin{figure}[H]
\centering
\begin{tikzpicture}
\usetikzlibrary{arrows}
\usetikzlibrary{shapes}

\tikzstyle{annot} = [text width=4em, text centered]

\node[rectangle, minimum size = 6mm] (Ho-1) at (-6.2,1) {\footnotesize Homogeneous};
\node[rectangle, minimum size = 6mm] (He-1) at (-6.2,-0.5) {\footnotesize Heterogeneous};
\node[rectangle, text width=3.5cm] (Un-1) at (3.9,0.8) {\scriptsize Uniform Inference via Gaussian Approximation};

\node[rectangle, text width=2.1cm] (Eg-1) at (-3,2.4) {\footnotesize Edgeworth Expansion};
\node[rectangle, text width=2.3cm] (Eg-2) at (0.7,2.4) {\footnotesize Bootstrap Statistics};

\node[ellipse, text width=1.8cm, text height=0.2cm, fill=teal!10] (Ho-2) at (-3,1) {\footnotesize
Theorem \ref{THM.1}  
Corollary \ref{COL.1}};

\node[ellipse, text width=1.8cm, text height=0.2cm, fill=orange!30] (Ho-3) at (0.5,1) {\footnotesize
Lemma \ref{LM.A7}  
Theorem \ref{THM.2}};

\node[ellipse, text width=1.8cm, text height=0.2cm, fill=teal!10] (He-2) at (-3,-0.5) {\footnotesize
Theorem \ref{THM.3}};

\node[ellipse, text width=1.8cm, text height=0.2cm, fill=orange!30] (He-3) at (0.5,-0.5) {\footnotesize
Lemma \ref{LM.A8}  
Theorem \ref{THM.4}};

\node[ellipse, text width=1.8cm, text height=0.2cm, fill=blue!10] (Un-2) at (3.9,-0.5) {\footnotesize
Theorems \ref{THM.5} \& \ref{THM.6}};

\end{tikzpicture}
\caption*{Main Results Achieved}
\end{figure}

\begin{enumerate}
    \item \textit{Edgeworth expansion for both homogeneous and heterogeneous cases}. Our derivation relies on the high-dimensional Beveridge-Nelson (BN) decomposition, as panel data can be viewed as a high-dimensional (HD) time series when stacked along the cross-sectional dimension. The main structure of the development for Theorems \ref{THM.1} and \ref{THM.3} is established by equation \eqref{def.chiu} and Lemma \ref{LM.A2}, while some essential technical details are provided in Lemmas \ref{LM.A4} and \ref{LM.A3}. 
    
    The strategy  is the same for both homogeneous and heterogeneous cases overall, although in the heterogeneous case we only take summation over the time dimension for each individual. Our development offers theoretical justification for some high level assumptions of the literature (e.g., \citealp[Assumption 3]{goncalves_2011}, \citealp[Assumption C]{BN2002}), and also generalize Assumption 2 of \cite{Pesaran2006} and many extensions since then. 
  
    \item \textit{Bootstrap statistics for both homogeneous and heterogeneous cases}.  The key technique for investigating our bootstrap statistics relies on the properties of the \(m\)-dependent time series. In deriving Edgeworth expansions for the homogeneous and heterogeneous bootstrap statistics, we adopt the proof strategy of \cite{tik1981}, which decomposes the \(m\)-dependent series into a sequence of segments. Elements within each segment may remain dependent, while different segments are 1-dependent conditional on the observed data. 
    
    By applying the asymptotic results for each segment (established in Lemmas \ref{LM.A7} and \ref{LM.A8}) and exploiting the independence of nonadjacent segments, we extend the Edgeworth expansion results from independent data to the \(m\)-dependent bootstrap series in Theorems \ref{THM.2} and \ref{THM.4} for the homogeneous and heterogeneous statistics, respectively.
    
    \item \textit{Gaussian approximation for uniform inference}. Applying the martingale decomposition technique to HDMA($\infty$), we establish that its partial sum processes can be uniformly approximated by the summation of independent vectors with negligible approximation errors. Building on this result, the Gaussian approximation theorem for high-dimensional independent random vectors developed in \cite{chernozhuokov2022improved} is used to justify the validity of Gaussian approximation for HDMA($\infty$) in Theorem \ref{THM.5}. 
    
    Based on Theorem \ref{THM.5}, the continuity of the maximum of the Gaussian distribution and the robust inference of high-dimensional covariance matrix estimation proposed in \cite{gao2024robust}, we derive high-dimensional Gaussian multiplier bootstrap approximations in Theorem \ref{THM.6} for inference purpose.
     
\end{enumerate}

\medskip

To proceed, we recall the notations defined in the end of Section \ref{Sec1}, and present a result which is independent of the assumptions to be adopted, and facilitates the Edgeworth expansion for both homogeneous and heterogeneous cases.

\begin{lemma}\label{LM.A4}
Let $\{ H_n(x)\mid n\ge 0\}$ be Probabilist's Hermite polynomials. The Fourier transformation of $\phi(x) H_n(x)$ is $ \widetilde{\phi}(x)(\mathsf{i}x)^n$.
\end{lemma}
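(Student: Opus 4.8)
The plan is to reduce the statement to two elementary facts about the standard normal density and let the Rodrigues formula do the rest. Throughout I would adopt the Fourier convention consistent with the characteristic-function convention fixed in the Notations, namely $\mathcal{F}[f](x) = \int_{\mathbb{R}} e^{\mathsf{i}x t} f(t)\, \mathrm{d}t$. With this choice the transform of $\phi$ is exactly the characteristic function of a standard normal variable, i.e. $\mathcal{F}[\phi](x) = \widetilde{\phi}(x)$, which serves as the base case.

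First I would record the Rodrigues representation of the Probabilist's Hermite polynomials, $H_n(x) = (-1)^n e^{x^2/2}\, \frac{\mathrm{d}^n}{\mathrm{d}x^n} e^{-x^2/2}$. Since $\phi(x) = (2\pi)^{-1/2} e^{-x^2/2}$, multiplying through by $\phi$ collapses the exponential factors and yields the clean identity $\phi(x) H_n(x) = (-1)^n \phi^{(n)}(x)$. This is the crucial reduction: it converts the polynomial weight $H_n$ into an $n$-fold derivative of the density, so the problem becomes one about Fourier transforms of derivatives rather than of polynomial-times-Gaussian products.

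Next I would invoke the differentiation rule for the Fourier transform. Because $\phi$ and all its derivatives are Schwartz functions, the integration-by-parts boundary terms vanish, and under the convention above one gets $\mathcal{F}[f'](x) = -\mathsf{i}x\,\mathcal{F}[f](x)$; iterating $n$ times from the base case gives $\mathcal{F}[\phi^{(n)}](x) = (-\mathsf{i}x)^n \widetilde{\phi}(x)$. Combining this with the identity of the previous step yields
\[
\mathcal{F}[\phi H_n](x) = (-1)^n \mathcal{F}[\phi^{(n)}](x) = (-1)^n(-\mathsf{i}x)^n\widetilde{\phi}(x) = (\mathsf{i}x)^n\widetilde{\phi}(x),
\]
which is the asserted formula.

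The only genuine obstacle here is bookkeeping rather than analysis: one must keep the two sign sources --- the $(-1)^n$ from Rodrigues and the $(-\mathsf{i}x)^n$ from differentiation --- aligned so that they combine cleanly into $(\mathsf{i}x)^n$, and one must ensure the Fourier sign convention matches the characteristic-function convention used elsewhere in the paper, since a different sign would flip the factor $\mathsf{i}^n$. As an independent cross-check I would reprove the identity via the exponential generating function $\sum_{n\ge 0}\frac{t^n}{n!}\phi(x)H_n(x) = \phi(x-t)$: transforming both sides (in the argument of $\phi$) gives $e^{\mathsf{i}xt}\widetilde{\phi}(x) = \widetilde{\phi}(x)\sum_{n\ge0}\frac{t^n}{n!}(\mathsf{i}x)^n$, and matching the coefficient of $t^n/n!$ reproduces the result, confirming that the conventions and signs are consistent.
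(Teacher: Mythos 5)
Your proof is correct, but your primary argument takes a genuinely different route from the paper's. The paper proves the lemma via the generating function: it multiplies $\exp(xu-u^2/2)=\sum_{n\ge 0}H_n(x)\,u^n/n!$ by $\phi(x)$ to collapse the left side into the shifted Gaussian $\frac{1}{\sqrt{2\pi}}\exp\left(-\frac{(x-u)^2}{2}\right)$, applies the Fourier transformation to both sides, recognizes the right side as $\exp(\mathsf{i}wu)\,\widetilde{\phi}(w)=\sum_{n\ge 0}\widetilde{\phi}(w)(\mathsf{i}w)^n\,u^n/n!$, and matches coefficients of $u^n/n!$ --- which is precisely the computation you relegate to your ``independent cross-check'' at the end. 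Your main argument instead goes through the Rodrigues formula, reducing $\phi(x)H_n(x)=(-1)^n\phi^{(n)}(x)$, and then iterates the differentiation rule $\mathcal{F}[f'](x)=-\mathsf{i}x\,\mathcal{F}[f](x)$ (legitimately, since $\phi$ is Schwartz and the boundary terms vanish), with the $(-1)^n$ from Rodrigues and the $(-\mathsf{i}x)^n$ from differentiation combining correctly into $(\mathsf{i}x)^n$; I verified the signs under your stated convention and they are right. The trade-off: your Rodrigues route handles each fixed $n$ directly and makes the sign bookkeeping transparent, while the paper's generating-function route dispatches all $n$ simultaneously in one identity and sidesteps justifying the derivative rule, at the cost of an (innocuous) interchange of the Fourier integral with the power series when matching coefficients --- a step the paper leaves implicit. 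Since your cross-check literally reproduces the paper's argument, your write-up in effect contains both proofs and confirms the conventions agree.
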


The result is built on the generating function of the Probabilist's Hermite polynomials, and, together with \eqref{def.chiu}, naturally connects cumulants of a random variable with the density function of a standard normal distribution.

\medskip

We are now ready to formulate our ideas. To exam \eqref{def.test}, we need to have a good understanding about both homogenous (i.e., $\mathbb{H}_0$) and heterogeneous (i.e., $\mathbb{H}_1$) cases. The investigation does not only offer post-testing inference under either $\mathbb{H}_0$ or $\mathbb{H}_1$, but also helps to establish the test statistic. Having said that, we respectively investigate \eqref{def.xit} under the null $\mathbb{H}_0$ in Section \ref{Sec2.1}, and under the alternative $\mathbb{H}_1$ in Section \ref{Sec2.2}. In Section \ref{Sec2.3}, we assemble the results of both sections to finalize the test statistic. 

\medskip

Firstly, we explain the necessity of accounting for WCD and SCD in a unified framework. For simplicity, suppose that $\mathbb{H}_0$ holds, and define the following HDMA($\infty$) process for $\{x_{it}\}$ of \eqref{def.xit}: 

\begin{eqnarray}\label{def.xt_1}
\mathbf{x}_t =\mu \cdot \mathbf{1}_N+\mathbf{B}(L)\pmb{\varepsilon}_t,
\end{eqnarray}
where $\mathbf{x}_t\coloneqq (x_{1t},\ldots, x_{Nt})^\top$, $\mathbf{B}(L)\coloneqq \sum_{\ell=0}^{\infty} \mathbf{B}_\ell L^\ell$ with $L$ being the lag operator, $\{\mathbf{B}_\ell\mid \ell\ge 0\}$ is a set of $N\times N$ matrices, $\pmb{\varepsilon}_t =(\varepsilon_{1t},\ldots, \varepsilon_{Nt})^\top$, and  $\{\varepsilon_{it}\}$ are independent and identically distributed (i.i.d.) over both $(i,t)$ with mean 0. Each $\mathbf{B}_\ell$ admits the following representation:

\begin{eqnarray}\label{def.B_ell}
    \mathbf{B}_\ell =(\mathbf{b}_{\ell 1}^\dag,\ldots, \mathbf{b}_{\ell N}^\dag)=(\mathbf{b}_{\ell 1}^\sharp,\ldots, \mathbf{b}_{\ell N}^\sharp)^\top.
\end{eqnarray}

The HDMA($\infty$) of \eqref{def.xt_1} offers the flexibility to account for different types of dependence. For example, simple algebra (Appendix \ref{SecA1}) shows three types of dependence as follows:

\begin{itemize}[leftmargin=24pt, parsep=2pt, topsep=2pt]
\item[] CD: $\text{Cov}(x_{i1},x_{j1}) = \sum_{\ell =0}^{\infty} \mathbf{b}_{\ell i}^{\sharp\top} \mathbf{b}_{\ell j}^\sharp$;
\item[] TSA: $\text{Cov}(x_{it},x_{is}) = \sum_{\ell=0}^{\infty} \mathbf{b}_{\ell+t-s, i}^{\sharp\top} \mathbf{b}_{\ell i}^\sharp$ for $t>s$;
\item[] $\text{CD}+\text{TSA}$: $\text{Cov}(x_{it},x_{js}) =\sum_{\ell=0}^{\infty} \mathbf{b}_{\ell+t-s, i}^{\sharp\top} \mathbf{b}_{\ell j}^\sharp$  for $t>s$.
\end{itemize}
Loosely speaking, we require the following conditions to hold: (i) CD does not necessarily shrink as $|i-j|$ increases, which is evident in view of Example \ref{EX2} below; (ii) TSA shrinks as $t-s$ increases, and may vary with respect to $i$; (iii) $\text{CD}+\text{TSA}$ inherits the properties of (i) and (ii).

To see the difference between WCD and SCD, we provide the following examples.

\medskip

\begin{example}\label{EX1}
When $\mathbf{B}_0=\mathbf{I}$ and  $\mathbf{B}_\ell=\mathbf{0}$ for $\ell\ge 1$, we have $\mathbf{x}_t=\mu\cdot \mathbf{1}_N+ \pmb{\varepsilon}_t$, which gives a set of i.i.d. panel data over both dimensions. This is an extreme case of WCD, and 

\begin{eqnarray*}
\frac{1}{\sqrt{NT}}\sum_{i=1}^N\sum_{t=1}^T(x_{it}-\mu) =\frac{1}{\sqrt{NT}}\sum_{t=1}^T\sum_{i=1}^N\varepsilon_{it},
\end{eqnarray*}
which will help us to drive the asymptotic distribution. In this case $\|\mathbf{B}_0\|_2=1<\infty$.
\end{example}

\begin{example}\label{EX2} 
Suppose that $\mathbf{x}_t$ is generated as follows. 

\begin{eqnarray*}
\mathbf{x}_t =\mu\cdot \mathbf{1}_N + \begin{pmatrix}
    \frac{1}{\sqrt{N}} & \cdots & \frac{1}{\sqrt{N}} \\
    \vdots & \ddots & \vdots \\
    \frac{1}{\sqrt{N}} & \cdots & \frac{1}{\sqrt{N}} \\
\end{pmatrix}  \pmb{\varepsilon}_t,
\end{eqnarray*}
which infers $\mathbf{B}_0 =\mathbf{1}_N (\mathbf{1}_N^\top/\sqrt{N}) $ and  $\mathbf{B}_\ell=\mathbf{0}$ for $\ell\ge 1$. Apparently, each time series $\{x_{it}\mid t\in[T]\}$ is perfectly correlated with the others, and simple algebra gives that

\begin{eqnarray*}
    \frac{1}{N\sqrt{T}}\sum_{i=1}^N\sum_{t=1}^T(x_{it} - \mu ) =\frac{1}{\sqrt{NT}}\sum_{t=1}^T\sum_{i=1}^N \varepsilon_{it}.
\end{eqnarray*}
Therefore, SCD requires a different normalizer to derive the asymptotic distribution, which directly affects the construction of the confidence interval in practice. Notably, $\|\mathbf{B}_0\|_2=\sqrt{\lambda_{\max} (\frac{1}{N}\mathbf{1}_N \mathbf{1}_N^\top\mathbf{1}_N  \mathbf{1}_N^\top )}=\sqrt{N}$, which diverges as $N\to \infty$. 
\end{example}

A few findings emerge in view of both examples. First, \eqref{def.xt_1} extends \citet[Assumption A10]{ROBINSON2011} and offers detailed data generating process for  \citet[Assumption 3]{goncalves_2011}. Second, inferring $\mu$ under WCD and SCD respectively requires different normalizers to construct the standard deviations. A challenge arises naturally, as one needs to decide the magnitude of CD prior to analysis.  Third, the magnitude of cross-sectional dependence will mathematically influence the Spectral norm of $\mathbf{B}_\ell$'s from the modelling perspective. In what follows, we shall account for these findings in our investigation.

\subsection{Inference under $\mathbb{H}_0$}\label{Sec2.1}

Firstly, we infer the homogenous mean via the following statistic:

\begin{eqnarray}\label{def.SNT}
    \widetilde{S}_{NT}\coloneqq\frac{1}{\sigma_x\sqrt{L_N T}}S_{NT}\coloneqq\frac{1}{\sigma_x\sqrt{L_N T}} \sum_{i=1}^N\sum_{t=1}^T(x_{it}-\mu),
\end{eqnarray}
where $L_N$ is generic notation, and varies with respect to the magnitude of CD. Obviously, $L_N=N$ for Example \ref{EX1}, and $L_N=N^2$ for Example \ref{EX2}. Here,  $\sigma_x^2 \coloneqq \lim_{N,T}\frac{1}{L_N T}E[S_{NT}^2]$. 

To facilitate development, we present the first assumption.

\begin{assumption}\label{AS1}
\item

\begin{enumerate}[leftmargin=24pt, parsep=2pt, topsep=2pt]
\item $\{\varepsilon_{it}\}$ are  i.i.d.  over both $i$ and $t$, and satisfy that $E[\varepsilon_{it}]=0$ and $E[\varepsilon_{it}^2]=1$. In addition, $\varepsilon_{11}$ has the characteristic function $\psi(u)\coloneqq E[\exp(\mathsf{i}u\varepsilon_{11})]$ with $u\in \mathbb{R}$, and has cumulants $\kappa_r$ for $r\in [J]$ with a fixed $J\ (\ge 4)$.

\item Suppose that for $L_N\in [N, N^2]$, 

\begin{enumerate}[leftmargin=24pt, parsep=2pt, topsep=2pt]
\item $\limsup_{N}\sum_{\ell =0}^{\infty}\ell \cdot C_{N\ell}<\infty$, where $C_{N\ell}\coloneqq  \sqrt{\frac{N}{L_N}}\|\mathbf{B}_\ell\|_2$;

\item $\limsup_{N}\sqrt{\frac{N}{L_N}}\|\mathbf{B} \|_1<\infty$, where $\mathbf{B}\coloneqq \mathbf{B}(1)$.
\end{enumerate}

\end{enumerate}
\end{assumption}

Assumption \ref{AS1}.1 is standard. See \citet[Chapter 1]{saulis} for detailed definition of cumulants. In general, $\mathbf{B}_\ell\ne \mathbf{0}$ for $\ell \ge 1$, so we need to regulate the elements of $\mathbf{B}_\ell$ using Assumption \ref{AS1}.2, which ensures that after suitable normalization, the Spectral norms of $\mathbf{B}_\ell$'s are summable. Obviously, Assumption \ref{AS1}.2 is satisfied for both Examples \ref{EX1} and \ref{EX2}. While $L_N\in (N, N^2)$, the magnitude of cross-sectional dependence is in between both cases. By Lemma \ref{LM.A5} of the appendix, Assumption \ref{AS1} entails that 

\begin{eqnarray*}
\sigma_x^2=\lim_{N}\frac{1}{L_N}\mathbf{1}_N^\top\mathbf{B} \mathbf{B} ^\top \mathbf{1}_N ,
\end{eqnarray*}
which does not only incorporate long run covariance along the time dimension, but also accounts for WCD and SCD automatically. 

Having these conditions in hand, we present the first main result of this paper.

\begin{theorem}\label{THM.1}
Under Assumption \ref{AS1}, as $(N,T)\to (\infty,\infty)$,  
\begin{eqnarray*}
\sup_{u\in\mathbb{R}}\left|F_{NT}(u)-\Phi(u)-\frac{\beta_3}{6} (1-u^2)\phi(u) \right|=O\left(NT\left(\frac{\|\mathbf{B} \|_1}{\sqrt{L_NT}}\right)^4 \vee \frac{1}{T^2}\right),
\end{eqnarray*}
where $F_{NT}(u)\coloneqq \Pr(\widetilde{S}_{NT}\le u) $, $\beta_3$ is defined in \eqref{def_betar3} for the sake of presentation, and 
\[
|\beta_3|=O\left(NT\left(\frac{\|\mathbf{B} \|_1}{\sqrt{L_NT}}\right)^3 \vee \frac{1}{T^{3/2}}\right).
\]If $\kappa_3= 0$, then $\beta_3= 0$.
\end{theorem}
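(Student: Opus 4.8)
The plan is to establish a one-term Edgeworth expansion for the standardized sum $\widetilde{S}_{NT}$ by working through its characteristic function and inverting via the Fourier machinery set up in \cref{LM.A4}. The first step is to expose the structure of $S_{NT}=\sum_{i,t}(x_{it}-\mu)$ under the HDMA($\infty$) representation \eqref{def.xt_1}. Stacking over the cross-section, $\mathbf{1}_N^\top(\mathbf{x}_t-\mu\mathbf{1}_N)=\mathbf{1}_N^\top\mathbf{B}(L)\pmb{\varepsilon}_t$, so summing over $t$ and applying the high-dimensional Beveridge--Nelson decomposition (as flagged in the overview, via \eqref{def.chiu} and the preliminary lemmas) rewrites $S_{NT}$ as a principal linear term $\mathbf{1}_N^\top\mathbf{B}\sum_{t}\pmb{\varepsilon}_t$ plus a remainder controlled by the telescoping BN component. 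The BN remainder is where \cref{AS1}.2(a), the summability $\limsup_N\sum_\ell \ell\, C_{N\ell}<\infty$, does its work: it guarantees the edge-effect term is negligible at the order $O(1/T^2)$ appearing in the stated bound.

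Next I would reduce the problem to a sum of i.i.d. scalars. After the BN step the leading term is $\frac{1}{\sqrt{L_NT}}\sum_{t=1}^T w_t$ with $w_t\coloneqq \frac{1}{\sigma_x}\mathbf{1}_N^\top\mathbf{B}\,\pmb{\varepsilon}_t/\sqrt{L_N}$ (up to the normalization defining $\sigma_x^2$ via \cref{LM.A5}), and the $w_t$ are i.i.d.\ across $t$ with unit variance by construction. The cumulants of $w_t$ are linear images of the cumulants $\kappa_r$ of $\varepsilon_{11}$: in particular the third cumulant of $w_t$ is proportional to $\kappa_3$ times $\mathbf{1}_N^\top(\mathbf{B}\circ\mathbf{B}\circ\mathbf{B})$-type entrywise sums, which after the $(L_N)^{3/2}$ normalization yields exactly the claimed $\beta_3$ of order $NT(\|\mathbf{B}\|_1/\sqrt{L_NT})^3 \vee T^{-3/2}$; since each factor of $\kappa_3$ is pulled out linearly, the claim that $\beta_3=0$ when $\kappa_3=0$ is immediate. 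I would verify the order of $\beta_3$ by bounding the entrywise third-moment sum using $\|\mathbf{B}\|_1$ and the admissible range $L_N\in[N,N^2]$ supplied by \cref{AS1}.2.

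The core analytic step is the Edgeworth expansion itself. I would write the characteristic function of $\widetilde{S}_{NT}$ as $\psi_{w}(u/\sqrt{T})^{T}$ (on the leading term), expand $\log\psi_w$ to third order in its cumulants, exponentiate, and collect terms; the second-order cumulant reproduces $\widetilde\phi(u)$ and the third-order term produces the factor $(\mathsf{i}u)^3$ multiplying $\beta_3/6$. Applying the Fourier inversion of \cref{LM.A4} — which identifies the transform of $\phi(x)H_n(x)$ with $\widetilde\phi(x)(\mathsf{i}x)^n$ — converts the $(\mathsf{i}u)^3$ term into the Hermite polynomial $H_2(u)=u^2-1$, giving the correction $\frac{\beta_3}{6}(1-u^2)\phi(u)$ with the correct sign after accounting for $H_2$. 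The remainder from truncating the cumulant expansion at fourth order contributes the $NT(\|\mathbf{B}\|_1/\sqrt{L_NT})^4$ term, while the BN edge effect contributes $1/T^2$, and the two combine in the stated maximum. I expect the main obstacle to be controlling the characteristic-function inversion uniformly over all of $\mathbb{R}$: the standard smoothing/Esseen argument requires a lower bound on $|\psi_w(u)|$ away from the origin and careful treatment of the high-frequency tail, which in the high-dimensional setting means the norm inequalities of \cref{AS1}.2 must be leveraged to keep the characteristic function of $w_t$ bounded away from $1$ uniformly in $N$ — this is the delicate point where the i.i.d.\ structure of $\{\varepsilon_{it}\}$ and the existence of cumulants up to order $J\ge 4$ from \cref{AS1}.1 are essential.
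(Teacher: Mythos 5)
Your overall architecture (BN decomposition, cumulants of the characteristic function, the Hermite/Fourier inversion of Lemma~\ref{LM.A4}, Esseen smoothing via Lemma~\ref{LM.A2}) matches the paper's, but there is a genuine gap at your very first step: you discard the BN remainder $-\widetilde{\mathbf{B}}(L)\pmb{\varepsilon}_{T}+\widetilde{\mathbf{B}}(L)\pmb{\varepsilon}_{0}$ and assert it is ``negligible at the order $O(1/T^2)$.'' After the $\sigma_x\sqrt{L_NT}$ normalization this remainder is $O_P(T^{-1/2})$, not smaller, and it is correlated with the leading term (it is built from the same innovations $\pmb{\varepsilon}_{T},\pmb{\varepsilon}_{T-1},\ldots$), so any argument that replaces $\widetilde{S}_{NT}$ by the i.i.d.-in-$t$ sum $\frac{1}{\sqrt{T}}\sum_t w_t$ and transfers the expansion through a smoothing or anti-concentration inequality pays a price of order $T^{-1/2}$ in Kolmogorov distance --- far larger than the claimed $NT(\|\mathbf{B}\|_1/\sqrt{L_NT})^4 \vee T^{-2}$. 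The paper avoids this by never separating the remainder: it uses the second representation of Lemma~\ref{LM.A5}.1.b, $\widetilde{S}_{NT}=\sum_{\ell=-\infty}^{T}\mathbf{1}_N^\top\overline{\pmb{\mathcal{B}}}_{\ell}\pmb{\varepsilon}_{\ell}$, a weighted sum over \emph{all} independent innovations with non-identical weights, and computes the exact cumulants of $\widetilde{S}_{NT}$ from the product characteristic function \eqref{def_psiNT}. The BN truncation then enters only through the weights and contributes $O(T^{-r/2})$ to the $r$-th cumulant (see \eqref{def_betar2}--\eqref{def_betar3}); the $T^{-3/2}$ in $|\beta_3|$ and the $T^{-2}$ in the error are cumulant-level edge effects, not a distribution-level ``drop the remainder'' step.

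Two consequences follow. First, your reduction yields the third cumulant of the leading term only, i.e.\ $\beta_3^*\propto \kappa_3\sum_{j}(\mathbf{1}_N^\top\mathbf{b}_j^\dag)^3$, whereas the $\beta_3$ in Theorem~\ref{THM.1} is defined through \eqref{def_betar3} with the BN-corrected weights $\widetilde{b}_{\ell j}$; replacing $\beta_3$ by $\beta_3^*$ costs $O_P(\sqrt{N}\|\mathbf{B}\|_1^2/(L_NT))$ and is precisely the content of Corollary~\ref{COL.1}, which requires the \emph{additional} condition $N/T^2\to 0$ --- so your route, even if repaired, proves the corollary rather than the theorem. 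Second, your anticipated obstacle, a Cram\'er-type bound keeping $|\psi_w(u)|$ away from $1$ uniformly in $N$, is an artifact of the i.i.d.-in-$t$ reduction: Assumption~\ref{AS1} imposes no such condition and the paper needs none, because the inversion integral is cut at the cumulant-driven threshold $\widetilde{\Delta}_{NT}$, on which everything is controlled by $|\beta_r|=O(\widetilde{\Delta}_{NT}^{-r})$ from \eqref{def_betar6} together with the Gaussian factor $\widetilde{\phi}(u)$ supplied by identity \eqref{def.chiu}, with the high-frequency tail handled separately. Your Hermite bookkeeping and the observation that $\kappa_3=0$ forces $\beta_3=0$ (since $\kappa_3$ factors out of \eqref{def_betar3}) are both correct.
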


Theorem \ref{THM.1} gives an Edgeworth expansion for $\widetilde{S}_{NT}$, and a considerably simplified form will be presented in Corollary \ref{COL.1} under a slightly more restrictive condition. The detailed definition of $\beta_3$ is omitted here due to its cumbersome notation. It is worth mentioning that if the skewness of $\varepsilon_{it}$ is 0 (i.e., $\kappa_3= 0$), the term $\frac{\beta_3}{6} (1-u^2)\phi(u)$ automatically vanishes. Then $F_{NT}(u)$ converges to $\Phi(u)$ in a much faster rate. In general, we do not have $\kappa_3= 0$, so we keep the statement of Theorem \ref{THM.1} as it is.

It is noteworthy that for either Example \ref{EX1} or Example \ref{EX2}, the result reduces to 

\begin{eqnarray*}
\sup_{u\in\mathbb{R}}\left|F_{NT}(u)-\Phi(u)-\frac{\beta_3}{6} (1-u^2)\phi(u) \right|=O\left( \frac{1}{NT}\vee \frac{1}{T^2}\right),
\end{eqnarray*}
where $|\beta_3|=O(\frac{1}{\sqrt{NT}} \vee \frac{1}{T^{3/2}})$. Sequentially, for both examples, Theorem \ref{THM.1} gives the following Berry-Esseen bound:

\begin{eqnarray*}
\sup_{u\in\mathbb{R}}\left|F_{NT}(u)-\Phi(u) \right|=O\left(\frac{1}{\sqrt{T}(\sqrt{N}\wedge T)}\right),
\end{eqnarray*}
which says that, for the panel data with TSA and WCD/SCD, the Berry-Esseen bound is usually at the order of $\frac{1}{\sqrt{NT}}$ unless $N\gg T$ such that $\frac{T}{\sqrt{N}}\to 0$. The presence of the term $\frac{1}{T^{3/2}}$ is due to the BN decomposition which introduces a truncation residual along the time dimension. In a typical panel data setting $N\asymp T$, the truncation residual is negligible. Our result complements the work of \cite{fan2015power}, who specifically study cases where $\frac{T}{\sqrt{N}}\to 0$, by considering a broader range of scenarios and relaxing the independence assumptions employed in \cite{PY2008} and \cite{YU2024105458}.

With a minor additional restriction, the following corollary holds.

\begin{corollary}\label{COL.1}
Under Assumption \ref{AS1}, as $(N,T)\to (\infty,\infty)$ and $\frac{N}{T^2}\to 0$,
\begin{eqnarray*}
\sup_{u\in\mathbb{R}}\left|F_{NT}(u)-\Phi(u)-\frac{\beta_3^*}{6} (1-u^2)\phi(u) \right|=O_P\left(\frac{\sqrt{N}\|\mathbf{B}\|_1^2}{L_NT}\right),
\end{eqnarray*}
where 
\[
\beta_3^*\coloneqq \frac{\kappa_3}{\sigma_x^{3/2}L_N^{3/2}T^{1/2}} \sum_{j=1}^N (\mathbf{1}_N^\top \mathbf{b}_j^\dag )^3
\] with $\mathbf{b}_j^\dag $ being the $j^{th}$ column of $\mathbf{B}$, and $\kappa_3$ is defined in Assumption \ref{AS1}.1.
\end{corollary}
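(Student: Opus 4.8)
The plan is to obtain Corollary~\ref{COL.1} directly from Theorem~\ref{THM.1} by performing two reductions under the additional rate $N/T^2\to0$: first, isolating the explicit leading term $\beta_3^*$ from the skewness coefficient $\beta_3$ of \eqref{def_betar3}; and second, showing that the residual $\beta_3-\beta_3^*$ together with the two pieces of the Theorem~\ref{THM.1} error bound are all dominated by the target rate $\sqrt{N}\|\mathbf{B}\|_1^2/(L_NT)$. A preliminary observation organizes everything. Since $\sigma_x^2=\lim_N\frac{1}{L_N}\mathbf{1}_N^\top\mathbf{B}\mathbf{B}^\top\mathbf{1}_N$ is a positive constant (Lemma~\ref{LM.A5}) and $\mathbf{1}_N^\top\mathbf{B}\mathbf{B}^\top\mathbf{1}_N=\sum_{j=1}^N(\mathbf{1}_N^\top\mathbf{b}_j^\dag)^2\le N\,\|\mathbf{B}\|_1^2$, I would first record the lower bound $\|\mathbf{B}\|_1^2\gtrsim L_N/N$; combined with the upper bound $\|\mathbf{B}\|_1^2=O(L_N/N)$ from Assumption~\ref{AS1}.2(b), this yields $\|\mathbf{B}\|_1^2\asymp L_N/N$, so that the target rate is simply $\sqrt{N}\|\mathbf{B}\|_1^2/(L_NT)\asymp 1/(\sqrt{N}T)$. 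This identity makes all the comparisons below transparent.

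Next I would identify $\beta_3^*$ as the (standardized) third cumulant of the Beveridge--Nelson permanent component of $\widetilde{S}_{NT}$. Writing $\mathbf{B}(L)=\mathbf{B}-(1-L)\widetilde{\mathbf{B}}(L)$ with $\widetilde{\mathbf{B}}(L)=\sum_{\ell\ge0}\bigl(\sum_{k>\ell}\mathbf{B}_k\bigr)L^\ell$, the high-dimensional BN decomposition underlying Theorem~\ref{THM.1} (cf.\ Lemma~\ref{LM.A2}) gives $S_{NT}=A+R_{NT}$, where $A\coloneqq\mathbf{1}_N^\top\mathbf{B}\sum_{t=1}^T\pmb{\varepsilon}_t$ and $R_{NT}$ collects the two endpoint residuals $\mathbf{1}_N^\top\widetilde{\mathbf{B}}(L)\pmb{\varepsilon}_0$ and $-\mathbf{1}_N^\top\widetilde{\mathbf{B}}(L)\pmb{\varepsilon}_T$. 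Because $\{\varepsilon_{it}\}$ are i.i.d.\ across both $(i,t)$ with third cumulant $\kappa_3$, only diagonal terms survive in $\kappa_3(A)$, so that $\kappa_3(A)=T\kappa_3\sum_{j=1}^N(\mathbf{1}_N^\top\mathbf{b}_j^\dag)^3$; dividing by the appropriate power of the normalizer $\sigma_x\sqrt{L_NT}$ reproduces $\beta_3^*$.

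It then remains to bound the remainder and simplify the error. By trilinearity of cumulants, $\kappa_3(S_{NT})-\kappa_3(A)=3\,\mathrm{cum}(A,A,R_{NT})+3\,\mathrm{cum}(A,R_{NT},R_{NT})+\kappa_3(R_{NT})$. The key point is that $R_{NT}$ carries no factor of $T$ and its weights $\sum_{k>\ell}\mathbf{B}_k$ are summable by Assumption~\ref{AS1}.2(a); hence each of these mixed cumulants is smaller than $\kappa_3(A)$ by a factor $T^{-1}$, giving after standardization $|\beta_3-\beta_3^*|=O(1/(\sqrt{N}T^{3/2}))=o(1/(\sqrt{N}T))$. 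For the error, the identity $NT(\|\mathbf{B}\|_1/\sqrt{L_NT})^4=\frac{N\|\mathbf{B}\|_1^2}{L_N}\cdot N^{-1/2}\cdot\frac{\sqrt{N}\|\mathbf{B}\|_1^2}{L_NT}$ shows this piece is $o$ of the target, since $N\|\mathbf{B}\|_1^2/L_N=O(1)$; and $1/T^2=(\sqrt{N}/T)\cdot 1/(\sqrt{N}T)\asymp(\sqrt{N}/T)\cdot(\text{target})=o(\text{target})$ precisely under $N/T^2\to0$. Collecting the bounds and using $\sup_u|(1-u^2)\phi(u)|<\infty$ to convert $|\beta_3-\beta_3^*|$ into a uniform error yields the stated rate.

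The main obstacle I anticipate is the cumulant bookkeeping for $R_{NT}$. Because the endpoint residual shares the shocks $\pmb{\varepsilon}_{T-\ell}$ with $A$, the cross cumulant $\mathrm{cum}(A,A,R_{NT})$ does not vanish and must be controlled term-by-term in $\ell$, using the decay encoded in $\limsup_N\sum_\ell\ell\,C_{N\ell}<\infty$ together with $\|\mathbf{B}\|_1^2\asymp L_N/N$ to confirm the claimed $T^{-1}$ gain rather than a mere $T^{-1/2}$. A secondary point concerns the $O_P$ versus $O$ distinction: although $\beta_3^*$ is deterministic given the data generating process, the corollary is phrased with $O_P$, so if the argument ultimately substitutes a sample normalizer for $\sigma_x$ (or an empirical counterpart of $\mathbf{B}$), that substitution error would be the source of the probabilistic order and would need to be verified to share the rate $\sqrt{N}\|\mathbf{B}\|_1^2/(L_NT)$.
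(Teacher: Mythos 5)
Your proposal is correct in substance and follows the same skeleton as the paper's proof: start from Theorem~\ref{THM.1}, use the high-dimensional BN decomposition to isolate the permanent component $A=\mathbf{1}_N^\top\mathbf{B}\sum_{t=1}^T\pmb{\varepsilon}_t$, identify $\beta_3^*$ with its standardized third cumulant, and verify that the discarded pieces together with the Theorem~\ref{THM.1} error bound sit within the stated rate, with $N/T^2\to 0$ entering exactly where you place it (to dominate the $1/T^2$ term) and $N\|\mathbf{B}\|_1^2/L_N=O(1)$ handling the quartic term. The difference is in the bookkeeping at the crux step. The paper never forms joint cumulants of $A$ and $R_{NT}$; it expands the cube $(\mathbf{1}_N^\top\mathbf{b}_j^\dag-\mathbf{1}_N^\top\widetilde{\mathbf{b}}_{T-\ell,j}^\dag)^3$ inside \eqref{def_betar3} into $\beta_{3,1}-3\beta_{3,2}+3\beta_{3,3}-\beta_{3,4}$ --- algebraically the same multilinear expansion you propose --- and then bounds the dominant cross term $\beta_{3,2}$ by Cauchy--Schwarz jointly over $(j,\ell)\in[N]\times[T]$, which pays a factor $\sqrt{T}$ and yields $O_P\bigl(\sqrt{N}\|\mathbf{B}\|_1^2/(L_NT)\bigr)$; that bound \emph{is} the corollary's rate, so in the paper's accounting the residual $\beta_3-\beta_3^*$ is $O$ of the target, not $o$ of it as your framing suggests. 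Your alternative route --- summing $|\mathbf{1}_N^\top\widetilde{\mathbf{b}}_{T-\ell,j}^\dag|$ over $\ell$ first, so that by $\limsup_N\sum_\ell \ell\, C_{N\ell}<\infty$ the $\ell$-sum is $O(1)$ rather than length $T$ --- does deliver the sharper $O\bigl(\sqrt{N}\|\mathbf{B}\|_1^2/(L_NT^{3/2})\bigr)$ for $\mathrm{cum}(A,A,R_{NT})$, i.e.\ the $T^{-1}$ gain you anticipated; on that term your argument is genuinely sharper than the paper's, though the extra sharpness is not needed for the stated rate.

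Two caveats. First, your blanket claim $|\beta_3-\beta_3^*|=O\bigl(1/(\sqrt{N}T^{3/2})\bigr)$ does not extend to $\kappa_3(R_{NT})$ and the pre-sample shocks: the paper's own bound \eqref{def_betar2} (the $|\cdot|_3\le|\cdot|_2$ comparison) gives only $O(T^{-3/2})$ there, with no $\sqrt{N}$ gain in general, since the mass of $\widetilde{\mathbf{B}}_k^\top\mathbf{1}_N$ may concentrate on a few coordinates. Because $T^{-3/2}$ exceeds the target $\asymp 1/(\sqrt{N}T)$ whenever $T=o(N)$ --- a regime still allowed by $N/T^2\to0$ --- the honest residual bound is $O\bigl(T^{-3/2}\vee\sqrt{N}\|\mathbf{B}\|_1^2/(L_NT^{3/2})\bigr)$; note the paper's proof silently absorbs exactly the same $O(T^{-3/2})$ remainder from \eqref{def_betar3}, so your proposal is no less rigorous than the paper here, but the claimed $\sqrt{N}$ gain for the pure-$R_{NT}$ piece should be dropped. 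Second, your standardization correctly produces $\sigma_x^3$ in the denominator of $\beta_3^*$, which is the internally consistent exponent (compare $(\sigma_x^2L_NT)^{r/2}$ in \eqref{def_betar2}); the $\sigma_x^{3/2}$ in the corollary's display appears to be a typo carried through the paper's proof. As for your $O_P$-versus-$O$ worry: no sample normalizer is substituted anywhere --- the paper simply tags its (deterministic) bound on $\beta_{3,2}$ as $O_P$ --- so that anticipated obstacle is moot.
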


The condition $\frac{N}{T^2}\to 0$ is rather common in the literature of panel data analysis, and is apparently fulfilled given $N\asymp T$. Using Assumption \ref{AS1}.2, it is straightforward to see that $\frac{\sqrt{N}\|\mathbf{B}\|_1^2}{L_NT} =\frac{N\|\mathbf{B}\|_1^2}{L_N}\cdot \frac{1}{\sqrt{N}T}\to 0.$ 

\medskip
 
\noindent \textbf{Bootstrap Inference} --- Below, we provide a bootstrap procedure to conduct inference. Given the nature of panel data, an intuitive thought is to draw a set of random variables, say, $\{\zeta_{it}\mid i\in [N], t\in [T] \}$ under certain restrictions, and construct the bootstrap counterpart of $S_{NT}$ as follows:

\begin{eqnarray*}
\widetilde{S}_{NT}^*\coloneqq\frac{1}{\sigma_x\sqrt{L_NT}}\sum_{i=1}^N \sum_{t=1}^T(x_{it}-\mu)\zeta_{it}.
\end{eqnarray*}
It is then straightforward to obtain that

\begin{eqnarray}
E^*(\widetilde{S}_{NT}^{*2}) &=&\frac{1}{\sigma_x^2L_NT} \sum_{t=1}^T (\mathbf{x}_{t}-\mu\cdot \mathbf{1}_N)^\top \mathbf{W}_{0}(\mathbf{x}_{t}-\mu\cdot \mathbf{1}_N)\label{def.Sstar_1} \\
&& +\frac{2}{\sigma_x^2L_NT}\sum_{k=1}^{T-1}\sum_{t=1}^{T-k} (\mathbf{x}_{t}-\mu\cdot \mathbf{1}_N)^\top \mathbf{W}_{t,t+k}(\mathbf{x}_{t+k}-\mu\cdot \mathbf{1}_N),\label{def.Sstar_2}
\end{eqnarray}
where $\pmb{\zeta}_t=(\zeta_{1t},\ldots, \zeta_{Nt})^\top$,  $\mathbf{W}_{ts} =E[\pmb{\zeta}_t\pmb{\zeta}_s^\top]$ for $t\ne s$, and $\mathbf{W}_{0}\coloneqq  \mathbf{W}_{tt}$ for simplicity. Correspondingly, we can write $\widetilde{S}_{NT}^2 $ as follows:

\begin{eqnarray}
\widetilde{S}_{NT}^2 &=& \frac{1}{\sigma_x^2L_NT} \sum_{t=1}^T (\mathbf{x}_{t}-\mu\cdot \mathbf{1}_N)^\top \mathbf{1}_N\mathbf{1}_N^\top (\mathbf{x}_{t}-\mu\cdot \mathbf{1}_N) \label{def.S_1} \\
&&+ \frac{2}{\sigma_x^2 L_NT}\sum_{k=1}^{T-1}\sum_{t=1}^{T-k} (\mathbf{x}_{t}-\mu\cdot \mathbf{1}_N)^\top \mathbf{1}_N  \mathbf{1}_N^\top(\mathbf{x}_{t+k}-\mu\cdot \mathbf{1}_N).\label{def.S_2}
\end{eqnarray}
Comparing the right hand sides of $E^*(\widetilde{S}_{NT}^{*2})$ and $\widetilde{S}_{NT}^2$, it is obvious that we need  \eqref{def.Sstar_1} and \eqref{def.Sstar_2} to mimic \eqref{def.S_1} and \eqref{def.S_2} respectively. As $\mathbf{1}_N  \mathbf{1}_N^\top $ has rank one, by Lemma \ref{LM.A1} the most obvious form of $\mathbf{W}_{ts}$ should be

\begin{eqnarray*}
\mathbf{W}_{ts} = \alpha_{ts}\mathbf{1}_N \mathbf{1}_N^\top,
\end{eqnarray*}
where $\alpha_{ts}$ is a scalar varying with respect to the distance between $t$ and $s$. Therefore, we conclude that, to have a valid bootstrap procedure, one should replace $\zeta_{it}$ with $\zeta_t$ to ensure there is no cross-sectional variation. The finding nicely fits our study, as we assume  no prior information about the magnitude of CD. By doing so, we preserve the dependence along the individual dimension in the bootstrap draws, and also avoid imposing certain order on individuals implicitly.

Formally, the bootstrap procedure is as follows.

\medskip

\hrule
\begin{enumerate}[leftmargin=24pt, parsep=2pt, topsep=2pt]
\item Draw the bootstrap version of $S_{NT}$ by

\begin{eqnarray*}
\widetilde{S}_{NT}^*\coloneqq\frac{1}{\sigma_x\sqrt{L_NT}}\sum_{i=1}^N\sum_{t=1}^T(x_{it}-\mu)\zeta_{t},
\end{eqnarray*}
where $\zeta_{t}$ is an $m$-dependent time series.

\item Repeat the above procedure $R$ times to obtain the sampling distribution of $\widetilde{S}_{NT}^*$.
\end{enumerate}
\hrule

\medskip

Accordingly, we impose the following assumption.

\begin{assumption}\label{AS2}
\item 

\begin{enumerate}[leftmargin=24pt, parsep=2pt, topsep=2pt]
\item Let $a(\cdot)$ be a symmetric Lipschitz continuous kernel defined on $[-1, 1]$ such that $a(0)=1$ and  $\lim_{|x|\to 0 } \frac{1-a(x)}{|x|^{q_a}} = C_{q_a}$ for $q_a \in \{1,2\}$ and  $0 < C_{q_a} < \infty$. Assume $\limsup_{N}\sum_{\ell =0}^{\infty}\ell^{q_a} \cdot C_{N\ell}<\infty$, and $\int_{-1}^1a(u) \exp(-\mathsf{i}u x) \mathrm{d}u\ge 0$ for $x\in \mathbb{R}$.
    
\item Let $E[\zeta_{t}]=0$, $E[\zeta^2_{t}]=1$,  $E|\zeta_{t}|^4<\infty$, and $E[\zeta_{t}\zeta_{s}]=a(\frac{t-s}{m})$ for $\forall t, s\in [T]$, where $m$ satisfies $m\rightarrow\infty $ and $\frac{m}{\sqrt{T}}\rightarrow 0$ as $T\rightarrow\infty$.
    
\end{enumerate}
\end{assumption}

Assumption \ref{AS2} is a typical assumption in the literature of dependent wild bootstrap. We refer interested readers to \cite{Shao2015} for a comprehensive review of this line of research. Several conventional kernel functions satisfy such conditions. For example, for the Bartlett kernel, $q_a= 1$ and $C_1 = 1$; for the Parzen, Tukey-Hanning, QS kernels, and the trapezoidal functions, $q_a= 2$ and the values of $C_2$ vary and all satisfy $C_2<\infty$. See \cite{Andrews1991} for comments on different kernel functions. Notably, when the Bartlett kernel is adopted, the condition $\limsup_{N}\sum_{\ell =0}^{\infty}\ell^{q_a} \cdot C_{N\ell}<\infty$ reduces to Assumption \ref{AS1}.2.a.

Using Assumptions \ref{AS1}-\ref{AS2}, the following lemma and theorem hold for the bootstrap procedure.

\begin{lemma}\label{LM.A7}  
Under Assumptions \ref{AS1}-\ref{AS2}, as $(N,T)\to (\infty,\infty)$, the following results hold:
    
\begin{enumerate}[leftmargin=24pt, parsep=2pt, topsep=2pt]  
\item $E^*[(\widetilde{S}_{NT}^*)^2]= 1+o_P(1)$;

\item $E^\ast[(\sum_{t=(s-1)m+1}^{sm} \mathbf{1}_N^\top \mathbf{x}_t\zeta_t)^2]=O_P(mL_N)$, for  $s=1,\ldots,\lfloor \frac{T}{m} \rfloor$;

\item $E^\ast[(\sum_{t=\lfloor\frac{T}{m}\rfloor m+1}^{T} \mathbf{1}_N^\top \mathbf{x}_t\zeta_t)^2]=O_P(mL_N)$.
\end{enumerate}
\end{lemma}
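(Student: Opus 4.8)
My plan is to reduce all three statements to kernel-weighted quadratic forms in the centered cross-sectional aggregate $y_t\coloneqq\mathbf 1_N^\top(\mathbf x_t-\mu\mathbf 1_N)=\sum_{\ell\ge 0}\mathbf 1_N^\top\mathbf B_\ell\pmb\varepsilon_{t-\ell}$, which is the scalar series multiplying $\zeta_t$ inside $S_{NT}$ and inside the block sums of parts 2--3. Since the multipliers obey $E^*[\zeta_t\zeta_s]=a(\frac{t-s}{m})$ and $\mathbf W_{ts}=a(\frac{t-s}{m})\mathbf 1_N\mathbf 1_N^\top$, the representation \eqref{def.Sstar_1}--\eqref{def.Sstar_2} already expresses every conditional second moment through $\{y_t\}$. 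The single analytic input I prepare first is the autocovariance $\gamma(d)\coloneqq E[y_ty_{t+d}]=\sum_{\ell\ge 0}\mathbf 1_N^\top\mathbf B_\ell\mathbf B_{\ell+d}^\top\mathbf 1_N$: using $\|\mathbf B_\ell\|_2=\sqrt{L_N/N}\,C_{N\ell}$ and the summability $\sum_{\ell\ge0}C_{N\ell}=O(1)$ implied by Assumption \ref{AS1}.2, the bound $|\gamma(d)|\le N\sum_\ell\|\mathbf B_\ell\|_2\|\mathbf B_{\ell+d}\|_2$ yields $\sum_d|\gamma(d)|=O(L_N)$ and, with Assumption \ref{AS2}.1, $\sum_d|d|^{q_a}|\gamma(d)|=O(L_N)$; moreover $\frac1{L_N}\sum_d\gamma(d)=\frac1{L_N}\mathbf 1_N^\top\mathbf B\mathbf B^\top\mathbf 1_N\to\sigma_x^2$ by Lemma \ref{LM.A5}.

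Parts 2 and 3 are then immediate. For any block $\mathcal B$ of length $b\le m$, $E^*[(\sum_{t\in\mathcal B}y_t\zeta_t)^2]=\sum_{t,t'\in\mathcal B}y_ty_{t'}a(\frac{t-t'}{m})$ is a conditional second moment and hence nonnegative, with unconditional mean $\sum_{|d|<b}(b-|d|)\gamma(d)a(\frac{d}{m})$. Positive definiteness of $a$ together with $a(0)=1$ forces $|a|\le 1$, so this mean is at most $b\sum_d|\gamma(d)|=O(mL_N)$ in absolute value; a nonnegative random variable with mean $O(mL_N)$ is $O_P(mL_N)$ by Markov's inequality, giving both statements (the terminal block has $b<m$, so the same bound applies).

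For part 1 I reduce the claim to $V_{NT}\coloneqq\frac{1}{L_NT}\sum_{t,s}y_ty_s\,a(\frac{t-s}{m})\to_P\sigma_x^2$, after which $E^*[(\widetilde S_{NT}^*)^2]=V_{NT}/\sigma_x^2\to_P1$, and I establish the convergence by Chebyshev. The mean is $E[V_{NT}]=\frac{1}{L_N}\sum_d(1-\frac{|d|}{T})a(\frac{d}{m})\gamma(d)$; since $a(\frac{d}{m})\to1$ and $(1-\frac{|d|}{T})\to1$ pointwise, dominated convergence against the summable sequence $\gamma(d)/L_N$ gives $E[V_{NT}]\to\sigma_x^2$ (the smoothness bound $1-a(\frac{d}{m})=O((|d|/m)^{q_a})$ paired with $\sum_d|d|^{q_a}|\gamma(d)|=O(L_N)$ even supplies a rate, though only the limit is needed here).

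The crux---and the step I expect to be the main obstacle---is $\mathrm{Var}(V_{NT})\to0$. Writing $\mathrm{Var}(V_{NT})=\frac{1}{L_N^2T^2}\sum_{t,s,t',s'}a(\frac{t-s}{m})a(\frac{t'-s'}{m})\,\mathrm{Cov}(y_ty_s,y_{t'}y_{s'})$ and using the Isserlis-type identity for a linear filter of i.i.d. innovations, I split the covariance into the two Gaussian products $\gamma(t-t')\gamma(s-s')+\gamma(t-s')\gamma(s-t')$ plus the fourth-cumulant remainder $\kappa_4\sum_i\sum_u b_i(t-u)b_i(s-u)b_i(t'-u)b_i(s'-u)$, where $b_i(\ell)\coloneqq\mathbf 1_N^\top\mathbf b_{\ell i}^\dag$. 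The Gaussian part is routine: $|a|\le1$, the kernel support $|t-s|\le m$ for one factor, and $\sum_d|\gamma(d)|=O(L_N)$ for the two free autocovariance sums give $O(TmL_N^2)/(L_N^2T^2)=O(m/T)$. The remainder is where the rate $m/\sqrt{T}\to0$ of Assumption \ref{AS2}.2 is genuinely used: the kernels factorize it as $\frac{\kappa_4}{L_N^2T^2}\sum_i\sum_u G_i(u)^2$ with $G_i(u)=\sum_{p,q}a(\frac{p-q}{m})b_i(p)b_i(q)$ a banded, positive-semidefinite Toeplitz form, so $0\le G_i(u)\le(2m+1)\sum_\ell b_i(\ell)^2=:(2m+1)\delta_i$ by the spectral bound $\lambda_{\max}\le\|\cdot\|_\infty\le 2m+1$; combined with $\#\{u\}\le T$ this yields $\frac{\kappa_4(2m+1)^2}{L_N^2T}\sum_i\delta_i^2$. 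Finally $\sum_i\delta_i=\sum_\ell\mathbf 1_N^\top\mathbf B_\ell\mathbf B_\ell^\top\mathbf 1_N=O(L_N)$ and $\max_i\delta_i=O(L_N)$---both from $|b_i(\ell)|\le\sqrt N\|\mathbf B_\ell\|_2=\sqrt{L_N}\,C_{N\ell}$---so $\sum_i\delta_i^2=O(L_N^2)$ and the remainder is $O(m^2/T)=o(1)$. The delicate point I must get right is exactly this: applying the crude coordinatewise bound $|b_i(\ell)|\le\sqrt N\|\mathbf B_\ell\|_2$ too early leaks a spurious extra factor $N$ (degrading the bound to $O(N/T)$, useless when $N\asymp T$); the cross-sectional dependence must instead be absorbed collectively through $\sum_i\delta_i=O(L_N)$ and a single spectral estimate, after which the surviving rate is $m^2/T$, vanishing precisely under $m/\sqrt{T}\to0$.
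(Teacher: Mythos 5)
Your proposal is correct, but it takes a genuinely different route from the paper's. For part 1, the paper does not compute any fourth moment at this stage: it splits $E^*[\widetilde S_{NT}^{*2}]$ into the diagonal term \eqref{var_Rate1} and the kernel-weighted off-diagonal terms \eqref{var_Rate2}, and for each fixed lag $k$ it invokes the second-order BN decomposition \eqref{def.xx} together with Lemma \ref{LM.A6} to get an $L^1$-concentration bound of order $O_P(1/\sqrt{T})$, which it then sums over the $O(m)$ lags in the kernel's support to obtain a deviation of order $O_P(m/\sqrt{T})$ (see \eqref{var_Rate4}); the kernel bias is handled by the $d_T$-splitting in \eqref{var_Rate6}. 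You instead run a single Chebyshev argument on $V_{NT}$, decomposing $\mathrm{Cov}(y_ty_s,y_{t'}y_{s'})$ into two Gaussian products plus a fourth-cumulant remainder --- structurally this is the device the paper only deploys later, in the MSE computation of Theorem \ref{THM.2}.3 via Anderson's fourth-moment formula. Both routes land on the same $m/\sqrt{T}$-type rate; yours avoids the BN machinery entirely but requires $\kappa_4<\infty$ (available, since $J\ge 4$ in Assumption \ref{AS1}.1) and the banded PSD spectral bound on the kernel matrix, whereas the paper's lag-by-lag route recycles Lemma \ref{LM.A6} and needs no new moment calculation. For parts 2--3 your argument is in fact more complete than the text: the paper dismisses them with ``similar arguments,'' while your observation that the conditional second moment is automatically nonnegative with unconditional mean $O(mL_N)$ (using $|a|\le 1$, which is valid since positive semidefiniteness with $a(0)=1$ forces $|a(x)|\le a(0)$), followed by Markov's inequality, is a self-contained two-line proof.

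One repairable slip in your cumulant term: the count ``$\#\{u\}\le T$'' is not literally correct, since the innovation index $u$ ranges over all of $(-\infty,T]$, and the crude bound $(G_i(u))^2\le (2m+1)^2\delta_i^2$ summed over that range diverges. The fix is immediate: setting $\delta_i(u)\coloneqq \sum_{t=1}^{T}b_i(t-u)^2$, the truncated quadratic form satisfies $0\le G_i(u)\le (2m+1)\delta_i(u)$ (principal submatrices of the PSD kernel matrix remain PSD), and $\sum_{u\le T}\delta_i(u)=T\delta_i$, whence $\sum_{u}(G_i(u))^2\le \bigl(\max_u G_i(u)\bigr)\sum_u G_i(u)\le (2m+1)^2 T\delta_i^2$ --- exactly the bound you stated. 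So your conclusion, a remainder of order $O(m^2/T)=o(1)$ under $m/\sqrt{T}\to 0$, stands, and with it the whole proof.
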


Lemma \ref{LM.A7} summarizes some basic statistical properties to facilitate the investigation on the bootstrap statistic for the homogeneous cases. With them in hand, we present the following theorem.

\begin{theorem}\label{THM.2}
Under Assumptions \ref{AS1}-\ref{AS2}, as $(N,T)\to (\infty,\infty)$,

\begin{enumerate}[leftmargin=24pt, parsep=2pt, topsep=2pt]
\item $\sup_{u\in \mathbb{R}}\left|\text{\normalfont Pr}^*(\widetilde{S}_{NT}^* \le u ) -\Phi(u)-\frac{1}{6\widetilde{\sigma}^{\ast3}}(1-u^2)E^\ast [\widetilde{S}_{NT}^{*3}]\phi(x)\right|=O_P\left(\frac{m}{T}\right)$;
\item $\sup_{u\in \mathbb{R}}\left|\text{\normalfont Pr}^*(\widetilde{S}_{NT}^* \le u) -\Pr(\widetilde{S}_{NT}\le u)\right|=O_P\left(\sqrt{\frac{m}{T}}\right)$;
\item $\text{\normalfont MSE}(\widetilde{\sigma}^{\ast 2})=\frac{2m}{T}\int_{-1}^1 a^2(u)du+\frac{C_{q_a}^2}{ \sigma_x^4m^{2q_a}}\Delta_{q_a}^2+o_P(m^{-2q_a})+o\left(\frac{m}{T}\right)$;
\end{enumerate}   
where $\widetilde{\sigma}^{\ast2}= E^\ast[\widetilde{S}_{NT}^{*2}]$ and $\Delta_{q_a}=L_N^{-1}\sum_{s=-\infty}^{\infty}\sum_{\ell=0}^{\infty} |s|^{q_a}\mathbf{1}_N^\top \mathbf{B}_\ell  \mathbf{B}_{\ell+|s|}^\top \mathbf{1}_N $.
\end{theorem}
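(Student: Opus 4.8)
The plan is to prove the three parts of Theorem~\ref{THM.2} by systematically exploiting the $m$-dependence structure of $\{\zeta_t\}$ together with the Edgeworth machinery already developed (Lemma~\ref{LM.A4}, equation~\eqref{def.chiu}, and the segmentation idea of \cite{tik1981}). Throughout, observe that conditional on the data $\{\mathbf{x}_t\}$, the bootstrap statistic is $\widetilde{S}_{NT}^*=\frac{1}{\sigma_x\sqrt{L_NT}}\sum_{t=1}^T (\mathbf{1}_N^\top\mathbf{x}_t-N\mu)\zeta_t$, i.e.\ a weighted sum of the $m$-dependent series $\{\zeta_t\}$ with data-dependent (but, conditionally, fixed) coefficients $c_t\coloneqq \mathbf{1}_N^\top(\mathbf{x}_t-\mu\mathbf{1}_N)$. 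So $\widetilde{S}_{NT}^*$ is, conditionally, a linear functional of an $m$-dependent sequence, which is exactly the object \cite{tik1981} treats.

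\textbf{Parts (i) and the second-moment bookkeeping.} First I would establish part~(i) directly. Using $E^*[\zeta_t\zeta_s]=a(\frac{t-s}{m})$, expand $E^*[(\widetilde{S}_{NT}^*)^2]=\frac{1}{\sigma_x^2 L_NT}\sum_{t,s}c_tc_s\, a(\frac{t-s}{m})$. The diagonal-band structure (only $|t-s|\le m$ contributes materially, and $a$ is Lipschitz with $a(0)=1$) lets me match this to $\sigma_x^2 L_N T$ after taking expectations over the data via Lemma~\ref{LM.A5} and the long-run covariance identity for $\sigma_x^2$; the error is controlled by Assumption~\ref{AS2} ($m/\sqrt{T}\to0$) giving $1+o_P(1)$. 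Parts~2 and~3 of Lemma~\ref{LM.A7} then supply the key per-segment second-moment bound $O_P(mL_N)$ for blocks of length $m$, which I will feed directly into the segmentation argument. These are essentially moment computations, so I would keep them brief.

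\textbf{Parts (i)-(ii): the Edgeworth expansion via segmentation.} This is the heart of the theorem. Following \cite{tik1981}, partition $\{1,\dots,T\}$ into $\lfloor T/m\rfloor$ consecutive blocks of length $m$ (plus a remainder handled by Lemma~\ref{LM.A7}.3). Write $\widetilde{S}_{NT}^*=\sum_s \xi_s$ where $\xi_s$ is the (normalized) contribution of block $s$. Because $\{\zeta_t\}$ is $m$-dependent, nonadjacent blocks are independent conditional on the data, and adjacent blocks are $1$-dependent. The strategy is to compute the conditional cumulant generating function (or characteristic function) of $\sum_s\xi_s$, isolate the leading Gaussian term and the third-cumulant correction $\frac{1}{6\widetilde\sigma^{*3}}E^*[\widetilde{S}_{NT}^{*3}]$, and bound the remainder. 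Inverting the characteristic-function expansion back to the distribution function uses Lemma~\ref{LM.A4} to turn the $(\mathsf{i}x)^n$ cumulant factors into the Hermite/$\phi$ terms, exactly as in the proof of Theorem~\ref{THM.1}. The $1$-dependence of adjacent segments contributes the $O_P(m/T)$ error in part~(i): each of the $\asymp T/m$ segment-junctions carries an error of relative order $m/T$ after accounting for the block normalization $O_P(mL_N)$ from Lemma~\ref{LM.A7}. Part~(ii) follows by comparing the bootstrap expansion of part~(i) with the population Edgeworth expansion of Theorem~\ref{THM.1}: both share the same $\Phi(u)$ leading term, the skewness-correction terms differ by $o_P(1)$ (the bootstrap third moment $E^*[\widetilde S_{NT}^{*3}]$ consistently estimates its population analogue up to the relevant order), and the dominant discrepancy is the $O_P(\sqrt{m/T})$ term, which is the square-root of the part~(i) rate, reflecting that matching distribution functions (Kolmogorov distance) rather than the refined expansion costs a square root.

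\textbf{Part (iii): the MSE of the variance estimator.} For the mean-squared-error expansion of $\widetilde\sigma^{*2}=E^*[\widetilde S_{NT}^{*2}]$ I would perform a standard bias-variance decomposition familiar from kernel long-run-variance estimation (e.g.\ \cite{Andrews1991}). The bias arises from the kernel taper: using $\lim_{|x|\to0}\frac{1-a(x)}{|x|^{q_a}}=C_{q_a}$, a Taylor expansion of $a(\frac{s}{m})$ around $s/m=0$ produces the leading bias term $\frac{C_{q_a}}{\sigma_x^2 m^{q_a}}\,(L_N^{-1}\sum_s|s|^{q_a}\mathbf{1}_N^\top\mathbf{B}_\ell\mathbf{B}_{\ell+|s|}^\top\mathbf{1}_N)$, whose square is $\frac{C_{q_a}^2}{\sigma_x^4 m^{2q_a}}\Delta_{q_a}^2$; Assumption~\ref{AS2}.1 ($\sum_\ell \ell^{q_a}C_{N\ell}<\infty$) guarantees $\Delta_{q_a}$ is well defined and finite. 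The variance term $\frac{2m}{T}\int_{-1}^1 a^2(u)\,du$ comes from the effective number of independent blocks $\asymp T/m$ and the kernel's $L^2$ mass, obtained by a Riemann-sum approximation $\frac{1}{m}\sum_s a^2(\frac{s}{m})\to\int_{-1}^1 a^2(u)\,du$; the cross term and higher-order pieces are absorbed into $o_P(m^{-2q_a})+o(m/T)$.

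\textbf{Main obstacle.} The hardest part will be controlling the remainder in the block-characteristic-function expansion of parts~(i)-(ii) \emph{uniformly in $u$} while the coefficients $c_t$ are themselves random (data-dependent) and the series is only $m$-dependent rather than independent. Unlike the i.i.d.\ case of Theorem~\ref{THM.1}, here I must simultaneously (a) handle the conditional $1$-dependence between adjacent segments without it destroying the Edgeworth structure, and (b) show that the data-dependent normalizations and block moments concentrate (via Lemma~\ref{LM.A7}) so that the $O_P$ rates hold rather than merely $O$ in expectation. Verifying that Cramér-type smoothness conditions survive the conditioning, and that the remainder telescopes to $O_P(m/T)$ across the $\asymp T/m$ junctions rather than accumulating, is the delicate step that the \cite{tik1981} segmentation is designed to manage.
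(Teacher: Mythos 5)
Your plans for parts (ii) and (iii), and the second-moment bookkeeping, track the paper's proof closely: part (ii) is obtained exactly as you say, by playing the bootstrap expansion off against Theorem \ref{THM.1} --- though note the $\sqrt{m/T}$ rate is not a generic ``square-root cost'' of matching Kolmogorov distance; it is simply the size of the skewness term itself, since $E^\ast[\widetilde{S}_{NT}^{*3}]=O_P(\sqrt{m/T})$ and $\widetilde{\sigma}^\ast=1+o_P(1)$, so both correction terms can be dropped. Part (iii) is likewise the bias--variance decomposition you describe, carried out in the paper via the fourth-moment formulas for stationary series, with the Riemann-sum step $\frac{1}{m}\sum_s a^2(s/m)\to\int_{-1}^1a^2(u)\,\mathrm{d}u$ delivering the variance and the $q_a$-order kernel expansion delivering the bias $-C_{q_a}m^{-q_a}\Delta_{q_a}/\sigma_x^2$.

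The genuine gap is in part (i), which is the heart of the theorem. Your operational plan --- compute the conditional cumulants of the block sum, exponentiate via \eqref{def.chiu}, and Fourier-invert through Lemma \ref{LM.A4} ``exactly as in the proof of Theorem \ref{THM.1}'' --- does not go through here, for two concrete reasons. First, Theorem \ref{THM.1}'s argument rests on the exact factorization $\psi_{NT}(u)=\prod_{\ell}\prod_j\psi(\widetilde{b}_{\ell j}u)$ over i.i.d.\ innovations; under conditional $1$-dependence of adjacent blocks the characteristic function does not factorize, cumulants of the sum are not sums of per-block cumulants, and bounding the joint cumulants across block junctions would require dependency-graph machinery you never supply. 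Second, Assumption \ref{AS2}.2 only guarantees $E|\zeta_t|^4<\infty$, so conditional cumulants of order five and higher need not exist and the infinite series in \eqref{def.chiu} that drives Theorem \ref{THM.1} is simply unavailable. The paper's proof avoids both problems with Tikhomirov's Stein-type device, which you cite but do not actually deploy: differentiate the conditional characteristic function, $\frac{\mathrm{d}}{\mathrm{d}u}\phi^\ast_{NT}(u)=\mathsf{i}E^\ast[\widetilde{\mathcal{S}}_{NT}^*\exp(\mathsf{i}u\widetilde{\mathcal{S}}_{NT}^*)]$, peel each block $A_s$ against the truncated sums $A_{s,l}^c=\sum_{|t-s|>l}A_t$ for $l\le 4$, exploit the conditional independence of $A_s$ and $A_{s,l}^c$, and control remainders with the third-order Taylor bounds of Lemma \ref{LM.A3} and the block moment bounds of Lemma \ref{LM.A7}. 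This yields the approximate ODE $\frac{\mathrm{d}}{\mathrm{d}u}\phi^\ast_{NT}(u)=-u\phi^\ast_{NT}(u)-\frac{1}{2}\mathsf{i}u^2E^\ast[\widetilde{\mathcal{S}}_{NT}^{*3}]\phi^\ast_{NT}(u)+O_P(m/T)$, which is integrated and then converted to the CDF statement by Esseen smoothing (Lemma \ref{LM.A2}); only second and third conditional moments plus fourth-moment remainders are ever used, exactly what Assumption \ref{AS2} provides. Your closing ``main obstacle'' paragraph correctly senses where the danger lies, but the cumulant-series route you actually describe would not close it.
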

    
Theorem \ref{THM.2}.1 establishes the first-order Edgeworth expansion for the bootstrap statistic $\widetilde{S}_{NT}^*$, extending the results of \cite{tik1981} to the panel data framework. Conditional on the sample, the rates of the first two results of Theorem \ref{THM.2} are optimal, as the bootstrap draws $\{\zeta_t\}$ are $m$-dependent time series data only. Theorem \ref{THM.2}.3 demonstrates that the bootstrap covariance estimator can consistently estimate the true covariance, and infers that  MSE is minimized at  
$$m_{\text{opt}}= \big(\frac{C_{q_a}\Delta_{q_a}}{2\sigma_x^4\int_{-1}^1 a^2(u)du}\big)^{2/(2q_a+1)} T^{1/(2q_a+1)}.$$

\subsection{Inference under $\mathbb{H}_1$}\label{Sec2.2}

Under $\mathbb{H}_1$, the model \eqref{def.xit} admits the following vector form:

\begin{eqnarray}\label{def.xt_2}
\mathbf{x}_t =\pmb{\mu}+\mathbf{B}(L)\pmb{\varepsilon}_t,
\end{eqnarray}
where $\pmb{\mu}=(\mu_1,\ldots, \mu_N)^\top$, and the rest settings are identical to those in \eqref{def.xt_1}.

Inferring heterogeneity is slightly more complicated, as there are two options:

\begin{enumerate}[leftmargin=24pt, parsep=2pt, topsep=2pt]
\item Infer a specific individual $\mu_i$;
\item Infer $\pmb{\mu}$ as a whole.
\end{enumerate}
It is noteworthy that, when studing each $\mu_i$, the conditional expectations and probabilities involved in Lemma \ref{LM.A8} and Theorem \ref{THM.4} are still random variables, we therefore do not take $\max$ over $i$ in these results. In order to study the uniform inference over all individuals (i.e., inferring $\pmb{\mu}$ as a whole), we provide Theorems \ref{THM.5} and \ref{THM.6}, which will help us establish the corresponding test statistic in Section \ref{Sec2.3}.

\medskip

We start with the first choice, and consider the following quantify:

\begin{eqnarray*}
F_i(u)\coloneqq \Pr(\widetilde{p}_i\le u),
\end{eqnarray*}
where $\widetilde{p}_i\coloneqq \frac{1}{\sigma_{p,i}\sqrt{T}}p_i\coloneqq \frac{1}{\sigma_{p,i}\sqrt{T}} \sum_{t=1}^T(\mathbf{e}_i^\top \mathbf{x}_t-\mu_i)$, $\sigma_{p,i}^2 \coloneqq \lim_N\mathbf{e}_i^\top \mathbf{B}\mathbf{B}^\top \mathbf{e}_i>0$, and $\mathbf{e}_i$ is a selection vector as defined in Section \ref{Sec1}.

To proceed, we need more structures to investigate heterogeneity.

\begin{assumption}\label{AS3}
Suppose that $\max_i\sum_{\ell=1}^{\infty}\ell^2 \|\mathbf{b}_{\ell i}^{\sharp}\|_2^2<\infty$ and $\max_i\|\mathbf{b}_{i}^{\sharp}\|_2<\infty$, where $\mathbf{b}_{\ell i}^{\sharp}$ is defined in \eqref{def.B_ell}, and $\mathbf{B}=(\mathbf{b}_{1}^{\sharp},\ldots, \mathbf{b}_{N}^{\sharp})^\top$.
\end{assumption}

Assumption \ref{AS3} regulates the rows of $\mathbf{B}_\ell$'s and $\mathbf{B}$, and is rather minor in view of Examples \ref{EX1} and \ref{EX2}. Under this condition, we are able to present the following theorem.

\begin{theorem}\label{THM.3}
Under Assumptions \ref{AS1} and \ref{AS3}, as $(N,T)\to (\infty, \infty)$, 
\begin{eqnarray*}
\max_{i}\sup_{u\in\mathbb{R}}\left|F_i(u)-\Phi(u)-\frac{\beta_{i3}}{6} (1-u^2)\phi(u) \right|=O\left( \frac{1}{T}\right),
\end{eqnarray*}
where $\max_{i}|\beta_{i3}|=O( \frac{1}{T^{1/2}})$. If $\kappa_3= 0$, then $\beta_{i3}= 0$.
\end{theorem}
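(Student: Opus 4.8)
The plan is to treat $p_i$, for each fixed $i$, as the partial sum of a single scalar linear process and to recycle the characteristic-function/Edgeworth machinery already built for Theorem \ref{THM.1}, the only structural change being that the summation now runs over $t$ alone, so the effective sample size is $T$ rather than $NT$; this is what downgrades the error from the $O(1/T^2)$-type rate of Theorem \ref{THM.1} to $O(1/T)$ here.

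First I would write out the individual series. Under \eqref{def.xt_2}, $\mathbf{e}_i^\top\mathbf{x}_t-\mu_i=\sum_{\ell\ge0}\mathbf{b}_{\ell i}^{\sharp\top}\pmb{\varepsilon}_{t-\ell}$, a scalar MA($\infty$) driven by the i.i.d. innovations $\{\pmb{\varepsilon}_t\}$. Applying the univariate Beveridge--Nelson decomposition $\sum_{\ell\ge0}\mathbf{b}_{\ell i}^{\sharp\top}L^\ell=\mathbf{b}_i^{\sharp\top}-(1-L)\widetilde{\mathbf{B}}_i(L)$, with $\widetilde{\mathbf{b}}_{\ell i}^{\sharp\top}\coloneqq\sum_{j>\ell}\mathbf{b}_{ji}^{\sharp\top}$ (summable and square-summable by Assumption \ref{AS3}), telescoping gives
\[
p_i=\mathbf{b}_i^{\sharp\top}\sum_{t=1}^T\pmb{\varepsilon}_t+\widetilde{\mathbf{B}}_i(L)\pmb{\varepsilon}_0-\widetilde{\mathbf{B}}_i(L)\pmb{\varepsilon}_T.
\]
The leading term is $\sum_{t=1}^T\eta_{it}$ with $\eta_{it}\coloneqq\mathbf{b}_i^{\sharp\top}\pmb{\varepsilon}_t$ i.i.d. over $t$, having variance $\|\mathbf{b}_i^{\sharp}\|_2^2\to\sigma_{p,i}^2$ and third cumulant $\kappa_3\sum_{j}(\mathbf{e}_i^\top\mathbf{b}_j^\dag)^3$, while the two remainder terms are $O_P(1)$ before the $\sqrt{T}$ normalization.

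Next I would run the Edgeworth expansion through cumulants rather than through a crude perturbation of the leading sum. Collecting innovations, $p_i=\sum_s\mathbf{c}_{is}^\top\pmb{\varepsilon}_s$, where $\mathbf{c}_{is}=\mathbf{b}_i^{\sharp}$ except on the time-boundary indices (the BN edges), whose aggregate effect is the remainder above. Independence across $s$ lets me factorize the characteristic function of $\widetilde{p}_i$ as in \eqref{def.chiu} and Lemma \ref{LM.A2}, take logarithms, and read off $\mathrm{cum}_r(\widetilde{p}_i)=(\sigma_{p,i}\sqrt{T})^{-r}\sum_s\mathrm{cum}_r(\mathbf{c}_{is}^\top\pmb{\varepsilon}_s)$. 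I would then verify term by term that $\mathrm{cum}_2(\widetilde{p}_i)=1+O(1/T)$, that $\mathrm{cum}_3(\widetilde{p}_i)=\beta_{i3}+O(1/T)$ with $\beta_{i3}=\tfrac{\kappa_3}{\sigma_{p,i}^3\sqrt{T}}\sum_{j=1}^N(\mathbf{e}_i^\top\mathbf{b}_j^\dag)^3$, and that $\mathrm{cum}_r(\widetilde{p}_i)=O(T^{1-r/2})$ for $r\ge4$. Inverting the truncated cumulant expansion by Fourier transform and using Lemma \ref{LM.A4} to convert $\widetilde{\phi}(x)(\mathsf{i}x)^n$ into $\phi(x)H_n(x)$ then yields $\Phi(u)+\tfrac{\beta_{i3}}{6}(1-u^2)\phi(u)$, with the cumulants of order $\ge4$ and the boundary perturbations of $\mathrm{cum}_2,\mathrm{cum}_3$ absorbed into the $O(1/T)$ error. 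The bound $\max_i|\beta_{i3}|=O(1/\sqrt{T})$ follows from $|\sum_j(\mathbf{e}_i^\top\mathbf{b}_j^\dag)^3|\le\|\mathbf{b}_i^{\sharp}\|_2^3$, $\max_i\|\mathbf{b}_i^{\sharp}\|_2<\infty$, and $\inf_i\sigma_{p,i}^2>0$, and $\kappa_3=0\Rightarrow\beta_{i3}=0$ is immediate from the formula.

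The hard part will be twofold. First, the BN remainder shares innovations with the leading sum, so it cannot be peeled off as an independent additive error: a naive perturbation-plus-smoothing argument would lose the rate (it yields only $O(T^{-1/3})$), so instead I must retain the exact coefficient representation $\mathbf{c}_{is}$ and show that the deviating boundary coefficients perturb each cumulant by at most $O(1/T)$ after normalization, uniformly in $i$ --- which is precisely where the $\max_i$ form of Assumption \ref{AS3} is needed. Second, converting the characteristic-function approximation into the stated uniform CDF bound via the smoothing inequality requires controlling the high-frequency tail of the characteristic function (a Cram\'er-type input on $\psi$), which is the role of Lemmas \ref{LM.A2} and \ref{LM.A3}; propagating that control uniformly over $i$, so that $\max_i\sup_u|\cdot|$ retains the $O(1/T)$ rate, is the most delicate bookkeeping step.
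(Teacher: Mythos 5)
Your proposal is correct and takes essentially the same route as the paper: the paper likewise retains the exact two-sided coefficient representation $\widetilde{p}_i=\mathbf{e}_i^\top\sum_{\ell=-\infty}^T\dot{\pmb{\mathcal{B}}}_{\ell}\pmb{\varepsilon}_\ell$ from the BN decomposition (Lemma \ref{LM.A5}), factorizes the characteristic function over the i.i.d.\ innovations, reads off the cumulants and bounds them uniformly in $i$ by $\max_i|\beta_{ir}|=O(T^{1-r/2})\le O(T^{-r/6})$ via Assumption \ref{AS3}, and concludes by Fourier inversion with Lemma \ref{LM.A4} and Esseen smoothing (Lemma \ref{LM.A2}) with the cutoff $|u|\le T^{1/6}$. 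Your identification of the key difficulty --- keeping the boundary (BN-edge) coefficients inside the exact cumulant computation rather than peeling the remainder off as an independent additive perturbation --- is precisely how the paper's proof proceeds, so no gap remains.
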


Again, if the skewness of $\varepsilon_{it}$ is 0, the term $\frac{\beta_{i3}}{6} (1-u^2)\phi(u) $ vanishes. Then $F_i(u)$ converges to $\Phi(u)$ at a faster rate. In general, we do not have $\kappa_3= 0$, so we keep the statement of Theorem \ref{THM.3} as it is. 

To infer the distribution of $\widetilde{p}_i$ in practice, we provide a heterogeneous version of the dependent wild bootstrap procedure. 

\medskip

\hrule
\begin{enumerate}[leftmargin=24pt, parsep=2pt, topsep=2pt]
\item Draw the bootstrap version of $\widetilde{p}_{i}$ by

\begin{eqnarray*}
    \widetilde{p}^\ast_{i}\coloneqq  \frac{1}{\sigma_{p,i}\sqrt{T}}\sum_{t=1}^T(\mathbf{e}_i^\top \mathbf{x}_t-\mu_i)\zeta_{t},
\end{eqnarray*}
where $\zeta_{t}$ is an $m$-dependent time series satisfying Assumption \ref{AS2}.

\item Repeat the above procedure $R$ times to obtain the sampling distribution of $\widetilde{p}^\ast_{i}$.
\end{enumerate}
\hrule

\medskip
\medskip

For the heterogeneous case, we present the following lemma to facilitate the development of the asymptotic distribution of $\widetilde{p}^\ast_{i}$ conditional on the sample.

\begin{lemma}\label{LM.A8}  
Under Assumptions \ref{AS1}-\ref{AS3}, as $(N,T)\to (\infty,\infty)$, for each $i$,
    
\begin{enumerate}[leftmargin=24pt, parsep=2pt, topsep=2pt]  
\item $ E^*[(\widetilde{p}_{i}^*)^2]= 1+o_P(1)$;

\item $ E^\ast[(\sum_{t=(s-1)m+1}^{sm} \mathbf{e}_i^\top \mathbf{x}_t\zeta_t)^2]=O_P(mL_N)$, for  $s=1,\ldots,\lfloor \frac{T}{m} \rfloor$;

\item $ E^\ast[(\sum_{t=\lfloor\frac{T}{m}\rfloor m+1}^{T} \mathbf{e}_i^\top \mathbf{x}_t\zeta_t)^2]=O_P(mL_N)$.
\end{enumerate}
\end{lemma}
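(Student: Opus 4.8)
The plan is to exploit that, for a fixed individual $i$, the demeaned series $y_{it}\coloneqq \mathbf{e}_i^\top\mathbf{x}_t-\mu_i=\sum_{\ell=0}^\infty \mathbf{b}_{\ell i}^{\sharp\top}\pmb{\varepsilon}_{t-\ell}$ is a scalar, strictly stationary, mean-zero MA($\infty$) process with autocovariances $\gamma_i(k)=\sum_{\ell=0}^\infty \mathbf{b}_{\ell i}^{\sharp\top}\mathbf{b}_{\ell+|k|,i}^\sharp$ and long-run variance $\sum_{k=-\infty}^\infty\gamma_i(k)=(\sum_\ell\mathbf{b}_{\ell i}^\sharp)^\top(\sum_\ell\mathbf{b}_{\ell i}^\sharp)=\|\mathbf{b}_i^\sharp\|_2^2=\sigma_{p,i}^2$. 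Thus the whole argument runs parallel to Lemma \ref{LM.A7}, with the aggregation vector $\mathbf{1}_N$ replaced by the selection vector $\mathbf{e}_i$ and the cross-sectional normalizer $L_N$ replaced by $1$; the only structural difference is that, unlike $\mathbf{1}_N^\top\mathbf{x}_t$, the scalar $\mathbf{e}_i^\top\mathbf{x}_t$ has a long-run variance bounded in $N$, which Assumption \ref{AS3} guarantees. A preliminary fact I would record once and reuse is that Assumption \ref{AS3} yields $\sum_{\ell=0}^\infty\|\mathbf{b}_{\ell i}^\sharp\|_2<\infty$ uniformly in $i$ (split off $\ell=0$ via $\mathbf{b}_{0i}^\sharp=\mathbf{b}_i^\sharp-\sum_{\ell\ge1}\mathbf{b}_{\ell i}^\sharp$ and apply Cauchy--Schwarz to $\sum_{\ell\ge1}\ell^{-1}\cdot\ell\|\mathbf{b}_{\ell i}^\sharp\|_2$), whence both $\sum_k|\gamma_i(k)|\le(\sum_\ell\|\mathbf{b}_{\ell i}^\sharp\|_2)^2<\infty$ and $\gamma_i(0)=\sum_\ell\|\mathbf{b}_{\ell i}^\sharp\|_2^2<\infty$.

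For Part 1, since only $\{\zeta_t\}$ is random under $E^\ast$ and $E^\ast[\zeta_t\zeta_s]=a((t-s)/m)$, one has $E^\ast[(\widetilde p_i^\ast)^2]=\sigma_{p,i}^{-2}\,\widehat\sigma_i^2$ with $\widehat\sigma_i^2\coloneqq T^{-1}\sum_{t,s=1}^T y_{it}y_{is}\,a((t-s)/m)$, which is exactly a lag-window (HAC) estimator of the long-run variance of $\{y_{it}\}$ with bandwidth $m$. I would therefore establish $\widehat\sigma_i^2=\sigma_{p,i}^2+o_P(1)$ by the standard bias--variance route. The bias satisfies $E[\widehat\sigma_i^2]=\sum_{k}(1-|k|/T)\,a(k/m)\,\gamma_i(k)\to\sum_k\gamma_i(k)=\sigma_{p,i}^2$ by dominated convergence, using $a(0)=1$, continuity and boundedness of $a$ on $[-1,1]$, and the summability $\sum_k|\gamma_i(k)|<\infty$ recorded above. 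The variance $\mathrm{Var}(\widehat\sigma_i^2)=O(m/T)\to0$ follows from the usual fourth-cumulant computation for a linear process, finite because $E|\varepsilon_{11}|^4<\infty$ (Assumption \ref{AS1}.1), $\sum_\ell\|\mathbf{b}_{\ell i}^\sharp\|_2<\infty$, $a$ is square-integrable, and $m/T\to0$ (implied by $m/\sqrt T\to0$). Chebyshev then delivers $E^\ast[(\widetilde p_i^\ast)^2]=1+o_P(1)$.

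For Parts 2 and 3, the block sums contain at most $m$ consecutive terms, so I would bound their conditional second moments directly. Writing $B_s$ for the relevant index block, the quantity $E^\ast[(\sum_{t\in B_s}y_{it}\zeta_t)^2]=\sum_{t,t'\in B_s}y_{it}y_{it'}\,a((t-t')/m)$ is a data-measurable random variable; its data-expectation equals $\sum_{t,t'\in B_s}\gamma_i(t-t')\,a((t-t')/m)$, bounded in absolute value by $C_a\,|B_s|\,\sum_k|\gamma_i(k)|=O(m)$, where $C_a\coloneqq\sup_{|u|\le1}|a(u)|$ and $|B_s|\le m$. Markov's inequality then yields $E^\ast[(\sum_{t\in B_s}y_{it}\zeta_t)^2]=O_P(m)$, and since $L_N\ge1$ this is $O_P(mL_N)$, covering both the complete blocks (Part 2) and the terminal incomplete block (Part 3, identical because it too has at most $m$ terms). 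Working with the demeaned series is precisely what keeps these bounds consistent with Lemma \ref{LM.A7}.

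The main obstacle is the variance bound in Part 1: controlling $\mathrm{Var}(\widehat\sigma_i^2)$ requires the fourth-order cumulant (HAC) computation for the linear process $\{y_{it}\}$, where one must verify that the quartic array of coefficients is summable uniformly enough that the variance is genuinely $O(m/T)$ rather than a larger order. Everything else is a routine transcription of the scalar HAC argument, with Assumption \ref{AS3} supplying exactly the summability that Assumption \ref{AS1}.2 supplied in the homogeneous case, and with the bookkeeping $L_N\mapsto1$, $\mathbf{1}_N\mapsto\mathbf{e}_i$ being the only place the heterogeneous structure enters.
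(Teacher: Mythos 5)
Your proposal is correct, and it reaches the same destination as the paper by a recognizably parallel but mechanically different route. Both arguments reduce Part~1 to showing that the conditional bootstrap variance $\widehat\sigma_i^2=T^{-1}\sum_{t,s}y_{it}y_{is}\,a((t-s)/m)$ is a consistent lag-window estimator of $\sigma_{p,i}^2$, and both split it into a kernel-bias piece and a stochastic-fluctuation piece; the bias treatment (dominated convergence via $a(0)=1$, boundedness of $a$, and $\sum_k|\gamma_i(k)|<\infty$, the latter secured from Assumption \ref{AS3} exactly as in your preliminary fact) matches the paper's handling of its term $\frac{1}{T}\sum_k\sum_t(a_{k/m}-1)E[\mathbf{e}_i^\top\mathbf{x}_t\mathbf{x}_{t+k}^\top\mathbf{e}_i]$. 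Where you diverge is the fluctuation control: the paper bounds each lag-$k$ sample autocovariance deviation at $O_P(1/\sqrt{T})$ via the second-order BN-decomposition machinery of Lemma \ref{LM.A6} and then sums over the $\le m$ kernel-weighted lags, obtaining $O_P(m/\sqrt{T})=o_P(1)$ from $m/\sqrt{T}\to0$; you instead run the classical fourth-cumulant variance computation to get $\mathrm{Var}(\widehat\sigma_i^2)=O(m/T)$, which is sharper (it only needs $m/T\to0$) but requires verifying summability of the quartic coefficient array --- this does go through, since Minkowski gives $\sum_j\bigl(\sum_\ell|b_{\ell,ij}|\bigr)^2\le\bigl(\sum_\ell\|\mathbf{b}_{\ell i}^{\sharp}\|_2\bigr)^2<\infty$ and hence $\sum_j c_j^4\le(\sum_j c_j^2)^2<\infty$, with $E[\varepsilon_{11}^4]<\infty$ from Assumption \ref{AS1}.1, so the obstacle you flag is real work but not a gap. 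For Parts~2--3 your argument is actually more explicit than the paper's, which simply says the steps are analogous and omits them: observing that the conditional block second moment is a nonnegative data-measurable variable whose unconditional mean is $\sum_{t,t'\in B_s}\gamma_i(t-t')a((t-t')/m)=O(m)$ and invoking Markov is clean and correct, and your remark that the heterogeneous bound is genuinely $O_P(m)$, with the stated $O_P(mL_N)$ holding trivially because $L_N\ge1$, is an accurate reading of why the lemma's statement is loose relative to Lemma \ref{LM.A7}.
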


Using Lemma \ref{LM.A8}, we are able to establish the asymptotic distribution of $\widetilde{p}^\ast_{i}$ conditional on the sample, which is an extension of Theorem \ref{THM.2} for the heterogeneous bootstrap statistics.

\begin{theorem}\label{THM.4}
Under Assumptions \ref{AS1}-\ref{AS3}, as $(N,T)\to (\infty,\infty)$, for each $i$,

\begin{enumerate}[leftmargin=24pt, parsep=2pt, topsep=2pt]
\item $\sup_{u\in \mathbb{R}}\left|\text{\normalfont Pr}^*(\widetilde{p}^\ast_{i} \le u ) -\Phi(u)-\frac{1}{6\widetilde{\sigma}_i^{\ast3}}(1-u^2)E^\ast [\widetilde{p}_{i}^{*3}]\phi(x)\right|=O_P\left(\frac{m}{T}\right)$;
\item $\sup_{u\in \mathbb{R}}\left|\text{\normalfont Pr}^*(\widetilde{p}^\ast_{i} \le u) -F_i(u)\right|=O_P\left(\sqrt{\frac{m}{T}}\right)$;
\item $\text{\normalfont MSE}(\widetilde{\sigma}_i^{\ast 2})=\frac{2m}{T}\int_{-1}^1 a^2(u)du+\frac{C_{q_a}^2}{ \sigma_{p,i}^4m^{2q_a}}\Delta_{q_a,i}^2+o_P(m^{-2q_a})+o\left(\frac{m}{T}\right)$;
\end{enumerate}   
where $\widetilde{\sigma}_i^{\ast 2}= E^\ast[\widetilde{p}_{i}^{*2}]$ and $\Delta_{q_a,i}=\sum_{s=-\infty}^{\infty}\sum_{\ell=0}^{\infty} |s|^{q_a}\mathbf{e}_i^\top \mathbf{B}_\ell  \mathbf{B}_{\ell+|s|}^\top \mathbf{e}_i$.
\end{theorem}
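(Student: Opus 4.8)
The plan is to prove the three parts in order, treating the scalar sequence $\{\mathbf{e}_i^\top \mathbf{x}_t-\mu_i\}_{t=1}^T=\{x_{it}-\mu_i\}_{t=1}^T$ as fixed conditional on the sample, so that $\widetilde{p}^\ast_{i}$ becomes a weighted sum $\sum_{t=1}^T w_{it}\zeta_t$ of the $m$-dependent draws with data-dependent weights $w_{it}=(x_{it}-\mu_i)/(\sigma_{p,i}\sqrt{T})$. The whole argument mirrors that of Theorem \ref{THM.2}, replacing $\mathbf{1}_N^\top \mathbf{x}_t$ by $\mathbf{e}_i^\top \mathbf{x}_t$, the normalizer $\sqrt{L_N T}$ by $\sqrt{T}$, and $\sigma_x$ by $\sigma_{p,i}$, and invoking Lemma \ref{LM.A8} wherever Lemma \ref{LM.A7} was used for the homogeneous case. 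Since the conditional moments $\widetilde{\sigma}_i^{\ast2}$ and $E^\ast[\widetilde{p}_{i}^{*3}]$ remain random functions of the data, every bound is stated in the $O_P$ sense for each fixed $i$, with no maximum taken over $i$.

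For Part 1 I would follow the segment-decomposition strategy of \cite{tik1981}. Partition $[T]$ into $\lfloor T/m\rfloor$ consecutive blocks of length $m$, plus a shorter remainder block. Because $a(\cdot)$ is supported on $[-1,1]$ and $\{\zeta_t\}$ is $m$-dependent, the block sums $Y_{is}=\sum_{t\in B_s}w_{it}\zeta_t$ form a $1$-dependent sequence conditional on the data: nonadjacent blocks are independent. The block-wise conditional second-moment bounds of Lemma \ref{LM.A8}.2--3 control the magnitude of each $Y_{is}$, while Assumption \ref{AS1}.1 supplies the existence of cumulants up to order $J\ge4$ and Assumption \ref{AS3} the required row bounds. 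I would then run the standard Edgeworth machinery (characteristic-function expansion followed by Esseen smoothing) for the $1$-dependent block sums: with $\sim T/m$ blocks, the second-order remainder is $O_P(m/T)$, and the single surviving correction is the third-cumulant term $\frac{1}{6\widetilde{\sigma}_i^{\ast3}}(1-u^2)E^\ast[\widetilde{p}_{i}^{*3}]\phi(u)$.

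For Part 2 I would compare the expansion from Part 1 with the Edgeworth expansion of $F_i(u)$ furnished by Theorem \ref{THM.3}. Both share the leading term $\Phi(u)$ together with a skewness correction, so $\sup_u|\mathrm{Pr}^*(\widetilde{p}^\ast_{i}\le u)-F_i(u)|$ is governed by the two remainders and by the discrepancy between the bootstrap moments and their population targets. The dominant contribution is the estimation error of the conditional variance $\widetilde{\sigma}_i^{\ast2}$ about $1$, which by Part 3 has standard deviation of order $\sqrt{m/T}$; since a shift of order $\delta$ in the standardizing variance shifts the Gaussian CDF by $O(\delta)$ in Kolmogorov distance, this yields the stated $O_P(\sqrt{m/T})$ rate. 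For Part 3, note that $\widetilde{\sigma}_i^{\ast2}=E^\ast[\widetilde{p}_{i}^{*2}]=\frac{1}{\sigma_{p,i}^2 T}\sum_{t,s}(x_{it}-\mu_i)(x_{is}-\mu_i)a(\tfrac{t-s}{m})$ is exactly a kernel (HAC-type) long-run-variance estimator for the $i$th series, so I would carry out the classical bias--variance decomposition (cf. \cite{Andrews1991}): the smoothness index $q_a$ and the limit $\lim_{|x|\to0}(1-a(x))/|x|^{q_a}=C_{q_a}$ produce the bias term $\frac{C_{q_a}^2}{\sigma_{p,i}^4 m^{2q_a}}\Delta_{q_a,i}^2$, the overlap of kernel weights produces the variance term $\frac{2m}{T}\int_{-1}^1 a^2(u)\,du$, and the cross terms are absorbed into $o_P(m^{-2q_a})+o(m/T)$.

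The main obstacle I anticipate is Part 1: pushing the Edgeworth expansion through $m$-dependent (rather than genuinely independent) bootstrap draws while the summand weights $w_{it}$ are themselves random. Concretely, the between-block covariances are not exactly zero and must be shown negligible, and the higher conditional cumulants of the block sums must be controlled uniformly enough---using only $E|\zeta_t|^4<\infty$ of Assumption \ref{AS2} and the row conditions of Assumption \ref{AS3}---to force the remainder down to $O_P(m/T)$ rather than merely $o_P(1)$. Verifying a conditional analogue of Cram\'er's smoothness condition for the blocked characteristic function, with constants that hold with probability tending to one, is the delicate point; once it is in place, the rest reduces to the same computations already carried out for Theorem \ref{THM.2}.
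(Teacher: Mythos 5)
Your proposal matches the paper's proof essentially step for step: Part 1 uses the same Tikhomirov-style blocking of the $m$-dependent draws into conditionally $1$-dependent segments $B_{i,s}$, with Lemma \ref{LM.A8} supplying the block moment bounds, the characteristic-function (derivative) expansion delivering the $O_P(m/T)$ remainder, and Esseen smoothing (Lemma \ref{LM.A2}) closing the argument; Part 2 is the same triangle comparison of the bootstrap expansion against Theorem \ref{THM.3} through the common leading term $\Phi(u)$; and Part 3 is the same Andrews-type bias--variance decomposition of the kernel long-run variance estimator already carried out for Theorem \ref{THM.2}.3. The only (cosmetic) divergence is in Part 2, where you attribute the $\sqrt{m/T}$ rate to the fluctuation of $\widetilde{\sigma}_i^{\ast 2}$, whereas the paper obtains it from $E^\ast[\widetilde{p}_i^{\ast 3}]=O_P\bigl(\sqrt{m/T}\bigr)$ together with $\widetilde{\sigma}_i^{\ast}=1+o_P(1)$ and the $O(T^{-1/2})$ skewness term in $F_i$ --- both accountings yield the stated bound.
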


Theorems \ref{THM.3} and \ref{THM.4} jointly imply that we are able to infer every single $\mu_i$. The discussion under Theorem \ref{THM.2} still applies here. 

\medskip

We then explore the second option under \eqref{def.xt_2} when inferring heterogeneity. Mathematically, it means that one is concerned with $\max_i \widetilde{p}_i$ rather than any individual $\widetilde{p}_i$.  This is useful, as there is an increasing literature concerning about the uniform inference in the panel data setting (e.g., \citealp{LLS2024, SJW_2024}). The following result contributes to this line of research.

\begin{theorem}\label{THM.5}
Let $\{\mathbf{z}_t\mid t\in [T]\}$ be independent Gaussian random vectors in $\mathbb{R}^{N}$ such that $E[\mathbf{z}_t]=\mathbf{0}$ and $\mathrm{Var}(\mathbf{z}_t) = \mathbf{B}\mathbf{B}^\top$. Let Assumption \ref{AS1}.1 hold and $\max_{i}\sum_{\ell=1}^{\infty}\ell \|\mathbf{b}_{\ell i}^{\sharp}\|_2<\infty$. As $(N,T)\to (\infty,\infty)$, 

\begin{eqnarray*}
&&\sup_{u\in \mathbb{R}}\left|\Pr\left(\left|\frac{1}{\sqrt{T}}\sum_{t=1}^{T}(\mathbf{x} _t-\pmb{\mu})\right|_{\infty} \leq u\right) - \Pr\left(\left|\frac{1}{\sqrt{T}}\sum_{t=1}^{T}\mathbf{z} _t\right|_{\infty} \leq u\right)\right| \notag \\
&=&O\left( \left(\frac{ N^{2/J}(\log N)^5}{T}\right)^{1/4}+\sqrt{\frac{N^{2/J}(\log N)^{3-2/J}}{T^{1-2/J}}}\right).
\end{eqnarray*}
\end{theorem}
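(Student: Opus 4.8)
The plan is to combine a high-dimensional Beveridge--Nelson (BN) decomposition with the high-dimensional central limit theorem of \citet{chernozhuokov2022improved}. The BN step rewrites the serially dependent partial sum as a sum of i.i.d.\ vectors plus a telescoping remainder; the i.i.d.\ part is matched against the Gaussian target via the cited theorem, and the remainder is absorbed through an anti-concentration argument. Concretely, I would first write $\mathbf{B}(L)=\mathbf{B}-(1-L)\widetilde{\mathbf{B}}(L)$ with $\widetilde{\mathbf{B}}(L)=\sum_{\ell\ge0}\widetilde{\mathbf{B}}_\ell L^\ell$ and $\widetilde{\mathbf{B}}_\ell=\sum_{k=\ell+1}^\infty\mathbf{B}_k$, and set $\widetilde{\pmb\eta}_t=\widetilde{\mathbf{B}}(L)\pmb\varepsilon_t$, so that
\begin{eqnarray*}
\frac{1}{\sqrt T}\sum_{t=1}^T(\mathbf{x}_t-\pmb\mu)=\frac{1}{\sqrt T}\sum_{t=1}^T\mathbf{B}\pmb\varepsilon_t-\frac{1}{\sqrt T}\left(\widetilde{\pmb\eta}_T-\widetilde{\pmb\eta}_0\right)=:\mathbf{S}_y+\mathbf{R}.
\end{eqnarray*}
Writing $\widetilde{\mathbf{b}}_{\ell i}^\sharp$ for the $i$-th row of $\widetilde{\mathbf{B}}_\ell$, interchanging the order of summation gives $\max_i\sum_{\ell\ge0}\|\widetilde{\mathbf{b}}_{\ell i}^\sharp\|_2\le\max_i\sum_{k\ge1}k\,\|\mathbf{b}_{ki}^\sharp\|_2<\infty$ under the stated hypothesis, so each coordinate of $\widetilde{\pmb\eta}_t$ has variance and $J$-th absolute moment bounded uniformly in $i$. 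A Rosenthal inequality combined with a union bound then yields $|\widetilde{\pmb\eta}_T-\widetilde{\pmb\eta}_0|_\infty=O_P(N^{1/J})$, so that $|\mathbf{R}|_\infty=O_P(N^{1/J}/\sqrt T)$.

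Next I would verify the moment conditions needed to apply the cited theorem to $\mathbf{S}_y$. The summands $\mathbf{y}_t:=\mathbf{B}\pmb\varepsilon_t$ are i.i.d.\ across $t$ with $E[\mathbf{y}_t]=\mathbf{0}$ and $\mathrm{Var}(\mathbf{y}_t)=\mathbf{B}\mathbf{B}^\top=\mathrm{Var}(\mathbf{z}_t)$, so the covariance of the standardized sum matches that of the Gaussian analogue exactly and no covariance mismatch enters at this stage. Since the $i$-th coordinate of $\mathbf{y}_t$ is $\mathbf{b}_i^{\sharp\top}\pmb\varepsilon_t$, Rosenthal's inequality together with $\|\mathbf{b}_i^\sharp\|_J\le\|\mathbf{b}_i^\sharp\|_2\le\max_i\|\mathbf{b}_i^\sharp\|_2<\infty$ gives $E|\mathbf{e}_i^\top\mathbf{y}_t|^J\le C$ uniformly in $i$; hence $E[\max_i|\mathbf{e}_i^\top\mathbf{y}_t|^J]\le\sum_iE|\mathbf{e}_i^\top\mathbf{y}_t|^J\le CN$, i.e.\ the envelope $J$-th moment satisfies $B_N\asymp N^{1/J}$. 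The diagonal of $\mathbf{B}\mathbf{B}^\top$ equals $\sigma_{p,i}^2>0$, which supplies the uniform lower bound on coordinate variances required by the theorem.

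With these inputs, applying \citet{chernozhuokov2022improved} to $\mathbf{S}_y=T^{-1/2}\sum_t\mathbf{y}_t$ with dimension $p=N$, sample size $T$, moment order $J$, and envelope $B_N\asymp N^{1/J}$ reproduces exactly the two displayed terms $\left(N^{2/J}(\log N)^5/T\right)^{1/4}$ and $\sqrt{N^{2/J}(\log N)^{3-2/J}/T^{1-2/J}}$. To transfer the conclusion from $\mathbf{S}_y$ to $\mathbf{S}_y+\mathbf{R}$ I would use the sandwiching
\begin{eqnarray*}
\Pr(|\mathbf{S}_y|_\infty\le u)\le\Pr(|\mathbf{S}_y+\mathbf{R}|_\infty\le u+\epsilon)+\Pr(|\mathbf{R}|_\infty>\epsilon)
\end{eqnarray*}
and its reverse, and then control the $\epsilon$-window probability for the Gaussian max through Nazarov's anti-concentration inequality, which contributes $\lesssim\epsilon\sqrt{\log N}$. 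Choosing $\epsilon$ of the order of $|\mathbf{R}|_\infty$ makes the net remainder contribution $O_P\!\left(N^{1/J}\sqrt{\log N}/\sqrt T\right)$, which is dominated by the second displayed term since $T^{1/2-1/J}\le T^{1/2}$ and $(\log N)^{3/2-1/J}\ge(\log N)^{1/2}$; hence the remainder is absorbed.

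The delicate point is precisely this remainder control under only finitely many ($J$) moments. Without sub-Gaussian tails, $|\widetilde{\pmb\eta}_T|_\infty$ cannot be bounded by a power of $\sqrt{\log N}$ and instead grows polynomially like $N^{1/J}$; the real work is to verify that, after the anti-concentration trade-off and the optimal choice of $\epsilon$, this polynomial-in-$N$ remainder remains below the cited Gaussian approximation rate rather than dominating it. Propagating the finite-moment envelope $B_N\asymp N^{1/J}$ correctly through the \citet{chernozhuokov2022improved} bound (which is what produces the $N^{2/J}$ and $(\log N)^{3-2/J}$ factors) is the other place where care is required, whereas the covariance matching and the i.i.d.\ reduction via the BN identity are comparatively routine.
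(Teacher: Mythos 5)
Your overall route coincides with the paper's own proof: the same BN decomposition $\frac{1}{\sqrt{T}}\sum_{t=1}^T(\mathbf{x}_t-\pmb{\mu})=\frac{1}{\sqrt{T}}\sum_{t=1}^T\mathbf{B}\pmb{\varepsilon}_t-\frac{1}{\sqrt{T}}\widetilde{\mathbf{B}}(L)\pmb{\varepsilon}_T+\frac{1}{\sqrt{T}}\widetilde{\mathbf{B}}(L)\pmb{\varepsilon}_0$, the same Rosenthal-based moment bounds giving $E[|\widetilde{\mathbf{B}}(L)\pmb{\varepsilon}_t|_\infty^J]=O(N)$ and $E[|\mathbf{B}\pmb{\varepsilon}_t|_\infty^J]=O(N)$, Theorem 2.5 of \cite{chernozhuokov2022improved} for the i.i.d.\ main term (with exact covariance matching, as you note), and a sandwiching argument combined with Gaussian anti-concentration for the telescoped remainder. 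All of those ingredients are correct and are exactly what the paper deploys.

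The gap is in your final trade-off. You choose $\epsilon$ ``of the order of $|\mathbf{R}|_\infty$,'' i.e.\ $\epsilon\asymp N^{1/J}/\sqrt{T}$. But $|\mathbf{R}|_\infty=O_P(N^{1/J}/\sqrt{T})$ expresses only tightness: with this $\epsilon$, the only tail bound available under $J$ moments is Markov, which gives $\Pr(|\mathbf{R}|_\infty>\epsilon)\le E[|\mathbf{R}|_\infty^J]/\epsilon^J=O\left(\frac{N/T^{J/2}}{N/T^{J/2}}\right)=O(1)$, a non-vanishing constant. So the term $\Pr(|\mathbf{R}|_\infty>\epsilon)$ in your sandwich does not tend to zero at all, let alone at the claimed rate, and the asserted bound $O_P(N^{1/J}\sqrt{\log N}/\sqrt{T})$ for the ``net remainder contribution'' does not follow — it conflates the stochastic size of $\mathbf{R}$ with a bound on the CDF discrepancy. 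The cure is the paper's choice: inflate the window to $\epsilon=\sqrt{N^{2/J}/T^{1-2/J}}=T^{1/J}\cdot N^{1/J}/\sqrt{T}$, whence Markov yields $\Pr(|\mathbf{R}|_\infty>\epsilon)=O(T^{-1})$, while the anti-concentration cost grows only to $\epsilon\sqrt{\log N}=\sqrt{N^{2/J}\log N/T^{1-2/J}}$, which is still dominated by the second displayed term since $3-2/J\ge 1$. The general principle you should internalize: under merely polynomial moments, $\epsilon$ must be chosen to balance the Markov tail against the anti-concentration window, and the optimizing $\epsilon$ exceeds the typical size of the remainder by a polynomial factor (here $T^{1/J}$); matching $\epsilon$ to $|\mathbf{R}|_\infty$ itself only works in sub-Gaussian settings where the tail decays fast beyond the typical scale.
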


Theorem \ref{THM.5} establishes a Gaussian approximation within the panel data framework. Notably, this approximation remains valid irrespective of whether the cross-sectional dependence is WCD or SCD. Also, it is worth mentioning that under the sub-Gaussian condition of $\varepsilon_{it}$, the term $N^{2/J}$ in the above theorem can be replaced by $\log N$. Thus if we impose an exponential tail assumption, $N$ can even diverge at an exponential rate of $T$.

To have a practically feasible version of Theorem \ref{THM.5}, we need to know $\mathrm{Var}(\mathbf{z}_t) = \mathbf{B}\mathbf{B}^\top \eqqcolon \pmb{\Omega}$. Thus, define the high-dimensional long-run covariance matrix estimator  by

\begin{eqnarray*}
    \widehat{\bm{\Omega}}\coloneqq \frac{1}{T}\sum_{t,s=1}^{T}a\left(\frac{t-s}{\widetilde{m}}\right)(\mathbf{x} _t-\overline{\mathbf{x}})(\mathbf{x}_s-\overline{\mathbf{x}})^\top \quad\text{with}\quad \overline{\mathbf{x}} =\frac{1}{T}\sum_{t=1}^T\mathbf{x}_t,
\end{eqnarray*}
and accordingly, define the Gaussian multiplier bootstrap approximate by

\begin{eqnarray*}
    \left|\widehat{\bm{\Omega}}^{1/2}\frac{1}{\sqrt{T}}\sum_{t=1}^{T}\mathbf{z}_t^* \right|_{\infty},
\end{eqnarray*}
where $\{\mathbf{z}_t^* \mid t\in [T]\}$ is a vector of i.i.d. $N$-dimensional Gaussian random variables with $\mathbf{z}_t^*\sim N(\mathbf{0},\mathbf{I}_N)$.

\begin{theorem}\label{THM.6}
Let Assumption \ref{AS1}.1 hold with $J>4$, and let Assumption  \ref{AS2}.1  hold. Suppose that (1) $\max_{i}\sum_{\ell=1}^{\infty}\ell^{q_\alpha}\|\mathbf{b}_{\ell i}^{\sharp}\|_2<\infty$, (2) $\frac{N^2T\log T}{(T\widetilde{m}\log N)^{J/4}}\to 0$, and (3) $\widetilde{m} \asymp T^{1/(2q_{\alpha}+1)}$. Then
{\small
\begin{eqnarray*}
&&\sup_{u\in \mathbb{R}}\left|\mathrm{Pr}^*\left(\left|\widehat{\bm{\Omega}}^{1/2}\frac{1}{\sqrt{T}}\sum_{t=1}^{T}\mathbf{z}_t^* \right|_{\infty} \leq u\right) - \Pr\left(\left|\frac{1}{\sqrt{T}}\sum_{t=1}^{T} (\mathbf{x} _t-\pmb{\mu}) \right|_{\infty} \leq u\right)\right| \notag \\
&=&O_P\left((\log N)^{5/4} (\widetilde{m}/T)^{1/4}\right) + O\left( \left(\frac{ N^{2/J}(\log N)^5}{T}\right)^{1/4}+\sqrt{\frac{N^{2/J}(\log N)^{3-2/J}}{T^{1-2/J}}}\right).
\end{eqnarray*}}
\end{theorem}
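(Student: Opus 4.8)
The plan is to bound the target Kolmogorov distance by the triangle inequality, inserting an \emph{oracle} Gaussian statistic whose covariance is the true long-run covariance $\pmb{\Omega}=\mathbf{B}\mathbf{B}^\top$. Conditionally on the data, $\frac{1}{\sqrt{T}}\sum_{t=1}^T\mathbf{z}_t^*\sim N(\mathbf{0},\mathbf{I}_N)$, so the bootstrap vector $\widehat{\bm{\Omega}}^{1/2}\frac{1}{\sqrt{T}}\sum_t\mathbf{z}_t^*$ is a centered Gaussian with covariance $\widehat{\bm{\Omega}}$, while $\pmb{\Omega}^{1/2}\frac{1}{\sqrt{T}}\sum_t\mathbf{z}_t^*$ is a centered Gaussian with covariance $\pmb{\Omega}$ and shares the law of $\frac{1}{\sqrt{T}}\sum_t\mathbf{z}_t$ in Theorem \ref{THM.5}. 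I would therefore split the left-hand side into (a) the distance between the $|\cdot|_\infty$-laws of these two Gaussians (bootstrap versus oracle), and (b) the distance between the oracle-Gaussian law and the law of $|\frac{1}{\sqrt{T}}\sum_t(\mathbf{x}_t-\pmb{\mu})|_\infty$. Piece (b) is exactly Theorem \ref{THM.5}: condition (1) with $q_a\ge 1$ implies its summability hypothesis $\max_i\sum_\ell \ell\|\mathbf{b}_{\ell i}^{\sharp}\|_2<\infty$, and $J>4$ covers Assumption \ref{AS1}.1, so (b) contributes the second error term verbatim.

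For piece (a), both statistics are coordinatewise maxima of centered Gaussians differing only through their covariances, so I would apply the high-dimensional Gaussian comparison inequality of \cite{chernozhuokov2022improved} conditionally on the sample. This reduces the problem to controlling the max-norm deviation $|\widehat{\bm{\Omega}}-\pmb{\Omega}|_\infty=\max_{i,j}|\widehat{\Omega}_{ij}-\Omega_{ij}|$: the comparison bound is a fractional power of this quantity times a polylogarithmic factor in $N$, and tracking this through is what produces the $(\log N)^{5/4}(\widetilde{m}/T)^{1/4}$ term.

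It remains to establish the uniform (over all $N^2$ entries) rate of the kernel estimator, namely $|\widehat{\bm{\Omega}}-\pmb{\Omega}|_\infty=O_P(\sqrt{\widetilde{m}\log N/T})$ up to lower-order bias. I would decompose each entrywise error into a smoothing-bias part and a stochastic part (the demeaning by $\overline{\mathbf{x}}$ rather than $\pmb{\mu}$ being absorbed as a negligible correction). The bias is governed by the kernel order $q_a$ of $a(\cdot)$ together with the summability in condition (1), giving a uniform $O(\widetilde{m}^{-q_a})$ bound; condition (3) $\widetilde{m}\asymp T^{1/(2q_a+1)}$ then renders this bias of smaller order than the stochastic part. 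The stochastic part requires a uniform-over-entries maximal inequality for a double kernel-weighted sum of the dependent HDMA($\infty$) data under only the $J$-th moment of $\varepsilon_{it}$; this is precisely where the robust high-dimensional covariance machinery of \cite{gao2024robust} enters, and condition (2) $\frac{N^2T\log T}{(T\widetilde{m}\log N)^{J/4}}\to 0$ is the growth restriction that makes a union bound over the $N^2$ entries admissible under a polynomial-tail (Rosenthal/Markov) argument.

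The main obstacle is exactly this last step: obtaining the sharp max-norm rate for the long-run covariance estimator under cross-sectional dependence of unknown and possibly strong magnitude and with only polynomial moments, and then matching that rate against the correct fractional exponent in the Gaussian comparison so the two combine into $(\log N)^{5/4}(\widetilde{m}/T)^{1/4}$. Care is needed because the HDMA($\infty$) structure couples the time-series autocovariance truncation with the cross-sectional entries, so bias and variance must be controlled simultaneously and uniformly across the $N^2$ entries rather than one entry at a time as in the scalar HAC literature.
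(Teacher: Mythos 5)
Your proposal is correct and follows essentially the same route as the paper's own proof: the paper likewise reduces the statement to the max-norm rate $|\widehat{\bm{\Omega}}-\bm{\Omega}|_{\max}=O_P(\sqrt{\widetilde{m}\log N/T})$ --- establishing it via a Rosenthal-inequality moment bound on a coupled copy of $\mathbf{y}_t=\sum_{\ell\ge 0}\mathbf{B}_\ell\pmb{\varepsilon}_{t-\ell}$, Lemma A.8(1) and the standard bias argument of \cite{gao2024robust} under conditions (2)--(3), with the demeaning by $\overline{\mathbf{x}}$ absorbed as an $O_P(\widetilde{m}\log N/T)$ correction --- and then concludes from Theorem \ref{THM.5} combined with Proposition 2.1 of \cite{chernozhuokov2022improved}, which is exactly your oracle-Gaussian comparison step producing the $(\log N)^{5/4}(\widetilde{m}/T)^{1/4}$ term. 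No gaps.
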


Theorem \ref{THM.6} builds upon the robust inference of high-dimensional covariance matrix estimation presented in \cite{gao2024robust} and the continuity of the maximum of a Gaussian distribution.  Similar to Theorem \ref{THM.5}, Theorem \ref{THM.6} accommodates various types of CD. 

The condition $\max_{i}\sum_{\ell=1}^{\infty}\ell^{q_\alpha}\|\mathbf{b}_{\ell i}^{\sharp}\|_2<\infty$ is slightly more restrictive than those required in Assumption \ref{AS3}. This is not surprising, as we now need to investigate the uniform inference rather than derive CLT for any individual. Nevertheless, we only require an algebraic decay rate of the temporal dependence. $\widetilde{m}\asymp T^{1/(2q_{\alpha}+1)}$ corresponds to the optimal level of bandwidth in terms of minimizing the asymptotic mean squared error of each element in $\widehat{\bm{\Omega}}$. The condition $\frac{N^2T\log T}{(T\widetilde{m}\log N)^{J/4}}\to 0$ imposes restrictions on $(\widetilde{m}, N, T)$ jointly, which are easy to realize. If $\varepsilon_{it}$ is sub-Gaussian, $J$ can be arbitrarily large. Then the restrictions can be much simplified.

In the following subsection, we show that the above results offer the theoretical framework for the purpose of inference.

\subsection{Test Statistic}\label{Sec2.3}

According to the results established in Section \ref{Sec2.1} and \ref{Sec2.2}, we are now ready to  consider an $L_\infty$-based test statistic:

$$
Q_{NT} = \max_i \sqrt{T}(\overline{x}_i - \overline{x}),
$$
where $\overline{x}_i = \frac{1}{T}\sum_{t=1}^{T}x_{it}$ and $\overline{x} = \frac{1}{NT}\sum_{i=1}^{N}\sum_{t=1}^{T}x_{it}$. Under $\mathbb{H}_0$, assume $N \leq L_N < N^2$ which implies that $\overline{x}-\mu = o_P(1/\sqrt{T})$, then simple algebra yields that
$$
Q_{NT} = \left|\frac{1}{\sqrt{T}}\sum_{t=1}^{T}(\mathbf{x}_t-\mu\cdot \mathbf{1}_N) \right|_{\infty}  + o_P(1).
$$
Hence, we can calculate the critical value of $Q_{NT}$ using Theorem \ref{THM.6}. In addition, our test statistic can detect a class of sparse local alternatives at the rate $T^{-1/2}$. Specifically, under the sparse local alternatives such that $\mathbb{H}_1:\ \mu_i = \mu + a_{T}\delta_i$ with $a_T = 1/\sqrt{T}$ and $|\delta_i| > 0$ for some $i$ (our $L_\infty$-based test is still powerful even only one individual violates the null hypothesis), we have
$$
Q_{NT} = \max_i\left(\frac{1}{\sqrt{T}}\sum_{t=1}^{T}(x_{it} - \mu) + \delta_i\right) + o_P(1),
$$
and thus $Q_{NT}$ diverges to infinity if $ \sqrt{T}a_T \to \infty$. We now state the following proposition.

\begin{proposition}\label{P0}
Let Assumptions \ref{AS1} and \ref{AS2}.1 hold with $J>4$. Additionally, suppose that (1) $\max_{i}\sum_{\ell=1}^{\infty}\ell^{q_\alpha}\|\mathbf{b}_{\ell i}^{\sharp}\|_2<\infty$, (2) $\frac{N^2T\log T}{(T\widetilde{m}\log N)^{J/4}}\to 0$, (3) $\widetilde{m} \asymp T^{1/(2q_{\alpha}+1)}$ and (4) $\overline{x} - \mu = o_P(1/\sqrt{T})$. Then under the null hypothesis
{\small
\begin{eqnarray*}
\sup_{u\in \mathbb{R}}\left|\Pr(Q_{NT}\le u)- \mathrm{Pr}^*\left(\left|\widehat{\bm{\Omega}}^{1/2}\frac{1}{\sqrt{T}}\sum_{t=1}^{T}\mathbf{z}_t^* \right|_{\infty} \leq u\right)\right|=o_P(1),
\end{eqnarray*}
where $\{\mathbf{z}_t^* \mid t\in [T]\}$} is a sequence of i.i.d. $N$-dimensional Gaussian random vectors with $\mathbf{z}_t^*\sim N(\mathbf{0},\mathbf{I}_N)$.
\end{proposition}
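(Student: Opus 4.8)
The plan is to reduce the statement to the Gaussian-multiplier bootstrap approximation already supplied by Theorem \ref{THM.6}, after first linearizing $Q_{NT}$ and then transferring a negligible stochastic perturbation onto the level sets of the relevant distribution function. First I would exploit the algebra preceding the proposition: since $\sqrt{T}(\overline{x}-\mu)$ does not depend on $i$ and, by the paper's convention, $|\mathbf{v}|_\infty=\max_i v_i$, we may write
\[
Q_{NT}=\max_i\sqrt{T}(\overline{x}_i-\mu)-\sqrt{T}(\overline{x}-\mu)=\left|\frac{1}{\sqrt{T}}\sum_{t=1}^{T}(\mathbf{x}_t-\mu\cdot\mathbf{1}_N)\right|_\infty-\sqrt{T}(\overline{x}-\mu).
\]
Setting $Y_{NT}\coloneqq\left|\frac{1}{\sqrt{T}}\sum_{t=1}^{T}(\mathbf{x}_t-\mu\cdot\mathbf{1}_N)\right|_\infty$, condition (4) yields $Q_{NT}=Y_{NT}+o_P(1)$. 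Under $\mathbb{H}_0$ we have $\pmb{\mu}=\mu\cdot\mathbf{1}_N$, so $Y_{NT}$ is exactly the quantity whose law appears on the right-hand side of Theorem \ref{THM.6}.

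The second step invokes Theorem \ref{THM.6} essentially verbatim. Its hypotheses are precisely Assumption \ref{AS1}.1 with $J>4$, Assumption \ref{AS2}.1, and conditions (1)--(3) of the proposition, and under $\mathbb{H}_0$ its second probability equals $\Pr(Y_{NT}\le u)$. Hence $\sup_{u}\bigl|\mathrm{Pr}^*(|\widehat{\bm{\Omega}}^{1/2}T^{-1/2}\sum_{t=1}^{T}\mathbf{z}_t^*|_\infty\le u)-\Pr(Y_{NT}\le u)\bigr|=o_P(1)$, once one checks that the explicit rate on the right of Theorem \ref{THM.6} vanishes: the $(\log N)^{5/4}(\widetilde{m}/T)^{1/4}$ term vanishes because $\widetilde{m}\asymp T^{1/(2q_\alpha+1)}$ forces $\widetilde{m}/T\to 0$ at a polynomial rate, while the remaining two terms vanish because condition (2) constrains the growth of $N$ relative to $T$. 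This reduces the proposition to showing $\sup_u|\Pr(Q_{NT}\le u)-\Pr(Y_{NT}\le u)|=o_P(1)$, after which the triangle inequality closes the argument.

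The remaining and most delicate step is to convert the pointwise gap $Q_{NT}-Y_{NT}=o_P(1)$ into this uniform statement about distribution functions. I would use the standard sandwiching $\Pr(Q_{NT}\le u)\le\Pr(Y_{NT}\le u+\zeta)+\Pr(|Q_{NT}-Y_{NT}|>\zeta)$ together with its lower counterpart, and then control the modulus of continuity $\sup_u[\Pr(Y_{NT}\le u+\zeta)-\Pr(Y_{NT}\le u)]$. Because Theorem \ref{THM.5} places $Y_{NT}$ within $o(1)$ (uniformly in $u$) of the maximal coordinate of a centered Gaussian vector with covariance $\mathbf{B}\mathbf{B}^\top$, this modulus is bounded, via Nazarov's anti-concentration inequality, by $C\zeta\sqrt{\log N}$ plus the Gaussian-approximation error of Theorem \ref{THM.5}, the latter being $o(1)$ under the stated conditions.

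The hard part is choosing the cushion $\zeta=\zeta_T\to 0$ so that simultaneously $\zeta_T\sqrt{\log N}\to 0$, making the anti-concentration term negligible, and $\Pr(|Q_{NT}-Y_{NT}|>\zeta_T)\to 0$. Since $Q_{NT}-Y_{NT}=-\sqrt{T}(\overline{x}-\mu)$, and the second-moment bound underlying Theorem \ref{THM.1} gives $\sqrt{T}(\overline{x}-\mu)=O_P(\sqrt{L_N}/N)$, the two requirements are jointly feasible precisely when the perturbation beats the $\sqrt{\log N}$ inflation, i.e.\ when $\sqrt{L_N\log N}/N\to 0$ (a mild strengthening of $L_N=o(N^2)$). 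This compatibility is the genuine content I would need to extract from condition (4), and it is the one place where the argument must be handled with care; everything else is routine bookkeeping---verifying that the rates in Theorems \ref{THM.5}--\ref{THM.6} are $o_P(1)$ under (1)--(3), and assembling the three bounds by the triangle inequality.
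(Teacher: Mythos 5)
Your proposal is correct and follows essentially the same route as the paper: the paper's own justification (the derivation in Section \ref{Sec2.3}) is exactly your first two steps --- linearize $Q_{NT}$ under $\mathbb{H}_0$ using condition (4) and the fact that $\sqrt{T}(\overline{x}-\mu)$ is common across $i$, then invoke Theorem \ref{THM.6} verbatim --- and your third step (the $\zeta$-sandwich with Nazarov-type anti-concentration via Lemma A.1 of \cite{chernozhukov2017central}) is the same device the paper itself deploys inside the proof of Theorem \ref{THM.5} to absorb the BN truncation residual. Your closing caveat is also well taken: the paper leaves the $\sqrt{\log N}$ inflation implicit, and a fully rigorous version does require the perturbation $\sqrt{T}(\overline{x}-\mu)=O_P(\sqrt{L_N}/N)$ to beat it, i.e.\ $\sqrt{L_N\log N}/N\to 0$, a mild strengthening of condition (4) that the paper does not spell out.
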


The distributional approximation established in Proposition 1 enables us to explore the known features of the partial sum of the bootstrapped versions of Gaussian samplers for inferential purposes.

\section{Extensions}\label{Sec3}

In this section, we consider three extensions. We firstly connect the above investigation with the literature on grouping structure analysis such as those surveyed in \cite{BM2015} and \cite{SSP2016}. In the second extension, we relax some restrictions imposed on the panel data unit root processes of \cite{PM1999}. Finally, we revisit the model of \cite{Pesaran2006} and the corresponding CCE estimators using the results established above. 

\subsection{Connection with Grouping Analysis}

In this subsection, we explain how the above study can be connected with the current literature on grouping structure analysis. The numerical implementation of grouping can be easily done via $K$-mean or the agglomerative hierarchical clustering algorithm (\citealp{hastie2009elements}). In what follows, we explain how the above results are connected with the grouping analysis.

Note that $\mu_i$ can be estimated by $\overline{x}_i$ as in Section \ref{Sec2.3}, so $\mu_i$ is already known approximately. To proceed, we introduce the grouping structure, and impose the necessary conditions. Denote $J_0$ sets of indices (i.e., $\mathscr{G}_1,\ldots, \mathscr{G}_{J_0}$) such that

\begin{eqnarray*}
    \cup_{j=1}^{J_0} \mathscr{G}_j=[N],\quad\mathscr{G}_{j_1} \cap\mathscr{G}_{j_2}=\emptyset \text{ for }j_1\ne j_2,\quad \text{and}\quad \sharp \mathscr{G}_{j} \asymp N \text{ for }\forall j,
\end{eqnarray*}
where $J_0 \ge 1$ is a fixed positive integer, and $\sharp \mathscr{G}_{j}$ stands for the cardinality of $\mathscr{G}_{j}$.  

\begin{assumption}\label{AS.group}
    \item Suppose that
    \begin{enumerate}[leftmargin=24pt, parsep=2pt, topsep=2pt]
        \item $\displaystyle \max_{j\in [J]}\max_{i\in \mathscr{G}_{j}} | \mu_{i}-\overline{\mu}_j|\le c_{N} $ with $c_{N}\to 0$, where $\overline{\mu}_j$'s are fixed numbers;
        \item $\displaystyle \min_{j_1\ne j_2}  | \overline{\mu}_{j_1}- \overline{\mu}_{j_2}|\ge c_0 >0$, where $c_0$ is a constant.
    \end{enumerate}
\end{assumption}

Typically, one assumes that $\mu_{i}\equiv \overline{\mu}_j$ for all $i\in \mathscr{G}_{j}$, so $c_{N}\equiv 0$. Assumption \ref{AS.group}.1 relaxes this restriction slightly. Assumption \ref{AS.group}.2 requires all groups to be well partitioned. If $J_0$ is known, we can estimate the group structure as follows:

\begin{eqnarray}\label{est.group}
    (\widehat{\mathcal{G}},\widehat{\pmb{\nu}}) =\argmin_{\mathcal{G}, \pmb{\nu}} S(\mathcal{G}, \pmb{\nu}) ,
\end{eqnarray}
where $S(\mathcal{G}, \pmb{\nu}) \coloneqq \frac{1}{N}\sum_{j=1}^{J_0}\sum_{i\in \mathcal{G}_j}| \overline{x}_i- \nu_j|^2$, $\mathcal{G} \coloneqq (\mathcal{G}_1,\ldots, \mathcal{G}_{J_0})$, $\pmb{\nu} \coloneqq(\nu_1,\ldots, \nu_{J_0})$, $\widehat{\mathcal{G}}\coloneqq (\widehat{\mathcal{G}}_1,\ldots, \widehat{\mathcal{G}}_{J_0})$, and $\widehat{\pmb{\nu}}\coloneqq (\widehat{\nu}_1,\ldots, \widehat{\nu}_{J_0})$. Here $\mathcal{G}_1,\ldots, \mathcal{G}_{J_0}$  are $J_0$ sets of indices such that 

\begin{eqnarray*}
    \cup_{j=1}^J \mathcal{G}_j=[N] ,\quad \text{and}\quad \mathcal{G}_{j_1} \cap\mathcal{G}_{j_2}=\emptyset \text{ for }j_1\ne j_2.
\end{eqnarray*}
In \eqref{est.group}, $\widehat{\mathcal{G}}$ and $\widehat{\pmb{\nu}}$ estimate $\mathscr{G}\coloneqq (\mathscr{G}_1,\ldots, \mathscr{G}_N)$ and $\overline{\pmb{\mu}} \coloneqq (\overline{\mu}_1,\ldots, \overline{\mu}_{J_0})$, respectively.

Since $J$ is not known in practice, we estimate $J_0$ using the following information criterion:

\begin{eqnarray}\label{est.J}
    \widehat{J}=\argmin_{J\le J^*} \text{IC}(J) ,
\end{eqnarray}
where $\text{IC}(J)\coloneqq S(\widehat{\mathcal{G}}_{\mid J}, \widehat{\pmb{\nu}}_{\mid J}) + \rho_{NT}J$, $J^*$ is a user-specified large fixed constant,  and $\rho_{NT}$ is a tuning parameter satisfying that 

\begin{eqnarray*}
    \rho_{NT}\to 0\quad\text{and}\quad \rho_{NT} /(\sqrt{\log(N)/T} +c_{N})\to \infty.
\end{eqnarray*}
In \eqref{est.J}, $\widehat{\pmb{\nu}}_{\mid J}\coloneqq (\widehat{\nu}_{1\mid J},\ldots,\widehat{\nu}_{J\mid J})$ and $\widehat{\mathcal{G}}_{\mid J}\coloneqq (\widehat{\mathcal{G}}_{1\mid J},\ldots, \widehat{\mathcal{G}}_{J\mid J})$ are obtained via \eqref{est.group} by assuming the number of groups being $J$. A natural choice of $\rho_{NT}$ is $[\log(N+T)]^{-1}$.

\begin{proposition}\label{LM.group}
    Let Assumptions \ref{AS1}-\ref{AS3} hold. Then the following results hold:

    \begin{enumerate}[leftmargin=24pt, parsep=2pt, topsep=2pt]
        \item $\displaystyle\max_{i\in [N]}|\overline{x}_i -\sum_{j\in [J_0]}\overline{\mu}_j I(i\in \mathscr{G}_j)|=O_P(\sqrt{\log(N)/T} +c_{N})$.
        \item Given $J_0$, $\Pr(\widehat{\mathcal{G}}_j=\mathscr{G}_j)\to 1$ for all $j\in [J_0]$, where $\widehat{\mathcal{G}}_j$ is obtained via \eqref{est.group}.
        \item When $J_0$ is unknown, $\Pr(\widehat{J}=J_0)\to 1$, where $\widehat{J}$ is obtained via \eqref{est.J}.
    \end{enumerate}
\end{proposition}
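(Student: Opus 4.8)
The plan is to establish part (1) first, since the uniform estimation-error bound it provides is what drives the two clustering-consistency statements. Writing $\overline{x}_i-\mu_i=\frac{1}{T}\sum_{t=1}^T\mathbf{e}_i^\top\mathbf{B}(L)\pmb{\varepsilon}_t$ and invoking the high-dimensional BN decomposition used for Theorem \ref{THM.3}, I would split $\mathbf{e}_i^\top\mathbf{B}(L)\pmb{\varepsilon}_t$ into the long-run term $\mathbf{e}_i^\top\mathbf{B}\pmb{\varepsilon}_t$ plus a telescoping BN remainder. The remainder contributes only a boundary term of order $\frac{1}{T}(\widetilde{\eta}_{iT}-\widetilde{\eta}_{i0})$, which is uniformly $o_P$ of the leading order because Assumption \ref{AS3} yields $\sum_{\ell}\|\mathbf{b}_{\ell i}^{\sharp}\|_2<\infty$ uniformly in $i$ (Cauchy--Schwarz converts $\sum_\ell\ell^2\|\mathbf{b}_{\ell i}^{\sharp}\|_2^2<\infty$ into $\ell_1$-summability). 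The leading term $\frac{1}{T}\sum_t\mathbf{e}_i^\top\mathbf{B}\pmb{\varepsilon}_t$ is, for fixed $i$, an i.i.d.\ sum with bounded variance $\sigma_{p,i}^2$ and, by Rosenthal's inequality together with $\max_i\|\mathbf{b}_{i}^{\sharp}\|_2<\infty$, uniformly bounded $J$-th moments. A Nagaev-type tail bound for each $i$ followed by a union bound over $i\in[N]$ then delivers $\max_i|\overline{x}_i-\mu_i|=O_P(\sqrt{\log N/T})$ (the polynomial Nagaev tail being controlled under a mild joint growth restriction on $(N,T)$, or removed entirely under a sub-Gaussian innovation tail). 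Combining this with $\max_{j}\max_{i\in\mathscr{G}_j}|\mu_i-\overline{\mu}_j|\le c_N$ from Assumption \ref{AS.group} through the triangle inequality gives part (1).

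For part (2) I would work on the event $\mathcal{E}_N:=\{\max_i|\overline{x}_i-\sum_j\overline{\mu}_jI(i\in\mathscr{G}_j)|<c_0/3\}$, which has probability tending to one since the bound in part (1) is $o_P(1)$ while the separation $c_0$ in Assumption \ref{AS.group} is fixed. On $\mathcal{E}_N$ each $\overline{x}_i$ lies within $c_0/3$ of its own population center and at least $2c_0/3$ from every other center. A standard comparison of the $K$-means objective $S(\mathcal{G},\pmb{\nu})$ of \eqref{est.group} then shows that the minimizing centers $\widehat{\pmb{\nu}}$ must, after a label permutation pinned down by ordering the centers, be within $o_P(1)$ of $\overline{\pmb{\mu}}$; assigning any unit to a wrong group would therefore strictly increase the objective, so $\widehat{\mathcal{G}}_j=\mathscr{G}_j$ for all $j$ with probability tending to one. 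The balancedness $\sharp\mathscr{G}_{j}\asymp N$ ensures each estimated center aggregates $\Theta(N)$ accurate individual estimates and is thus consistent.

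Part (3) is the usual over-/under-fitting dichotomy for the information criterion \eqref{est.J}. For under-fitting ($J<J_0$), at least two distinct population groups are forced to share a center, so by separation the fit term $S(\widehat{\mathcal{G}}_{\mid J},\widehat{\pmb{\nu}}_{\mid J})$ stays bounded below by a constant of order $c_0^2$ on $\mathcal{E}_N$, which dominates the penalty saving $\rho_{NT}(J_0-J)=o(1)$; hence $\mathrm{IC}(J)>\mathrm{IC}(J_0)$ eventually. For over-fitting ($J>J_0$), part (2) implies the correct partition is already essentially recovered, so the extra fit improvement is at most $O_P((\sqrt{\log N/T}+c_N)^2)$, whereas the penalty grows by at least $\rho_{NT}$; since $\rho_{NT}/(\sqrt{\log N/T}+c_N)\to\infty$ with $\sqrt{\log N/T}+c_N\to0$, the squared error is $o_P(\rho_{NT})$ and again $\mathrm{IC}(J)>\mathrm{IC}(J_0)$. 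Combining both directions yields $\Pr(\widehat{J}=J_0)\to1$.

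I expect the main obstacle to be part (1): controlling $\max_i|\overline{x}_i-\mu_i|$ uniformly at the sharp $\sqrt{\log N/T}$ rate under only finitely many moments and arbitrary cross-sectional dependence. The delicate points are extracting the uniform variance and $J$-th-moment bounds from the row conditions in Assumption \ref{AS3}, verifying that the BN boundary remainder is uniformly negligible, and handling the polynomial Nagaev tail in the union bound, which may require a mild growth condition on $(N,T)$ unless sub-Gaussian innovations are assumed. Once part (1) is secured, parts (2) and (3) follow from standard clustering- and information-criterion-consistency arguments powered by the fixed separation $c_0$ and the penalty-rate condition on $\rho_{NT}$.
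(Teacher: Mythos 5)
Your proposal is correct and shares the paper's overall architecture---BN decomposition plus a uniform maximal bound for part (1), an objective-comparison argument for part (2), and the under/over-fitting dichotomy for part (3)---but it takes a genuinely different route on the concentration step in part (1). The paper writes $\overline{x}_i-\mu_i$ exactly as you do, neglects the BN boundary terms, normalizes each i.i.d.\ sum $\zeta_{is}=\mathbf{e}_i^\top\mathbf{B}\pmb{\varepsilon}_s$, and applies the exponential tail bound of Corollary 2.1 in \cite{JSW_2023} with $\epsilon=2\sqrt{\log(N^c)}$, $c>1$, so the union bound yields $2N^{1-c}\to 0$ with no explicit joint rate restriction on $(N,T)$; your Rosenthal-plus-Nagaev route instead produces a polynomial tail of order $NT(T\log N)^{-J/2}$, which forces a growth condition of the form $N=o\bigl(T^{J/2-1}(\log N)^{J/2}\bigr)$ that you honestly flag---under only $J\ (\ge 4)$ moments this is a real (if mild) weakening relative to the proposition as stated, whereas the paper's cited inequality absorbs it (at the implicit cost that the moderate-deviation range of that corollary must cover $\epsilon\asymp\sqrt{\log N}$). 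In parts (2)--(3) you are, if anything, more careful than the paper: its proof of (2) simply sandwiches $0\le S(\widehat{\mathcal{G}},\widehat{\pmb{\nu}})\le S(\mathscr{G},\overline{\pmb{\mu}})=O_P(\log(N)/T+c_N^2)$ and asserts that any misallocation makes $S(\mathscr{G},\overline{\pmb{\mu}})$ ``a better option,'' a claim that as written hides the per-unit comparison of the misallocation cost $\asymp c_0^2/N$ against the sandwich bound; your route---center consistency up to relabelling (using $\sharp\mathscr{G}_j\asymp N$) followed by exact nearest-center assignment on the event $\max_i|\overline{x}_i-\sum_{j}\overline{\mu}_jI(i\in\mathscr{G}_j)|<c_0/3$---is the standard rigorous rendition of the same idea. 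Your part (3) matches the paper's case analysis essentially verbatim: for $J=J_0-1$ the fit term is bounded below by a separation constant (the paper's $c^*$), and for $J=J_0+1$ the penalty dominates the fit gain because $\rho_{NT}/(\sqrt{\log(N)/T}+c_N)\to\infty$.
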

The first result of this proposition is evident building on the investigation in Section \ref{Sec2.2} and Assumption \ref{AS.group}.1. After successfully partitioning the individuals, one can further investigate the homogeneity/heterogeneity for the $j^{th}$ group, e.g.,

\begin{eqnarray*}
    \mathbb{H}_0: \mu_{i}\equiv \overline{\mu}_j\text{ for all }i \in \mathscr{G}_j
\end{eqnarray*}
using the results developed in Sections \ref{Sec2.1}-\ref{Sec2.3}.

\subsection{Nonstationary Panel Data}

We now revisit the nonstationary panel data model studied in \cite{PM1999}. Consider the following data generating process:

\begin{eqnarray}
\mathbf{y}_t = \mathbf{y}_{t-1}+\mathbf{B}(L)\pmb{\varepsilon}_t,
\end{eqnarray}
where $\mathbf{y}_t=(y_{1t},\ldots, y_{Nt})^\top$, $\mathbf{y}_0=\mathbf{0}$ without loss of generality, and $\mathbf{B}(L)\pmb{\varepsilon}_t$ is the same as that in \eqref{def.xt_1}. As presented in Section 2 of \cite{PM1999}, when studying nonstationary panel data, a key quantity is

\begin{eqnarray}
    \frac{1}{NT^2}\sum_{i=1}^N\sum_{t=1}^T y_{it}^2 = \frac{1}{NT^2}\sum_{t=1}^T\mathbf{y}_t^\top \mathbf{y}_t,
\end{eqnarray}
which is also the foundation of some basic results of \cite{BAI200982} and \cite{DGP2021}. Using the results of Section \ref{Sec2}, we are now able to account for the cross-sectional dependence of these unit root processes, and present the following proposition.

\begin{proposition}\label{P2}
Suppose that Assumption \ref{AS1} holds with $L_N=N$, and let further that  $\lim_{N}\frac{1}{N}\|\mathbf{B} \|^2\to b$. Then as $(N,T)\to (\infty,\infty)$, $\frac{1}{NT^2}\sum_{i=1}^N\sum_{t=1}^T y_{it}^2\to_P\frac{b}{2}$.
\end{proposition}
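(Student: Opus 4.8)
The plan is to reduce the double sum $\sum_{i=1}^N\sum_{t=1}^T y_{it}^2=\sum_{t=1}^T\mathbf{y}_t^\top\mathbf{y}_t$ to a quadratic form in the accumulated i.i.d. innovations by means of the high-dimensional Beveridge--Nelson decomposition developed in the toolkit, and then to show that the leading term has the right mean while its variance vanishes because the cross-sectional dimension supplies a law of large numbers. Writing $\mathbf{S}_t\coloneqq\sum_{s=1}^t\pmb{\varepsilon}_s$ and using the decomposition $\mathbf{B}(L)=\mathbf{B}-(1-L)\widetilde{\mathbf{B}}(L)$ with $\widetilde{\mathbf{B}}(L)=\sum_{\ell\ge 0}\widetilde{\mathbf{B}}_\ell L^\ell$ and $\widetilde{\mathbf{B}}_\ell=\sum_{j>\ell}\mathbf{B}_j$, the partial sums telescope to
\[
\mathbf{y}_t=\mathbf{B}\,\mathbf{S}_t-\widetilde{\mathbf{u}}_t+\widetilde{\mathbf{u}}_0,\qquad \widetilde{\mathbf{u}}_t\coloneqq\widetilde{\mathbf{B}}(L)\pmb{\varepsilon}_t .
\]
Expanding $\|\mathbf{y}_t\|^2$ and summing over $t$ splits the target into the main term $\frac{1}{NT^2}\sum_t \mathbf{S}_t^\top\mathbf{B}^\top\mathbf{B}\,\mathbf{S}_t$, a residual term in $\widetilde{\mathbf{u}}_t$, and a cross term.

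First I would dispatch the main term. Set $\mathbf{M}\coloneqq\mathbf{B}^\top\mathbf{B}$. Since $E[\pmb{\varepsilon}_s\pmb{\varepsilon}_{s'}^\top]=\mathbf{I}_N$ for $s=s'$ and $\mathbf{0}$ otherwise, one gets $E[\mathbf{S}_t^\top\mathbf{M}\mathbf{S}_t]=t\,\tr(\mathbf{M})$, so
\[
E\Big[\tfrac{1}{NT^2}\textstyle\sum_{t=1}^T\mathbf{S}_t^\top\mathbf{M}\mathbf{S}_t\Big]=\frac{\tr(\mathbf{M})}{N}\cdot\frac{T+1}{2T}\to\frac{b}{2},
\]
using $\tr(\mathbf{M})=\|\mathbf{B}\|^2$ and the hypothesis $\|\mathbf{B}\|^2/N\to b$. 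For the variance I would split $\mathbf{S}_t^\top\mathbf{M}\mathbf{S}_t$ into its diagonal part $\sum_{s\le t}\pmb{\varepsilon}_s^\top\mathbf{M}\pmb{\varepsilon}_s$ and its off-diagonal part $2\sum_{s_1<s_2\le t}\pmb{\varepsilon}_{s_1}^\top\mathbf{M}\pmb{\varepsilon}_{s_2}$; independence across time kills their covariance, the finite fourth cumulant $\kappa_4$ (Assumption \ref{AS1}.1) controls $\mathrm{Var}(\pmb{\varepsilon}^\top\mathbf{M}\pmb{\varepsilon})=O(\tr(\mathbf{M}^2)+\kappa_4\sum_i M_{ii}^2)$, and the martingale structure gives a second moment of order $t^2\tr(\mathbf{M}^2)$ for the off-diagonal part. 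The crucial point is that with $L_N=N$ Assumption \ref{AS1}.2.a forces $\|\mathbf{B}\|_2=O(1)$, whence $\tr(\mathbf{M}^2)\le\|\mathbf{B}\|_2^2\tr(\mathbf{M})=O(N)$ and $\max_i M_{ii}\le\|\mathbf{B}\|_2^2=O(1)$, so $\mathrm{Var}(\mathbf{S}_t^\top\mathbf{M}\mathbf{S}_t)=O(t^2 N)$. A Cauchy--Schwarz bound on the cross-covariances then yields $\mathrm{Var}(\sum_t\mathbf{S}_t^\top\mathbf{M}\mathbf{S}_t)\le(\sum_t\sqrt{\mathrm{Var}})^2=O(N\,T^4)$, so after normalization the variance is $O(1/N)\to 0$ and Chebyshev delivers $\frac{1}{NT^2}\sum_t\mathbf{S}_t^\top\mathbf{M}\mathbf{S}_t\to_P b/2$.

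It then remains to show the residual and cross terms are $o_P(1)$. Since $\sum_\ell\|\widetilde{\mathbf{B}}_\ell\|_2\le\sum_j j\|\mathbf{B}_j\|_2<\infty$ (again Assumption \ref{AS1}.2.a), I would bound $E\|\widetilde{\mathbf{u}}_t\|^2=\sum_\ell|\widetilde{\mathbf{B}}_\ell|_2^2\le N\sum_\ell\|\widetilde{\mathbf{B}}_\ell\|_2^2=O(N)$, giving $\frac{1}{NT^2}\sum_t\|\widetilde{\mathbf{u}}_t-\widetilde{\mathbf{u}}_0\|^2=O_P(1/T)=o_P(1)$ by Markov. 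The cross term is controlled by Cauchy--Schwarz in $t$ against the main term ($O_P(NT^2)$) and the residual ($O_P(NT)$), yielding $O_P(T^{-1/2})$. Collecting the three pieces gives the claim.

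I expect the variance control of the leading quadratic form to be the main obstacle, specifically establishing that $\tr(\mathbf{M}^2)$ and $\sum_i M_{ii}^2$ are $O(N)$ rather than $O(N^2)$. This is exactly where the weak-dependence restriction $L_N=N$, hence the bounded spectral norm $\|\mathbf{B}\|_2=O(1)$, is indispensable: it is the bounded spectral norm that makes the cross-sectional average self-averaging and forces the \emph{deterministic} limit $b/2$, in contrast to the random limit $\int_0^1 W(r)^2\,\mathrm{d}r$ (up to scale) that governs the single-series unit-root case.
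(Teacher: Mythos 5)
Your proposal is correct and follows essentially the same route as the paper: the high-dimensional BN decomposition reduces the problem to the quadratic form $\frac{1}{NT^2}\sum_{t=1}^T \mathbf{S}_t^\top\mathbf{B}^\top\mathbf{B}\,\mathbf{S}_t$, whose mean is computed exactly as in the paper's proof and whose variance is killed by the same spectral-norm facts ($\|\mathbf{B}\|_2=O(1)$ under $L_N=N$, hence $\tr((\mathbf{B}^\top\mathbf{B})^2)=O(N)$), with the residual and cross terms dispatched by the same second-moment bounds on $\widetilde{\mathbf{B}}(L)\pmb{\varepsilon}_t$. The only cosmetic difference is bookkeeping in the variance step: you invoke the classical quadratic-form variance formula with $\kappa_4$ plus Cauchy--Schwarz over $t$, whereas the paper enumerates the coincidence patterns among $(s_1,s_2,s_3,s_4)$ and uses its Lemma \ref{LM.A6} bound $\|E[(\pmb{\varepsilon}\otimes\pmb{\varepsilon})(\pmb{\varepsilon}^\top\otimes\pmb{\varepsilon}^\top)]\|_2=O(N)$ --- both deliver a vanishing variance ($O(1/N)$ in your accounting, $O(1/T\vee 1/N)$ in the paper's) and the same conclusion via Chebyshev.
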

Previously, one has to impose cross-sectional independence such as \cite{PM1999}, \cite{DGP2021} and \cite{HUANG2021198} due to technical constraints.

\subsection{CCE Estimators}\label{Sec3.1}

We now revisit the model of \cite{Pesaran2006} and the corresponding CCE estimators using the results established above. Accordingly, we let $L_N\equiv N$ in what follows. The model is as follows:

\begin{eqnarray*}
y_{it}&=& \mathbf{w}_{it}^\top \pmb{\theta}_i + \pmb{\gamma}_i^\top \mathbf{f}_t+ \epsilon_{it}, \notag \\
\mathbf{w}_{it}&=& \pmb{\Gamma}_i^\top \mathbf{f}_t +\mathbf{v}_{it},
\end{eqnarray*}
where only $\{(y_{it}, \mathbf{w}_{it})\mid i\in [N], t\in [T] \}$ are observable, $\mathbf{w}_{it}$ is a $k\times 1$ vector, $\mathbf{f}_t$ is an $m\times 1$ vector, and the dimensions of the other variables are defined accordingly. Both $m$ and $k$ are finite. The model admits a vector form:

\begin{eqnarray*}
\mathbf{Y}_i =\mathbf{W}_i\pmb{\theta}_i+\mathbf{F}\pmb{\gamma}_i +\pmb{\epsilon}_i,
\end{eqnarray*}
where we have $\mathbf{Y}_i =(y_{i1},\ldots, y_{iT})^\top$, $\mathbf{W}_i =(\mathbf{w}_{i1},\ldots, \mathbf{w}_{iT})^\top$, $\mathbf{F} =(\mathbf{f}_1,\ldots, \mathbf{f}_T)^\top$, and $\pmb{\epsilon}_i =(\epsilon_{i1},\ldots, \epsilon_{iT})^\top$. In the homogenous setting (e.g., \citealp{Westerlund2018} and references therein), one further assumes that

\begin{eqnarray}\label{def.theta}
\pmb{\theta}_i\equiv \pmb{\theta}\text{ for all } i\in [N]. 
\end{eqnarray}
It is then natural to question whether \eqref{def.theta} holds practically.  

To exam \eqref{def.theta}, we briefly review the CCE approach. When eliminating the unobservable factor structure, the CCE approach utilizes the following form:

\begin{eqnarray}\label{def.yw}
\begin{pmatrix}
y_{it} \\
\mathbf{w}_{it}
\end{pmatrix} =\left[ \begin{pmatrix}
\pmb{\gamma}_i &  \pmb{\Gamma}_i
\end{pmatrix}\begin{pmatrix}
1 & \mathbf{0} \\
\pmb{\beta}_i & \mathbf{I}_k
\end{pmatrix}\right]^\top \mathbf{f}_t+\begin{pmatrix}
\epsilon_{it} \\
\mathbf{v}_{it}
\end{pmatrix},
\end{eqnarray}
and for notational simplicity, we rewrite \eqref{def.yw} as 

\begin{eqnarray*}
\mathbf{z}_{it}=\mathbf{C}_i^\top \mathbf{f}_t+ \mathbf{u}_{it},
\end{eqnarray*}
where the definitions of $\mathbf{z}_{it}$, $\mathbf{C}_i$, and $\mathbf{u}_{it}$ are self-evident. Simple algebra yields that

\begin{eqnarray*}
\overline{\mathbf{Z}} &=&  \mathbf{F}\overline{\mathbf{C}}+\overline{\mathbf{U}},
\end{eqnarray*}
where $\overline{\mathbf{C}}\coloneqq \frac{1}{N}\sum_{i=1}^N \mathbf{C}_i $, $\overline{\mathbf{Z}}\coloneqq \frac{1}{N}\sum_{i=1}^N \mathbf{Z}_i $ with $\mathbf{Z}_i =(\mathbf{z}_{i1},\ldots, \mathbf{z}_{iT})^\top$,  and $\overline{\mathbf{U}}\coloneqq \frac{1}{N}\sum_{i=1}^N \mathbf{U}_i $ with $\mathbf{U}_i =(\mathbf{u}_{i1},\ldots, \mathbf{u}_{iT})^\top$. Consequently, the CCE estimators of $\pmb{\theta}_i$ and $\pmb{\theta}$ are respectively defined by 

\begin{eqnarray}\label{def.theta_hat}
\widehat{\pmb{\theta}}_i &=&(\mathbf{W}_i^\top \mathbf{M}_{\overline{\mathbf{Z}}} \mathbf{W}_i)^{-1} \mathbf{W}_i^\top \mathbf{M}_{\overline{\mathbf{Z}}} \mathbf{Y}_i \quad \text{for} \quad \forall i\in [N],\notag \\
\widehat{\pmb{\theta}}&=&\left(\sum_{i=1}^N\mathbf{W}_i^\top \mathbf{M}_{\overline{\mathbf{Z}}} \mathbf{W}_i\right)^{-1} \sum_{i=1}^N\mathbf{W}_i^\top \mathbf{M}_{\overline{\mathbf{Z}}} \mathbf{Y}_i,
\end{eqnarray}
where $\mathbf{M}_{\overline{\mathbf{Z}}} =\mathbf{I}_T-\overline{\mathbf{Z}}(\overline{\mathbf{Z}}^\top \overline{\mathbf{Z}})^+ \overline{\mathbf{Z}}^\top$.

The key part of the CCE approach is that $\mathbf{M}_{\overline{\mathbf{Z}}} $ offers a good approximation of $\mathbf{M}_{\mathbf{F}}$, because

\[
\overline{\mathbf{C}}^+ =\overline{\mathbf{C}}^\top(\overline{\mathbf{C}}\, \overline{\mathbf{C}}^\top)^{-1}
\] 
under the conditions: (1) $\overline{\mathbf{C}}$ has full row rank, and (2) $\overline{\mathbf{U}}$ is asymptotically negligible.

Based on \eqref{def.theta_hat}, we construct the following quantities for all $j\in [k]$

\begin{eqnarray*}
Q_{j} = \max_i \frac{1}{\sqrt{T}} \mathbf{e}_j^\top (\mathbf{W}_i^\top \mathbf{M}_{\overline{\mathbf{Z}}} \mathbf{W}_i)(\widehat{\pmb{\theta}}_i-\widehat{\pmb{\theta}}) ,
\end{eqnarray*}
where $\mathbf{e}_j$ is a selection vector as defined in Section \ref{Sec1}. We further let

\begin{eqnarray*}
\widehat{\pmb{\Omega}}_{j} &=& \frac{1}{T}\sum_{t,s=1}^Ta\left(\frac{t-s}{\widetilde{m}}\right)\widehat{\pmb{\xi}}_{jt}\widehat{\pmb{\xi}}_{js}^\top,
\end{eqnarray*}
where $(\widehat{\pmb{\xi}}_{j1},\ldots,\widehat{\pmb{\xi}}_{jT}) = (\widehat{\pmb{\mathbf{U}}}_{1,1}\circ \widehat{\pmb{\mathbf{U}}}_{1,1+j},\ldots, \widehat{\pmb{\mathbf{U}}}_{N,1}\circ \widehat{\pmb{\mathbf{U}}}_{N,1+j})^\top$, $\widehat{\pmb{\mathbf{U}}}_i=  \mathbf{M}_{\overline{\mathbf{Z}}} \mathbf{Z}_i,$  and $\widehat{\pmb{\mathbf{U}}}_{i,j}$ stands for the $j^{th}$ column of $\widehat{\pmb{\mathbf{U}}}_{i}$.

To facilitate the development, we impose the following conditions.

\begin{assumption}\label{AS5}
\item 
\begin{enumerate}[leftmargin=24pt, parsep=2pt, topsep=2pt]
\item (1) Suppose that $\mathbf{f}_t$  is covariance stationary with absolute summable auto covariances, and $\frac{1}{T}\mathbf{F}^\top \mathbf{F}\to_P \pmb{\Sigma}_{\mathbf{f}}>0$. (2) $\overline{\mathbf{C}}\to_P \mathbf{C}$, where $\mathbf{C}$ has full row rank $m$. (3) $\frac{1}{T}\overline{\mathbf{U}}^\top \overline{\mathbf{U}}=O_P(\frac{1}{N})$, $\frac{1}{T} \mathbf{F}^\top \overline{\mathbf{U}}=O_P(\frac{1}{\sqrt{NT}})$ and $\frac{1}{T}\mathbf{V}_i^\top \mathbf{M}_{\mathbf{F}} \pmb{\epsilon}_i=\frac{1}{T}\mathbf{V}_i^\top   \pmb{\epsilon}_i+O_P(\frac{1}{T})$ for $\forall i$. (4) $\frac{1}{T}\mathbf{V}_i^\top \mathbf{V}_i\to_P\pmb{\Sigma}_{\mathbf{V}_i}>0$ for $\forall i$, and $\frac{1}{NT}\sum_{i=1}^N\mathbf{V}_i^\top \mathbf{V}_i\to_P\pmb{\Sigma}_{\mathbf{V}}>0$.  

\item Suppose that  $\pmb{\epsilon}_t =(\epsilon_{1t},\ldots, \epsilon_{Nt})^\top \coloneqq\sum_{\ell=0}^{\infty} \mathbf{B}_\ell  \pmb{\varepsilon}_{t-\ell}$ fulfills Assumption \ref{AS1}, and the conditions of Theorem \ref{THM.6}. Let $\mathbf{v}_t\coloneqq f(\pmb{\epsilon}_{t-1},\ldots, \pmb{\epsilon}_{t-p})$, where  $p$ is fixed, and $f(\pmb{\epsilon}_{t-1},\ldots, \pmb{\epsilon}_{t-p})$ admits a linear combination of $\pmb{\epsilon}_{t-1},\ldots, \pmb{\epsilon}_{t-p}$.

\end{enumerate}
\end{assumption}

The first two conditions of Assumption \ref{AS5}.1 are standard. In the third condition,  $\frac{1}{T}\overline{\mathbf{U}}^\top \overline{\mathbf{U}}=O_P(\frac{1}{N})$ and $\frac{1}{T} \mathbf{F}^\top \overline{\mathbf{U}}=O_P(\frac{1}{\sqrt{NT}})$ are rather standard in view of the development of Section \ref{Sec2.1}, and the following expansions:

\begin{eqnarray*}
\frac{1}{T}\overline{\mathbf{U}}^\top \overline{\mathbf{U}} =\frac{1}{TN^2}\sum_{t=1}^T \mathbf{U}_t^\top\mathbf{1}_N \mathbf{1}_N^\top  \mathbf{U}_t  \quad\text{and}\quad \frac{1}{T} \mathbf{F}^\top \overline{\mathbf{U}} = \frac{1}{TN}\sum_{t=1}^T  \mathbf{f}_t \mathbf{1}_N^\top\mathbf{U}_t,
\end{eqnarray*}
where $\mathbf{U}_t=(\mathbf{u}_{1t},\ldots, \mathbf{u}_{Nt})^\top$. The requirement $\frac{1}{T}\mathbf{V}_i^\top \mathbf{M}_{\mathbf{F}} \pmb{\epsilon}_i=\frac{1}{T}\mathbf{V}_i^\top  \pmb{\epsilon}_i+O_P(\frac{1}{T})$ can be verified by the development of Section \ref{Sec2.2}. Assumption \ref{AS5}.2 allows $\pmb{\epsilon}_t$ and $\mathbf{v}_t$ to be weakly dependent.

Under these condition, the following proposition holds.

\begin{proposition}\label{P1}
Suppose that Assumption \ref{AS5} holds and $N\asymp T$. Then for all $j\in [k]$

\begin{eqnarray*}
\sup_{u\in \mathbb{R}}\left|\Pr(Q_j\le u)- \mathrm{Pr}^*\left(\left|\widehat{\bm{\Omega}}_j^{1/2}\frac{1}{\sqrt{T}}\sum_{t=1}^{T}\mathbf{z}_t^* \right|_{\infty} \leq u\right)\right|=o_P(1),
\end{eqnarray*}
where $\{\mathbf{z}_t^* \mid t\in [T]\}$ is a sequence of i.i.d. $N$-dimensional Gaussian random vectors with $\mathbf{z}_t^*\sim N(\mathbf{0},\mathbf{I}_N)$.
\end{proposition}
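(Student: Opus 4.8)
The plan is to reduce $Q_j$ to a maximum of partial sums that falls under the scope of Theorem \ref{THM.6} (equivalently Proposition \ref{P0}), and then to verify that the feasible covariance estimator $\widehat{\bm{\Omega}}_j$ consistently captures the relevant long-run covariance so that the Gaussian multiplier bootstrap remains valid. First I would expand the CCE discrepancy $\widehat{\pmb{\theta}}_i-\widehat{\pmb{\theta}}$. Under the null \eqref{def.theta} that $\pmb{\theta}_i\equiv \pmb{\theta}$, substituting $\mathbf{Y}_i=\mathbf{W}_i\pmb{\theta}+\mathbf{F}\pmb{\gamma}_i+\pmb{\epsilon}_i$ into \eqref{def.theta_hat} gives $\widehat{\pmb{\theta}}_i-\pmb{\theta}=(\mathbf{W}_i^\top \mathbf{M}_{\overline{\mathbf{Z}}}\mathbf{W}_i)^{-1}\mathbf{W}_i^\top \mathbf{M}_{\overline{\mathbf{Z}}}(\mathbf{F}\pmb{\gamma}_i+\pmb{\epsilon}_i)$, and similarly for $\widehat{\pmb{\theta}}$. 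The first key step is to control the factor-projection error: using Assumption \ref{AS5}.1, parts (1)--(3), together with $\overline{\mathbf{C}}^+=\overline{\mathbf{C}}^\top(\overline{\mathbf{C}}\,\overline{\mathbf{C}}^\top)^{-1}$, one shows $\mathbf{M}_{\overline{\mathbf{Z}}}=\mathbf{M}_{\mathbf{F}}+o_P(\cdot)$ in the appropriate operator sense, so that $\frac{1}{\sqrt{T}}\mathbf{W}_i^\top \mathbf{M}_{\overline{\mathbf{Z}}}\mathbf{F}\pmb{\gamma}_i$ is asymptotically negligible and the relevant contribution is driven by $\frac{1}{\sqrt{T}}\mathbf{V}_i^\top\pmb{\epsilon}_i$ via the replacement $\frac{1}{T}\mathbf{V}_i^\top \mathbf{M}_{\mathbf{F}}\pmb{\epsilon}_i=\frac{1}{T}\mathbf{V}_i^\top\pmb{\epsilon}_i+O_P(1/T)$ from Assumption \ref{AS5}.1(3).

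The second step is to read off the leading term of $\mathbf{e}_j^\top(\mathbf{W}_i^\top \mathbf{M}_{\overline{\mathbf{Z}}}\mathbf{W}_i)(\widehat{\pmb{\theta}}_i-\widehat{\pmb{\theta}})$. Because $(\mathbf{W}_i^\top \mathbf{M}_{\overline{\mathbf{Z}}}\mathbf{W}_i)(\widehat{\pmb{\theta}}_i-\widehat{\pmb{\theta}})=\mathbf{W}_i^\top \mathbf{M}_{\overline{\mathbf{Z}}}(\mathbf{Y}_i-\mathbf{W}_i\widehat{\pmb{\theta}})$ and the pooled estimator $\widehat{\pmb{\theta}}$ contributes only a cross-sectional average, I expect
\[
\frac{1}{\sqrt{T}}\mathbf{e}_j^\top(\mathbf{W}_i^\top \mathbf{M}_{\overline{\mathbf{Z}}}\mathbf{W}_i)(\widehat{\pmb{\theta}}_i-\widehat{\pmb{\theta}})=\frac{1}{\sqrt{T}}\sum_{t=1}^T \mathbf{e}_j^\top \mathbf{v}_{it}\,\epsilon_{it}+o_P(1),
\]
uniformly in $i$, where the summand identifies with the entrywise product $\widehat{\pmb{\mathbf{U}}}_{i,1}\circ \widehat{\pmb{\mathbf{U}}}_{i,1+j}$ used to build $\widehat{\pmb{\Omega}}_j$. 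Stacking over $i$ then gives $Q_j=|\frac{1}{\sqrt{T}}\sum_{t=1}^T\pmb{\xi}_{jt}|_\infty+o_P(1)$ with $\pmb{\xi}_{jt}$ the population analogue of $\widehat{\pmb{\xi}}_{jt}$. By Assumption \ref{AS5}.2 the process $\{\pmb{\xi}_{jt}\}$ is a weakly dependent HDMA-type series satisfying Assumption \ref{AS1} and the hypotheses of Theorem \ref{THM.6}, so Theorem \ref{THM.5} delivers a Gaussian approximation to the law of $Q_j$ with long-run covariance $\pmb{\Omega}_j$.

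The third step is to validate the feasible bootstrap. I would show $\widehat{\pmb{\Omega}}_j$ is a consistent estimator of $\pmb{\Omega}_j$ in the max-norm sense required by \cite{gao2024robust}, which entails bounding the estimation error from replacing $\pmb{\xi}_{jt}$ by $\widehat{\pmb{\xi}}_{jt}$ (the latter built from $\widehat{\pmb{\mathbf{U}}}_i=\mathbf{M}_{\overline{\mathbf{Z}}}\mathbf{Z}_i$ rather than the infeasible $\mathbf{M}_{\mathbf{F}}\mathbf{Z}_i$). Given the bandwidth choice $\widetilde{m}\asymp T^{1/(2q_\alpha+1)}$ and the joint rate condition, Theorem \ref{THM.6} then transfers directly: the multiplier bootstrap $|\widehat{\bm{\Omega}}_j^{1/2}\frac{1}{\sqrt{T}}\sum_t \mathbf{z}_t^*|_\infty$ approximates the law of $|\frac{1}{\sqrt{T}}\sum_t\pmb{\xi}_{jt}|_\infty$, and combining with the $o_P(1)$ reduction of $Q_j$ yields the claim.

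The main obstacle I anticipate is the \emph{uniform} (over $i\in[N]$) control of the remainder terms in the first two steps, since $N$ diverges: the negligibility of the factor-projection error and of the $\widehat{\pmb{\theta}}$-contribution must hold simultaneously across all $i$ inside the maximum, not merely for fixed $i$. This requires uniform bounds on quantities like $\max_i\|(\frac{1}{T}\mathbf{W}_i^\top \mathbf{M}_{\overline{\mathbf{Z}}}\mathbf{W}_i)^{-1}\|$ and $\max_i|\frac{1}{\sqrt{T}}\mathbf{W}_i^\top(\mathbf{M}_{\overline{\mathbf{Z}}}-\mathbf{M}_{\mathbf{F}})\pmb{\epsilon}_i|$, which I would handle via the $N\asymp T$ restriction together with maximal inequalities of the same flavour as those underlying Theorem \ref{THM.5}. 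A secondary difficulty is controlling the discrepancy between $\widehat{\pmb{\xi}}_{jt}$ and $\pmb{\xi}_{jt}$ uniformly enough that the covariance estimation error stays within the tolerance demanded by the Gaussian-approximation rate in Theorem \ref{THM.6}.
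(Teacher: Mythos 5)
Your proposal follows essentially the same route as the paper's proof: expand $\widehat{\pmb{\theta}}_i-\widehat{\pmb{\theta}}$ under the null, use Assumption \ref{AS5}.1 and the Moore--Penrose identity to show $\mathbf{M}_{\overline{\mathbf{Z}}}$ approximates $\mathbf{M}_{\mathbf{F}}$ so the factor term is negligible, reduce the statistic to $\frac{1}{\sqrt{T}}\mathbf{V}_i^\top\pmb{\epsilon}_i$ (with $\widehat{\pmb{\theta}}-\pmb{\theta}=O_P(1/\sqrt{NT})$ under $N\asymp T$), observe via Assumption \ref{AS5}.2 that the product process $\mathbf{v}_{it}\epsilon_{it}$ is an HDMA($\infty$)-type series through the second-order BN decomposition, and invoke Theorem \ref{THM.6}. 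Your explicit flagging of the uniform-over-$i$ remainder control and the $\widehat{\pmb{\xi}}_{jt}$ versus $\pmb{\xi}_{jt}$ discrepancy is a sound refinement of details the paper handles only tersely (by citing \citet{Pesaran2006}, \citet{Westerlund2018}, and Lemma \ref{LM.A6}), but it does not change the argument.
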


According to Proposition \ref{P1}, all $Q_j$'s should not fall in the rejection region yielded by the bootstrap draws under  \eqref{def.theta}. We further examine this result in the simulation studies.

\medskip

On top of these extensions, we may also revisit the time trend analyses of \cite{gh2006}, \cite{cgl2012}, \cite{ROBINSON20124} and \cite{WSX2023}, which are of course mathematically involved due to the nonparametric nature. We leave them for future study.

\section{Simulation}\label{Sec4}

In this section, we conduct simulation studies to examine the theoretical results about testing and inference of Section \ref{Sec2}, and also verify our argument about the CCE estimators in Section \ref{Sec3.1}.

\textbf{Simulation 1} (Testing) --- The data generating process (DGP) is simplified as follows:

\begin{eqnarray*}
    \mathbf{x}_t =\pmb{\mu} + \rho_x \mathbf{x}_{t-1}+ \pmb{\Sigma}_\nu^{1/2}\pmb{\nu}_t,
\end{eqnarray*}
where  $t=-200, \ldots,0, 1,\ldots, T$, $\rho_x=0.3$, $\pmb{\mu}\coloneqq (\mu_1,\ldots, \mu_N)^\top$, $\mathbf{x}_t\coloneqq (x_{1t},\ldots, x_{Nt})^\top$,  and $\pmb{\nu}_t\coloneqq (\nu_{1t},\ldots, \nu_{Nt})^\top$. As well understood, the AR(1) process admits an MA($\infty$) representation thus suiting the definition of \eqref{def.xt_1}. To introduce cross-sectional dependence, we let $\pmb{\Sigma}_\nu=\{\rho_\nu^{|i-j|}\}_{N\times N}$. The observations from $t=-200,\ldots, 0$ are burn-in sample in order to eliminate the impact of the initial value. We let $\{\nu_{it} \}$ be i.i.d. over both $i$ and $t$, and be generated in three cases:

\begin{enumerate}[leftmargin=48pt, parsep=2pt, topsep=2pt]
\item[Case 1.] $\nu_{it}\sim N(0,1)$;
\item[Case 2.] $\nu_{it}\sim t_8$, where $t_8$ stands for a $t$-distribution with a degree freedom 8;
\item[Case 3.] $\nu_{it}\sim \Gamma(2,2)-1$, where $\Gamma(2,2)$ stands for a Gamma distribution with a shape parameter 2 and a scale parameter 0.5. Therefore, $\Gamma(2,2)-1$ has mean 0.
\end{enumerate}
Case 1 is a symmetric distribution with thin tails; Case 2 is a symmetric distribution with heavy tails; and Case 3 is an asymmetric distribution.  To exam the size and power of the proposed test in Section \ref{Sec2.3}, for each case we consider three scenarios for $\pmb{\mu}$:

\begin{enumerate}[parsep=2pt, topsep=2pt]
    \item[(a).] $\mu_i\equiv 0$ for all $i$;
    \item[(b).] $\mu_1=\frac{4}{\sqrt{T}}$, and $\mu_i\equiv 0$ for $i\ge 2$;
    \item[(c).] $\mu_1=1$, and $\mu_i\equiv 0$ for $i\ge 2$.
\end{enumerate}
Scenarios (a)-(c) are designed to examine  size, local power, and  global power respectively. 

Notably, the above DGP is a special case of \cite{PY2008} and \cite{YU2024105458}. Thus, for the purpose of comparison, we also consider the approaches of these two papers (referred to as PY and YYX respectively). To put everything on equal footing, we consider the null of \cite{YU2024105458} (i.e., $\mathbb{H}_0:\mu_i\equiv 0$ for all $i$), and modify the test statistic of \cite{PY2008} accordingly in an obvious manner. For the sake of space, we refer interested readers to their papers for detailed implementation. As PY and YYX methods calculate the asymptotic variances neglecting the dependence of the residuals (e.g., \citealp[Eq. (14)]{PY2008} and \citealp[Eq. (2.7)]{YU2024105458}), we anticipate some distorted size or power. Additionally, \cite{YU2024105458} rely on the Gaussian assumption, so we anticipate the DGPs of Cases 2 and 3 will further distort size or power. For our method (referred to as GLPY), we calculate $Q_{NT}$ under the null for each dataset, and obtain the 95\% confidence interval (denoted by $\text{CI}_\infty$) via $  |\widehat{\bm{\Omega}}^{1/2}\frac{1}{\sqrt{T}}\sum_{t=1}^{T}\mathbf{z}_t^* |_{\infty}$ based on 399 bootstrap replications. After $R$ replications, we calculate the rejection rate as follows:

\begin{eqnarray*}
    \Delta_{test} = \frac{1}{R}\sum_{j=1}^R I(Q_{NT,j}\not\in \text{CI}_{\infty,j}),
\end{eqnarray*}
where the subindex $j$ stands for the corresponding values obtained in the $j^{th}$ simulation replication. Similarly, we will report the rejection rates for PY and YYX approaches. For our method, we expect that $\Delta_{test}$ is sufficiently close to 0.05 for the scenario (a), is reasonably close to 1 for the scenario (c), and is in between 0 and 1 for the scenario (b). For simplicity, we let  $a(\cdot)$ be Bartlett kernel, and take suggestions from \cite{gao2024robust} to set $\widetilde{m}=\lfloor 1.75 T^{1/3}\rfloor$ for simplicity. Additionally, we let $R=1000$, $N \in \{100,150,200 \}$, $T\in \{ 200, 400 \}$, and $\rho_\nu \in \{0.5, 0.95 \}$. 

The results are summarized in Table \ref{TB_sim1}. Overall, our approach has reasonable size, local power, and global power irrespective to the magnitude of CD (i.e., the value of $\rho_\nu$) as expected. Due to omitting dependence, PY and YYX methods tend to over reject, which is evident in view of the rejection rates of scenario (a) of Cases 1-3.

\begin{table}[tbh!]\small
\setlength{\tabcolsep}{3pt} 
\renewcommand{\arraystretch}{1} 
\caption{Values of $\Delta_{test}$ of Simulation 1}\label{TB_sim1}
\centering\begin{tabular}{rrrrrrlrrrlrrr}
\hline\hline
 &  &  & \multicolumn{1}{c}{GLPY} & \multicolumn{1}{c}{PY} & \multicolumn{1}{c}{YYX} &  & \multicolumn{1}{c}{GLPY} & \multicolumn{1}{c}{PY} & \multicolumn{1}{c}{YYX} &  & \multicolumn{1}{c}{GLPY} & \multicolumn{1}{c}{PY} & \multicolumn{1}{c}{YYX} \\
 &  & $T\setminus N$ & \multicolumn{3}{c}{100} &  & \multicolumn{3}{c}{150} &  & \multicolumn{3}{c}{200} \\
 $\rho_\nu =0.5$ & Case 1 (a) & 200 & 0.042 & 0.979 & 0.939 &  & 0.036 & 0.996 & 0.971 &  & 0.034 & 0.999 & 0.976 \\
 &  & 400 & 0.040 & 0.978 & 0.941 &  & 0.037 & 0.995 & 0.964 &  & 0.058 & 0.999 & 0.974 \\
 &  (b) & 200 & 0.727 & 0.999 & 0.999 &  & 0.717 & 1.000 & 0.999 &  & 0.677 & 1.000 & 1.000 \\
 &  & 400 & 0.762 & 0.998 & 0.996 &  & 0.722 & 1.000 & 0.996 &  & 0.717 & 1.000 & 1.000 \\
 &  (c) & 200 & 1.000 & 1.000 & 1.000 &  & 1.000 & 1.000 & 1.000 &  & 1.000 & 1.000 & 1.000 \\
 &  & 400 & 1.000 & 1.000 & 1.000 &  & 1.000 & 1.000 & 1.000 &  & 1.000 & 1.000 & 1.000 \\
 & Case 2 (a) & 200 & 0.045 & 0.986 & 0.926 &  & 0.050 & 0.998 & 0.966 &  & 0.042 & 1.000 & 0.983 \\
 &  & 400 & 0.049 & 0.982 & 0.938 &  & 0.050 & 0.999 & 0.963 &  & 0.056 & 0.999 & 0.977 \\
 &  (b) & 200 & 0.749 & 0.998 & 1.000 &  & 0.702 & 1.000 & 1.000 &  & 0.655 & 1.000 & 0.999 \\
 &  & 400 & 0.753 & 0.998 & 0.996 &  & 0.705 & 1.000 & 1.000 &  & 0.706 & 1.000 & 0.999 \\
 &  (c) & 200 & 1.000 & 1.000 & 1.000 &  & 1.000 & 1.000 & 1.000 &  & 1.000 & 1.000 & 1.000 \\
 &  & 400 & 1.000 & 1.000 & 1.000 &  & 1.000 & 1.000 & 1.000 &  & 1.000 & 1.000 & 1.000 \\
 & Case 3 (a) & 200 & 0.048 & 0.976 & 0.931 &  & 0.050 & 0.998 & 0.975 &  & 0.042 & 1.000 & 0.984 \\
 &  & 400 & 0.042 & 0.973 & 0.931 &  & 0.047 & 0.997 & 0.969 &  & 0.049 & 1.000 & 0.985 \\
 &  (b) & 200 & 0.990 & 1.000 & 1.000 &  & 0.994 & 1.000 & 1.000 &  & 0.985 & 1.000 & 1.000 \\
 &  & 400 & 0.991 & 1.000 & 1.000 &  & 0.987 & 1.000 & 1.000 &  & 0.989 & 1.000 & 1.000 \\
 &  (c) & 200 & 1.000 & 1.000 & 1.000 &  & 1.000 & 1.000 & 1.000 &  & 1.000 & 1.000 & 1.000 \\
 &  & 400 & 1.000 & 1.000 & 1.000 &  & 1.000 & 1.000 & 1.000 &  & 1.000 & 1.000 & 1.000 \\ \cline{4-14}
 &  &  &  &  &  &  &  &  &  &  &  &  &  \\
$\rho_\nu =0.95$ & Case 1 (a) & 200 & 0.042 & 0.731 & 0.481 &  & 0.036 & 0.781 & 0.537 &  & 0.034 & 0.838 & 0.602 \\
 &  & 400 & 0.040 & 0.705 & 0.491 &  & 0.037 & 0.787 & 0.538 &  & 0.058 & 0.835 & 0.582 \\
 & (b) & 200 & 0.850 & 0.818 & 0.995 &  & 0.850 & 0.866 & 0.988 &  & 0.812 & 0.883 & 0.989 \\
 &  & 400 & 0.872 & 0.827 & 0.993 &  & 0.828 & 0.862 & 0.987 &  & 0.822 & 0.887 & 0.988 \\
 & (c) & 200 & 1.000 & 1.000 & 1.000 &  & 1.000 & 1.000 & 1.000 &  & 1.000 & 1.000 & 1.000 \\
 &  & 400 & 1.000 & 1.000 & 1.000 &  & 1.000 & 1.000 & 1.000 &  & 1.000 & 1.000 & 1.000 \\
 & Case 2 (a) & 200 & 0.045 & 0.696 & 0.486 &  & 0.050 & 0.825 & 0.584 &  & 0.042 & 0.841 & 0.610 \\
 &  & 400 & 0.049 & 0.742 & 0.468 &  & 0.050 & 0.794 & 0.546 &  & 0.056 & 0.836 & 0.593 \\
 &  (b) & 200 & 0.698 & 0.782 & 0.961 &  & 0.663 & 0.847 & 0.953 &  & 0.617 & 0.898 & 0.957 \\
 &  & 400 & 0.711 & 0.789 & 0.957 &  & 0.655 & 0.838 & 0.955 &  & 0.630 & 0.847 & 0.953 \\
 &  (c) & 200 & 1.000 & 1.000 & 1.000 &  & 1.000 & 1.000 & 1.000 &  & 1.000 & 1.000 & 1.000 \\
 &  & 400 & 1.000 & 1.000 & 1.000 &  & 1.000 & 1.000 & 1.000 &  & 1.000 & 1.000 & 1.000 \\
 & Case 3 (a) & 200 & 0.048 & 0.729 & 0.483 &  & 0.050 & 0.797 & 0.527 &  & 0.042 & 0.833 & 0.598 \\
 &  & 400 & 0.042 & 0.734 & 0.493 &  & 0.047 & 0.776 & 0.553 &  & 0.049 & 0.841 & 0.580 \\
 & (b) & 200 & 0.998 & 0.952 & 1.000 &  & 0.995 & 0.945 & 1.000 &  & 0.996 & 0.950 & 1.000 \\
 &  & 400 & 0.999 & 0.914 & 1.000 &  & 0.998 & 0.932 & 1.000 &  & 0.992 & 0.921 & 1.000 \\
 & (c) & 200 & 1.000 & 1.000 & 1.000 &  & 1.000 & 1.000 & 1.000 &  & 1.000 & 1.000 & 1.000 \\
 &  & 400 & 1.000 & 1.000 & 1.000 &  & 1.000 & 1.000 & 1.000 &  & 1.000 & 1.000 & 1.000 \\ 
 \hline\hline
\end{tabular}
\end{table}

\textbf{Simulation 2} (Inference) --- In this simulation, we examine the bootstrap inferences documented in Sections \ref{Sec2.1} and \ref{Sec2.2}. For simplicity, we consider the three cases identical to Simulation 1 with $\pmb{\mu}=\mathbf{0}_{N\times 1}$, and  we set $m=\lfloor 1.75 T^{1/3}\rfloor$. For each dataset, we infer a few quantities. For the homogenous case, we calculate $\widetilde{S}_{NT}$ of Section \ref{Sec2.1}, and simulate its distribution via $\widetilde{S}_{NT}^*$ based on 399 bootstrap replications, where $\widetilde{S}_{NT}^*$ is self-normalized by construction and does not require any prior knowledge about $L_N$. Using the bootstrap draws, we construct the 95\% confidence interval of  $\widetilde{S}_{NT}$, denoted by $\text{CI}_{S}$. Second, for each $i$, we calculate $\widetilde{p}_i$ of Section \ref{Sec2.2}, and construct its distribution via  $\widetilde{p}_i^*$ based on 399 bootstrap replications. Accordingly, we construct the 95\% confidence interval of each $\widetilde{p}_i$, denoted by $\text{CI}_{p_i}$.  

After $R$ simulation replications, we calculate the following measures:

\begin{eqnarray*} 
\Delta_{\text{HM}} &=& \frac{1}{R} \sum_{j=1}^R I(\widetilde{S}_{NT,j}\not\in \text{CI}_{S,j}) ,\notag \\ 
\Delta_{\text{HE}} &=& \frac{1}{N}\sum_{i=1}^N |\Delta_i|\quad \text{with}\quad \Delta_i=\frac{1}{R} \sum_{j=1}^R  I(\widetilde{p}_{i,j}\not\in \text{CI}_{p_i,j}),\notag \\
\text{Sd}_{\text{HE}} &=& \left\{\frac{1}{N}\sum_{i=1}^N(|\Delta_i|-\Delta_{\text{HE}})^2\right\}^{1/2},
\end{eqnarray*}
where again $j$ indexes the $j^{th}$ simulation replication. We anticipate that $\Delta_{\text{HM}}$ and $\Delta_{\text{HE}} $ are close to 0.05, and $\text{Sd}_{\text{HE}}$ is close to 0 indicating the proposed method in Section \ref{Sec2.2} is stable for all $i$'s. 

\begin{table}[tbh!]\small
    \centering\caption{Results of Simulation 2}\label{TB_sim2}
\renewcommand{\arraystretch}{0.8}
\begin{tabular}{llrrrrlrrr}
    \hline\hline
    &  &  & \multicolumn{3}{c}{$\rho_\nu =0.5$} &  & \multicolumn{3}{c}{$\rho_\nu =0.95$} \\ \cline{4-10} 
    &  & $T\setminus N$ & 100 & 150 & 200 &  & 100 & 150 & 200 \\
Case 1 & $\Delta_{\text{HM}}$ & 200 & 0.059 & 0.049 & 0.048 &  & 0.039 & 0.056 & 0.050 \\
    &  & 400 & 0.049 & 0.056 & 0.057 &  & 0.055 & 0.042 & 0.050 \\
    &  &  &  &  &  &  &  &  &  \\
    & $\Delta_{\text{HE}}$ & 200 & 0.055 & 0.056 & 0.055 &  & 0.055 & 0.056 & 0.056 \\
    &  & 400 & 0.056 & 0.055 & 0.056 &  & 0.055 & 0.055 & 0.056 \\
    & $\text{Sd}_{\text{HE}}$ & 200 & 0.006 & 0.007 & 0.007 &  & 0.006 & 0.007 & 0.007 \\
    &  & 400 & 0.007 & 0.007 & 0.007 &  & 0.006 & 0.006 & 0.007 \\  \cline{4-10} 
    &  &  &  &  &  &  &  &  &  \\
Case 2 & $\Delta_{\text{HM}}$ & 200 & 0.045 & 0.050 & 0.042 &  & 0.046 & 0.048 & 0.059 \\
    &  & 400 & 0.049 & 0.050 & 0.056 &  & 0.046 & 0.034 & 0.047 \\
    &  &  &  &  &  &  &  &  &  \\
    & $\Delta_{\text{HE}}$ & 200 & 0.055 & 0.056 & 0.055 &  & 0.055 & 0.056 & 0.055 \\
    &  & 400 & 0.056 & 0.056 & 0.056 &  & 0.055 & 0.055 & 0.055 \\
    & $\text{Sd}_{\text{HE}}$ & 200 & 0.007 & 0.007 & 0.006 &  & 0.006 & 0.007 & 0.007 \\
    &  & 400 & 0.007 & 0.007 & 0.007 &  & 0.006 & 0.006 & 0.007 \\  \cline{4-10} 
    &  &  &  &  &  &  &  &  &  \\
Case 3 & $\Delta_{\text{HM}}$ & 200 & 0.057 & 0.061 & 0.055 &  & 0.045 & 0.048 & 0.051 \\
    &  & 400 & 0.053 & 0.052 & 0.048 &  & 0.044 & 0.049 & 0.050 \\
    &  &  &  &  &  &  &  &  &  \\
    & $\Delta_{\text{HE}}$ & 200 & 0.055 & 0.056 & 0.055 &  & 0.056 & 0.056 & 0.056 \\
    &  & 400 & 0.056 & 0.056 & 0.056 &  & 0.056 & 0.055 & 0.056 \\
    & $\text{Sd}_{\text{HE}}$ & 200 & 0.006 & 0.007 & 0.006 &  & 0.007 & 0.007 & 0.007 \\
    &  & 400 & 0.007 & 0.007 & 0.007 &  & 0.007 & 0.006 & 0.007 \\
    \hline\hline
\end{tabular}
\end{table}

Table \ref{TB_sim2} shows that most values are as expected. While the DGPs cover a symmetric distribution with thin tails, a symmetric distribution with heavy tails, and an asymmetric distribution, the above results are reasonably good irrespective of the magnitude of CD.

\medskip

\textbf{Simulation 3} (CCE) --- We then consider the following panel data model: 

\begin{eqnarray*}
y_{it}&=& w_{it}  \theta_i + \gamma_i  f_t+ \epsilon_{it}, \notag \\
w_{it}&=&\Gamma_i  f_t +v_{it},
\end{eqnarray*}
where all elements are scalar for simplicity. In this case, $m=1$ and $k=1$, so $m\le k+1$ (a typical requirement for CCE estimators) is fulfilled. We examine the size and power of Proposition \ref{P1}.

The DGP is as follows. $\{\epsilon_{it}\}$ follows the identical DGP of $\{x_{it}\}$ as in Cases 1-3 of Simulation 1 with $\pmb{\mu}=\mathbf{0}_{N\times 1}$ and $\rho_{\nu}=0.5$. We let $f_t\sim N(0,1.5)$, $\gamma_i \sim N(0.8, 1)$, $\Gamma_i\sim N(-0.2, 2)$, and $\{v_{it}\coloneqq \epsilon_{i,t-1} \}$. For each case of  $\{\epsilon_{it}\}$, we further consider three scenarios for $\{\theta_i\}$:

\begin{itemize}[parsep=2pt, topsep=2pt]
\item[(a).] $\theta_i\equiv 1$;
\item[(b).] $\theta_1=1+\frac{4}{\sqrt{T}}$, and $\theta_i\equiv 1$ for $i\ge 2$;
\item[(c).] $\theta_1=2$, and $\theta_i\equiv 1$ for $i\ge 2$.
\end{itemize}
Similar to Simulation 1, scenarios (a)-(c) are designed to evaluate size, local power, and global power. For each dataset, we calculate $Q_1$ and generate the corresponding confidence interval (say, $\text{CI}_{Q_1}$) as in Proposition \ref{P1}. After $R$ simulation replications, we report the following measure:

\begin{eqnarray*}
\Delta_{\text{Q}} &=& \frac{1}{R} \sum_{j=1}^R I(Q_{1,j}\not\in \text{CI}_{Q_1,j}) ,
\end{eqnarray*}
where  $j$ still indexes the $j^{th}$ simulation replication.  $\Delta_{\text{Q}}$ should be close to 0.05 and 1 for the scenarios (a) and (c) respectively, and should be in between 0 and 1 for the scenario (b). As shown in Table \ref{TB2}, the results are as expected. For scenario (a) of Case 3, our approach is slightly under-size with large $N$, and it might be due to the fact that the error component of Case 3 is skewed.

\begin{table}[tbh!]\small
\centering \caption{Values of $\Delta_{\text{Q}}$ of Simulation 3}\label{TB2}
\renewcommand{\arraystretch}{0.8}
\begin{tabular}{llrrrrrrrr}
\hline\hline
 &  & \multicolumn{2}{c}{(a)} & \multicolumn{1}{l}{} & \multicolumn{2}{c}{(b)} & \multicolumn{1}{l}{} & \multicolumn{2}{c}{(c)} \\ \cline{3-4} \cline{6-7} \cline{9-10}
 & $T\setminus N$ & 100 & 150 &  & 100 & 150 &  & 100 & 150 \\
Case 1 & 150 & 0.058 & 0.032 &  & 0.513 & 0.540 &  & 1.000 & 1.000 \\
 & 200 & 0.060 & 0.045 &  & 0.648 & 0.603 &  & 1.000 & 1.000 \\ \cline{3-10}
 & & & & & & & & & \\
Case 2 & 150 & 0.040 & 0.027 &  & 0.528 & 0.473 &  & 1.000 & 1.000 \\
 & 200 & 0.047 & 0.040 &  & 0.590 & 0.500 &  & 1.000 & 1.000 \\ \cline{3-10}
  & & & & & & & & & \\ 
Case 3 & 150 & 0.047 & 0.025 &  & 0.423 & 0.408 &  & 1.000 & 1.000 \\
 & 200 & 0.068 & 0.027 &  & 0.518 & 0.385 &  & 1.000 & 1.000 \\
\hline\hline
\end{tabular}
\end{table}

\section{A Case Study}\label{Sec5}

Heterogeneous expectations, which may arise in financial markets because investors interpret and react to available information differently, are crucial for  understanding variations in asset prices, portfolio allocations, and market dynamics. As a result, the study of heterogeneous expectations has become an important focus in recent years, with a number of works highlighting its implications for financial theory and practice \citep[see, for example,][]{ame2020, Brun2021, Giglio2021}.
Notably, \cite{Giglio2021} document significant and persistent cross-sectional variations in individual investors' return expectations, which can only be partially explained by demographic factors. Similarly, \cite{DI2024} reveal substantial heterogeneity in equity return expectations among institutional investors and investment consultants and it can be linked to asset managers' portfolios. These findings challenge traditional finance models based on homogeneous rational expectations and motivate the development of alternative frameworks.

In this study, we revisit some prior work in this area and formally test the heterogeneity in equity return expectations using the newly proposed inference method. Furthermore, we investigate the relationships between equity premium expectations and equity valuations, as explored by \cite{DI2024}.

\subsection{Variables and Data}

Following \cite{DI2024}, we examine three types of subjective return expectations held by asset managers for U.S. equities. The equity return expectation ($ere_{it}$) is defined as the (geometric) nominal equity return forecast for large-cap U.S. equities over a 10-year horizon, as reported in public disclosures\footnote{As noted by \cite{DI2024}, the forecast horizons for return expectation data range from 1 to 50 years. However, most asset managers provide forecasts close to a 10-year horizon. Consequently, this study focuses on 10-year horizon forecasts.}. The equity premium expectation over yield ($epey_{it}$) is calculated by subtracting the horizon-matched log nominal Treasury yield from the nominal equity return expectation. Similarly, the equity premium expectation over cash ($epec_{it}$) is derived by subtracting the expected annualized return on cash (over the next 10 years) from the equity return expectation.

For the equity valuation, we utilize the cyclically adjusted price-to-earnings ratio ($cape_t$), defined as the log ratio of real equity prices to real earnings averaged over the past decade. Introduced by \cite{campbell1988}, this measure of equity valuation is widely recognized in the finance literature \citep[see, for example,][]{JL2019}. Additionally, we include the past 12-month return of the S\&P 500 index ($pr_t$) to capture momentum effects and the horizon-matched Treasury yield ($rf_t$) to account for the risk-free rate.

The data used to construct these variables are available on the website maintained by the authors of \cite{DI2024}. By focusing on a fixed 10-year forecast horizon, we obtain a  panel dataset comprising observations from 45 asset managers across 109 months, covering the period from November 1997 to April 2021.  Descriptive statistics for these variables are presented in Table \ref{Table_e1}.

\begin{table}[tbh!]\small
    \centering
    \caption{Descriptive Statistics of Variables}\label{Table_e1}
    \begin{tabular}{lcrr}
        \hline\hline
    Variables                                     & Abbreviation & Mean  & StD   \\
    \hline
    Equity premium expectation over yield         & $epey$  & 3.15  & 1.82  \\
    Equity premium expectation over cash          & $epec$  & 3.14  & 1.84  \\
    Nominal equity return expectation             & $ere$   & 5.15  & 1.73  \\
    Cyclically adjusted price-to-earnings   ratio & $cape$     & 3.37  & 0.11  \\
    Past 12-month return                          & $pr$       & 11.94 & 12.76 \\
    Riskfree rate                                 & $rf$       & 2.00  & 0.78 \\
    \hline
    \end{tabular}
\end{table}

   \subsection{Model Specifications and Inference}\label{S.Emp.2}
To  analyze the heterogeneity in equity return expectations, we estimate three econometric models, each progressively incorporating more explanatory variables to account for the potential determinants of the expectations:

\begin{itemize}[leftmargin=60pt, parsep=2pt, topsep=2pt]
       \item[Model 1:] $y_{it} = \mu_i+\epsilon_{it}$;
       \item[Model 2:] $y_{it} = \alpha_i+\beta_icape_t+\epsilon_{it}$;
       \item[Model 3:] $y_{it} = \alpha_i+\beta_icape_t+\gamma_ipr_t+\delta_irf_t+\epsilon_{it}$,
\end{itemize}
where $y_{it}\in\{epey_{it}, epec_{it}, ere_{it}\}$. Model 1 is the baseline model that captures the individual-specific mean levels of equity return expectations, where $\mu_i$ represents the time-invariant mean for each asset manager $i$. In Model 2,  the equity return expectations are modeled as a function of the $cape_t$, with $\beta_i$ representing the sensitivity of each manager's expectations to equity valuations.  \cite{DI2024} reveal countercyclical expectations for U.S. equity returns, by showing that the expectations are negatively associated with  $cape_t$. In this study, we further investigate  whether the relationship between return expectations and valuations differs across managers.  The inclusion of $pr_t$ and$rf_t$  in Model 3 further enables us to examine whether these additional factors, such as historical market performance, can help explain cross-sectional differences in expectations.

For each specification, we conduct hypothesis tests to evaluate the presence of heterogeneity in key parameters. Specifically:

\begin{enumerate}[leftmargin=24pt, parsep=2pt, topsep=2pt]
    \item We test $\mathbb{H}_0^1$: $\mu_i = \mu$ for all $i$ in Model 1;
    \item We test $\mathbb{H}_0^2$: $\alpha_i = \alpha$ and $\mathbb{H}_0^3$: $\beta_i = \beta$ for all $i$ respectively in Models 2 and 3. 
\end{enumerate}

These tests provide evidence regarding both the variation in average return expectations and the diversity in how managers incorporate fundamental and market-driven information into their forecasts. Rejecting these null hypotheses would indicate the heterogeneity of return expectations among the institutional investors and investment consultants.

\subsection{Estimation and Testing Results}

We first estimate three specifications of equity return expectations using both heterogeneous and homogeneous regression models. The estimated coefficients for each model, along with their 95\% confidence intervals, are presented in Table \ref{Table_e2}. Across all models, the cyclically adjusted price-to-earnings ratio is found to be significantly negatively associated with asset managers' forecasts of equity returns and premiums, indicating countercyclical patterns in equity return expectations: when equity valuations are higher, asset managers expect lower future returns. This result aligns with the findings of \cite{DI2024} and contrasts with the procyclical expectations among retail investors revealed by  \cite{GS2014}. 

Comparing the outcomes from heterogeneous and homogeneous estimations, we can identify noticeable differences. In particular, most homogeneous coefficient estimates for the intercept and $cape$ are larger (in absolute values) than the corresponding average values from heterogeneous models. This discrepancy highlights the importance of allowing for heterogeneity in the data. Homogeneous models, by assuming no diversity across asset managers, may overestimate the influence of equity valuation on return expectations. 

Furthermore, both estimation approaches consistently suggest that past returns exert no significant influence on asset managers' expectations regarding future equity premiums. This finding suggests that asset managers may not rely on recent return trends when forming long-term return expectations, potentially due to the forward-looking nature of their strategies. It also indicates that their expectations are primarily driven by fundamental factors, such as valuations and macroeconomic conditions, rather than past market performance. Importantly, these results hold consistently across all three measures of equity return expectations.

To further investigate the heterogeneity in asset managers' expectations for U.S. equity returns, we perform a sequence of heterogeneity tests as outlined in Section \ref{S.Emp.2}. The results of these tests are summarized in Panel A of Table \ref{Table_e3}. For Model 1, the null hypothesis of homogeneous means is  rejected at the 0.01 significance level for all three types of expectations: equity returns, equity premium over yield, and equity premium over cash. This indicates substantial variation in the average levels of return expectations across asset managers.  

For Models 2 and 3, the heterogeneity in expectations remains significant even after accounting for equity valuations, past returns, and risk-free rates.  The results for $\mathbb{H}_0^3$ further reveal that the relationship between equity return expectations and equity valuations varies significantly across institutional investors and investment consultants. This finding indicates that differences among asset managers extend beyond simple average return expectations; they also exhibit diverse sensitivities to fundamental factors such as equity evaluations. Collectively, these findings confirm the presence of  heterogeneity in U.S. equity premium expectations, which aligns with recent literature emphasizing the role of diverse beliefs and preferences in shaping market outcomes \citep[see,][among others]{Brun2021}. 

In order to illustrate the heterogeneity test results for the intercept in each model, we provide the distributions of bootstrap test statistics in Figure \ref{Fig_Emp}.  For comparison, the test statistics and critical values for the homogeneity test of slope coefficients proposed by \cite{PY2008} (PY) and the homogeneity test of intercepts developed by \cite{YU2024105458} (YYX) are also computed for $\mathbb{H}_0^3$ and $\mathbb{H}_0^2$, respectively, under Model 2. The results are reported in Panel A of Table \ref{Table_e3}.
Notably, the \cite{YU2024105458}'s test originally examines heterogeneity in intercepts under the null hypothesis of known homogeneous intercepts. To adapt this framework to our setting, we modify their statistics to test heterogeneity against an unknown constant by specifying the intercept in the null as its homogeneous estimator. This adjustment ensures comparability with our proposed methodology while maintaining the robustness of their test.


\begin{table}[tbh!]
\footnotesize
\caption{\textbf{Estimation Results from Heterogeneous and Homogeneous Regressions.} In this table, Panel A presents the average values of estimated coefficients (Coef) and their bootstrap 95\% confidence intervals (CI) using  heterogeneous regression models. 
Panel B provides the estimates  for homogeneous regression.  }\label{Table_e2}
\begin{tabular}{llrrrrrrrrr}
\hline\hline
 &  &  & \multicolumn{2}{c}{\begin{tabular}[c]{@{}c@{}}Equity premium \\ (over yield)\end{tabular}} &  & \multicolumn{2}{c}{\begin{tabular}[c]{@{}c@{}}Equity premium \\ (over cash)\end{tabular}} &  & \multicolumn{2}{c}{\begin{tabular}[c]{@{}c@{}}Equity return \\ (nominal)\end{tabular}} \\
 &  &  & Coef & CI &  & Coef & CI &  & Coef & CI \\ \hline
\textbf{Panel A} &   &  &  &  &  &  &  &  &  &  \\
 Model 1 & Intercept &  & 3.80 & (3.46, 4.13) &  & 3.95 & (3.56, 4.28) &  & 5.83 & (5.65, 6.02) \\
 &  &  &  &  &  &  &  &  &  &  \\
 Model 2& Intercept &  & 11.38 & (5.33, 17.27) &  & 10.80 & (2.55, 18.32) &  & 15.13 & (12.25, 18.17) \\
 & $cape$ &  & -2.32 & (-4.09, -0.47) &  & -2.24 & (-4.41, 0.11) &  & -2.88 & (-3.77, -2.03) \\
 &  &  &  &  &  &  &  &  &  &  \\
 Model 3& Intercept &  & 6.40 & (1.57, 10.65) &  & 5.45 & (-2.60, 12.95) &  & 6.40 & (1.46, 11.32) \\
 & $cape$ &  & -0.47 & (-1.78, 1.05) &  & -0.39 & (-2.65, 2.01) &  & -0.47 & (-1.99, 1.14) \\
 & $pr$ &  & -0.02 & (-0.03, 0.00) &  & -0.02 & (-0.04, 0.01) &  & -0.02 & (-0.03, 0.00) \\
 & $rf$ &  & -0.77 & (-0.90, -0.64) &  & -0.42 & (-0.63, -0.21) &  & 0.23 & (0.10, 0.36) \\ \hline
\textbf{Panel B} & &  &  &  &  &  &  &  &  &  \\
 Model 1 & Intercept &  & 3.15 & (2.78, 3.53) &  & 3.14 & (2.82, 3.40) &  & 5.15 & (4.94, 5.34) \\
 &  &  &  &  &  &  &  &  &  &  \\
 Model 2& Intercept &  & 20.67 & (14.85, 25.74) &  & 20.85 & (15.88, 25.30) &  & 24.48 & (22.28, 26.77) \\
 & $cape$ &  & -5.25 & (-6.81, -3.45) &  & -5.42 & (-6.75, -3.92) &  & -5.79 & (-6.50, -5.12) \\
 &  &  &  &  &  &  &  &  &  &  \\
 Model 3& Intercept &  & 25.66 & (24.35, 26.93) &  & 21.95 & (19.15, 25.06) &  & 25.66 & (24.29, 26.95) \\
 & $cape$ &  & -6.33 & (-6.78, -5.91) &  & -5.50 & (-6.52, -4.58) &  & -6.33 & (-6.74, -5.89) \\
 & $pr$ &  & 0.01 & (-0.00, 0.02) &  & 0.01 & (-0.01, 0.03) &  & 0.01 & (-0.00, 0.02) \\
 & $rf$ &  & -0.88 & (-0.96, -0.78) &  & -0.49 & (-0.65, -0.33) &  & 0.12 & (0.03, 0.22) \\
\hline\hline                                       
\end{tabular}
\end{table}

As a robustness check, we follow \cite{DI2024} and expand the dataset to include observations with forecast horizons close to ten years, rather than limiting the sample to exactly 10-year-horizon. The heterogeneity test results for this expanded sample are reported in Panel B of Table \ref{Table_e3}. Most tests continue to reject the null hypothesis of homogeneity at the 0.01 significance level. The only exception is the one for the  mean value of equity premiums over cash, which indicates a slightly weaker rejection (at a 0.05 significance level) of the homogeneity. Nevertheless, the overall robustness of the test results confirms that our findings are not sensitive to the inclusion of additional observations with wider forecast horizons.

\begin{table}[tbh!]
\footnotesize 
\begin{center} 
\caption{\textbf{Results of the Heterogeneity Tests.}  For each  test, we report the test statistic values (TS) with bootstrap critical values (CV$_1$, CV$_2$, and CV$_3$) corresponding to significance levels of 0.1, 0.05, and 0.01, respectively. Panel A provides results for the dataset  with forecast horizons exactly equal to 10 years, and Panel B presents the results for the expanded dataset including those with horizons close to 10 years.  
For each test, the value of the test statistic is highlighted in bold if it is greater than CV$_3$.}\label{Table_e3}
\setlength{\tabcolsep}{2pt}
\renewcommand{\arraystretch}{1}
\begin{tabular}{lllrrrrlrrrrlrrrr}
\hline\hline
&  &  & \multicolumn{4}{c}{\begin{tabular}[c]{@{}c@{}}Equity premium \\ (over yield)\end{tabular}} &  & \multicolumn{4}{c}{\begin{tabular}[c]{@{}c@{}}Equity premium \\ (over cash)\end{tabular}} &  & \multicolumn{4}{c}{\begin{tabular}[c]{@{}c@{}}Equity return \\ (nominal)\end{tabular}} \\
 &  &  & TS & CV$_1$ & CV$_2$ & CV$_3$ &  & TS & CV$_1$ & CV$_2$ & CV$_3$ &  & TS & CV$_1$ & CV$_2$ & CV$_3$ \\ \hline
\textbf{Panel A} &  &  &   &  &  &  &  &   &  &  &  &  &   &  &  &  \\
Model 1 & $\mathbb{H}_0^1$: $\mu_i = \mu$ &  & \textbf{6.94} & 3.52 & 4.13 & 5.50 &  & \textbf{5.47} & 2.89 & 3.25 & 3.96 &  & \textbf{8.12} & 2.64 & 3.25 & 3.75 \\
 &  &  &   &  &  &  &  &   &  &  &  &  &   &  &  &  \\
 Model 2& $\mathbb{H}_0^2$: $\alpha_i = \alpha$ &  & \textbf{6.34} & 2.58 & 3.15 & 4.41 &  & \textbf{8.09} & 2.47 & 2.90 & 4.12 &  & \textbf{6.23} & 1.84 & 2.10 & 2.75 \\
 & YYX     &  & \textbf{6.94}   & 0.52 & 1.05 & 1.91 &  & -0.01           & 0.52 & 1.05 & 1.91 &  & \textbf{15.64}  & 0.52 & 1.05 & 1.91 \\
 & $\mathbb{H}_0^3$: $\beta_i = \beta$ &  & \textbf{21.60} & 8.49 & 10.08 & 12.97 &  & \textbf{27.67} & 8.57 & 10.07 & 13.38 &  & \textbf{21.36} & 6.58 & 7.48 & 8.93 \\
 & PY      &  & \textbf{51.73}  & 1.28 & 1.64 & 2.33 &  & \textbf{16.15}  & 1.28 & 1.64 & 2.33 &  & \textbf{199.08} & 1.28 & 1.64 & 2.33 \\
 &  &  &   &  &  &  &  &   &  &  &  &  &   &  &  &  \\
 Model 3& $\mathbb{H}_0^2$: $\alpha_i = \alpha$ &  & \textbf{7.33} & 2.03 & 2.43 & 2.99 &  & \textbf{7.93} & 2.06 & 2.38 & 3.20 &  & \textbf{7.33} & 1.84 & 2.15 & 3.07 \\
 & $\mathbb{H}_0^3$: $\beta_i = \beta$ &  & \textbf{25.09} & 6.17 & 6.91 & 10.47 &  & \textbf{27.13} & 7.24 & 8.10 & 10.48 &  & \textbf{25.09} & 6.39 & 7.38 & 9.59 \\ \hline
\textbf{Panel B} &  &  &   &  &  &  &  &   &  &  &  &  &   &  &  &  \\
 Model 1& $\mathbb{H}_0^1$: $\mu_i = \mu$ &  & \textbf{9.91} & 5.43 & 6.90 & 9.71 &  & 9.14 & 5.45 & 6.67 & 10.78 &  & \textbf{12.33} & 5.12 & 6.56 & 10.00 \\
 &  &  &   &  &  &  &  &   &  &  &  &  &   &  &  &  \\
  Model 2 & $\mathbb{H}_0^2$: $\alpha_i = \alpha$ &  & \textbf{11.09} & 3.65 & 4.01 & 5.11 &  & \textbf{11.54} & 4.29 & 5.22 & 7.03 &  & \textbf{13.77} & 4.11 & 4.98 & 6.76 \\
  & YYX     &  & \textbf{44.00}  & 0.52 & 1.05 & 1.91 &  & \textbf{75.65}  & 0.52 & 1.05 & 1.91 &  & \textbf{82.31}  & 0.52 & 1.05 & 1.91 \\ 
 & $\mathbb{H}_0^3$: $\beta_i = \beta$ &  & \textbf{36.86} & 13.66 & 16.12 & 22.46 &  & \textbf{39.47} & 15.08 & 19.01 & 22.86 &  & \textbf{45.94} & 16.10 & 19.69 & 26.71 \\
 & PY      &  & \textbf{317.09} & 1.28 & 1.64 & 2.33 &  & \textbf{386.59} & 1.28 & 1.64 & 2.33 &  & \textbf{392.03} & 1.28 & 1.64 & 2.33\\
 &  &  &   &  &  &  &  &   &  &  &  &  &   &  &  &  \\
 Model 3& $\mathbb{H}_0^2$: $\alpha_i = \alpha$ &  & \textbf{11.36} & 3.62 & 4.80 & 6.91 &  & \textbf{11.60} & 4.12 & 5.33 & 7.36 &  & \textbf{11.36} & 3.94 & 5.11 & 6.60 \\
 & $\mathbb{H}_0^3$: $\beta_i = \beta$ &  & \textbf{37.04} & 13.34 & 17.76 & 25.59 &  & \textbf{39.68} & 15.99 & 19.15 & 25.27 &  & \textbf{37.04} & 13.53 & 14.91 & 21.49 \\
\hline\hline              
\end{tabular}
\end{center}
\end{table}

\begin{figure}[htp!]
	\centering
	\subfloat[Model 1, $epey$ ]
	{\includegraphics[width=0.35\textwidth]{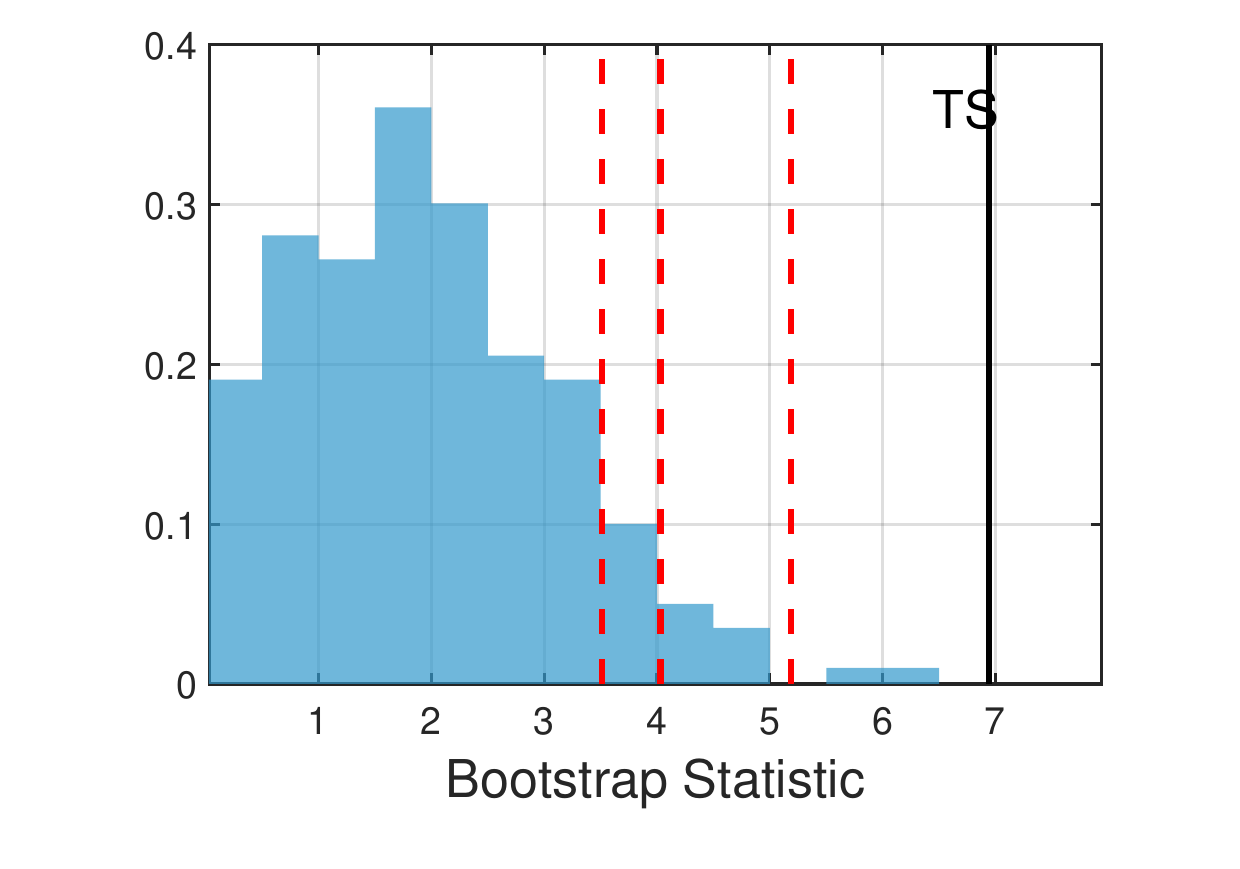}}
	\subfloat[Model 2, $epey$]
    {\includegraphics[width=0.35\textwidth]{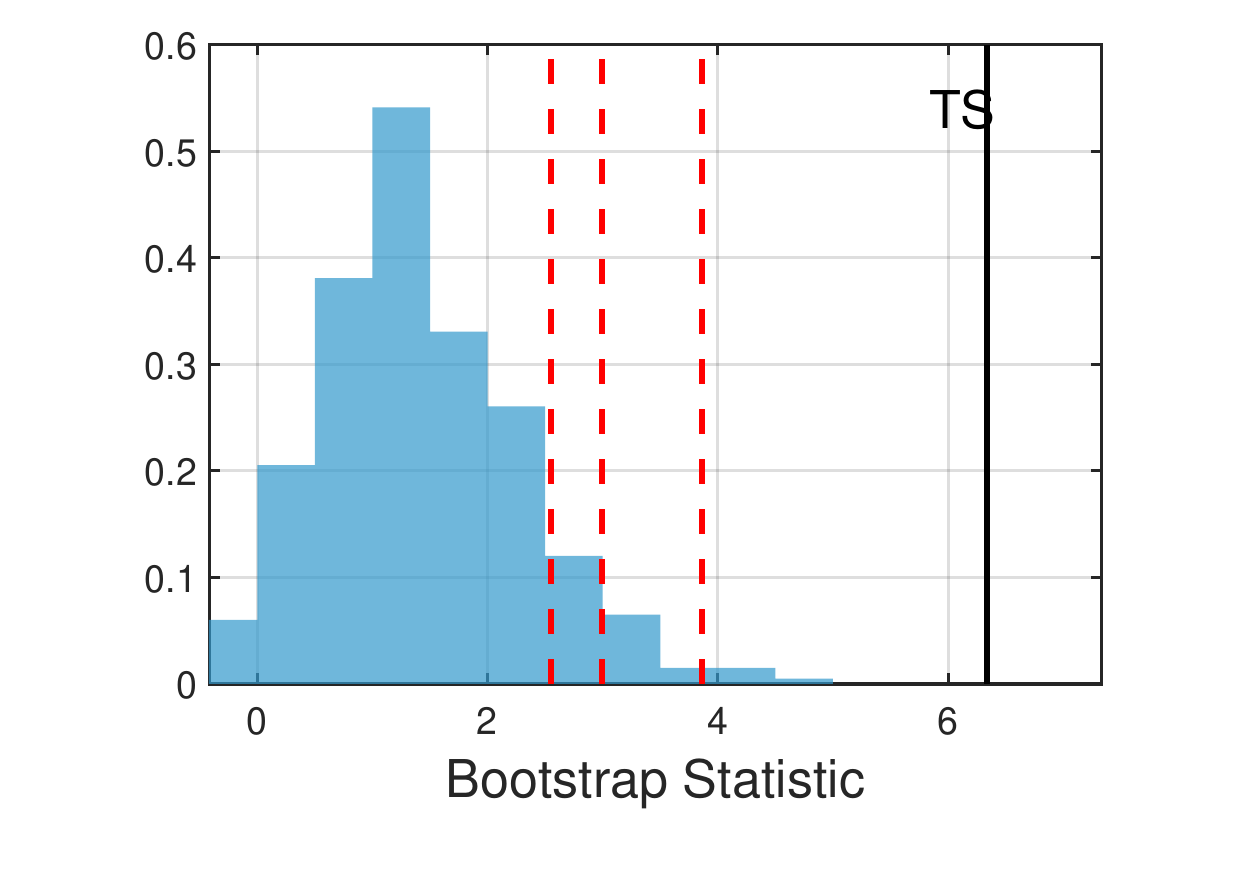}}
	\subfloat[Model 3, $epey$]
	{\includegraphics[width=0.35\textwidth]{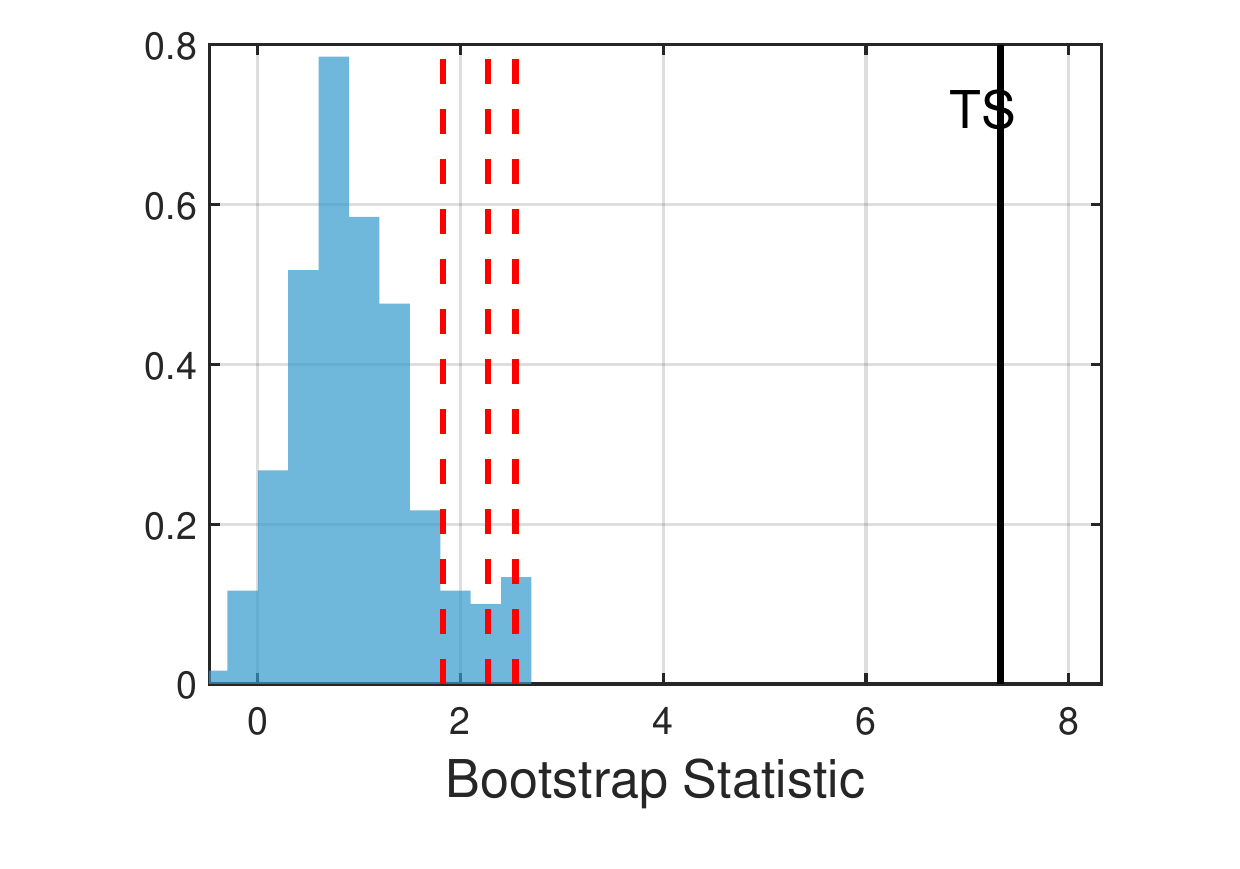}}\\
	\subfloat[Model 1, $epec$]
	{\includegraphics[width=0.35\textwidth]{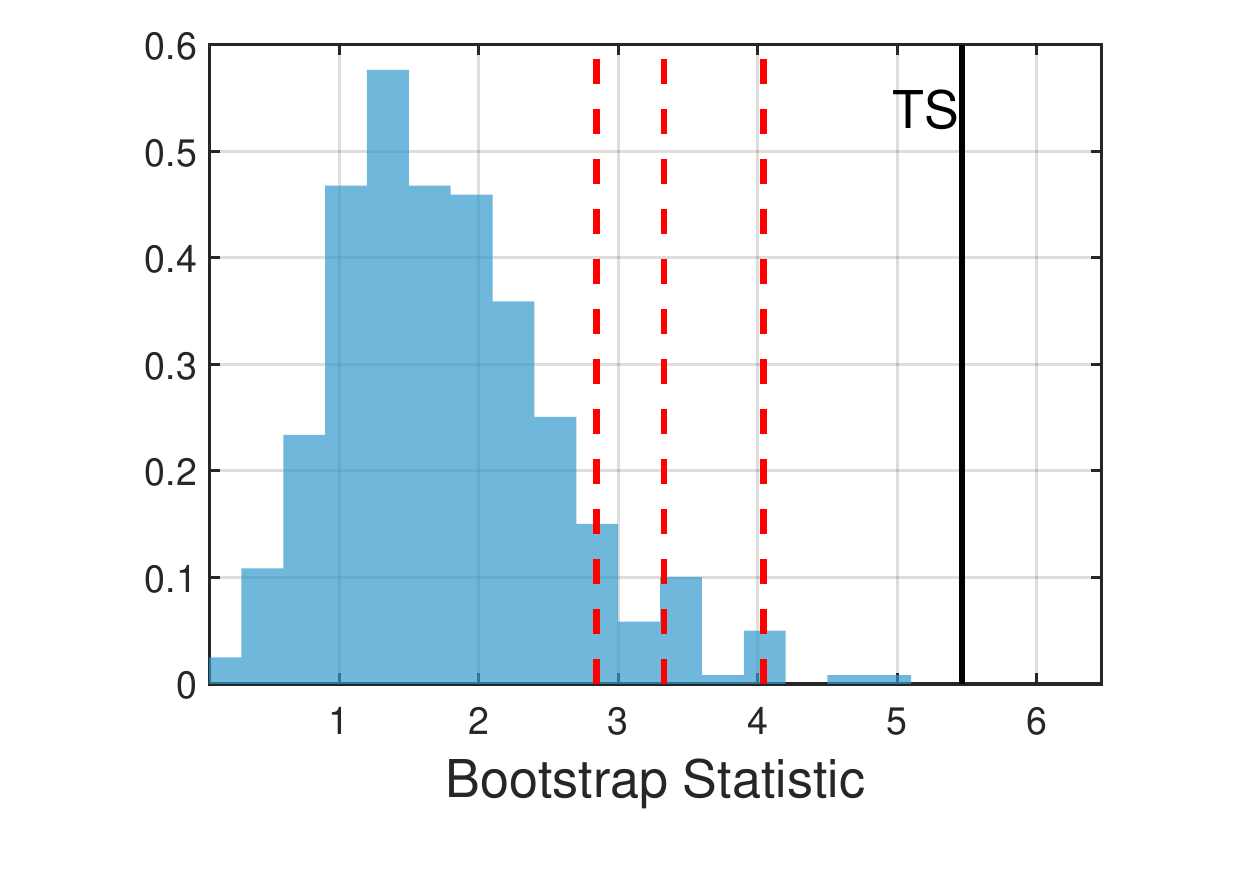}}	
    \subfloat[Model 2, $epec$]
	{\includegraphics[width=0.35\textwidth]{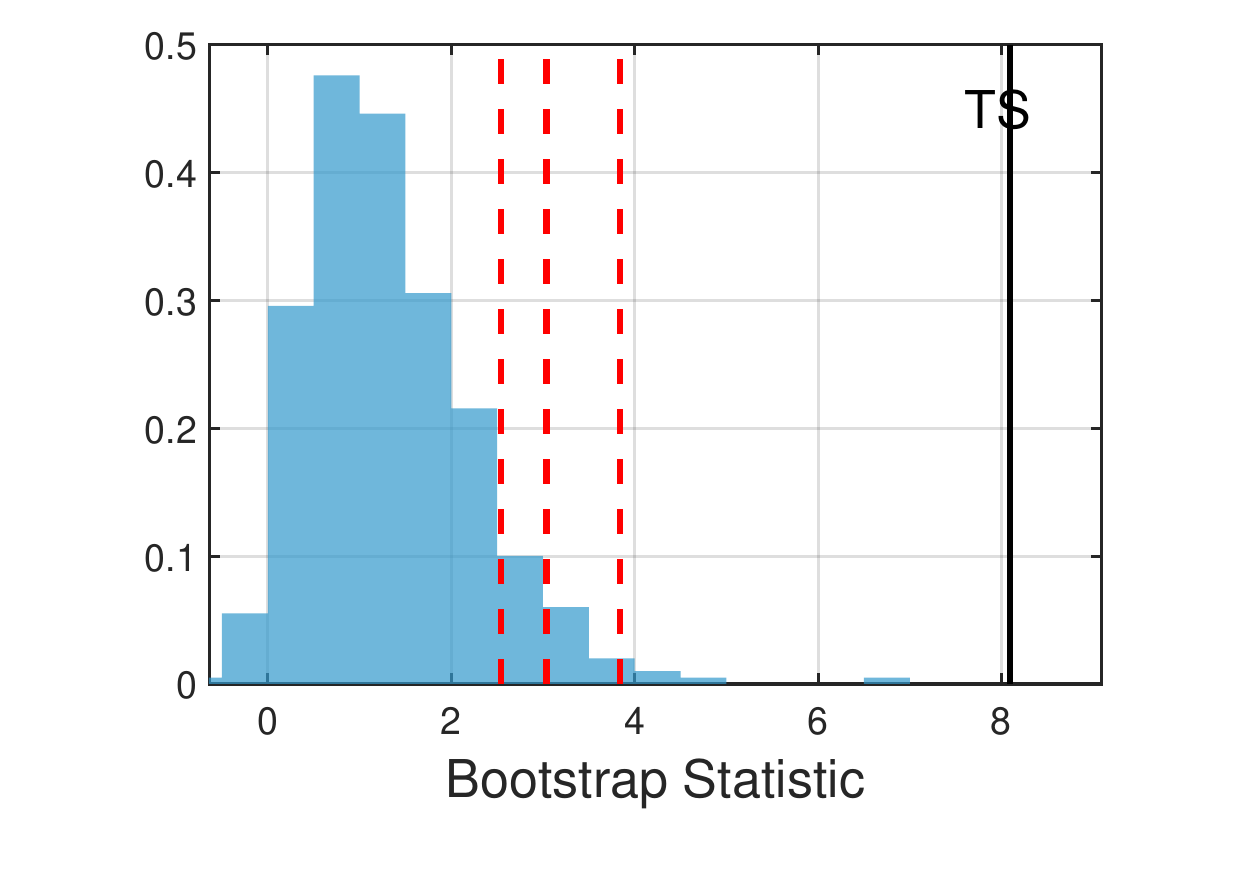}}
	\subfloat[Model 3, $epec$]
	{\includegraphics[width=0.35\textwidth]{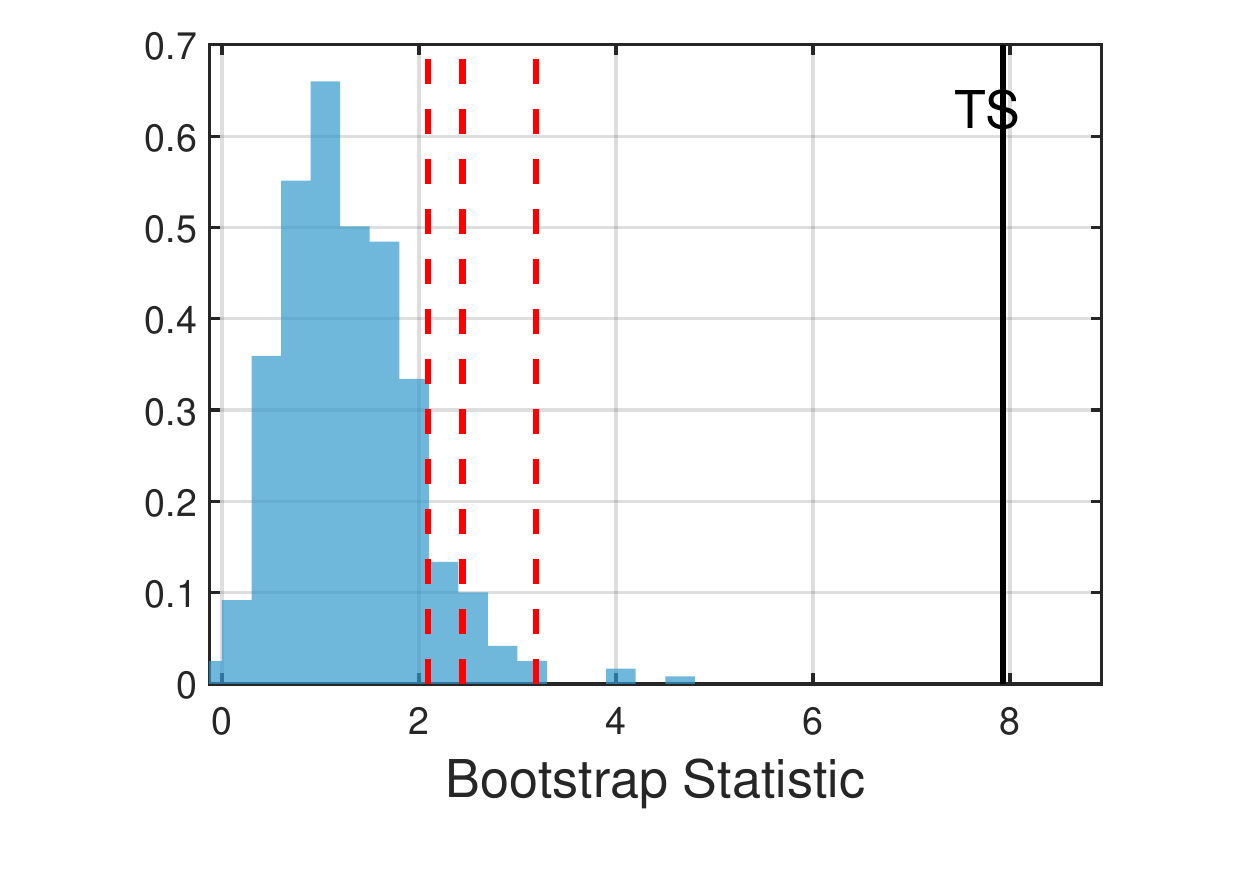}}\\
    \subfloat[Model 1, $ere$]
	{\includegraphics[width=0.35\textwidth]{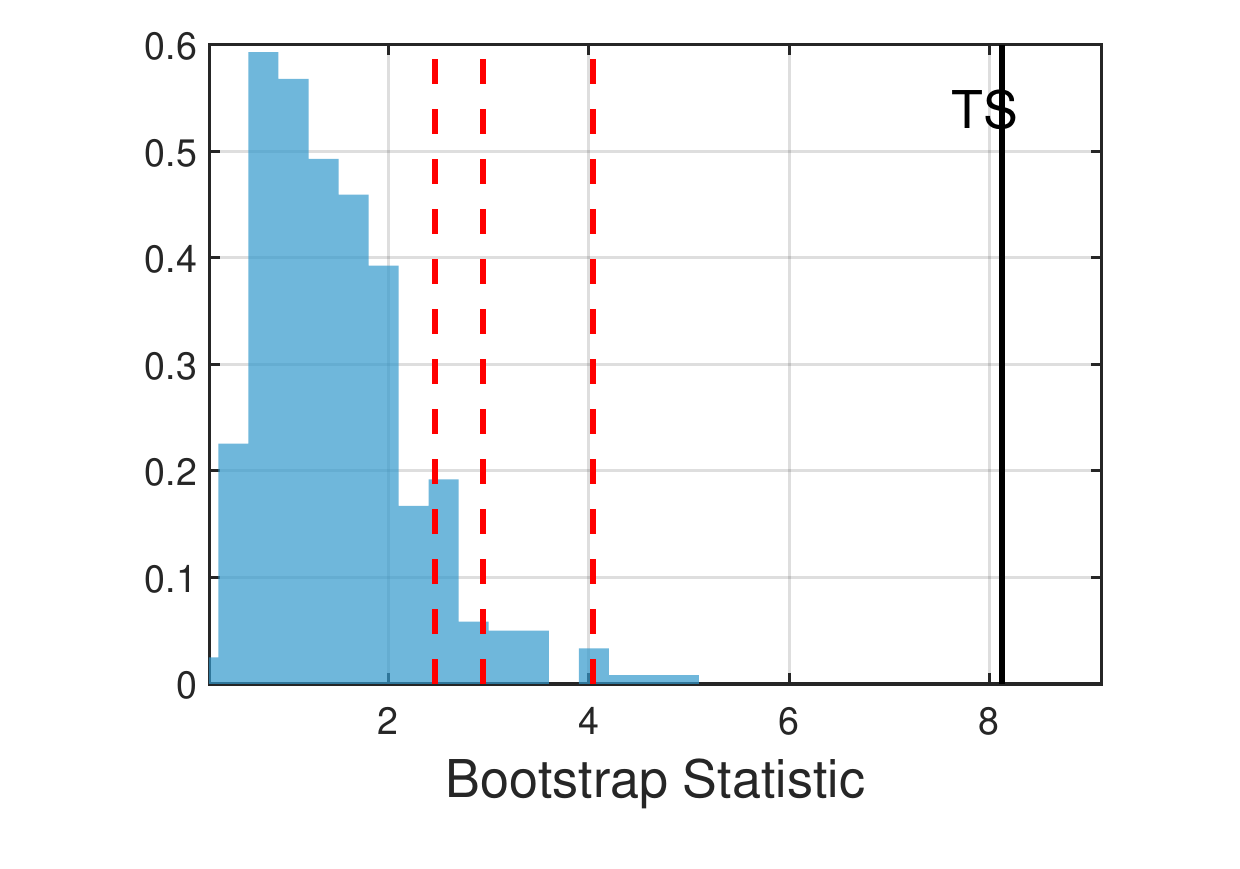}}
	\subfloat[Model 2, $ere$]
	{\includegraphics[width=0.35\textwidth]{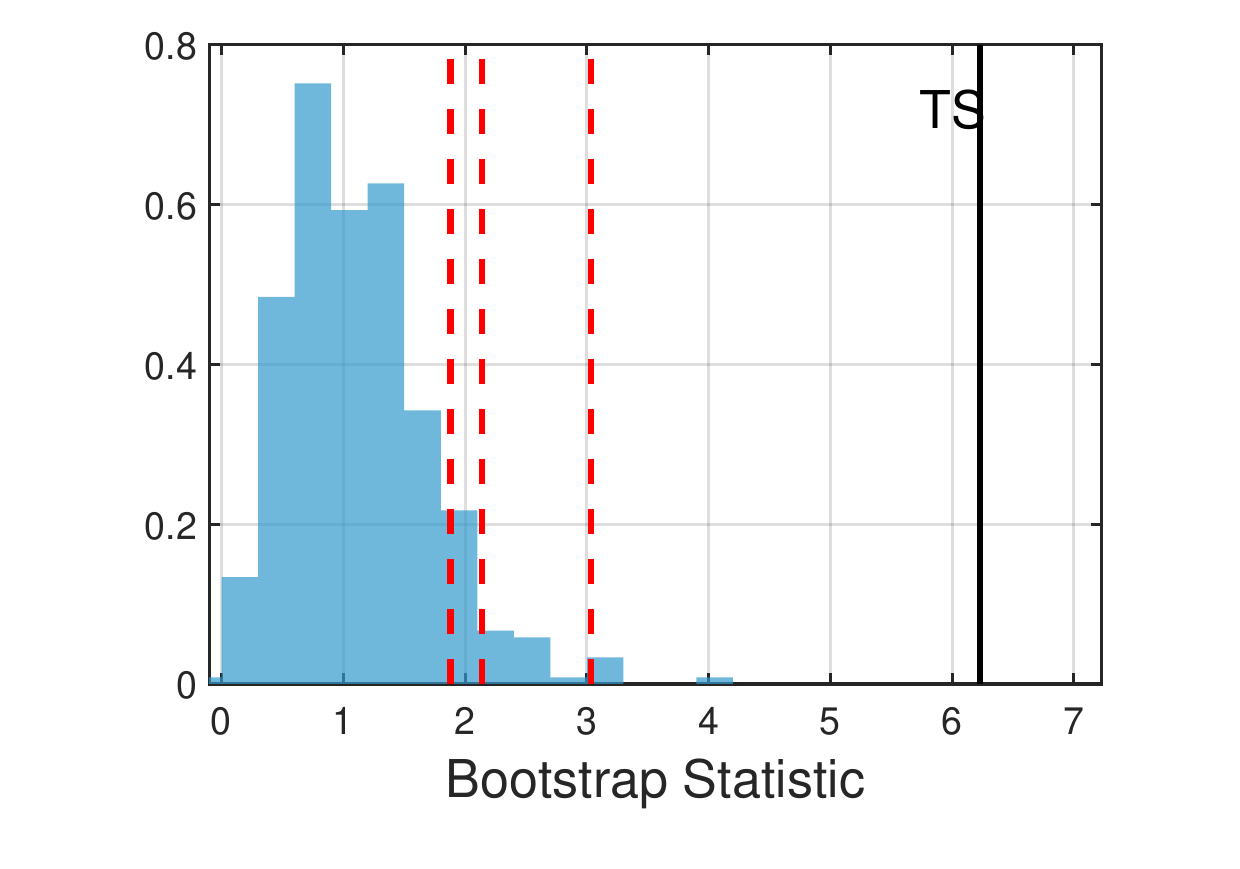}}
	\subfloat[Model 3, $ere$]
	{\includegraphics[width=0.35\textwidth]{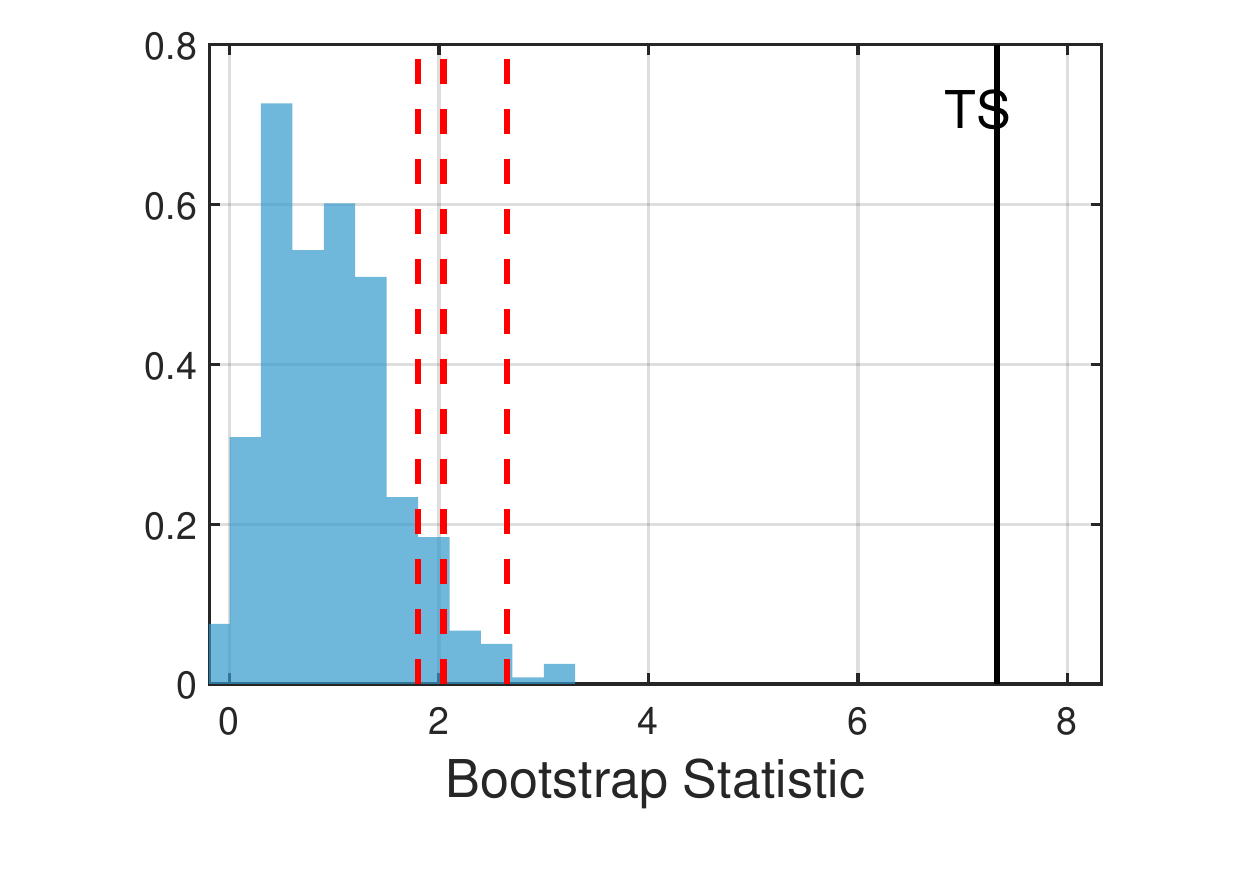}}\\
	\caption{\textbf{Distribution of Bootstrap Test Statistics.} In this figure, we present the distribution of bootstrap test statistics to infer the heterogeneity of the intercept in each model. The solid line in each plot represents the value of the test statistic, and three red dashed lines correspond to the bootstrap critical values at the 0.1, 0.05, and 0.01 significance levels.  }
	\label{Fig_Emp}
\end{figure}

\section{Conclusion}\label{Sec6}

In this paper, we introduce an underlying data generating process that allows for different magnitude of CD, along with TSA. This is achieved via high-dimensional moving average processes of infinite order (HDMA($\infty$)), which automatically generalizes the spatial structure introduced by Robinson and his co-authors in recent years. The framework is important in the sense that as noted by \citet[p. 187]{brockwell1991time} and \citet[pp. 33 \& 190]{FanYao}, the Wold decomposition theorem  ensures a formal linear representation exists for any stationary time series with no deterministic components, and HDMA($\infty$) naturally incorporates this result into a panel data framework. 

Our setup and investigation significantly integrates and enhances both homogenous and heterogeneous panel data modelling and testing (such as \citealp{Pesaran2006, PY2008, fan2015power, YU2024105458}).  To study HDMA($\infty$), we extend the BN decomposition (e.g., \citealp{BN1981,PS1992}) to a high-dimensional time series setting, and derive a complete set of toolkit. It is worth mentioning our investigation complements the work of \cite{fan2015power}, who specifically study cases where $\frac{T}{\sqrt{N}}\to 0$, by considering a broader range of scenarios and relaxing the independence assumptions employed in \cite{PY2008} and \cite{YU2024105458}.

We exam homogeneity against heterogeneity using Gaussian approximation, a prevalent technique for establishing uniform inference (e.g., \citealp{chernozhuokov2022improved}, and references therein). For post-testing inference, we derive Central Limit theorems through Edgeworth expansions for both homogenous and heterogeneous settings. Notably, the demand for Gaussian approximation in panel data analysis has been increasing recently, as exemplified in Section 4 of \cite{SJW_2024} and Section 4 of \cite{LLS2024}. Our study also contributes to this research direction by providing a set of foundational conditions and deriving a set of useful basic results.
    
We showcase the practical relevance of the established asymptotic theory by (1). connecting our results with the literature on grouping structure analysis such as those surveyed in \cite{BM2015} and \cite{SSP2016}, (2). examining a nonstationary panel data generating process presented in \cite{PM1999}, and (3). revisiting the common correlated effects (CCE) estimators of \cite{Pesaran2006}. Typically, when investigating nonstationary panel data, one has to impose cross-sectional independence such as \cite{PM1999}, \cite{DGP2021} and \cite{HUANG2021198} due to technical constraints. Our study offers a set of toolkit to account for the dependence of unit root precesses.

Finally, we verify our theoretical findings via extensive numerical studies using both simulated and real datasets.

{\small  

{\footnotesize
\setlength{\bibsep}{2.pt plus 0ex}
\bibliography{Ref.bib}
}

\newpage

\setcounter{page}{1}

\begin{center}
\begin{spacing}{1.5}
{\Large \bf Online Supplementary Appendices to \\``Panel Data Estimation and Inference: \\Homogeneity versus Heterogeneity"}
\end{spacing}

\bigskip

$^{\ast}${\sc Jiti Gao}, $^{\dag}${\sc Fei Liu}, $^{\ast}${\sc Bin Peng} and $^{\ddag}${\sc Yayi Yan}

$^{\ast}$Monash University

$^{\dag}$Nankai University

$^\ddag$Shanghai University of Finance and Economics
\end{center}

\renewcommand{\theequation}{A.\arabic{equation}}
\renewcommand{\thesection}{A\arabic{section}}
\renewcommand{\thefigure}{A\arabic{figure}}
\renewcommand{\thetable}{A\arabic{table}}
\renewcommand{\thelemma}{A\arabic{lemma}}
\renewcommand{\theremark}{A\arabic{remark}}
\renewcommand{\thecorollary}{A\arabic{corollary}}

\setcounter{equation}{0}
\setcounter{lemma}{0}
\setcounter{section}{0}
\setcounter{table}{0}
\setcounter{figure}{0}
\setcounter{remark}{0}
\setcounter{corollary}{0}

\bigskip

Throughout the proofs, we suppose that without loss of generality, $\mu \equiv 0$ for the homogenous case, and $\mu_i\equiv 0$ for all $ i\in [N]$ for the heterogeneous case. We shall not mention them again unless misunderstanding may arise.

This document is structured as follows: Appendix \ref{SecA0} points out that the proposed inference methods can accommodate unbalanced panels after some necessary modifications;  Appendix \ref{SecA1} provides some foundational facts used throughout the proofs; Appendix \ref{SecA2} contains all preliminary lemmas; Appendix \ref{SecA3} presents the theoretical proofs of the main results; and Appendix \ref{SecA4} details the proofs of the preliminary lemmas.

\section{Unbalanced Panel Data}\label{SecA0}

More often than not, one encounters unbalanced panel data practically. Given the missing proportion is asymptotically negligible, the proposed inference methods can accommodate unbalanced panels after some necessary modifications. 

For instance, we let $\mathbb{T}_i$ denote the  sample set for individual $i$ and $T_i\coloneqq \sharp \mathbb{T}_i$. The $L_\infty$-based test statistic  is then redefined as

$$
Q_{NT} = \max_i\left|\sqrt{T_i}(\overline{x}_i - \overline{x})\right|,
$$
where  $\overline{x}_i = \frac{1}{T_i}\sum_{t=1}^{T_i}x_{it}$ and $\overline{x} = \frac{1}{N}\sum_{i=1}^{N}\frac{1}{T_i}\sum_{t=1}^{T_i}x_{it}$. 

Additionally, the high-dimensional long-run covariance matrix estimator for unbalanced panels is defined as $ \widehat{\bm{\Omega}}\coloneqq (\widehat{\Omega}_{in})$, where

\begin{eqnarray*}
    \widehat{\Omega}_{in}\coloneqq \frac{1}{\sqrt{T_iT_n}}\sum_{t\in \mathbb{T}_i}\sum_{s\in \mathbb{T}_n}a\left(\frac{t-s}{\widetilde{m}}\right)(x _{it}-\overline{x}_i)(x _{ns}-\overline{x}_n),
\end{eqnarray*}
and the Gaussian multiplier bootstrap approximation is accordingly defined by
\begin{eqnarray*}
    \left|\widehat{\bm{\Omega}}^{1/2}\Big(\frac{1}{\sqrt{T_1}}\sum_{t\in \mathbb{T}_1}z_{1t}^*,\cdots, \frac{1}{\sqrt{T_N}}\sum_{t\in \mathbb{T}_N}z_{Nt}^*\Big)^\top \right|_{\infty},
\end{eqnarray*}
where $\{z_{it}^* \mid t\in\mathbb{T}_i \}$ is a sequence of i.i.d.  $ N(0,1)$. 

For the CCE estimation, the estimators and test statistics for unbalanced panels can also be updated. Let $\mathbb{N}_t$ contain the indices for individuals that have valid observations at time $t$ and let $N_t$ be the number of indices in this set. Additionally, for each $i$, define $\overline{\mathbf{Z}}_{\cdot i}\coloneqq (\overline{\mathbf{z}}_{t_1},\cdots, \overline{\mathbf{z}}_{t_{T_i}})^\top$, where $t_1,\cdots,t_{T_i}\in \mathbb{T}_i$ and  $\overline{\mathbf{z}}_t\coloneqq \frac{1}{N_t}\sum_{i\in\mathbb{N}_t}\mathbf{z}_{it}$.

The CCE estimators of $\pmb{\theta}_i$ and $\pmb{\theta}$  can be then defined as 

\begin{eqnarray*}
\widehat{\pmb{\theta}}_i &=&(\mathbf{W}_i^\top \mathbf{M}_{\overline{\mathbf{Z}}_{\cdot i}} \mathbf{W}_i)^{-1} \mathbf{W}_i^\top \mathbf{M}_{\overline{\mathbf{Z}}_{\cdot i}} \mathbf{Y}_i \quad \text{for} \quad \forall i\in [N],\notag \\
\widehat{\pmb{\theta}}&=&\left(\sum_{i=1}^N\mathbf{W}_i^\top \mathbf{M}_{\overline{\mathbf{Z}}_{\cdot i}} \mathbf{W}_i\right)^{-1} \sum_{i=1}^N\mathbf{W}_i^\top \mathbf{M}_{\overline{\mathbf{Z}}_{\cdot i}} \mathbf{Y}_i,
\end{eqnarray*}
where $\mathbf{M}_{\overline{\mathbf{Z}}_{\cdot i}} =\mathbf{I}_T-\overline{\mathbf{Z}}_{\cdot i}(\overline{\mathbf{Z}}_{\cdot i}^\top \overline{\mathbf{Z}}_{\cdot i})^{-1} \overline{\mathbf{Z}}_{\cdot i}^\top $.

The heterogeneity test statistic is then given by

\begin{eqnarray*}
Q_{j} = \max_i \frac{1}{\sqrt{T}} \mathbf{e}_j^\top (\mathbf{W}_i^\top \mathbf{M}_{\overline{\mathbf{Z}}_{\cdot i}} \mathbf{W}_i)(\widehat{\pmb{\theta}}_i-\widehat{\pmb{\theta}}) ,
\end{eqnarray*}
where $\mathbf{e}_j$ is a selection vector. For constructing the bootstrap statistics, we use $\widehat{\pmb{\Omega}}_{j}=(\widehat{\pmb{\Omega}}_{j,in})$, where 

\begin{eqnarray*}
\widehat{\pmb{\Omega}}_{j,in} &=& \frac{1}{\sqrt{T_iT_n}}\sum_{t\in\mathbb{T}_i}\sum_{s\in\mathbb{T}_n}a\left(\frac{t-s}{\widetilde{m}}\right)\widehat{\xi}_{j,it}\widehat{\xi}_{j,ns},
\end{eqnarray*}
where $\widehat{\xi}_{j,it}$ denotes the $t$-th element of $\widehat{\pmb{\mathbf{U}}}_{i,1}\circ \widehat{\pmb{\mathbf{U}}}_{i,1+j}$, with
 $\widehat{\pmb{\mathbf{U}}}_i=  \mathbf{M}_{\overline{\mathbf{Z}}_{\cdot i}} \mathbf{Z}_i,$  and $\widehat{\pmb{\mathbf{U}}}_{i,j}$ standing for the $j^{th}$ column of $\widehat{\pmb{\mathbf{U}}}_{i}$. The distribution of $Q_j$ is then approximated by 
 \begin{eqnarray*}
    \left|\widehat{\bm{\Omega}}_j^{1/2}\Big(\frac{1}{\sqrt{T_1}}\sum_{t\in \mathbb{T}_1}z_{1t}^*,\cdots, \frac{1}{\sqrt{T_N}}\sum_{t\in \mathbb{T}_N}z_{Nt}^*\Big)^\top \right|_{\infty},
\end{eqnarray*}
where $\{z_{it}^* \mid t\in\mathbb{T}_i \}$ is a sequence of i.i.d.  $ N(0,1)$.

\section{Some Facts}\label{SecA1}

We present a few facts in this section, which will be repeatedly used in the proofs.

\smallskip

\noindent \textbf{On Cumulant} --- Note that for a generic cumulant, we have for a constant $c$ 

\begin{eqnarray}\label{prop.cum}
\kappa_r(cx)=c^r \kappa_r(x).
\end{eqnarray}
We refer the interested reader to \cite{saulis} for more details about cumulants. 

\medskip

\noindent \textbf{On BN Decomposition} --- Simple algebra shows the Beveridge and Nelson (BN) decomposition under the high-dimensional (HD) setting is as follows:

\begin{eqnarray}\label{def.BN1}
\mathbf{B}(L) = \mathbf{B} -(1-L)\widetilde{\mathbf{B}}(L) ,
\end{eqnarray}
where $\widetilde{\mathbf{B}}(L)\coloneqq\sum_{\ell=0}^{\infty}\widetilde{\mathbf{B}}_\ell L^\ell$ with $\widetilde{\mathbf{B}}_\ell\coloneqq \sum_{k=\ell+1}^{\infty}\mathbf{B}_k$. 

\medskip

\noindent \textbf{On Dependence} --- To calculate CD, for $\forall i,j\in [N]$ it is easy to obtain that

\begin{eqnarray*}
E[x_{i1}x_{j1}] = \sum_{\ell =0}^{\infty} \mathbf{b}_{\ell i}^{\sharp\top} \mathbf{b}_{\ell j}^\sharp.
\end{eqnarray*}

To calculate TSA, for $t> s$ we write

\begin{eqnarray*}
E[x_{it}x_{is}] &=& E\left[\left(\sum_{\ell=0}^{\infty} \mathbf{b}_{\ell i}^{\sharp\top}\pmb{\varepsilon}_{t-\ell} \right)\left(\sum_{\ell=0}^{\infty} \mathbf{b}_{\ell i}^{\sharp\top}\pmb{\varepsilon}_{s-\ell} \right)\right] \notag \\
&=& E\left[\left(\sum_{\ell=t-s}^{\infty} \mathbf{b}_{\ell i}^{\sharp\top}\pmb{\varepsilon}_{t-\ell} \right)\left(\sum_{\ell=0}^{\infty} \mathbf{b}_{\ell i}^{\sharp\top}\pmb{\varepsilon}_{s-\ell} \right)\right] \notag \\
&=& E\left[\left(\sum_{\ell=0}^{\infty} \mathbf{b}_{\ell+t-s, i}^{\sharp\top}\pmb{\varepsilon}_{s-\ell} \right)\left(\sum_{\ell=0}^{\infty} \mathbf{b}_{\ell i}^{\sharp\top}\pmb{\varepsilon}_{s-\ell} \right)\right] \notag \\
&=& \sum_{\ell=0}^{\infty} \mathbf{b}_{\ell+t-s, i}^{\sharp\top} \mathbf{b}_{\ell i}^\sharp.
\end{eqnarray*}

To calculate CD + TSA, for $t> s$ we write

\begin{eqnarray*}
E[x_{it}x_{js}] &=& E\left[\left(\sum_{\ell=0}^{\infty} \mathbf{b}_{\ell i}^{\sharp\top}\pmb{\varepsilon}_{t-\ell} \right)\left(\sum_{\ell=0}^{\infty} \mathbf{b}_{\ell j}^{\sharp\top}\pmb{\varepsilon}_{s-\ell} \right)\right] \notag \\
&=& E\left[\left(\sum_{\ell=t-s}^{\infty} \mathbf{b}_{\ell i}^{\sharp\top}\pmb{\varepsilon}_{t-\ell} \right)\left(\sum_{\ell=0}^{\infty} \mathbf{b}_{\ell j}^{\sharp\top}\pmb{\varepsilon}_{s-\ell} \right)\right] \notag \\
&=& E\left[\left(\sum_{\ell=0}^{\infty} \mathbf{b}_{\ell+t-s, i}^{\sharp\top}\pmb{\varepsilon}_{s-\ell} \right)\left(\sum_{\ell=0}^{\infty} \mathbf{b}_{\ell j}^{\sharp\top}\pmb{\varepsilon}_{s-\ell} \right)\right] \notag \\
&=& \sum_{\ell=0}^{\infty} \mathbf{b}_{\ell+t-s, i}^{\sharp\top}\mathbf{b}_{\ell j}^\sharp.
\end{eqnarray*}

\medskip

\noindent \textbf{On Cumulants} --- Recall the notation of Section \ref{Sec1}. Let $\widetilde{\phi}(x)$ be the characteristic function of a standard normal distribution. We further define its cumulants by $\gamma_r$ for $r\ge 1$. We now consider a generic distribution function $G(x)$ with a characteristic functions $\chi(u)$ and cumulants $\beta_r$. Then by Taylor expansion, we have

\[
\log \frac{\chi(u)}{\widetilde{\phi}(x)} =\log \chi(u) - \log \widetilde{\phi}(x)=\sum_{r=1}^\infty (\beta_r-\gamma_r)\cdot \frac{(\mathsf{i} u)^r}{r!}
\]
and 

\begin{eqnarray}\label{def.chiu}
\chi(u) =\widetilde{\phi}(x)\cdot\exp\left\{ \sum_{r=1}^\infty (\beta_r-\gamma_r)\cdot \frac{(\mathsf{i} u)^r}{r!}\right\}.
\end{eqnarray}

\medskip

\noindent \textbf{On Exponential function} --- By Taylor expansion,

\begin{eqnarray}\label{def.expu}
\exp(u)=\sum_{r=0}^\infty \frac{u^r}{r!}.
\end{eqnarray}
Taylor theorem yields that

\begin{eqnarray}\label{def.expu2}
\exp(u)-\sum_{r=0}^k\frac{u^r}{r!} =\frac{\exp(\widetilde{u})}{(k+1)!}u^{k+1},
\end{eqnarray}
where $\widetilde{u}$ is between 0 and $u$.

\section{Preliminary Lemmas}\label{SecA2}

In this appendix, we provide some useful preliminary lemmas. The first three lemmas are either obvious or have been carefully proved in the literature, so we do not provide the proofs herewith. We will provide the proofs for Lemmas \ref{LM.A4} to \ref{LM.A8}.

\begin{lemma}\label{LM.A1}
A matrix $\mathbf{A}$ in $\mathbb{R}^{n\times n}$ has rank one if and only if it can be written as the outer product of two nonzero vectors in $\mathbb{R}^{n}$ (i.e., $\mathbf{A} =\mathbf{x}\mathbf{y}^\top$).
\end{lemma}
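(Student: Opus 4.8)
The plan is to prove the two implications of this standard rank-one characterization separately, in each case reasoning through the column space of $\mathbf{A}$.

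First I would dispatch the easy direction ($\Leftarrow$). Suppose $\mathbf{A}=\mathbf{x}\mathbf{y}^\top$ with $\mathbf{x},\mathbf{y}\in\mathbb{R}^n$ both nonzero. Writing $\mathbf{y}=(y_1,\ldots,y_n)^\top$, the $j^{th}$ column of $\mathbf{A}$ is exactly $y_j\mathbf{x}$, so every column is a scalar multiple of the single vector $\mathbf{x}$. Hence the column space lies in $\mathrm{span}\{\mathbf{x}\}$, giving $\mathrm{rank}(\mathbf{A})\le 1$. Since $\mathbf{y}\ne\mathbf{0}$, some $y_j\ne 0$, and since $\mathbf{x}\ne\mathbf{0}$ the corresponding column $y_j\mathbf{x}$ is nonzero, forcing $\mathrm{rank}(\mathbf{A})\ge 1$. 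Therefore $\mathrm{rank}(\mathbf{A})=1$.

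For the converse ($\Rightarrow$), I would use that a rank-one matrix has a one-dimensional column space. Pick any nonzero vector $\mathbf{x}$ spanning this column space. Each column $\mathbf{a}_j^\dag$ of $\mathbf{A}$ then lies in $\mathrm{span}\{\mathbf{x}\}$, so there exists a scalar $y_j$ with $\mathbf{a}_j^\dag=y_j\mathbf{x}$. Collecting these scalars into $\mathbf{y}=(y_1,\ldots,y_n)^\top$ yields $\mathbf{A}=\mathbf{x}\mathbf{y}^\top$. By construction $\mathbf{x}\ne\mathbf{0}$, and $\mathbf{y}\ne\mathbf{0}$ because otherwise $\mathbf{A}$ would be the zero matrix, contradicting $\mathrm{rank}(\mathbf{A})=1$.

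I do not anticipate any genuine obstacle, as this is a textbook fact; the only points requiring mild care are tracking the nonzero conditions on $\mathbf{x}$ and $\mathbf{y}$ in both directions, and noting that the outer-product representation is not unique (one may rescale $\mathbf{x}\mapsto c\mathbf{x}$ and $\mathbf{y}\mapsto c^{-1}\mathbf{y}$ for any $c\ne 0$), a freedom that does not affect the argument but is worth flagging since the subsequent application in \eqref{def.Sstar_1}--\eqref{def.S_2} exploits precisely this rank-one structure of $\mathbf{1}_N\mathbf{1}_N^\top$.
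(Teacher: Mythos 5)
Your proof is correct: both directions are the standard column-space argument, and you track the nonzero conditions on $\mathbf{x}$ and $\mathbf{y}$ properly in each direction. Note that the paper itself offers no proof of Lemma \ref{LM.A1} --- it is listed among the lemmas deemed ``either obvious or \ldots carefully proved in the literature'' --- so there is nothing to compare against; your argument is precisely the textbook one the authors are implicitly invoking, and your closing remark on the rescaling non-uniqueness of the factorization, while harmless, is not needed for the application to $\mathbf{1}_N\mathbf{1}_N^\top$ in \eqref{def.Sstar_1}--\eqref{def.S_2}.
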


\begin{lemma}[Esseen's smoothing Lemma]\label{LM.A2}
Let $H(\cdot)$ be a distribution with 0 expectation and characteristic function $\chi(\cdot)$. Suppose $H(x)-G(x)$ vanishes at $\pm \infty$ and that $G(\cdot)$ has a derivative $g(\cdot)$ such that $|g|_\infty \le m$. Finally, suppose that $g$ has a continuously differentiable Fourier transform $\xi(\cdot)$ such that $\xi(0)=1$ and $\xi^{(1)}(0)=0$. Then

\begin{eqnarray*}
|H -G|_\infty\le \frac{1}{\pi}\int_{-a}^a \left|\frac{\chi(u)-\xi(u)}{u} \right|\mathrm{d}u+\frac{24m}{\pi \cdot a}
\end{eqnarray*}
where $a >0$.
\end{lemma}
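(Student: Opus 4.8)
The plan is to prove this by the classical Fourier-analytic smoothing argument (essentially Feller's treatment), adapted to the present notation. Write $\Delta(x) := H(x) - G(x)$ and $M := |H - G|_\infty = \sup_x |\Delta(x)|$. Since $\Delta$ vanishes at $\pm\infty$ and $G$ is continuous (being differentiable), the supremum is attained or approached, and by replacing $\Delta$ with $-\Delta$ if necessary I may assume there is a point $x_0$ with $\Delta(x_0)=M$, using right-continuity of $H$. The first step is a one-sided pointwise bound: because $H$ is nondecreasing and $|g|=|G'|\le m$, for every $s>0$,
\[
\Delta(x_0+s)-\Delta(x_0)=[H(x_0+s)-H(x_0)]-[G(x_0+s)-G(x_0)]\ge -ms,
\]
so that $\Delta(x_0+s)\ge M-ms$. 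Thus $\Delta$ stays close to its maximum on an interval of length of order $M/m$ to the right of $x_0$, which is exactly the leverage that makes smoothing detect $M$.

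Next I would introduce a smoothing kernel with band-limited Fourier transform, namely the density $v_a$ of the Fejér (triangular) kernel, whose Fourier transform is $\widehat v_a(t)=(1-|t|/a)_+$, supported on $[-a,a]$. Set $\Delta_a := \Delta * v_a$. Two opposing estimates on $\sup_x|\Delta_a(x)|$ will then bracket $M$. For the lower bound, evaluating $\Delta_a$ at a suitably shifted point and combining the one-sided inequality above with the known tail-mass decay of the Fejér kernel yields
\[
\sup_x |\Delta_a(x)| \ge M - \frac{24m}{\pi a};
\]
pinning down the constant $24$ is precisely the delicate part of this estimate.

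For the upper bound I would use Fourier inversion. The signed measure $\mathrm{d}\Delta = \mathrm{d}H - g\,\mathrm{d}x$ has Fourier–Stieltjes transform $\chi(t)-\xi(t)$; the hypotheses $\chi(0)=\xi(0)=1$, $\xi^{(1)}(0)=0$, together with $H$ having zero expectation (so $\chi^{(1)}(0)=0$), force $\chi-\xi = O(t^2)$ near the origin, whence $(\chi(t)-\xi(t))/t$ is integrable near $0$. Integrating $\mathrm{d}\Delta$ to recover $\Delta$ divides the transform by $\mathsf{i}t$, and the band-limited factor $\widehat v_a$ truncates the inversion integral to $[-a,a]$, yielding the uniform bound
\[
|\Delta_a(x)| \le \frac{1}{\pi}\int_{-a}^a \left|\frac{\chi(t)-\xi(t)}{t}\right|\mathrm{d}t,
\]
the constant following from the standard inversion formula. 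Combining the two brackets gives $M - \tfrac{24m}{\pi a}\le \sup_x|\Delta_a| \le \tfrac{1}{\pi}\int_{-a}^a|\tfrac{\chi-\xi}{t}|\,\mathrm{d}t$, which rearranges to the claimed inequality.

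The main obstacle is the bookkeeping that produces the sharp constant $24$ in the lower-bound step: it requires the optimal choice of the evaluation point for $\Delta_a$ and a precise estimate of the mass of the Fejér kernel outside a window of width $O(1/a)$, balanced against the linear slack $ms$ from the first step. The inversion step is otherwise routine once integrability of $(\chi-\xi)/t$ near $0$ is secured by the normalization conditions on $\xi$ and the zero-mean assumption on $H$, though justifying the interchange of integration needed to legitimize the representation of $\Delta_a$ also deserves a little care.
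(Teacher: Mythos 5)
Your route is exactly the Feller (Chapter XVI.3) argument that the paper itself simply cites for this lemma, but the two intermediate inequalities you state are each off by a factor of $2$, and the lower-bound one is not merely "delicate bookkeeping" --- it is false as stated. From $\Delta(x_0+s)\ge M-ms$ for $s>0$ you control $\Delta$ only to the \emph{right} of $x_0$; to the left, monotonicity of $H$ gives no protection, and indeed if $H$ has a jump of size $2M$ at $x_0$ then $\Delta(x)\le -M+m(x_0-x)$ for $x<x_0$, so $\Delta$ sits near $-M$ on a left plateau of the same length $\asymp M/m$ as the $+M$ right plateau. A symmetric kernel centered near $x_0$ then averages these two plateaus against each other. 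Concretely, for the extremal sawtooth configuration ($\Delta$ decreasing with slope $-m$ from $M$ to $-M$ and jumping back up, period $2M/m$, as realized by a lattice distribution against its continuous counterpart), centering the Fej\'er kernel at $x_0+\delta$ gives $\Delta_a(x_0+\delta)\approx M-m\delta-\frac{2M}{\pi a\delta}$, whose maximum over $\delta$ is $M-c\sqrt{Mm/a}$. This is consistent with the correct statement (Feller's Lemma 1), namely $\sup_x|\Delta_a(x)|\ge \frac{M}{2}-\frac{12m}{\pi a}$, equivalently $M\le 2\sup_x|\Delta_a(x)|+\frac{24m}{\pi a}$, but it violates your claimed $\sup_x|\Delta_a(x)|\ge M-\frac{24m}{\pi a}$ whenever $M\gg m/a$ --- precisely the regime in which the lemma has content. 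No choice of evaluation point can rescue the unhalved version.

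The reason your final display nonetheless comes out right is that your inversion bound carries a compensating factor-$2$ error. Fourier inversion of $\Delta_a=\Delta\ast v_a$ gives $\Delta_a(x)=\frac{1}{2\pi}\int_{-a}^{a}e^{-\mathsf{i}tx}\,\frac{\chi(t)-\xi(t)}{-\mathsf{i}t}\,\widehat{v}_a(t)\,\mathrm{d}t$, and since $|e^{-\mathsf{i}tx}|=1$ and $|\widehat{v}_a|\le 1$ this yields $\sup_x|\Delta_a(x)|\le\frac{1}{2\pi}\int_{-a}^{a}\left|\frac{\chi(t)-\xi(t)}{t}\right|\mathrm{d}t$ --- constant $\frac{1}{2\pi}$, not $\frac{1}{\pi}$; there is no room in this estimate for the larger constant you assert. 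Pairing the correct halved lower bound with the correct $\frac{1}{2\pi}$ upper bound produces exactly the lemma, so the theorem is unharmed, but as written each of your intermediate steps would fail and the proof cannot be completed along the stated lines without restoring the factor $2$ in both places. The remaining ingredients you list --- integrability of $(\chi-\xi)/t$ near $0$ from $\chi^{(1)}(0)=\xi^{(1)}(0)=0$, the Fej\'er tail bound $\int_h^\infty v_a(y)\,\mathrm{d}y\le \frac{2}{\pi a h}$ applied with $h=M/(2m)$, and the Fubini justification of the band-limited inversion --- are fine and are as in Feller, which is all the paper invokes.
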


See Chapter XVI, section 3 of  \cite{feller1971introduction} for details about Esseen's smoothing Lemma.

\begin{lemma}\label{LM.A3}
Let $\mathsf{i}$ be the imaginary unit. Then for $\forall \theta\in [0,1)$

\begin{eqnarray*}
\sup_x\left|\exp(\mathsf{i}x) -\sum_{j=0}^r \frac{(\mathsf{i}x)^j}{j!}  \right| \le \min\left\{\frac{2}{r!}|x|^{r+\theta}, \frac{|x|^{r+1}}{(r+1)!} \right\}.
\end{eqnarray*}
\end{lemma}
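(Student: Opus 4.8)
The plan is to read the left-hand side as the remainder in the Taylor expansion of $\exp(\mathsf{i}x)$ and to control it through a simple integral recursion, first producing the two ``endpoint'' estimates $\tfrac{2}{r!}|x|^r$ and $\tfrac{|x|^{r+1}}{(r+1)!}$ and then interpolating to obtain the exponent $r+\theta$. Write $f_r(x)\coloneqq \exp(\mathsf{i}x)-\sum_{j=0}^r \frac{(\mathsf{i}x)^j}{j!}$. Differentiating term by term gives $f_r'(x)=\mathsf{i}\,f_{r-1}(x)$ together with $f_r(0)=0$, so that
\[
f_r(x)=\mathsf{i}\int_0^x f_{r-1}(t)\,\mathrm{d}t .
\]
This recursion is the only analytic input needed.

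First I would establish, by induction on $r$, the pair of bounds
\[
|f_r(x)|\le \frac{2}{r!}|x|^r \qquad\text{and}\qquad |f_r(x)|\le \frac{|x|^{r+1}}{(r+1)!}.
\]
The base case $r=0$ reads $|\exp(\mathsf{i}x)-1|\le\min\{2,|x|\}$, which is immediate from $|\exp(\mathsf{i}x)|=1$ and from $\exp(\mathsf{i}x)-1=\mathsf{i}\int_0^x\exp(\mathsf{i}t)\,\mathrm{d}t$. For the inductive step I substitute either inherited bound for $f_{r-1}$ into the recursion and integrate: $|f_r(x)|\le\int_0^{|x|}\frac{2 s^{r-1}}{(r-1)!}\,\mathrm{d}s=\frac{2}{r!}|x|^r$, and likewise $|f_r(x)|\le\int_0^{|x|}\frac{s^{r}}{r!}\,\mathrm{d}s=\frac{|x|^{r+1}}{(r+1)!}$. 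This closes the induction and already yields the second term of the stated minimum.

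It then remains to upgrade $\tfrac{2}{r!}|x|^r$ to $\tfrac{2}{r!}|x|^{r+\theta}$ for $\theta\in[0,1)$, and here I split on the size of $|x|$. When $|x|\ge 1$ we have $|x|^\theta\ge 1$, so $\frac{2}{r!}|x|^r\le \frac{2}{r!}|x|^{r+\theta}$ and the first endpoint bound already suffices. When $|x|\le 1$ I instead use the second endpoint bound together with the elementary inequalities $|x|^{1-\theta}\le 1$ and $\frac{1}{(r+1)!}\le \frac{2}{r!}$ to get $\frac{|x|^{r+1}}{(r+1)!}=\frac{|x|^{1-\theta}}{(r+1)!}\,|x|^{r+\theta}\le \frac{2}{r!}|x|^{r+\theta}$. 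In both regimes $|f_r(x)|\le \frac{2}{r!}|x|^{r+\theta}$; since we also have $|f_r(x)|\le \frac{|x|^{r+1}}{(r+1)!}$, the bound by the minimum of the two quantities follows at once.

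The computation is elementary and essentially textbook, so there is no serious obstacle; the one point requiring care is the interpolation carried out above, where a naive geometric-mean interpolation of the two endpoint bounds fails to reproduce the clean prefactor $\tfrac{2}{r!}$. The case split on whether $|x|$ lies below or above $1$, exploiting that the two endpoint estimates cross near $|x|\asymp 1$, is precisely what keeps the single constant $\tfrac{2}{r!}$ valid uniformly in $x$.
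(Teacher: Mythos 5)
Your proof is correct. The paper itself does not prove Lemma \ref{LM.A3}: it simply cites \citet[3.2]{tik1981}, so your argument supplies a self-contained derivation of what the paper imports as a known fact. Your route is the standard textbook one: the recursion $f_r(x)=\mathsf{i}\int_0^x f_{r-1}(t)\,\mathrm{d}t$ with $f_r(0)=0$ yields the two endpoint bounds $\frac{2}{r!}|x|^r$ and $\frac{|x|^{r+1}}{(r+1)!}$ by induction (this is the classical estimate found, e.g., in Feller or Billingsley), and the case split at $|x|=1$ correctly upgrades the first bound to $\frac{2}{r!}|x|^{r+\theta}$ for $\theta\in[0,1)$, since on $|x|\le 1$ the sharper $(r+1)!$ bound dominates and $\frac{1}{(r+1)!}\le\frac{2}{r!}$. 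One small correction to your closing remark: the ``naive'' geometric-mean interpolation does not in fact fail. Since $\min\{a,b\}\le a^{1-\theta}b^{\theta}$ for $a,b>0$, one gets
\begin{eqnarray*}
|f_r(x)|\le \left(\frac{2}{r!}|x|^r\right)^{1-\theta}\left(\frac{|x|^{r+1}}{(r+1)!}\right)^{\theta}=\frac{2^{1-\theta}}{(r!)^{1-\theta}((r+1)!)^{\theta}}\,|x|^{r+\theta}\le \frac{2}{r!}|x|^{r+\theta},
\end{eqnarray*}
because $(r!)^{1-\theta}((r+1)!)^{\theta}\ge r!$ and $2^{1-\theta}\le 2$. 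So the interpolation step could be done in one line without the case split; your version is equally valid, just not forced on you. This does not affect the correctness of anything you proved.
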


See  \citet[3.2]{tik1981} for example.

\begin{lemma}\label{LM.A5}
\item 

Under Assumption \ref{AS1}, the following results hold:
\begin{enumerate}[leftmargin=24pt, parsep=2pt, topsep=2pt] 
\item For $t> 1$, $\sum_{s=1}^t \mathbf{x}_{s}$ admits two representations:

\begin{enumerate}[leftmargin=24pt, parsep=2pt, topsep=2pt] 
\item $\sum_{s=1}^t \mathbf{x}_{s} = \mathbf{B} \sum_{s=1}^t\pmb{\varepsilon}_{s}-\widetilde{\mathbf{B}}(L)\pmb{\varepsilon}_{t}+\widetilde{\mathbf{B}}(L)\pmb{\varepsilon}_{0}$,

\item  $\sum_{s=1}^t \mathbf{x}_{s} = \sum_{\ell=1}^t (\mathbf{B}-\widetilde{\mathbf{B}}_{t-\ell})\pmb{\varepsilon}_{\ell}-\sum_{\ell=-\infty}^{0}(\widetilde{\mathbf{B}}_{t-\ell} -\widetilde{\mathbf{B}}_{-\ell})\pmb{\varepsilon}_{\ell}\eqqcolon \sum_{\ell=-\infty}^t \pmb{\mathcal{B}}_{t\ell} \pmb{\varepsilon}_\ell$,
\end{enumerate}
where $\sum_{\ell=0}^{\infty}\sqrt{\frac{N}{L_N}}\|\widetilde{\mathbf{B}}_\ell\|_2<\infty$, and $\pmb{\mathcal{B}}_{t\ell} = - \widetilde{\mathbf{B}}_{t-\ell} + \widetilde{\mathbf{B}}_{-\ell}$ for $-\infty\leq \ell\leq 0$; $\pmb{\mathcal{B}}_{t\ell} = \mathbf{B}-\widetilde{\mathbf{B}}_{t-\ell}$ for $1\leq \ell \leq t$;

\item $\left|\frac{1}{\sqrt{L_N}} \sum_{\ell=-\infty}^0 \mathbf{1}_N^\top\pmb{\mathcal{B}}_{T\ell} \pmb{\varepsilon}_{\ell}\right|=O_P(1)$;

\item $\left|\frac{1}{\sqrt{L_N}}\sum_{t=1}^T \mathbf{1}_N^\top\widetilde{\mathbf{B}}_{T-\ell} \pmb{\varepsilon}_{\ell} \right|=O_P(1)$.
\end{enumerate}
\end{lemma}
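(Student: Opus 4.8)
The plan is to dispatch the three assertions in turn: the two representations in Part~1 follow from pure operator algebra, after which Parts~2 and~3 reduce to elementary variance bounds that exploit the i.i.d.\ structure of $\pmb{\varepsilon}_t$. Throughout I use $\mu\equiv 0$ and the fact that $E[\pmb{\varepsilon}_\ell\pmb{\varepsilon}_\ell^\top]=\mathbf{I}_N$ under Assumption~\ref{AS1}.1.

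For representation 1(a), I would start from $\mathbf{x}_t=\mathbf{B}(L)\pmb{\varepsilon}_t$ and substitute the high-dimensional BN decomposition \eqref{def.BN1}, $\mathbf{B}(L)=\mathbf{B}-(1-L)\widetilde{\mathbf{B}}(L)$, which gives $\mathbf{x}_t=\mathbf{B}\pmb{\varepsilon}_t-\widetilde{\mathbf{B}}(L)\pmb{\varepsilon}_t+\widetilde{\mathbf{B}}(L)\pmb{\varepsilon}_{t-1}$. Summing over $s=1,\ldots,t$, the leading term yields $\mathbf{B}\sum_{s=1}^t\pmb{\varepsilon}_s$, while the remaining two terms telescope to $-\widetilde{\mathbf{B}}(L)\pmb{\varepsilon}_t+\widetilde{\mathbf{B}}(L)\pmb{\varepsilon}_0$, which is exactly 1(a). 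For 1(b), I would expand each operator term as an explicit moving average, $\widetilde{\mathbf{B}}(L)\pmb{\varepsilon}_t=\sum_{k\ge0}\widetilde{\mathbf{B}}_k\pmb{\varepsilon}_{t-k}$, reindex via $\ell=t-k$ and $\ell=-k$ respectively, and collect the coefficient of each $\pmb{\varepsilon}_\ell$: for $1\le\ell\le t$ this produces $\mathbf{B}-\widetilde{\mathbf{B}}_{t-\ell}$ and for $\ell\le 0$ it produces $-\widetilde{\mathbf{B}}_{t-\ell}+\widetilde{\mathbf{B}}_{-\ell}$, which match $\pmb{\mathcal{B}}_{t\ell}$. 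The rearrangement of the infinite sums is legitimized by the side claim $\sum_{\ell\ge0}\sqrt{N/L_N}\,\|\widetilde{\mathbf{B}}_\ell\|_2<\infty$, which I would establish from the triangle inequality $\|\widetilde{\mathbf{B}}_\ell\|_2\le\sum_{k>\ell}\|\mathbf{B}_k\|_2$ followed by interchanging the order of summation, $\sum_{\ell\ge0}\sum_{k>\ell}\sqrt{N/L_N}\,\|\mathbf{B}_k\|_2=\sum_{k\ge1}k\,C_{Nk}$, which is bounded uniformly in $N$ by Assumption~\ref{AS1}.2.a.

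For Parts~2 and~3, both quantities are centered sums of independent scalars across the index $\ell$, so it suffices to bound their variances. Using $\mathbf{1}_N^\top\mathbf{A}\mathbf{A}^\top\mathbf{1}_N=\|\mathbf{A}^\top\mathbf{1}_N\|_2^2\le N\|\mathbf{A}\|_2^2$, Part~3 reduces to $\frac{N}{L_N}\sum_{m=0}^{T-1}\|\widetilde{\mathbf{B}}_m\|_2^2\le\big(\sum_{m\ge0}\sqrt{N/L_N}\,\|\widetilde{\mathbf{B}}_m\|_2\big)^2<\infty$, where the inequality is the elementary bound $\sum a_m^2\le(\sum a_m)^2$ for nonnegative $a_m$ together with the side claim above; hence the quantity is $O_P(1)$ by Chebyshev. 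Part~2 is the one step needing care: the pre-sample coefficient satisfies $\pmb{\mathcal{B}}_{T,-j}=\widetilde{\mathbf{B}}_j-\widetilde{\mathbf{B}}_{T+j}=\sum_{k=j+1}^{T+j}\mathbf{B}_k$, so $\|\pmb{\mathcal{B}}_{T,-j}\|_2\le s_j\coloneqq\sum_{k>j}\|\mathbf{B}_k\|_2$, and the variance is then controlled by $\frac{N}{L_N}\sum_{j\ge0}s_j^2\le\big(\sum_{j\ge0}\sqrt{N/L_N}\,s_j\big)^2=\big(\sum_{k\ge1}k\,C_{Nk}\big)^2<\infty$, again uniformly in both $N$ and $T$.

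The main obstacle is precisely this pre-sample term in Part~2: one must recognize that the difference of two infinite tails $\widetilde{\mathbf{B}}_j-\widetilde{\mathbf{B}}_{T+j}$ collapses to the finite block $\sum_{k=j+1}^{T+j}\mathbf{B}_k$, and then bound it uniformly in $T$ by $s_j$ so that the single summability condition $\sum_{k}k\,C_{Nk}<\infty$ suffices to close both the variance bound and the convergence of the infinite MA representation. Everything else is routine telescoping, reindexing, and a second-moment calculation.
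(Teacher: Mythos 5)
Your proposal is correct and follows essentially the same route as the paper's proof: the BN decomposition and telescoping for Part 1(a), reindexing the MA expansion for 1(b) with the summability $\sum_{\ell\ge 1}\ell\, C_{N\ell}<\infty$ obtained by interchanging the order of summation, and Chebyshev-type second-moment bounds via $\mathbf{1}_N^\top\mathbf{A}\mathbf{A}^\top\mathbf{1}_N\le N\|\mathbf{A}\|_2^2$ for Parts 2 and 3. The only immaterial difference is that in Part 2 you collapse $\widetilde{\mathbf{B}}_j-\widetilde{\mathbf{B}}_{T+j}$ to the finite block $\sum_{k=j+1}^{T+j}\mathbf{B}_k$ before bounding, whereas the paper simply applies $\|\mathbf{a}-\mathbf{b}\|^2\le 2\|\mathbf{a}\|^2+2\|\mathbf{b}\|^2$ and sums the two tails, reaching the same $O_P(1)$ conclusion.
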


Write

\begin{eqnarray}\label{def.xx}
\mathbf{1}_N^\top \mathbf{x}_{t}\mathbf{x}_{t}^\top \mathbf{1}_N&=&\sum_{\ell=0}^{\infty} \mathbf{1}_N^\top \mathbf{B}_\ell \pmb{\varepsilon}_{t-\ell}\pmb{\varepsilon}_{t-\ell}^\top \mathbf{B}_\ell^\top \mathbf{1}_N  +2\sum_{v=1}^{\infty}\sum_{\ell=0}^{\infty}\mathbf{1}_N^\top \mathbf{B}_{\ell+v} \pmb{\varepsilon}_{t-\ell-v} \pmb{\varepsilon}_{t-\ell}^\top \mathbf{B}_\ell^\top  \mathbf{1}_N\notag \\
&= & \sum_{\ell=0}^{\infty} [(\mathbf{1}_N^\top \mathbf{B}_\ell)\otimes (\mathbf{1}_N^\top \mathbf{B}_\ell)]\text{vec}(\pmb{\varepsilon}_{t-\ell}  \pmb{\varepsilon}_{t-\ell}^\top)\notag \\
&&+2\sum_{v=1}^{\infty}\sum_{\ell=0}^{\infty}[(\mathbf{1}_N^\top \mathbf{B}_\ell)\otimes (\mathbf{1}_N^\top \mathbf{B}_{\ell+v})] \text{vec}(\pmb{\varepsilon}_{t-\ell-v} \pmb{\varepsilon}_{t-\ell}^\top )\notag \\
&\eqqcolon & \mathbf{B}_0^*(L)\text{vec}(\pmb{\varepsilon}_{t}  \pmb{\varepsilon}_{t}^\top) +\sum_{v=1}^\infty \mathbf{B}_v^*(L)\text{vec}(\pmb{\varepsilon}_{t-v} \pmb{\varepsilon}_{t}^\top ),
\end{eqnarray}
where the definition of $\mathbf{B}_v^*(L)$ for $v\ge 0$ should be obvious. In connection with the BN decomposition of \eqref{def.BN1}, we have

\begin{eqnarray}\label{def.BN2}
\mathbf{B}_v^*(L) = \mathbf{B}_v^* -(1-L)\widetilde{\mathbf{B}}_v^*(L) 
\end{eqnarray}
where 

\begin{eqnarray*}
&&\mathbf{B}_v^*\coloneqq \sum_{\ell=0}^{\infty} [(\mathbf{1}_N^\top \mathbf{B}_\ell)\otimes (\mathbf{1}_N^\top \mathbf{B}_{\ell+v})] ,\notag \\
&&\widetilde{\mathbf{B}}_v^*(L)\coloneqq\sum_{\ell=0}^{\infty}\widetilde{\mathbf{B}}_{v \ell}^* L^\ell\quad\text{with}\quad \widetilde{\mathbf{B}}_{v \ell}^*\coloneqq \sum_{k=\ell+1}^{\infty}(\mathbf{1}_N^\top \mathbf{B}_k)\otimes (\mathbf{1}_N^\top \mathbf{B}_{k+v}).
\end{eqnarray*}

\begin{lemma}\label{LM.A6}  
\item 

Under Assumptions \ref{AS1} and \ref{AS2}, the following results hold:
    
\begin{enumerate}[leftmargin=24pt, parsep=2pt, topsep=2pt] 
 
\item  $\| E[(\pmb{\varepsilon}_{t}\otimes \pmb{\varepsilon}_{t})(\pmb{\varepsilon}_{t}^\top \otimes \pmb{\varepsilon}_{t}^\top)]\|_2 =O(N)$;

\item  $\left|\frac{1}{L_NT}\sum_{t=1}^T \mathbf{B}_0^*(L)\text{\normalfont vec}(\pmb{\varepsilon}_{t}  \pmb{\varepsilon}_{t}^\top) -   \frac{1}{L_N} \sum_{\ell=0}^{\infty} \mathbf{1}_N^\top \mathbf{B}_\ell \mathbf{B}_\ell^\top \mathbf{1}_N \right|=O_P\left(\frac{1}{\sqrt{T}}\right)$;

\item $\left|\frac{1}{L_NT}\sum_{t=1}^T\sum_{v=1}^\infty \mathbf{B}_v^*(L)\text{\normalfont vec}(\pmb{\varepsilon}_{t-v} \pmb{\varepsilon}_{t}^\top ) \right|=O_P\left(\frac{1}{\sqrt{T}}\right)$.
   
\end{enumerate}
\end{lemma}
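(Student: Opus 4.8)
The plan is to treat the three parts in turn, with Part~1 supplying the moment structure that drives Parts~2 and~3.

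\textbf{Part 1.} The matrix $\mathbf{M}\coloneqq E[(\pmb{\varepsilon}_t\otimes\pmb{\varepsilon}_t)(\pmb{\varepsilon}_t^\top\otimes\pmb{\varepsilon}_t^\top)]$ has its $((i,j),(k,l))$ entry equal to $E[\varepsilon_{it}\varepsilon_{jt}\varepsilon_{kt}\varepsilon_{lt}]$. Since $\{\varepsilon_{it}\}$ are i.i.d. with mean $0$, variance $1$, and fourth cumulant $\kappa_4=E[\varepsilon_{11}^4]-3$, this entry equals $\delta_{ij}\delta_{kl}+\delta_{ik}\delta_{jl}+\delta_{il}\delta_{jk}+\kappa_4\delta_{ijkl}$, where $\delta_{ijkl}=1$ only when all four indices coincide. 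Reading this off as a matrix identity gives $\mathbf{M}=\mathbf{I}_{N^2}+\mathbf{K}_{N,N}+\mathrm{vec}(\mathbf{I}_N)\mathrm{vec}(\mathbf{I}_N)^\top+\kappa_4\mathbf{D}$, where $\mathbf{K}_{N,N}$ is the commutation matrix and $\mathbf{D}=\sum_{i}(\mathbf{e}_i\otimes\mathbf{e}_i)(\mathbf{e}_i\otimes\mathbf{e}_i)^\top$. I then bound spectral norms termwise: $\|\mathbf{I}_{N^2}\|_2=\|\mathbf{K}_{N,N}\|_2=\|\mathbf{D}\|_2=1$ (the first two orthogonal, the last a $0/1$ diagonal), while $\|\mathrm{vec}(\mathbf{I}_N)\mathrm{vec}(\mathbf{I}_N)^\top\|_2=\|\mathrm{vec}(\mathbf{I}_N)\|_2^2=N$. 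The rank-one term dominates, so $\|\mathbf{M}\|_2=O(N)$.

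\textbf{Part 2.} First I would apply the BN decomposition \eqref{def.BN2}, $\mathbf{B}_0^*(L)=\mathbf{B}_0^*-(1-L)\widetilde{\mathbf{B}}_0^*(L)$, so that $\sum_{t=1}^T\mathbf{B}_0^*(L)\mathrm{vec}(\pmb{\varepsilon}_t\pmb{\varepsilon}_t^\top)$ splits into a main term $\mathbf{B}_0^*\sum_t\mathrm{vec}(\pmb{\varepsilon}_t\pmb{\varepsilon}_t^\top)$ and a telescoping boundary term $\widetilde{\mathbf{B}}_0^*(L)[\mathrm{vec}(\pmb{\varepsilon}_T\pmb{\varepsilon}_T^\top)-\mathrm{vec}(\pmb{\varepsilon}_0\pmb{\varepsilon}_0^\top)]$. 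Because $E[\mathrm{vec}(\pmb{\varepsilon}_t\pmb{\varepsilon}_t^\top)]=\mathrm{vec}(\mathbf{I}_N)$ and $\mathbf{B}_0^*\mathrm{vec}(\mathbf{I}_N)=\sum_\ell\mathbf{1}_N^\top\mathbf{B}_\ell\mathbf{B}_\ell^\top\mathbf{1}_N$ (via $(\mathbf{A}\otimes\mathbf{B})\mathrm{vec}(\mathbf{X})=\mathrm{vec}(\mathbf{B}\mathbf{X}\mathbf{A}^\top)$), the main term has exactly the asserted limit as its mean. Setting $\mathbf{g}_t\coloneqq\mathrm{vec}(\pmb{\varepsilon}_t\pmb{\varepsilon}_t^\top)-\mathrm{vec}(\mathbf{I}_N)$, which is i.i.d. and mean zero over $t$, the centered main term is $\frac{1}{L_N}\mathbf{B}_0^*\big(\frac1T\sum_t\mathbf{g}_t\big)$ with variance $\frac{1}{L_N^2T}\mathbf{B}_0^*\,\mathrm{Var}(\mathbf{g}_1)\,\mathbf{B}_0^{*\top}$. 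The crucial observation is that $\mathrm{Var}(\mathbf{g}_1)=\mathbf{M}-\mathrm{vec}(\mathbf{I}_N)\mathrm{vec}(\mathbf{I}_N)^\top=\mathbf{I}_{N^2}+\mathbf{K}_{N,N}+\kappa_4\mathbf{D}$, so the rank-one piece responsible for the $O(N)$ growth in Part~1 cancels and $\|\mathrm{Var}(\mathbf{g}_1)\|_2=O(1)$. Combined with $\|\mathbf{B}_0^*\|_2\le N\sum_\ell\|\mathbf{B}_\ell\|_2^2=L_N\sum_\ell C_{N\ell}^2=O(L_N)$ under Assumption \ref{AS1}.2, this variance is $O(1/T)$, and Chebyshev yields the $O_P(1/\sqrt{T})$ rate. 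The boundary term is controlled by the same moment bounds applied to $\widetilde{\mathbf{B}}_0^*$, whose lag-weighted coefficients are summable since $\sum_\ell\ell\,C_{N\ell}<\infty$; it is $O_P(1/T)=o_P(1/\sqrt{T})$.

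\textbf{Part 3.} Here each summand $\mathbf{B}_v^*(L)\mathrm{vec}(\pmb{\varepsilon}_{t-v}\pmb{\varepsilon}_t^\top)$ with $v\ge1$ pairs innovations at distinct times, hence has mean zero by independence, so it suffices to bound the second moment of $S\coloneqq\frac{1}{L_NT}\sum_t\sum_{v\ge1}\mathbf{B}_v^*(L)\mathrm{vec}(\pmb{\varepsilon}_{t-v}\pmb{\varepsilon}_t^\top)$ and show $E[S^2]=O(1/T)$. It is cleanest to write $\mathbf{1}_N^\top\mathbf{x}_t=\sum_\ell\mathbf{c}_\ell^\top\pmb{\varepsilon}_{t-\ell}$ with $\mathbf{c}_\ell\coloneqq\mathbf{B}_\ell^\top\mathbf{1}_N$, so that $S=\frac{1}{L_NT}\sum_t\sum_{\ell\neq\ell'}\mathbf{c}_\ell^\top\pmb{\varepsilon}_{t-\ell}\pmb{\varepsilon}_{t-\ell'}^\top\mathbf{c}_{\ell'}$ is an off-diagonal-in-lag quadratic form in the innovations. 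Expanding $E[S^2]$, the expectation of a product of two such terms is nonzero only when the four innovation-time indices coincide in pairs; since the two indices inside each factor are already distinct ($\ell\neq\ell'$), this forces the time index of the second factor to be pinned to that of the first, collapsing two of the four time summations. The surviving sums are then bounded by $\big(\sum_\ell\|\mathbf{c}_\ell\|_2\big)$-type quantities divided by $L_N$, which are $O(1)$ under the summability of $\{C_{N\ell}\}$ in Assumptions \ref{AS1}.2 and \ref{AS2}.1, leaving the explicit $T^{-1}$ from the one remaining time average.

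\textbf{Main obstacle.} The delicate point is the interplay of dimension and normalization rather than any single estimate: the objects live in $\mathbb{R}^{N^2}$ and Part~1 shows their second moments grow like $N$, which would by itself destroy the $1/\sqrt{T}$ rate. The rate is rescued in Part~2 precisely by the cancellation of $\mathrm{vec}(\mathbf{I}_N)\mathrm{vec}(\mathbf{I}_N)^\top$ after centering, and in Part~3 by the mean-zero structure together with the pairing of innovation times; making the fourth-moment bookkeeping in Part~3 close against the weighted summability of $\{C_{N\ell}\}$ is the most laborious step.
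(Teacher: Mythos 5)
Your proposal is correct, and it reaches all three conclusions by a route that overlaps with the paper's only partially, so a comparison is worthwhile. For Part~1 the paper simply invokes the norm interpolation $\|\mathbf{A}\|_2\le\sqrt{\|\mathbf{A}\|_1\|\mathbf{A}\|_\infty}$ on the fourth-moment array, whereas you derive the exact Wick-type identity $\mathbf{M}=\mathbf{I}_{N^2}+\mathbf{K}_{N,N}+\mathrm{vec}(\mathbf{I}_N)\mathrm{vec}(\mathbf{I}_N)^\top+\kappa_4\mathbf{D}$ (valid here since Assumption \ref{AS1}.1 guarantees $\kappa_4$ finite). This buys you strictly more than the $O(N)$ bound: it isolates the rank-one term $\mathrm{vec}(\mathbf{I}_N)\mathrm{vec}(\mathbf{I}_N)^\top$ as the sole source of the $N$-growth, and your observation that it cancels upon centering, so that $\|\mathrm{Var}(\mathbf{g}_1)\|_2=O(1)$, is exactly what makes the $O_P(T^{-1/2})$ rate in Part~2 go through with $\|\mathbf{B}_0^*\|_2=O(L_N)$ and no side condition linking $N$ and $T$. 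The paper obtains the same end implicitly by referring to the entrywise fourth-moment facts assembled in the proof of Proposition \ref{P2}; notably, the paper's displayed second-moment bound for the BN boundary term uses the uncentered $\lambda_{\max}(\mathbf{M})=O(N)$ factor, which taken literally would leave a stray $\sqrt{N}$ in the $O_P(1/T)$ claim, so your centered treatment of the boundary term is actually the cleaner way to justify that step. For Part~3 the routes genuinely diverge: the paper applies the BN decomposition \eqref{def.BN2} to each $\mathbf{B}_v^*(L)$ and bounds a main term (via the $t=s$ collapse in the second moment) plus two boundary terms, while you bypass BN entirely by identifying the object as the off-diagonal-in-lag quadratic form $\frac{1}{L_NT}\sum_t\sum_{\ell\neq\ell'}\mathbf{c}_\ell^\top\pmb{\varepsilon}_{t-\ell}\pmb{\varepsilon}_{t-\ell'}^\top\mathbf{c}_{\ell'}$ with $\mathbf{c}_\ell=\mathbf{B}_\ell^\top\mathbf{1}_N$ and computing $E[S^2]$ directly: the within-factor times being distinct rules out fourth-cumulant contributions, the pairing constraint pins $s$ and one lag index, and $\|\mathbf{c}_\ell\|_2\le\sqrt{L_N}\,C_{N\ell}$ with summability of $\{C_{N\ell}\}$ gives $E[S^2]=O(L_N^2)\cdot L_N^{-2}T^{-1}=O(T^{-1})$, with no boundary terms to track. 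Both arguments are sound and deliver the same rates; your version trades the paper's modular reuse of the BN machinery for explicit but self-contained moment bookkeeping.
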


\section{Proofs of the Main Results}\label{SecA3}

\begin{proof}[Proof of Theorem \ref{THM.1}]
\item 

First, we denote the following notation to facilitate the development. Let

\begin{eqnarray}\label{def_bdd}
\mathbf{B} =(\mathbf{b}_1^\dag,\ldots, \mathbf{b}_N^\dag)\quad \text{and}\quad\widetilde{\mathbf{B}}_{\ell}=(\widetilde{\mathbf{b}}_{\ell, 1}^\dag,\ldots, \widetilde{\mathbf{b}}_{\ell, N}^\dag)
\end{eqnarray}
where $\mathbf{B}$ and $\widetilde{\mathbf{B}}_{\ell}$ have been defined in \eqref{def.BN1} already.

By Lemma \ref{LM.A5}, we write

\[
\sum_{s=1}^t \mathbf{x}_{s} =  \sum_{\ell=-\infty}^t \pmb{\mathcal{B}}_{t\ell}\pmb{\varepsilon}_{\ell},
\]
where the definition of $\pmb{\mathcal{B}}_{t\ell}$ is the same as that in Lemma \ref{LM.A5}. Thus, $\widetilde{S}_{NT}$ can also be written as

\[
\widetilde{S}_{NT}= \frac{1}{\sigma_x\sqrt{L_N T}} \sum_{\ell=-\infty}^T\mathbf{1}_N^\top\pmb{\mathcal{B}}_{T\ell} \pmb{\varepsilon}_{\ell}\eqqcolon \sum_{\ell=-\infty}^T\mathbf{1}_N^\top\overline{\pmb{\mathcal{B}}}_{\ell} \pmb{\varepsilon}_{\ell},
\]
where  $\frac{1}{\sigma_x\sqrt{L_NT}} \pmb{\mathcal{B}}_{T\ell} \eqqcolon \overline{\pmb{\mathcal{B}}}_{\ell}$, and we have suppressed $N$ and $T$ in $\overline{\pmb{\mathcal{B}}}_{\ell}$ for notational simplicity.

\medskip

To proceed, denote by  $\psi_{NT}(u)$ the characteristic function of $\widetilde{S}_{NT}$. Thus,

\begin{eqnarray}\label{def_psiNT}
\psi_{NT}(u) &=& E\left[\exp\left(\mathsf{i}u  \sum_{\ell=-\infty}^T\mathbf{1}_N^\top\overline{\pmb{\mathcal{B}}}_{\ell} \pmb{\varepsilon}_{\ell}\right)\right]\notag \\
&=&\prod_{\ell=-\infty}^T \prod_{j=1}^NE [\exp (\mathsf{i}u  \mathbf{1}_N^\top\overline{\mathbf{b}}_{\ell ,j}^\dag \varepsilon_{j\ell} ) ]\notag\\
&=& \prod_{\ell=-\infty}^T \prod_{j=1}^N\psi(\widetilde{b}_{\ell j} u) ,
\end{eqnarray}
where $\overline{\mathbf{b}}_{\ell ,j}^\dag$ stands for the $j^{th}$ column of $\overline{\pmb{\mathcal{B}}}_{\ell}$, $\widetilde{b}_{\ell j}\coloneqq \mathbf{1}_N^\top \overline{\mathbf{b}}_{\ell ,j}^\dag$, and the second equality follows from $\{\varepsilon_{it}\}$  being i.i.d. over both dimensions.  

Using \eqref{def_psiNT}, we are able to calculate the $r^{th}$ cumulant $\beta_r$ of $\widetilde{S}_{NT}$ for $r\ge 1$. Obviously, we have $\beta_1 =0$ and $\beta_2 =1$. For $r\ge 3$, write

\begin{eqnarray}\label{def_betar}
\beta_r &=& (-\mathsf{i})^r \frac{\mathrm{d}^r}{\mathrm{d}u^r}\log  \psi_{NT}(u) |_{u=0}\notag \\
&=&\sum_{\ell=-\infty}^T\sum_{j=1}^N (-\mathsf{i})^r\frac{\mathrm{d}^r}{\mathrm{d}u^r}\log\psi(\widetilde{b}_{\ell j}u)|_{u=0} \notag \\
&=&\sum_{\ell=-\infty}^T\sum_{j=1}^N\widetilde{b}_{\ell j}^r (-\mathsf{i})^r\frac{\mathrm{d}^r}{\mathrm{d}u^r}\log\psi(u)|_{u=0}\notag \\
&=&\kappa_r\left(\sum_{\ell=1}^T\sum_{j=1}^N\widetilde{b}_{\ell j}^r  +\sum_{\ell=-\infty}^0\sum_{j=1}^N\widetilde{b}_{\ell j}^r \right),
\end{eqnarray}
where the second equality follows from \eqref{def_psiNT}, and the third equality follows from \eqref{prop.cum}. For the second term on the right hand side of \eqref{def_betar}, we note that 

\begin{eqnarray}\label{def_betar2}
\left|\sum_{\ell=-\infty}^0\sum_{j=1}^N\widetilde{b}_{\ell j}^r\right|&\le &\frac{1}{(\sigma_x^2 L_NT)^{r/2}}\sum_{\ell=-\infty}^0\sum_{j=1}^N|\mathbf{1}_N^\top \mathbf{b}_{T\ell, j}^\dag|^r\notag \\
&\le &\frac{1}{(\sigma_x^2 L_N T)^{r/2}}\left(\sum_{\ell=-\infty}^0\sum_{j=1}^N|\mathbf{1}_N^\top \mathbf{b}_{T\ell, j}^\dag|^2\right)^{r/2}\notag\\
&= &\frac{1}{(\sigma_x^2 T)^{r/2}}\left(\frac{1}{L_N}\sum_{\ell=-\infty}^0  \mathbf{1}_N^\top \pmb{\mathcal{B}}_{T\ell}\pmb{\mathcal{B}}_{T\ell}^\top \mathbf{1}_N\right)^{r/2}\notag \\
&=&O\left(\frac{1}{T^{r/2}}\right),
\end{eqnarray}
where $\mathbf{b}_{T\ell, j}^\dag$ stands for the $j^{th}$ column of $\pmb{\mathcal{B}}_{T\ell}$, the second inequality follows from the fact that for a vector $\mathbf{x}$, $|\mathbf{x}|_{p_1}\le |\mathbf{x} |_{p_2}$ for any $p_1> p_2 \ge 1$, and the last equality follows from the proof of Lemma \ref{LM.A5}.2. Thus, \eqref{def_betar} and \eqref{def_betar2} together infer that

\begin{eqnarray}\label{def_betar3}
\beta_r =\kappa_r \sum_{\ell=1}^T\sum_{j=1}^N\widetilde{b}_{\ell j}^r +O\left(\frac{1}{T^{r/2}}\right).
\end{eqnarray}

Recall the definitions of \eqref{def_bdd}, and write further that

\begin{eqnarray}\label{def_betar4}
&&\frac{1}{(\sigma_x^2 L_NT)^{r/2}}\sum_{\ell=1}^T\sum_{j=1}^N|\mathbf{1}_N^\top \mathbf{b}_{T\ell, j}^\dag|^r \notag \\
&\le& \frac{2^{r-1}}{(\sigma_x^2 L_NT)^{r/2}}\sum_{\ell=1}^T\sum_{j=1}^N[|\mathbf{1}_N^\top \mathbf{b}_j^\dag|^r+|\mathbf{1}_N^\top \widetilde{\mathbf{b}}_{T-\ell, j}^\dag|^r]\notag\\
&=& \frac{2^{r-1}}{(\sigma_x^2 L_N)^{r/2}T^{r/2-1}} \sum_{j=1}^N |\mathbf{1}_N^\top \mathbf{b}_j^\dag|^r+O\left(\frac{1}{T^{r/2}}\right),
\end{eqnarray}
where the first inequality follows from $(a+b)^r\le 2^{r-1}(a^r+b^r)$ for $a,b\ge 0$ and $r>1$, and the last step follows from a development similar to \eqref{def_betar2}. Thus, by \eqref{def_betar3} and \eqref{def_betar4}, we can obtain that 

\begin{eqnarray}\label{def_betar5}
|\beta_r|& \le &\frac{2^{r-1}}{(\sigma_x^2 L_N)^{r/2}T^{r/2-1}} \sum_{j=1}^N(\mathbf{1}_N^\top \mathbf{b}_j^\dag )^r +O\left(\frac{1}{T^{r/2}}\right)\notag\\
&\le  &\frac{2^{r-1}}{(\sigma_x^2 L_N)^{r/2}T^{r/2-1}}  \cdot N\|\mathbf{B} \|_1^r +O\left(\frac{1}{T^{r/2}}\right)\notag\\
&=& O\left(\Delta_{NT}(r) \vee \frac{1}{T^{r/2}} \right),
\end{eqnarray}
where $\Delta_{NT}(r)\coloneqq NT\left(\frac{\|\mathbf{B} \|_1}{\sqrt{L_NT}}\right)^r$. 

Note that the right hand side of \eqref{def_betar5} reduces to $O\left(\frac{1}{ (NT)^{r/2-1}} \vee \frac{1}{T^{r/2}} \right)$ for both Examples \ref{EX1} and \ref{EX2} by simple calculation. In addition, using \eqref{def_betar5}, we write for $r\ge 3$

\begin{eqnarray}\label{def_betar6}
|\beta_r|  &=&  O\left(\Delta_{NT}(r) \vee \frac{1}{T^{r/2}} \right) =O(1)  \frac{1}{\left(\frac{\sqrt{L_NT}}{(NT)^{1/r} \|\mathbf{B}\|_1} \wedge T^{1/2}\right)^r}\notag \\
&\le &O (1)  \frac{1}{\left(\frac{\sqrt{L_NT}}{(NT)^{1/3} \|\mathbf{B}\|_1} \wedge T^{1/2}\right)^r} = O\left( \frac{1}{\widetilde{\Delta}_{NT}^r}\right),
\end{eqnarray}
where $\widetilde{\Delta}_{NT} \coloneqq \frac{\sqrt{L_N}T^{1/6}}{N^{1/3} \|\mathbf{B}\|_1} \wedge T^{1/2}$.

By applying \eqref{def.chiu} to $\psi_{NT}(u)$ and the characteristic function of $N(0,1)$, we have

\begin{eqnarray}\label{def_psiNT2}
\psi_{NT}(u) &=&\exp\left\{ \sum_{r=3}^\infty \beta_r\cdot\frac{(\mathsf{i} u)^r}{r!}\right\} \cdot\widetilde{\phi}(u)\notag \\
&=&\left\{1 +\sum_{n=1}^\infty\frac{1}{n!}\left(\sum_{r=3}^\infty \beta_r\cdot\frac{(\mathsf{i} u)^r}{r!}\right)^n \right\}\cdot\widetilde{\phi}(u),
\end{eqnarray}
where $\beta_r$ is bounded by \eqref{def_betar5}, and the second equality follows from \eqref{def.expu}. Let $f_{NT}(w)$ be the density function of $\widetilde{S}_{NT}$, and Fourier inversion of \eqref{def_psiNT2} leads to the following expansion:

\begin{eqnarray*}
f_{NT}(w) &=&\phi(w) +\frac{1}{2\pi}\int_{\mathbb{R}} \exp(-\mathsf{i}uw)\cdot \beta_3\cdot \frac{(\mathsf{i}u)^3}{3!}\cdot\widetilde{\phi}(u)\mathrm{d}u \notag \\
&&+\frac{1}{2\pi}\int_{\mathbb{R}} \exp(-\mathsf{i}uw)\cdot\widetilde{\psi}_{NT}(u)\cdot\widetilde{\phi}(u)\mathrm{d}u\notag \\
&=&\phi(w) + \frac{\beta_3}{6} H_3(w)\phi(w)\notag \\
&&+\frac{1}{2\pi}\int_{\mathbb{R}} \exp(-\mathsf{i}uw)\cdot\widetilde{\psi}_{NT}(u)\cdot\widetilde{\phi}(u)\mathrm{d}u,
\end{eqnarray*}
where $\widetilde{\psi}_{NT}(u)\coloneqq \sum_{r=4}^\infty \beta_r\cdot\frac{(\mathsf{i} u)^r}{r!} + \sum_{n=2}^\infty\frac{1}{n!}\left(\sum_{r=3}^\infty \beta_r\cdot\frac{(\mathsf{i} u)^r}{r!}\right)^n$, and the second equality follows from Lemma \ref{LM.A4}. 

Next, we bound $\frac{1}{2\pi}\int_{\mathbb{R}} \exp(-\mathsf{i}uw)\cdot \widetilde{\psi}_{NT}(u)\cdot\widetilde{\phi}(u)\mathrm{d}u$. First, note  

\begin{eqnarray}\label{def_psiNT3}
\left|\sum_{r=4}^\infty \beta_r\cdot\frac{(\mathsf{i} u)^r}{r!} \right|\widetilde{\phi}(u) &\le & |\beta_4| \widetilde{\phi}(u) \sum_{r=4}^\infty \frac{|u|^r}{r!} \notag \\
&\le & |\beta_4| \widetilde{\phi}(u)\exp(|u|)\notag \\
&=&|\beta_4| \exp(-u^2/2+|u|),
\end{eqnarray}
where the second inequality follows from \eqref{def.expu}. Second, in connection with \eqref{def_betar6}, assuming that $|u|\le \widetilde{\Delta}_{NT}$ and using Taylor theorem twice as in \eqref{def.expu2} we write

\begin{eqnarray}\label{def_psiNT4}
&&\left|\sum_{n=2}^\infty\frac{1}{n!}\left(\sum_{r=3}^\infty \beta_r\cdot\frac{(\mathsf{i} u)^r}{r!}\right)^n \right|\widetilde{\phi}(u)\notag \\
&\le &O(1)\sum_{n=2}^\infty\frac{1}{n!} \left| \sum_{r=3}^\infty \cdot\frac{(\mathsf{i} u/\widetilde{\Delta}_{NT} )^r}{r! }\right|^n \widetilde{\phi}(u)\notag\\
&\le & O(1)\frac{1}{\widetilde{\Delta}_{NT}^6}\widetilde{\phi}(u) \widetilde{u}^6=O(1)\frac{1}{\widetilde{\Delta}_{NT}^6},
\end{eqnarray}
where $|\widetilde{u}|$ is in between 0 and $\widetilde{\Delta}_{NT}$, and the last equality follows from $\widetilde{\phi}(u) \widetilde{u}^6$ being uniformly bounded. Thus, by \eqref{def_psiNT3} and \eqref{def_psiNT4}, we can write

\begin{eqnarray}\label{def_psiNT4_2}
&&\sup_{|u|\le \widetilde{\Delta}_{NT}}\left|\frac{1}{2\pi}\int_{\mathbb{R}} \exp(-\mathsf{i}uw)\cdot \widetilde{\psi}_{NT}(u)\cdot\widetilde{\phi}(u)\mathrm{d}u\right|\notag \\
&=&O\left( \Delta_{NT}(4) \vee \frac{1}{T^2} + \frac{1}{\widetilde{\Delta}_{NT}^6}\right) = O\left( \Delta_{NT}(4) \vee \frac{1}{T^2}  \right),
\end{eqnarray}
where the second equality follows from Assumption \ref{AS1}.2.

Thus, according to \eqref{def_psiNT2} and \eqref{def_psiNT4_2}, we write further 

\begin{eqnarray}\label{def_psiNT5}
&&\sup_{w\in\mathbb{R}}\left|f_{NT}(w)-\phi(w)-\frac{\beta_3}{6} H_3(w)\phi(w) \right|\notag \\
&\le &\frac{1}{2\pi}\int_{\mathbb{R}}\left|\psi_{NT}(u) -\widetilde{\phi}(u) - \frac{\beta_3}{6} (\mathsf{i}u)^3 \widetilde{\phi}(u)\right|\mathrm{d}u\notag\\
&= &\frac{1}{2\pi}\int_{|u|\ge \widetilde{\Delta}_{NT}} |\widetilde{\psi}_{NT}(u) |\widetilde{\phi}(u)\mathrm{d}u +\frac{1}{2\pi}\int_{|u|< \widetilde{\Delta}_{NT}} |\widetilde{\psi}_{NT}(u) |\widetilde{\phi}(u)\mathrm{d}u \notag \\
&=&O\left( \Delta_{NT}(4) \vee \frac{1}{T^2}\right),
\end{eqnarray}
where, in the last step, $\int_{|u|\ge \widetilde{\Delta}_{NT}} |\widetilde{\psi}_{NT}(u) |\widetilde{\phi}(u)\mathrm{d}u$ can be arbitrarily small by standard operation, and $\int_{|u|< \widetilde{\Delta}_{NT}} |\widetilde{\psi}_{NT}(u) |\widetilde{\phi}(u)\mathrm{d}u $ is bounded by \eqref{def_psiNT4_2}.

To study the CDF, we let $G(w) =\Phi(w)+\frac{\beta_3}{6}(1-w^2)\phi(w)$, where simple algebra shows that

\begin{eqnarray*}
\frac{\mathrm{d}( (1-w^2)\phi(w) )}{\mathrm{d}w} =  H_3(w)\phi(w) .
\end{eqnarray*}
Thus, $G(w)$ has a characteristic function $\xi(w) =(1+\frac{\beta_3}{6}(\mathsf{i}w)^3)\widetilde{\phi}(w)$. We then invoke Esseen's smoothing Lemma. First, we let $a$ of Lemma \ref{LM.A2} be sufficiently large, so the second term on the right hand side of Lemma \ref{LM.A2} becomes negligible. Then, similar to \eqref{def_psiNT5}, we study the first term and can obtain that

\begin{eqnarray*}
|F_{NT}(w)-G(w)|_\infty=O\left( \Delta_{NT}(4) \vee \frac{1}{T^2}\right).
\end{eqnarray*}
The proof is now completed. 
\end{proof}

\medskip

\begin{proof}[Proof of Corollary \ref{COL.1}]
\item 

According to \eqref{def_betar3} and Lemma \ref{LM.A5}.1, we write

\begin{eqnarray*}
\beta_3&=&\kappa_3 \sum_{\ell=1}^T\sum_{j=1}^N\widetilde{b}_{\ell j}^3+O\left(\frac{1}{T^{3/2}}\right)\notag\\
&=&\frac{\kappa_3}{(L_NT\sigma_x)^{3/2}} \sum_{\ell=1}^T\sum_{j=1}^N (\mathbf{1}_N^\top \mathbf{b}_j^\dag )^3-\frac{3\kappa_3}{(L_NT\sigma_x)^{3/2}} \sum_{\ell=1}^T\sum_{j=1}^N(\mathbf{1}_N^\top \mathbf{b}_j^\dag )^2(\mathbf{1}_N^\top \widetilde{\mathbf{b}}_{T-\ell,j}^\dag ) \notag \\
&&+\frac{3\kappa_3}{(L_NT\sigma_x)^{3/2}} \sum_{\ell=1}^T\sum_{j=1}^N (\mathbf{1}_N^\top \mathbf{b}_j^\dag) (\mathbf{1}_N^\top \widetilde{\mathbf{b}}_{T-\ell,j}^\dag )^2-\frac{\kappa_3}{(L_NT\sigma_x)^{3/2}} \sum_{\ell=1}^T\sum_{j=1}^N( \mathbf{1}_N^\top \widetilde{\mathbf{b}}_{T-\ell,j}^\dag )^3 \notag \\
&&+O\left(\frac{1}{T^{3/2}}\right)\notag \\
&\eqqcolon &\frac{\kappa_3}{\sigma_x^{3/2}}( \beta_{3,1}-3\beta_{3,2}+3\beta_{3,3}-\beta_{3,4})+O\left(\frac{1}{T^{3/2}}\right),
\end{eqnarray*}
where $\mathbf{b}_j^\dag$ and $\widetilde{\mathbf{b}}_{T-\ell,j}^\dag$ are defined in \eqref{def_bdd}, and the definitions of $\beta_{3,j}$ for $j\in[4]$ are obvious.

For $\beta_{3,2}$, we write

\begin{eqnarray*}
|\beta_{3,2}|&\le&\frac{\|\mathbf{B}\|_1^2}{(L_NT)^{3/2}} \sum_{j=1}^N \sum_{\ell=1}^T |\mathbf{1}_N^\top \widetilde{\mathbf{b}}_{T-\ell,j}^\dag |\notag\\
&\le&\frac{\|\mathbf{B}\|_1^2}{(L_NT)^{3/2}}  \sqrt{NT}\left( \sum_{j=1}^N\sum_{\ell=1}^T |\mathbf{1}_N^\top \widetilde{\mathbf{b}}_{T-\ell,j}^\dag|^2\right)^{1/2}\notag\\
&=&\frac{\sqrt{NT}\|\mathbf{B}\|_1^2}{(L_NT)^{3/2}}   \left(\sum_{\ell=1}^T\mathbf{1}_N^\top \widetilde{\mathbf{B}}_{T-\ell}\widetilde{\mathbf{B}}_{T-\ell}^\top \mathbf{1}_N\right)^{1/2}\notag\\
&=&\frac{\sqrt{N}\|\mathbf{B}\|_1^2}{L_NT  } \left(\frac{N}{L_N}\sum_{\ell=1}^T \|\widetilde{\mathbf{B}}_{T-\ell}\|_2^2\right)^{1/2}\notag \\
&=&O_P\left(\frac{\sqrt{N}\|\mathbf{B}\|_1^2}{L_NT}\right),
\end{eqnarray*}
where the last line follows from the proof of Lemma \ref{LM.A5}.2.

For $\beta_{3,3}$, we write

\begin{eqnarray*}
|\beta_{3,3}|&\le&\frac{\|\mathbf{B}\|_1}{(L_NT)^{3/2}} \sum_{j=1}^N  \sum_{\ell=1}^T |\mathbf{1}_N^\top \widetilde{\mathbf{b}}_{T-\ell,j}^\dag |^2\notag\\
&=&\frac{\|\mathbf{B}\|_1}{(L_NT)^{3/2}} \sum_{\ell=1}^T\mathbf{1}_N^\top \widetilde{\mathbf{B}}_{T-\ell}\widetilde{\mathbf{B}}_{T-\ell}^\top \mathbf{1}_N\notag\\
&=&O\left(\frac{\|\mathbf{B}\|_1}{L_N^{1/2}T^{3/2}}\right) ,
\end{eqnarray*}
where the second equality follows from the proof of Lemma \ref{LM.A5}.2.

For $\beta_{3,4}$, we write

\begin{eqnarray*}
|\beta_{3,4}|&\le&\frac{1}{(L_NT)^{3/2}} \left(\sum_{j=1}^N  \sum_{\ell=1}^T |\mathbf{1}_N^\top \widetilde{\mathbf{b}}_{T-\ell,j}^\dag |^2\right)^{3/2}\notag\\ 
&=&\frac{1}{(L_NT)^{3/2}} \left( \sum_{\ell=1}^T\mathbf{1}_N^\top \widetilde{\mathbf{B}}_{T-\ell}\widetilde{\mathbf{B}}_{T-\ell}^\top \mathbf{1}_N\right)^{3/2}\notag \\
&=&O\left(\frac{1}{L_N T^{3/2}}\right),
\end{eqnarray*}
where the first inequality follows a development similar to \eqref{def_betar2}.

By Assumption \ref{AS1}.2, it is easy to know that among $\beta_{3,j}$ with $j=2,3,4$, the term $|\beta_{3,2}|$ offers the lowest rate.  In connection with the decomposition of $\beta_3$, the proof is now completed.
\end{proof}

\medskip

\begin{proof}[Proof of Theorem \ref{THM.2}]
\item  

(1). Let $\mathcal{S}_{NT}^*=  \sum_{t=1}^T\mathbf{1}_N^\top \mathbf{x}_t\zeta_t$ and $\sigma_{\mathcal{S}}^{\ast}=\text{Var}^\ast(\mathcal{S}_{NT}^*)^{\frac{1}{2}}$. We can observe that  $\widetilde{S}_{NT}^*/\widetilde{\sigma}^\ast=\mathcal{S}_{NT}^*/\sigma_{\mathcal{S}}^\ast$. We aim to establish the Edgeworth expansion for $\widetilde{\mathcal{S}}_{NT}^*\coloneqq \mathcal{S}_{NT}^*/\sigma_{\mathcal{S}}^\ast$ through investigating its characteristic function: 

\begin{equation*}
\phi^\ast_{NT}(u)=E^\ast [\exp ( \mathsf{i} u\widetilde{\mathcal{S}}_{NT}^*  )].
\end{equation*}
Specifically, we adopt a proof strategy similar to those used by \cite{tik1981}. Let $\mathcal{T}_m \coloneqq \lfloor \frac{T}{m} \rfloor$, where we decompose $\widetilde{\mathcal{S}}_{NT}^*$ into a summation of 1-dependent series, conditional on the observations:

\begin{equation*}
\widetilde{\mathcal{S}}_{NT}^*=\sum_{s=1}^{\mathcal{T}_m} A_s+ A_{\mathcal{T}_m+1},
\end{equation*}
where $A_s \coloneqq \frac{1}{\sigma_{\mathcal{S}}^\ast}\sum_{t=(s-1)m+1}^{sm} \mathbf{1}_N^\top \mathbf{x}_t\zeta_t$ and $A_{\mathcal{T}_m+1} \coloneqq \frac{1}{\sigma_{\mathcal{S}}^\ast}\sum_{t=\mathcal{T}_m+1}^{T} \mathbf{1}_N^\top \mathbf{x}_t\zeta_t$. Additionally, define 

\begin{eqnarray*}
A_{s,l}^c=\sum_{|t-s|>l} A_t \quad \text{and} \quad A_{s,0}^c =  \widetilde{\mathcal{S}}_{NT}^*,
\end{eqnarray*}
for $l=1,\ldots, 4$ and $s=1,\ldots, \mathcal{T}_m+1$. By the nature of 1-dependent random variables, we know that $A_{s,l}^c$ and $A_{s}$ are independent conditional on the observations for $l=1,\ldots, 4$.  This decomposition enables us to apply established techniques to handle such series.

We are now ready to make the following expansion of $\exp ( \mathsf{i} u\widetilde{\mathcal{S}}_{NT}^*  )$ for each given $s$:

\begin{eqnarray}\label{Edex7}
\exp ( \mathsf{i} u\widetilde{\mathcal{S}}_{NT}^*  )&=& \exp ( \mathsf{i} u A_{s,1}^c )+ \{\exp ( \mathsf{i} u (\widetilde{\mathcal{S}}_{NT}^* - A_{s,1}^c ) )-1 \}\exp ( \mathsf{i} u A_{s,1}^c ),
\end{eqnarray}
and this process can be iterated for the second term involving $\exp ( \mathsf{i} u A_{s,1}^c )$ and the subsequent terms. Finally, we obtain

\begin{eqnarray}\label{2Edx1}
\exp ( \mathsf{i} u\widetilde{\mathcal{S}}_{NT}^*  )&=&\exp ( \mathsf{i} u A_{s,1}^c )+ \{\exp ( \mathsf{i} u (\widetilde{\mathcal{S}}_{NT}^* - A_{s,1}^c ) )-1 \}\exp ( \mathsf{i} u A_{s,2}^c )
\notag\\
&&+\sum_{l=3}^{4}\prod_{k=1}^{l-1} \{\exp ( \mathsf{i} u (A_{s,k-1}^c - A_{s,k}^c ) )-1 \}\exp ( \mathsf{i} u A_{s,l}^c )\notag\\
&&+\prod_{k=1}^{4} \{\exp ( \mathsf{i} u (A_{s,k-1}^c - A_{s,k}^c ) )-1 \}\exp ( \mathsf{i} u A_{s,4}^c ).
\end{eqnarray} 
Accordingly, we can write the first-order derivative of $\phi^\ast_{NT}(u)$ as

\begin{eqnarray}\label{Edex11}
\frac{\mathrm{d}\phi^\ast_{NT}(u)}{\mathrm{d}u}&=&\mathsf{i}E^\ast [\widetilde{\mathcal{S}}_{NT}^* \exp ( \mathsf{i} u \widetilde{\mathcal{S}}_{NT}^*  ) ]
\notag\\
&=& \mathsf{i}\sum_{s=1}^{\mathcal{T}_m+1} E^\ast [A_s\exp ( \mathsf{i} u  A_{s,1}^c ) ]\notag \\
&&+\mathsf{i}\sum_{s=1}^{\mathcal{T}_m+1} E^\ast [A_s  \{\exp ( \mathsf{i} u (\widetilde{\mathcal{S}}_{NT}^* - A_{s,1}^c ) )-1 \}\exp ( \mathsf{i} u A_{s,2}^c ) ] \notag\\
&&+\mathsf{i}\sum_{s=1}^{\mathcal{T}_m+1}\sum_{l=3}^{4} E^\ast \Big[A_s\prod_{k=1}^{l-1} \{\exp ( \mathsf{i} u (A_{s,k-1}^c - A_{s,k}^c ) )-1 \}\exp ( \mathsf{i} u A_{s,l}^c ) \Big] \notag\\
&&+\mathsf{i}\sum_{s=1}^{\mathcal{T}_m+1} E^\ast\Big[A_s \prod_{k=1}^{4} \{\exp ( \mathsf{i} u (A_{s,k-1}^c - A_{s,k}^c ) )-1 \}\exp ( \mathsf{i} u A_{s,4}^c )\Big]\notag\\
&\eqqcolon &\mathsf{J}_{NT,1}(u)+\cdots+\mathsf{J}_{NT,4}(u).
\end{eqnarray}

We then investigate the four terms on the right hand side one by one. 

For the first term, it is clear to see that 

\begin{eqnarray}\label{Edex12}
\mathsf{J}_{NT,1}(u) =\mathsf{i}\sum_{s=1}^{\mathcal{T}_m+1} E^\ast[A_s]E^\ast [\exp ( \mathsf{i} u  A_{s,1}^c)]=0.
\end{eqnarray}
To investigate $\mathsf{J}_{NT,2}(u)$, we invoke Lemma \ref{LM.A3} by letting $r=2$, 

\begin{eqnarray}\label{Edex1}
\mathsf{J}_{NT,2}(u)&=&\mathsf{i}\sum_{s=1}^{\mathcal{T}_m+1} E^\ast\Big[A_s\Big\{\mathsf{i}u (\widetilde{\mathcal{S}}_{NT}^* - A_{s,1}^c )-\frac{1}{2}u^2 (\widetilde{\mathcal{S}}_{NT}^* - A_{s,1}^c )^2+R^\ast_{s,1}(u)\Big\}\exp ( \mathsf{i} u A_{s,2}^c )\Big]
\notag\\
&=&-u\sum_{s=1}^{\mathcal{T}_m+1} E^\ast [A_s (\widetilde{\mathcal{S}}_{NT}^* - A_{s,1}^c )\exp ( \mathsf{i} u A_{s,2}^c ) ]\notag \\
&&-\frac{1}{2}\mathsf{i}u^2\sum_{s=1}^{\mathcal{T}_m+1} E^\ast [A_s (\widetilde{\mathcal{S}}_{NT}^* - A_{s,1}^c )^2\exp ( \mathsf{i} u A_{s,2}^c ) ]
\notag\\
&&+\mathsf{i}\sum_{s=1}^{\mathcal{T}_m+1}E^\ast [A_s R^\ast_{s,1} (u)\exp ( \mathsf{i} u A_{s,2}^c ) ]
\notag\\
&\eqqcolon &\mathsf{J}_{NT,2,1}(u)+\mathsf{J}_{NT,2,2}(u)+\mathsf{J}_{NT,2,3}(u),
\end{eqnarray}
where $R^\ast_{s,1}(u)=\sum_{l=3}^{\infty}\frac{\mathsf{(iu)}^l}{l!} (\widetilde{\mathcal{S}}_{NT}^* - A_{s,1}^c )^l$ satisfies $|R^\ast_{s,1}(u)|\leq \frac{\mathsf{u}^3}{3!} |\widetilde{\mathcal{S}}_{NT}^* - A_{s,1}^c |^3$.

Since $A_s$ and $\widetilde{\mathcal{S}}_{NT}^* - A_{s,1}^c$ are conditionally independent with $A_{s,2}^c$, we can further write

\begin{eqnarray*}
\mathsf{J}_{NT,2,1}(u)&=&-u\sum_{s=1}^{\mathcal{T}_m+1} E^\ast [A_s (\widetilde{\mathcal{S}}_{NT}^* - A_{s,1}^c ) ]E^\ast [\exp ( \mathsf{i} u A_{s,2}^c ) ] \notag\\
&=&u\sum_{s=1}^{\mathcal{T}_m+1} E^\ast [A_s (\widetilde{\mathcal{S}}_{NT}^* - A_{s,1}^c ) ]E^\ast [\exp ( \mathsf{i} u \widetilde{\mathcal{S}}_{NT}^* )-\exp ( \mathsf{i} u A_{s,2}^c ) ]\notag\\
&&-u\sum_{s=1}^{\mathcal{T}_m+1} E^\ast [A_s (\widetilde{\mathcal{S}}_{NT}^* - A_{s,1}^c ) ] \phi^\ast_{NT}(u),
\end{eqnarray*} 
where the second equality follows from the fact that $\phi^\ast_{NT}(u)=E^\ast [\exp ( \mathsf{i} u \widetilde{\mathcal{S}}_{NT}^*)]$.
  
For the first term on the right hand side of $\mathsf{J}_{NT,2,1}(u)$, we adopt a similar expansion procedure for $\exp ( \mathsf{i} u \widetilde{\mathcal{S}}_{NT}^* )$ as in \eqref{Edex7}, and obtain

\begin{eqnarray}\label{Edex8}
&&\exp ( \mathsf{i} u \widetilde{\mathcal{S}}_{NT}^* )-\exp ( \mathsf{i} u A_{s,2}^c ) \notag \\
&=& \{\exp ( \mathsf{i} u (\widetilde{\mathcal{S}}_{NT}^* - A_{s,2}^c ) )-1 \}\exp ( \mathsf{i} u A_{s,2}^c )\notag\\
&=& \{\exp ( \mathsf{i} u (\widetilde{\mathcal{S}}_{NT}^* - A_{s,2}^c ) )-1 \}\exp ( \mathsf{i} u A_{s,3}^c )\notag\\
&&+ \{\exp ( \mathsf{i} u (\widetilde{\mathcal{S}}_{NT}^* - A_{s,2}^c ) )-1 \} \{\exp (\mathsf{i} u (A_{s,2}^c - A_{s,3}^c ) )-1 \}\exp ( \mathsf{i} u A_{s,3}^c ). 
\end{eqnarray}
For the first term in \eqref{Edex8}, 

\begin{eqnarray}\label{Edex2_1}
&&|E^\ast [ \{\exp (\mathsf{i} u (\widetilde{\mathcal{S}}_{NT}^* - A_{s,2}^c ) )-1 \}\exp ( \mathsf{i} u A_{s,3}^c ) ] | \notag \\
&\leq& |E^\ast [ \{\exp\big(\mathsf{i} u\big(\widetilde{\mathcal{S}}_{NT}^* - A_{s,2}^c ) )-1 \} |\notag\\
&\leq&  u |E^\ast [\widetilde{\mathcal{S}}_{NT}^* - A_{s,2}^c ] |+\frac{u^2}{2}E^\ast [ (\widetilde{\mathcal{S}}_{NT}^* - A_{s,2}^c )^2 ]\notag\\
&=&\frac{u^2}{2}E^\ast [ (\widetilde{\mathcal{S}}_{NT}^* - A_{s,2}^c )^2 ],
\end{eqnarray}
for any given $u$, where the first inequality holds by the conditional independence between $\widetilde{\mathcal{S}}_{NT}^* - A_{s,2}^c$ and $A_{s,3}^c$ and $|E[\exp( \mathsf{i} u A_{s,3}^c)]|\leq1$,  and  the second inequality is an application of Lemma \ref{LM.A3} with $r=1$.

Additionally, we can use the inequality $|e^{iu_1}-e^{iu_2}|\leq |u_1-u_2|$  and then Cauchy-Schwarz inequality sequentially to obtain

\begin{eqnarray}\label{Edex2_2}
&& |E^\ast [ \{\exp ( \mathsf{i} u(\widetilde{\mathcal{S}}_{NT}^* - A_{s,2}^c))-1\}\{\exp(\mathsf{i} u\big(A_{s,2}^c - A_{s,3}^c))-1\}\exp ( \mathsf{i} u A_{s,3}^c ) ] |\notag\\
&\leq&E^\ast [ |\exp ( \mathsf{i} u (\widetilde{\mathcal{S}}_{NT}^* - A_{s,2}^c ) )-1 |\cdot |\exp (\mathsf{i} u (A_{s,2}^c - A_{s,3}^c ) )-1 | ]\notag\\
&\leq&u^2 E^\ast [ (\widetilde{\mathcal{S}}_{NT}^* - A_{s,2}^c )^2 ]^{\frac{1}{2}}E^\ast [ (A_{s,2}^c - A_{s,3}^c )^2 ]^{\frac{1}{2}}.
\end{eqnarray} 
By \eqref{Edex8}, \eqref{Edex2_1}, and \eqref{Edex2_2}, 

\begin{eqnarray*}
&& |E^\ast [\exp ( \mathsf{i} u \widetilde{\mathcal{S}}_{NT}^* )-\exp ( \mathsf{i} u A_{s,2}^c ) ] |\notag \\
&\le &\frac{u^2}{2}E^\ast [ (\widetilde{\mathcal{S}}_{NT}^* - A_{s,2}^c )^2 ] +u^2 E^\ast [ (\widetilde{\mathcal{S}}_{NT}^* - A_{s,2}^c )^2 ]^{\frac{1}{2}}E^\ast [ (A_{s,2}^c - A_{s,3}^c )^2 ]^{\frac{1}{2}}.
\end{eqnarray*}
Substituting this result into the first term in $\mathsf{J}_{NT,2,1}(u)$, we obtain

\begin{eqnarray}\label{Edex2}
&&\sum_{s=1}^{\mathcal{T}_m+1} |E^\ast [A_s (\widetilde{\mathcal{S}}_{NT}^* - A_{s,1}^c ) ] |\cdot |E^\ast [\exp ( \mathsf{i} u \widetilde{\mathcal{S}}_{NT}^* )-\exp ( \mathsf{i} u A_{s,2}^c ) ] |\notag\\
&\leq&\frac{u^2}{2}\sum_{s=1}^{\mathcal{T}_m+1} |E^\ast [A_s (\widetilde{\mathcal{S}}_{NT}^* - A_{s,1}^c ) ] |\cdot E^\ast [ (\widetilde{\mathcal{S}}_{NT}^* - A_{s,2}^c )^2 ]\notag\\
&&+u^2\sum_{s=1}^{\mathcal{T}_m+1} |E^\ast [A_s (\widetilde{\mathcal{S}}_{NT}^* - A_{s,1}^c ) ] |\cdot E^\ast [ (\widetilde{\mathcal{S}}_{NT}^* - A_{s,2}^c )^2 ]^{\frac{1}{2}}E^\ast [ (A_{s,2}^c - A_{s,3}^c )^2 ]^{\frac{1}{2}}\notag\\
&=&O_P\left(\left(\frac{m}{T}\right)^2\mathcal{T}_m \right) =O_P\left(\frac{m}{T}\right) 
\end{eqnarray}
for any given $u$, where the first equality holds by Lemma \ref{LM.A7} and the second equality holds because $m\mathcal{T}_m=O(T)$.  

For the second term in  $\mathsf{J}_{NT,2,1}(u)$,

\begin{eqnarray}\label{Edex3}
\sum_{s=1}^{\mathcal{T}_m+1} E^\ast [A_s (\widetilde{\mathcal{S}}_{NT}^* - A_{s,1}^c ) ] &=& \sum_{s=1}^{\mathcal{T}_m+1} E^\ast [A_s (\widetilde{\mathcal{S}}_{NT}^* - A_{s,1}^c ) ]\notag\\
&=&\sum_{s=1}^{\mathcal{T}_m+1} E^\ast [A_s ]\widetilde{\mathcal{S}}_{NT}^* 
\notag\\
&=&E^\ast [\widetilde{\mathcal{S}}_{NT}^{*2} ]=1.
\end{eqnarray}
By \eqref{Edex2} and \eqref{Edex3}, we have 

\begin{eqnarray}\label{Edex4}
\mathsf{J}_{NT,2,1}(u)&=&-u\phi^\ast_{NT}(u)+O_P\left(\frac{m}{T}\right),
\end{eqnarray}
for any given $u\in(-\infty,\infty)$.

For $\mathsf{J}_{NT,2,2}(u)$, using arguments  similar to those in \eqref{Edex2}, we obtain 

\begin{eqnarray}\label{Edex5}
\mathsf{J}_{NT,2,2}(u)&=&-\frac{1}{2}\mathsf{i}u^2\sum_{s=1}^{\mathcal{T}_m+1} E^\ast [A_s (\widetilde{\mathcal{S}}_{NT}^* - A_{s,1}^c )^2 ]\phi^\ast_{NT}(u)+O_P\left(\frac{m}{T}\right).
\end{eqnarray}

For $\mathsf{J}_{NT,2,3}(u)$, since $|R^\ast_{s,1}(u)|\leq \frac{\mathsf{u}^3}{3!} |\widetilde{\mathcal{S}}_{NT}^* - A_{s,1}^c |^3$, it is straightforward to show 

\begin{eqnarray}\label{Edex6}
    \mathsf{J}_{NT,2,3}(u)=O_P\left(\frac{m}{T}\right).
\end{eqnarray}
Combining \eqref{Edex4}, \eqref{Edex5}, and \eqref{Edex6} gives 

\begin{eqnarray}\label{Edex13}
\mathsf{J}_{NT,2}(u)=-u\phi^\ast_{NT}(u)-\frac{1}{2}\mathsf{i}u^2\sum_{s=1}^{\mathcal{T}_m+1} E^\ast [A_s (\widetilde{\mathcal{S}}_{NT}^* - A_{s,1}^c )^2 ]\phi^\ast_{NT}(u)+O_P\left(\frac{m}{T}\right).
\end{eqnarray}

After finishing the investigation of $\mathsf{J}_{NT,2}(u)$, we proceed to study $\mathsf{J}_{NT,3}(u)$. In a similar way to the development for $\mathsf{J}_{NT,2,1}(u)$, we can easily show that

\begin{eqnarray*}
\mathsf{J}_{NT,3}(u)&=&\mathsf{i}\sum_{s=1}^{\mathcal{T}_m+1}\sum_{l=3}^{4} E^\ast\Big[A_s\prod_{k=1}^{l-1} \{\exp ( \mathsf{i} u (A_{s,k-1}^c - A_{s,k}^c ) )-1 \}\Big] \phi^\ast_{NT}(u)+o_P\left(\frac{m}{T}\right) \notag\\
&\eqqcolon& \mathsf{J}_{NT,3,1}(u)+ \mathsf{J}_{NT,3,2}(u),
\end{eqnarray*}
where $\mathsf{J}_{NT,3,1}(u)$ and $\mathsf{J}_{NT,3,2}(u)$ contain the terms with $l=3$ and $l=4$, respectively.

For $\mathsf{J}_{NT,3,1}(u)$, we invoke the Taylor expansion and the inequality in Lemma \ref{LM.A3} (with $r=1$) to write 

\begin{eqnarray*}
&&\mathsf{J}_{NT,3,1}(u)\notag \\
&=&\mathsf{i}\sum_{s=1}^{\mathcal{T}_m+1}E^\ast\Big[A_s\big(\mathsf{i}u\big(\widetilde{\mathcal{S}}_{NT}^* - A_{s,1}^c\big)+ R^\ast_{s,2} (u)\big)\big(\mathsf{i}u\big(A_{s,1}^c-  A_{s,2}^c\big)+ R^\ast_{s,3} (u)\big)\Big]\phi^\ast_{NT}(u),
\end{eqnarray*}
where $R^\ast_{s,2}(u)=\sum_{l=2}^{\infty}\frac{\mathsf{(iu)}^l}{l!} \big(\widetilde{\mathcal{S}}_{NT}^* - A_{s,1}^c\big)^l$ and $R^\ast_{s,3}(u)=\sum_{l=2}^{\infty}\frac{\mathsf{(iu)}^l}{l!} \big(A_{s,1}^c-  A_{s,2}^c\big)^l$ satisfy $|R^\ast_{s,2}(u)|\leq \frac{\mathsf{u}^2}{2}\big(\widetilde{\mathcal{S}}_{NT}^* - A_{s,1}^c\big)^2$ and $|R^\ast_{s,3}(u)|\leq \frac{\mathsf{u}^2}{2}\big(A_{s,1}^c-  A_{s,2}^c\big)^2$.

By Lemma \ref{LM.A7}, we have

\begin{eqnarray}\label{Edex10}
\mathsf{J}_{NT,3,1}(u)&=&-\mathsf{i}u^2\sum_{s=1}^{\mathcal{T}_m+1}E^\ast [A_s (\widetilde{\mathcal{S}}_{NT}^* - A_{s,1}^c ) (A_{s,1}^c-  A_{s,2}^c ) ]\phi^\ast_{NT}(u)+O_P\left(\frac{m}{T}\right).
\end{eqnarray}

Analogously to \eqref{Edex2_2}, the inequality $|e^{iu_1}-e^{iu_2}|\leq |u_1-u_2|$ can be applied here to show that

\begin{eqnarray*}
    \mathsf{J}_{NT,3,2}(u)&=&O_P\left(\frac{m}{T}\right).
\end{eqnarray*}
Together with \eqref{Edex10}, it yields

\begin{eqnarray}\label{Edex14}
    \mathsf{J}_{NT,3}(u)&=&-\mathsf{i}u^2\sum_{s=1}^{\mathcal{T}_m+1}E^\ast [A_s (\widetilde{\mathcal{S}}_{NT}^* - A_{s,1}^c ) (A_{s,1}^c-  A_{s,2}^c ) ]\phi^\ast_{NT}(u)+O_P\left(\frac{m}{T}\right).
\end{eqnarray}

For $\mathsf{J}_{NT,3}(u)$, we can use similar arguments to obtain

\begin{eqnarray}\label{Edex15}
    \mathsf{J}_{NT,4}(u)&=&o_P\left(\frac{m}{T}\right).
\end{eqnarray}

In summary of \eqref{Edex11}, \eqref{Edex12}, \eqref{Edex13}, \eqref{Edex14} and \eqref{Edex15},

\begin{eqnarray*}
\frac{\mathrm{d}\phi^\ast_{NT}(u)}{\mathrm{d}u}&=&-u\phi^\ast_{NT}(u)-\frac{1}{2}\mathsf{i}u^2\sum_{s=1}^{\mathcal{T}_m+1} E^\ast [A_s (\widetilde{\mathcal{S}}_{NT}^* - A_{s,1}^c )^2 ]\phi^\ast_{NT}(u)\notag\\
&&-\mathsf{i}u^2\sum_{s=1}^{\mathcal{T}_m+1}E^\ast [A_s (\widetilde{\mathcal{S}}_{NT}^* - A_{s,1}^c ) (A_{s,1}^c-  A_{s,2}^c ) ]\phi^\ast_{NT}(u)+O_P\left(\frac{m}{T}\right).
\end{eqnarray*}

It is worth noting that 

\begin{eqnarray*}
&&\sum_{s=1}^{\mathcal{T}_m+1} E^\ast [A_s ( (\widetilde{\mathcal{S}}_{NT}^* - A_{s,1}^c )^2+2\big(\widetilde{\mathcal{S}}_{NT}^* - A_{s,1}^c ) (A_{s,1}^c-  A_{s,2}^c ) ) ]\notag\\
&=&\sum_{s=1}^{\mathcal{T}_m+1} E^\ast [A_s (\widetilde{\mathcal{S}}_{NT}^* - A_{s,1}^c ) (\widetilde{\mathcal{S}}_{NT}^* + A_{s,1}^c-2A_{s,2}^c ) ]\notag\\
&=&\sum_{s=1}^{\mathcal{T}_m+1}E^\ast [A_s (\widetilde{\mathcal{S}}_{NT}^{*2} - A_{s,1}^{c2} )]-2\sum_{s=1}^{\mathcal{T}_m+1}E^\ast [A_s (\widetilde{\mathcal{S}}_{NT}^* - A_{s,1}^c )A_{s,2}^c ]\notag\\
&=&E^\ast [\widetilde{\mathcal{S}}_{NT}^{*3} ].
\end{eqnarray*}

Therefore, it follows that 
\begin{eqnarray*}
    \frac{\mathrm{d}\phi^\ast_{NT}(u)}{\mathrm{d}u}&=&-u\phi^\ast_{NT}(u)-\frac{1}{2}\mathsf{i}u^2 E^\ast [\widetilde{\mathcal{S}}_{NT}^{*3} ]\phi^\ast_{NT}(u)
    +O_P\left(\frac{m}{T}\right)
\end{eqnarray*}
for any given $u\in(-\infty, \infty)$. Integration of this relation gives

\begin{eqnarray*}
    \phi^\ast_{NT}(u)&=&\exp\big(-\frac{1}{2}u^2-\frac{1}{6}\mathsf{i}u^3 E^\ast [\widetilde{\mathcal{S}}_{NT}^{*3} ]\big) +O_P\left(\frac{m}{T}\right)\notag\\
    &=&\exp\big(-\frac{1}{2}u^2\big)\Big\{1-\frac{1}{6}\mathsf{i}u^3 E^\ast [\widetilde{\mathcal{S}}_{NT}^{*3} ]\Big\} +O_P\left(\frac{m}{T}\right),
\end{eqnarray*}
where the second equality holds due to Taylor expansion of $\exp\left(-\frac{1}{6}\mathsf{i}u^3 E^\ast\left[\widetilde{\mathcal{S}}_{NT}^{*3}\right]\right)$, and the higher-order terms in this expansion, beyond the second order, are all bounded by the probability order $O_P\left(\frac{m}{T}\right)$. Moreover, by Esseen smoothing inequality in Lemma \ref{LM.A2}, we can  finally establish the desired Edgeworth expansion  for the CDF of $\widetilde{\mathcal{S}}_{NT}^\ast$: 
\begin{eqnarray*}
        \sup_{u\in \mathbb{R}}\left|\text{\normalfont Pr}^\ast(\widetilde{\mathcal{S}}_{NT}^{*}\le u) -\Phi(u) -\frac{1-u^2}{6}E^\ast\big[\widetilde{\mathcal{S}}_{NT}^{*3}\big]\phi(x)\right|=O_P\left(\frac{m}{T}\right).
\end{eqnarray*}

This completes the proof of Theorem \ref{THM.2}.1.

\medskip

(2). Since $E^\ast\big[\widetilde{S}_{NT}^{*3}\big]=O_P\big(\sqrt{\frac{m}{T}}\big)$ and $\widetilde{\sigma}^\ast=1+o_P(1)$, it follows from  Theorem \ref{THM.2}.1 that
\begin{eqnarray}\label{Edex16}
    \sup_{u\in \mathbb{R}}\left|\text{\normalfont Pr}^*(\widetilde{S}_{NT}^* \le u) -\Phi(u)\right|=O_P\left(\sqrt{\frac{m}{T}}\right).
\end{eqnarray}

Combining the results in Theorem \ref{THM.1} and \eqref{Edex16}, we can readily obtain
\begin{eqnarray*}
    &&\sup_{u\in \mathbb{R}}\left|\text{\normalfont Pr}^*(\widetilde{S}_{NT}^* \le u) -\Pr(\widetilde{S}_{NT}\le u)\right|\notag \\
    &\leq&\sup_{u\in \mathbb{R}}\left|\text{\normalfont Pr}^*(\widetilde{S}_{NT}^* \le u) -\Phi(u)\right|
    +\sup_{u\in \mathbb{R}}\left|\Pr(\widetilde{S}_{NT}\le u)-\Phi(u)\right|
    \notag\\
    &=&O_P\left(\sqrt{\frac{m}{T}}\right).
\end{eqnarray*}

It leads to the desired result in  Theorem \ref{THM.2}.2.

\medskip

(3). To prove Theorem \ref{THM.2}.3, it suffices to show the following  results:
\begin{eqnarray}\label{msel_0}
    &&E[E^\ast[\widetilde{S}_{NT}^{*2}]]-E[\widetilde{S}_{NT}^{2} ]=-\frac{C_{q_a}}{ \sigma_x^2m^{q_a}}\Delta_{q_a}+o_P(m^{-q_a}),
    \notag\\
    &&Var( E^\ast[\widetilde{S}_{NT}^{*2}])=\frac{2m}{T}\int_{-1}^1 a^2(u)du+o\left(\frac{m}{T}\right).
\end{eqnarray}
where $\Delta_{q_a}=L_N^{-1}\sum_{s=-\infty}^{\infty}\sum_{\ell=0}^{\infty} |s|^{q_a}\mathbf{1}_N^\top \mathbf{B}_\ell  \mathbf{B}_{\ell+|s|}^\top \mathbf{1}_N $ and 
$C_{q_a}$ is defined in Theorem \ref{THM.2}.

It is clear to see that
\begin{eqnarray*}
    \sigma_x^2\big(E[E^\ast[\widetilde{S}_{NT}^{*2}]]-E[\widetilde{S}_{NT}^{2} ]\big)&=& 
    \frac{1}{L_NT}\sum_{t=1}^T\sum_{s=1}^T \Big\{ a\Big(\frac{t-s}{m}\Big)-1\Big\} \mathbf{1}_N^\top E\big[ \mathbf{x}_{t}\mathbf{x}_{s}^\top\big]\mathbf{1}_N
    \notag\\
    &=& \frac{1}{L_N}\sum_{s=-m}^{m} \Big\{ a\Big(\frac{s}{m}\Big)-1\Big\} \mathbf{1}_N^\top E\big[ \mathbf{x}_{1}\mathbf{x}_{1+s}^\top\big]\mathbf{1}_N
    \notag\\
    &&-\frac{2}{L_N}\sum_{s=1}^{m}\frac{s}{T} \Big\{ a\Big(\frac{s}{m}\Big)-1\Big\} \mathbf{1}_N^\top E\big[ \mathbf{x}_{1}\mathbf{x}_{1+s}^\top\big]\mathbf{1}_N
    \notag\\
    &&-\frac{2}{L_N}\sum_{s=m+1}^{T-1}\frac{T-s}{T} \mathbf{1}_N^\top E\big[ \mathbf{x}_{1}\mathbf{x}_{1+s}^\top\big]\mathbf{1}_N
    \notag\\
    &:=&\mathcal{I}_1+\mathcal{I}_2+\mathcal{I}_3,
\end{eqnarray*}
where the definitions of $\mathcal{I}_1$, $\mathcal{I}_2$ and  $\mathcal{I}_3$ are obvious. Here, the expression for $\mathcal{I}_3$ is derived from the fact that $a(s/m)=0$ for $s \geq m+1$.  
    
We now proceed to investigate each term individually. For the first term, we employ the properties of the kernel function as specified in Assumption \ref{AS2} to establish its convergence. For $\forall \epsilon>0$, let $\epsilon^\ast = \frac{1}{2}\epsilon|\Delta_{q_a}|^{-1}$. 
By Assumption \ref{AS2}, there exists a positive constant $ \varsigma_\epsilon $ such that $\Big|\frac{1-a(u)}{|u|^{q_a}}-C_{q_a}\Big|<\epsilon^\ast$ for any $|u|\leq \varsigma_\epsilon$. Let $m^\ast = \lfloor m \varsigma_\epsilon\rfloor $. Without loss of generality, we assume   $m^\ast\leq m$, as $ \varsigma_\epsilon $ can be chosen sufficiently small. Then, we can write
\begin{eqnarray*}
    \mathcal{I}_1&=&\frac{1}{L_N}\sum_{s=-m^\ast}^{m^\ast} \Big\{ a\Big(\frac{s}{m}\Big)-1\Big\} \mathbf{1}_N^\top E\big[ \mathbf{x}_{1}\mathbf{x}_{1+s}^\top\big]\mathbf{1}_N\notag \\
    &&+\frac{2}{L_N}\sum_{s=m^\ast+1}^{m} \Big\{a\Big(\frac{s}{m}\Big)-1\Big\} \mathbf{1}_N^\top E\big[ \mathbf{x}_{1}\mathbf{x}_{1+s}^\top\big]\mathbf{1}_N
    \notag\\
    &:=&\mathcal{I}_{1,1}+\mathcal{I}_{1,2}.
\end{eqnarray*}

For $ \mathcal{I}_{1,1}$, it is clear to see that
\begin{eqnarray*}
    \mathbf{1}_N^\top E\big[ \mathbf{x}_{1}\mathbf{x}_{1+|s|}^\top\big]\mathbf{1}_N&=&\sum_{\ell=0}^{\infty} [(\mathbf{1}_N^\top \mathbf{B}_{\ell+|s|})\otimes (\mathbf{1}_N^\top \mathbf{B}_\ell)]E[\text{vec}(\pmb{\varepsilon}_{1-\ell}  \pmb{\varepsilon}_{1-\ell}^\top)]
    \notag\\
    &&+\sum_{\ell=0}^{\infty}\sum_{v=0}^{|s|+\ell-1}[(\mathbf{1}_N^\top \mathbf{B}_v)\otimes (\mathbf{1}_N^\top \mathbf{B}_{\ell})] E[\text{vec}(\pmb{\varepsilon}_{1-\ell} \pmb{\varepsilon}_{1+|s|-v}^\top )]
    \notag\\
    &&+\sum_{\ell=0}^{\infty}\sum_{v>|s|+\ell}^{\infty} [(\mathbf{1}_N^\top \mathbf{B}_v)\otimes (\mathbf{1}_N^\top \mathbf{B}_{\ell})]E[ \text{vec}(\pmb{\varepsilon}_{1-\ell} \pmb{\varepsilon}_{1+|s|-v}^\top )]
    \notag\\
    &=&\sum_{\ell=0}^{\infty} \mathbf{1}_N^\top \mathbf{B}_\ell  \mathbf{B}_{\ell+|s|}^\top \mathbf{1}_N.
\end{eqnarray*}

Note that \begin{eqnarray*}
    &&L_N^{-1}\sum_{k=1}^{\infty}\sum_{\ell=0}^{\infty}k^{q_a} \big|\mathbf{1}_N^\top \mathbf{B}_\ell  \mathbf{B}_{\ell+k}^\top \mathbf{1}_N\big|<\infty\notag \\
    &\le & \frac{N}{L_N} \sum_{\ell=0}^{\infty}\|\mathbf{B}_\ell \|_2 \sum_{k=1}^{\infty}k^{q_a}\| \mathbf{B}_{\ell+k}\|_2\notag \\
    &\le & \sqrt{\frac{N}{L_N}} \sum_{\ell=0}^{\infty}\|\mathbf{B}_\ell \|_2 \sqrt{\frac{N}{L_N}}\sum_{k=1}^{\infty}k^{q_a}\| \mathbf{B}_{k}\|_2<\infty,
\end{eqnarray*}
where the last line follows from Assumption \ref{AS2}.1. Additionally, since $|s/m|\leq \varsigma_\epsilon$ for $|s|\leq m^\ast$, we have
\begin{eqnarray}\label{msel_1}
    \big|m^{q_a}\mathcal{I}_{1,1}+C_{q_a}\Delta_{q_a}\big|<|\Delta_{q_a}|\epsilon^\ast=\frac{1}{2}\epsilon,
\end{eqnarray}
for sufficiently large $N$ and $T$. For $\mathcal{I}_{1,2}$, it is clear to see that $|s/m|\geq \varsigma_\epsilon$  and for sufficiently large $N$ and $T$,
\begin{eqnarray}\label{msel_2}
    \big|m^{q_a}\mathcal{I}_{1,2}\big|\leq \frac{2(\max_u(a(u))+1)}{\varsigma_\epsilon^{q_a}L_N}\sum_{s=m^\ast+1}^{m}s^{q_a}\big|\mathbf{1}_N^\top E\big[ \mathbf{x}_{1}\mathbf{x}_{1+s}^\top\big]\mathbf{1}_N\big|
    &<&\frac{1}{2}\epsilon,
\end{eqnarray}
where the inequality holds due to the fact that  $L_N^{-1}\sum_{s=m^\ast+1}^{m}|s|^{q_a}\big|\mathbf{1}_N^\top E\big[ \mathbf{x}_{1}\mathbf{x}_{1+s}^\top\big]\mathbf{1}_N\big|$ converges to zero as both $m,m^\ast\rightarrow\infty$. By \eqref{msel_1} and \eqref{msel_2}, we have 
\begin{eqnarray*}
    \big|m^{q_a}\mathcal{I}_{1}+C_{q_a}\Delta_{q_a}\big|<\epsilon,
\end{eqnarray*}
for sufficiently large $N$ and $T$. Since $\epsilon$ is an arbitrarily small positive number, we obtain
\begin{eqnarray}\label{msel_3}
    \mathcal{I}_{1}=-C_{q_a}m^{-q_a}\Delta_{q_a}+o_P(m^{-q_a}).
\end{eqnarray}

We then proceed to study $\mathcal{I}_{2}$. Specifically, we have
\begin{eqnarray}\label{msel_4}
|\mathcal{I}_{2}|&\leq&\frac{2(\max_u(a(u))+1)}{L_NT}\sum_{s=1}^{m} s \big|\mathbf{1}_N^\top E\big[ \mathbf{x}_{1}\mathbf{x}_{1+s}^\top\big]\mathbf{1}_N\big|
\notag\\
&=&O\left(\frac{1}{T}\right)=o(m^{-q_a}),
\end{eqnarray}
where the second equality holds because $m^2/T \rightarrow 0$, as required in Assumption \ref{AS2}.

For $\mathcal{I}_{3}$, 
\begin{eqnarray}\label{msel_5}
    |\mathcal{I}_{3}|&\leq& \frac{2}{L_N}\sum_{s=m+1}^{T-1} \big|\mathbf{1}_N^\top E\big[ \mathbf{x}_{1}\mathbf{x}_{1+s}^\top\big]\mathbf{1}_N\big|
    \notag\\
    &\leq& \frac{2}{L_Nm^2}\sum_{s=m+1}^{T-1}s^2 \big|\mathbf{1}_N^\top E\big[ \mathbf{x}_{1}\mathbf{x}_{1+s}^\top\big]\mathbf{1}_N\big|
    \notag\\
    &=&o\left(\frac{1}{m^2}\right)=o_P(m^{-q_a}),
\end{eqnarray}
where the first equality follows from
\begin{eqnarray*}
   \frac{1}{L_N}\sum_{s=m+1}^{T-1}s^2 \big|\mathbf{1}_N^\top E\big[ \mathbf{x}_{1}\mathbf{x}_{1+s}^\top\big]\mathbf{1}_N\big|&\leq&\frac{1}{L_N}\sum_{s=m+1}^{T-1}\sum_{\ell=0}^{\infty}s^2 \big| \mathbf{1}_N^\top \mathbf{B}_\ell  \mathbf{B}_{\ell+s}^\top \mathbf{1}_N\big| = o(1),  
\end{eqnarray*}
with the equality holding by Assumption \ref{AS2}.

Combining \eqref{msel_3}, \eqref{msel_4} and \eqref{msel_5}, we obtain the first desired result in \eqref{msel_0}. Next, we investigate the variance of $E^\ast[\widetilde{S}_{NT}^{*2}]$.

For notational simplicity, let $\mathcal{U}_{ts}=L_N^{-1}\mathbf{1}_N^\top\mathbf{x}_{t}\mathbf{x}_{s}^\top\mathbf{1}_N$. Additionally, define 
\begin{eqnarray*}
    \varsigma (s)=E[\mathcal{U}_{1,1+s}],\,\upsilon (s_1,s_2,s_3)= E[\mathcal{U}_{1,1+s_1}\mathcal{U}_{1+s_2,1+s_3}],
\end{eqnarray*}
and $\varrho (s_1,s_2,s_3)=\upsilon (s_1,s_2,s_3)- \varsigma(s_1) \varsigma(s_2-s_3)- \varsigma(s_2) \varsigma(s_1-s_3)- \varsigma(s_3) \varsigma(s_1-s_2)$.

To derive the variance of $E^\ast[\widetilde{S}_{NT}^{*2}]$, we write
\begin{eqnarray*}
    && \sigma_x^4 Var( E^\ast[\widetilde{S}_{NT}^{*2}])
    \notag\\ 
    &=& \frac{1}{T^2}\sum_{t_1,t_2=1}^T\sum_{s_1=1-t_1}^{T-t_1}\sum_{s_2=1-t_2}^{T-t_2}  a\Big(\frac{s_1}{m}\Big)a\Big(\frac{s_2}{m}\Big) Cov\big(\mathcal{U}_{t_1,t_1+s_1}, \mathcal{U}_{t_2,t_2+s_2}\big)
     \notag\\
     &=& \frac{1}{T^2}\sum_{s_1=-m}^{m}\sum_{s_2=-m}^{m}  a\Big(\frac{s_1}{m}\Big)a\Big(\frac{s_2}{m}\Big) Cov\big(\sum_{t_1=\psi_{1,s_1}}^{\psi_{2,s_1}}\mathcal{U}_{t_1,t_1+s_1}, \sum_{t_2=\psi_{1,s_2}}^{\psi_{2,s_2}}\mathcal{U}_{t_2,t_2+s_2}\big)
     ,
 \end{eqnarray*}
where  $\psi_{1,s}=\max\{1,1-s\}$, $\psi_{2,s}=\min\{T,T-s\}$.

We initially focus on the case of $0\leq s_1\leq s_2$. Using standard results for the fourth moment of stationary time series  \citep[see, for example, equation (30) in Section 8.3 of][]{Anderson2011}, we obtain 
\begin{eqnarray*}
    \frac{1}{T-s_2}Cov\big(\sum_{t_1=1}^{T-s_1}\mathcal{U}_{t_1,t_1+s_1}, \sum_{t_2=1}^{T-s_2}\mathcal{U}_{t_2,t_2+s_2}\big)&:=&\varpi_{s_1,s_2,1}+ \varpi_{s_1,s_2,2}+\varpi_{s_1,s_2,3},
\end{eqnarray*}
where 
\begin{eqnarray*}
    \varpi_{s_1,s_2,1}&=& \sum_{r=0}^{s_2-s_1}\{\varsigma(r)\varsigma(r+s_1-s_2)+\varsigma(r-s_2)\varsigma(r+s_1)\}
    \notag\\
    &&+\sum_{r=s_2-s_1+1}^{T-s_1-1}\Big(1-\frac{r-(s_2-s_1)}{T-s_2}\Big)\{\varsigma(r)\varsigma(r+s_1-s_2)+\varsigma(r-s_2)\varsigma(r+s_1)\}
    \notag\\
    &&+\sum_{r=-(T-s_2-1)}^{-1}\Big(1-\frac{|r|}{T-s_2}\Big)\{\varsigma(r)\varsigma(r+s_1-s_2)+\varsigma(r-s_2)\varsigma(r+s_1)\}
    \notag\\
    \varpi_{s_1,s_2,2}&=& \sum_{r=0}^{s_2-s_1}\varrho (s_1,-r,s_2-r)
    +\sum_{r=s_2-s_1+1}^{T-s_1-1}\Big(1-\frac{r-(s_2-s_1)}{T-s_2}\Big)\varrho (s_1,-r,s_2-r)
    \notag\\
    &&+\sum_{r=-(T-s_2-1)}^{-1}\Big(1-\frac{|r|}{T-s_2}\Big)\varrho (s_1,-r,s_2-r).
\end{eqnarray*}

Regarding terms with $\varpi_{s_1,s_2,1}$,  applying the same arguments as in the proof of Corollary 8.3.1 of \cite{Anderson2011}, we find that
\begin{eqnarray}\label{msel_6}
    &&\frac{1}{T^2}\sum_{s_1= 0}^{m}\sum_{s_2=s_1+1}^{m} a\Big(\frac{s_1}{m}\Big)a\Big(\frac{s_2}{m}\Big) (T-s_2)\varpi_{s_1,s_2,1}
    \notag\\
    &=&\frac{1}{T}\sum_{s_1= 0}^{m}\sum_{s_2=s_1+1}^{m}\sum_{r=-\infty}^{\infty}  a\Big(\frac{s_1}{m}\Big)a\Big(\frac{s_2}{m}\Big) \{\varsigma(r+s_2)\varsigma(r+s_1)+\varsigma(r-s_2)\varsigma(r+s_1)\}(1+o(1)).
    \notag\\
\end{eqnarray}

Using equation (46) in Section 8.3 of \cite{Anderson2011}, we can demonstrate that the terms with $\varpi_{s_1,s_2,2}$ are asymptotically negligible:
\begin{eqnarray}\label{msel_7}
    &&\frac{1}{T^2}\sum_{s_1= 0}^{m}\sum_{s_2=s_1+1}^{m}  a\Big(\frac{s_1}{m}\Big)a\Big(\frac{s_2}{m}\Big) (T-s_2)|\varpi_{s_1,s_2,2}|
    \notag\\
    &=&\frac{\kappa_4}{T^2}\sum_{s_1= 0}^{m}\sum_{s_2=s_1+1}^{m}  a\Big(\frac{s_1}{m}\Big)a\Big(\frac{s_2}{m}\Big) (T-s_2)|\varsigma(s_1)||\varsigma(s_2)|(1+o(1))
    \notag\\
    &\leq&\frac{\kappa_4\max_u a(u)^2}{T}\sum_{s_1= 0}^{m}\sum_{s_2=s_1+1}^{m}|\varsigma(s_1)||\varsigma(s_2)|(1+o(1)) = O\left(\frac{1}{T}\right)=o\left(\frac{l}{T}\right),
\end{eqnarray}
where $\kappa_4=\sigma_x^{-4}E[|L_N^{-1/2}\mathbf{1}_N^\top\mathbf{x}_{t}|^4]-3$ and the second equality follows from $\sum_{s= 0}^{\infty}|\varsigma(s)|=O(1)$. Combining \eqref{msel_6} and \eqref{msel_7}, we obtain
\begin{eqnarray*}
    &&\frac{1}{T}\sum_{s_1= 0}^{m}\sum_{s_2=s_1+1}^{m}  a\Big(\frac{s_1}{m}\Big)a\Big(\frac{s_2}{m}\Big) Cov\big(\sum_{t_1=1}^{T-s_1}\mathcal{U}_{t_1,t_1+s_1}, \sum_{t_2=1}^{T-s_2}\mathcal{U}_{t_2,t_2+s_2}\big)
    \notag\\
    &=&\frac{1}{T}\sum_{s_1= 0}^{m}\sum_{s_2=s_1+1}^{m}  \sum_{r=-\infty}^\infty a\Big(\frac{s_1}{m}\Big)a\Big(\frac{s_2}{m}\Big) \{\varsigma(r+s_2)\varsigma(r+s_1)+\varsigma(r-s_2)\varsigma(r+s_1)\}+o\left(\frac{l}{T}\right).
\end{eqnarray*}

In the case of other pairings of  $s_1$ and $s_2$, similar arguments yield the leading-order terms. Ultimately, we establish that
\begin{eqnarray*}
    &&\sigma_x^4 Var( E^\ast[\widetilde{S}_{NT}^{*2}]) \notag \\
    &=&\frac{1}{T}\sum_{s_1= -m}^{m}\sum_{s_2=-m}^{m}  \sum_{r=-\infty}^\infty a\Big(\frac{s_1}{m}\Big)a\Big(\frac{s_2}{m}\Big) \{\varsigma(r+s_2)\varsigma(r+s_1)+\varsigma(r-s_2)\varsigma(r+s_1)\}
    +o\left(\frac{l}{T}\right)
    \notag\\
    &:=&\mathcal{I}_4+\mathcal{I}_5+o\left(\frac{l}{T}\right),
\end{eqnarray*}
where the definitions of $\mathcal{I}_4$ and  $\mathcal{I}_5$ are obvious. 

For $\mathcal{I}_4$, we can further write
\begin{eqnarray}\label{msel_8}
    \frac{T}{m}\mathcal{I}_4&=&\frac{1}{m} \sum_{r=-\infty}^\infty\sum_{s_1= r-m}^{r+m}\sum_{s_2=r-m}^{r+m}  a\Big(\frac{s_1-r}{m}\Big)a\Big(\frac{s_2-r}{m}\Big) \varsigma(s_1)\varsigma(s_2)
    \notag\\
    &=&\frac{1}{m} \sum_{r=-\infty}^\infty\sum_{s_1= r-m}^{r+m}\sum_{s_2=r-m}^{r+m}a^2\Big(\frac{r}{m}\Big) \varsigma(s_1)\varsigma(s_2)+o(1)
    \notag\\
    &=&\sigma_x^4\int_{-1}^1 a^2(u)du +o(1),
\end{eqnarray}
where the second equality follows from Lipschitz continuity of $a(u)$, such that $|a(\frac{s-r}{m})-a(\frac{-r}{m})|=O(\frac{s}{m})$ and from the fact  $\sum_{s=-\infty}^{\infty}|s||E[\mathcal{U}_{1,1+s}]|=O(1)$, both  implied by Assumption \ref{AS2}. By similar arguments, we can obtain 
\begin{eqnarray}\label{msel_9}
    \frac{T}{m}\mathcal{I}_5=\sigma_x^4\int_{-1}^1 a^2(u)du +o(1). 
\end{eqnarray}
\eqref{msel_8} and \eqref{msel_9} together yield the second desired result in \eqref{msel_0}. Therefore,  Theorem \ref{THM.2}.3 is established.
\end{proof}

\medskip

\begin{proof}[Proof of Theorem \ref{THM.3}]
\item 

Before proceeding, we introduce some notation that will be repeatedly used in this proof. Note that by Lemma \ref{LM.A5} $\widetilde{p}_i$ admits the following decomposition:

\begin{eqnarray*}
p_i &=& \mathbf{e}_i^\top \sum_{t=1}^T\mathbf{x}_t = \mathbf{e}_i^\top \mathbf{B} \sum_{s=1}^T\pmb{\varepsilon}_{s}-\mathbf{e}_i^\top\widetilde{\mathbf{B}}(L)\pmb{\varepsilon}_{T}+\mathbf{e}_i^\top\widetilde{\mathbf{B}}(L)\pmb{\varepsilon}_{0} = \mathbf{e}_i^\top\sum_{\ell =-\infty}^T \pmb{\mathcal{B}}_{T\ell}\pmb{\varepsilon}_\ell.
\end{eqnarray*}

Let $\mathbf{B}_\ell=(\mathbf{b}_{\ell 1}^\sharp,\ldots, \mathbf{b}_{\ell N}^\sharp)^\top$ and $\widetilde{\mathbf{B}}_{\ell}=(\widetilde{\mathbf{b}}_{\ell 1}^\sharp,\ldots, \widetilde{\mathbf{b}}_{\ell N}^\sharp)^\top$,
where $\mathbf{B}_\ell$ and $\widetilde{\mathbf{B}}_{\ell}$ have been defined in \eqref{def.xt_1} and \eqref{def.BN1} respectively. We can then write 

\begin{eqnarray}\label{decom_bi_tilde}
\mathbf{e}_i^\top\widetilde{\mathbf{B}}(1)&=&\mathbf{e}_i^\top\sum_{\ell=0}^{\infty}\widetilde{\mathbf{B}}_\ell =\sum_{\ell=0}^{\infty}\widetilde{\mathbf{b}}_{\ell i}^{\sharp \top} = \sum_{\ell=0}^{\infty}\sum_{k=\ell+1}^\infty\mathbf{b}_{k i}^{\sharp\top} = \sum_{\ell=1}^{\infty}\ell \mathbf{b}_{\ell i}^{\sharp \top},
\end{eqnarray}
which yields that $E |\mathbf{e}_i^\top\widetilde{\mathbf{B}}(L)\pmb{\varepsilon}_{T} |^2 = \sum_{\ell=1}^{\infty}\ell^2 \|\mathbf{b}_{\ell i}^{\sharp}\|_2^2$. Therefore, $|\mathbf{e}_i^\top\widetilde{\mathbf{B}}(L)\pmb{\varepsilon}_{T}| =O_P(1)$. Similarly, we obtain that $|\mathbf{e}_i^\top\widetilde{\mathbf{B}}(L)\pmb{\varepsilon}_{0}| =O_P(1)$. We can then conclude that
\begin{eqnarray*}
\text{Var}\left(\frac{1}{\sqrt{T}}p_i\right) =  \mathbf{e}_i^\top \mathbf{B}\mathbf{B}^\top \mathbf{e}_i +o(1)\eqqcolon \sigma_{p,i}^2+o(1).
\end{eqnarray*}

In what follows, we study $\frac{1}{\sigma_{p,i}\sqrt{T}}p_i$, and write
\begin{eqnarray*}
\widetilde{p}_i\coloneqq \frac{1}{\sigma_{p,i}\sqrt{T}}p_i =\frac{1}{\sigma_{p,i}\sqrt{T}}  \mathbf{e}_i^\top\sum_{\ell =-\infty}^T \pmb{\mathcal{B}}_{T\ell}\pmb{\varepsilon}_\ell\eqqcolon    \mathbf{e}_i^\top\sum_{\ell =-\infty}^T \dot{\pmb{\mathcal{B}}}_{\ell}\pmb{\varepsilon}_\ell,
\end{eqnarray*}
where $\frac{1}{\sigma_{p,i}\sqrt{T}}\pmb{\mathcal{B}}_{T\ell}\eqqcolon \dot{\pmb{\mathcal{B}}}_{\ell}= \{\dot{\mathtt{b}}_{\ell, ij}\}_{N\times N}$, and we have suppressed $T$ in $\dot{\pmb{\mathcal{B}}}_{\ell}$ for notational simplicity.

To proceed, denote by $\psi_{i}(u)$ the characteristic function of $\widetilde{p}_i$. Thus,
\begin{eqnarray}\label{def_psi_i}
\psi_i(u) &=& E\left[\exp\left(\mathsf{i}u  \sum_{\ell=-\infty}^T\mathbf{e}_i^\top\dot{\pmb{\mathcal{B}}}_{\ell} \pmb{\varepsilon}_{\ell}\right)\right]\notag \\
&=&\prod_{\ell=-\infty}^T \prod_{j=1}^NE [\exp (\mathsf{i}u \dot{\mathtt{b}}_{\ell ,ij}\varepsilon_{j\ell} )] =  \prod_{\ell=-\infty}^T \prod_{j=1}^N\psi(\dot{\mathtt{b}}_{\ell ,ij} u),
\end{eqnarray}
where the second equality follows from $\{\varepsilon_{it}\}$  being i.i.d. over both dimensions. We have suppressed $T$ in $\dot{\mathtt{b}}_{\ell ,ij}$ for notational simplicity.

Using \eqref{def_psi_i}, we are able to calculate the $r^{th}$ cumulant $\beta_{ir}$ of $\widetilde{p}_i$ for $r\ge 1$. Obviously, we have $\beta_{i1} =0$ and $\beta_{i2} =1$. For $r\ge 3$, write

\begin{eqnarray} \label{def_psi_i2}
\beta_{ir} &=& (-\mathsf{i})^r \frac{\mathrm{d}^r}{\mathrm{d}u^r}\log  \psi_i(u) |_{u=0}\notag \\
&=&\sum_{\ell=-\infty}^T\sum_{j=1}^N (-\mathsf{i})^r\frac{\mathrm{d}^r}{\mathrm{d}u^r}\log\psi(\dot{\mathtt{b}}_{\ell ,ij} u) |_{u=0} \notag \\
&=&\sum_{\ell=-\infty}^T\sum_{j=1}^N \dot{\mathtt{b}}_{\ell ,ij}^r (-\mathsf{i})^r\frac{\mathrm{d}^r}{\mathrm{d}u^r}\log\psi(u)|_{u=0}\notag \\
&=&\kappa_r\left(\sum_{\ell=1}^T\sum_{j=1}^N \dot{\mathtt{b}}_{\ell ,ij}^r  +\sum_{\ell=-\infty}^0\sum_{j=1}^N \dot{\mathtt{b}}_{\ell ,ij}^r \right),
\end{eqnarray}
where the second equality follows from \eqref{def_psi_i}, and the third equality follows from \eqref{prop.cum}.  

For the second term on the right hand side of \eqref{def_psi_i2}, we note that 
\begin{eqnarray} \label{def_psi_i3}
\max_i\left|\sum_{\ell=-\infty}^0\sum_{j=1}^N \dot{\mathtt{b}}_{\ell ,ij}^r \right|&\le & \max_i \frac{1}{(\sigma_{p,i}^2 T)^{r/2}}\sum_{\ell=-\infty}^0\sum_{j=1}^N|b_{T\ell, ij}|^r\notag \\
&\le &\max_i\frac{1}{(\sigma_{p,i}^2 T)^{r/2}}\left(\sum_{\ell=-\infty}^0\sum_{j=1}^N |b_{T\ell, ij}|^2\right)^{r/2}\notag\\
&\le &\max_i O(1)\frac{1}{T^{r/2}}\left(\sum_{\ell=-\infty}^0 \| \widetilde{\mathbf{b}}_{\ell i}^\sharp\|^2  \right)^{r/2} = O\left(\frac{1}{T^{r/2}}\right),
\end{eqnarray}
where $\pmb{\mathcal{B}}_{T\ell}\eqqcolon \{b_{T\ell, ij}\}_{N\times N}$, the second inequality follows from the fact that for a vector $\mathbf{x}$, $|\mathbf{x}|_{p_1}\le |\mathbf{x} |_{p_2}$ for any $p_1> p_2 \ge 1$, the third inequality follows from Lemma \ref{LM.A5}.1, and the last equality follows from Assumption \ref{AS3} and \eqref{decom_bi_tilde}. 

Thus, \eqref{def_psi_i2} and \eqref{def_psi_i3} together infer that

\begin{eqnarray*}  
\max_i\left|\beta_{ir} -\kappa_r \sum_{\ell=1}^T\sum_{j=1}^N \dot{\mathtt{b}}_{\ell ,ij}^r\right|=O\left(\frac{1}{T^{r/2}}\right).
\end{eqnarray*}

Recall the definitions of $\dot{\mathtt{b}}_{\ell ,ij}^r$ for $\ell\ge 1$ and $j\ge 1$, and write further that

\begin{eqnarray*} 
\max_i\left| \sum_{\ell=1}^T\sum_{j=1}^N \dot{\mathtt{b}}_{\ell ,ij}^r \right|&\le& \max_i\frac{2^{r-1}}{(\sigma_{p,i}^2 T)^{r/2}}\sum_{\ell=1}^T\sum_{j=1}^N[|b_{ij}|^r+| \widetilde{b}_{T-\ell, ij}|^r]\notag\\
&=& \max_i\frac{2^{r-1}}{\sigma_{p,i}^rT^{r/2-1}} \sum_{j=1}^N | b_{ij} |^r+O\left(\frac{1}{T^{r/2}}\right)\notag \\
&\le & \max_i\frac{2^{r-1}}{\sigma_{p,i}^rT^{r/2-1}} \left(\sum_{j=1}^N | b_{ij} |^2\right)^{r/2}+O\left(\frac{1}{T^{r/2}}\right)\notag \\
&=&\max_i\frac{2^{r-1}}{\sigma_{p,i}^rT^{r/2-1}} \|\mathbf{b}_{i}^{\sharp}\|_2^r+O\left(\frac{1}{T^{r/2}}\right),
\end{eqnarray*}
where the first inequality follows from $(a+b)^r\le 2^{r-1}(a^r+b^r)$ for $a,b\ge 0$ and $r>1$, the first equality follows from a development similar to \eqref{def_betar2}, and the second inequality follows from the fact that for a vector $\mathbf{x}$, $|\mathbf{x}|_{p_1}\le |\mathbf{x} |_{p_2}$ for any $p_1> p_2 \ge 1$. Thus, by \eqref{def_betar3} and \eqref{def_betar4}, we can obtain that for $r\ge 3$

\begin{eqnarray} \label{def_psi_i6}
|\beta_{ir}|& \le &\max_i\frac{2^{r-1}}{\sigma_{p,i}^rT^{r/2-1}} \|\mathbf{b}_{i}^{\sharp}\|_2^r+O\left(\frac{1}{T^{r/2}}\right) \notag \\
&=& O\left( \frac{1}{T^{r/2 -1}} \right) \le  O\left( \frac{1}{(T^{1/6})^r} \right),
\end{eqnarray}
where the equality follows from Assumption \ref{AS3}, and the second inequality follows from $r\ge 3$.

By applying \eqref{def.chiu} to $\psi_i(u)$ and the characteristic function of $N(0,1)$, we have

\begin{eqnarray} \label{def_psi_i7}
\psi_i(u) &=&\exp\left\{ \sum_{r=3}^\infty \beta_{ir}\cdot\frac{(\mathsf{i} u)^r}{r!}\right\} \cdot\widetilde{\phi}(u)\notag \\
&=&\left\{1 +\sum_{n=1}^\infty\frac{1}{n!}\left(\sum_{r=3}^\infty \beta_{ir}\cdot\frac{(\mathsf{i} u)^r}{r!}\right)^n \right\}\cdot\widetilde{\phi}(u),
\end{eqnarray}
where $\beta_{ir}$ is bounded by \eqref{def_psi_i6}, and the second equality follows from \eqref{def.expu}. Let $f_i(w)$ be the density function of $\widetilde{p}_i$, and the Fourier inversion of \eqref{def_psiNT2} leads to the following expansion:

\begin{eqnarray*}
f_i(w) &=&\phi(w) +\frac{1}{2\pi}\int_{\mathbb{R}} \exp(-\mathsf{i}uw)\cdot \beta_{i3}\cdot \frac{(\mathsf{i}u)^3}{3!}\cdot\widetilde{\phi}(u)\mathrm{d}u \notag \\
&&+\frac{1}{2\pi}\int_{\mathbb{R}} \exp(-\mathsf{i}uw)\cdot\widetilde{\psi}_i(u)\cdot\widetilde{\phi}(u)\mathrm{d}u\notag \\
&=&\phi(w) + \frac{\beta_{i3}}{6} H_3(w)\phi(w)\notag \\
&&+\frac{1}{2\pi}\int_{\mathbb{R}} \exp(-\mathsf{i}uw)\cdot\widetilde{\psi}_i(u)\cdot\widetilde{\phi}(u)\mathrm{d}u
\end{eqnarray*}
where $\widetilde{\psi}_i(u)\coloneqq \sum_{r=4}^\infty \beta_{ir}\cdot\frac{(\mathsf{i} u)^r}{r!} + \sum_{n=2}^\infty\frac{1}{n!}\left(\sum_{r=3}^\infty \beta_{ir}\cdot\frac{(\mathsf{i} u)^r}{r!}\right)^n$, and the second equality follows from Lemma \ref{LM.A4}. 

We then bound $\frac{1}{2\pi}\int_{\mathbb{R}} \exp(-\mathsf{i}uw)\cdot \widetilde{\psi}_i(u)\cdot\widetilde{\phi}(u)\mathrm{d}u$. First, note  

\begin{eqnarray} \label{def_psi_i8}
\left|\sum_{r=4}^\infty \beta_{ir}\cdot\frac{(\mathsf{i} u)^r}{r!} \right|\widetilde{\phi}(u) &\le & |\beta_{i4}| \widetilde{\phi}(u) \sum_{r=4}^\infty \frac{|u|^r}{r!} \notag \\
&\le & |\beta_{i4}| \widetilde{\phi}(u)\exp(|u|) =|\beta_{i4}| \exp(-u^2/2+|u|),
\end{eqnarray}
where the second inequality follows from \eqref{def.expu}. Second, in connection with \eqref{def_psi_i6}, assuming that $|u|\le T^{1/6}$ and using Taylor theorem twice as in \eqref{def.expu2} we write

\begin{eqnarray} \label{def_psi_i9}
&&\left|\sum_{n=2}^\infty\frac{1}{n!}\left(\sum_{r=3}^\infty \beta_{ir}\cdot\frac{(\mathsf{i} u)^r}{r!}\right)^n \right|\widetilde{\phi}(u)\notag \\
&\le &O(1)\sum_{n=2}^\infty\frac{1}{n!} \left| \sum_{r=3}^\infty \cdot\frac{(\mathsf{i} u/T^{1/6} )^r}{r! }\right|^n \widetilde{\phi}(u)\notag\\
&\le & O(1)\frac{1}{T}\widetilde{\phi}(u) \widetilde{u}^6=O(1)\frac{1}{T},
\end{eqnarray}
where $|\widetilde{u}|$ is in between 0 and $T^{1/6}$, and the last equality follows from $\widetilde{\phi}(u) \widetilde{u}^6$ being uniformly bounded. Thus, by \eqref{def_psi_i6}, \eqref{def_psi_i8} and \eqref{def_psi_i9}, we can write

\begin{eqnarray} \label{def_psi_i10}
\sup_{|u|\le T^{1/6}}\left|\frac{1}{2\pi}\int_{\mathbb{R}} \exp(-\mathsf{i}uw)\cdot \widetilde{\psi}_i(u)\cdot\widetilde{\phi}(u)\mathrm{d}u\right| &=&O\left(  \frac{1}{T}  \right).
\end{eqnarray}
    
Thus, according to \eqref{def_psi_i7} and \eqref{def_psi_i10}, we write further 

\begin{eqnarray} \label{def_psi_i11}
&&\sup_{w\in\mathbb{R}}\left|f_i(w)-\phi(w)-\frac{\beta_{i3}}{6} H_3(w)\phi(w) \right|\notag \\
&\le &\frac{1}{2\pi}\int_{\mathbb{R}}\left|\psi_i(u) -\widetilde{\phi}(u) - \frac{\beta_{i3}}{6} (\mathsf{i}u)^3 \widetilde{\phi}(u)\right|\mathrm{d}u\notag\\
&= &\frac{1}{2\pi}\int_{|u|\ge T^{1/6}} |\widetilde{\psi}_i(u) |\widetilde{\phi}(u)\mathrm{d}u +\frac{1}{2\pi}\int_{|u|< T^{1/6}} |\widetilde{\psi}_i(u) |\widetilde{\phi}(u)\mathrm{d}u = O\left( \frac{1}{T}\right),
\end{eqnarray}
where, in the last step, $\int_{|u|\ge T^{1/6}} |\widetilde{\psi}_i(u) |\widetilde{\phi}(u)\mathrm{d}u$ can be arbitrarily small by standard operation, and $\frac{1}{2\pi}\int_{|u|< T^{1/6}} |\widetilde{\psi}_i(u) |\widetilde{\phi}(u)\mathrm{d}u$ is bounded by \eqref{def_psi_i10}.

To study the CDF, we let $G_i(w) =\Phi(w)+\frac{\beta_{i3}}{6}(1-w^2)\phi(w)$. Simple algebra shows that

\begin{eqnarray*}
\frac{\mathrm{d}( (1-w^2)\phi(w) )}{\mathrm{d}w} =  H_3(w)\phi(w) .
\end{eqnarray*}
Thus, $G_i(w)$ has a characteristic function $\xi_i(w) =(1+\frac{\beta_{i3}}{6}(\mathsf{i}w)^3)\widetilde{\phi}(w)$. We then invoke Esseen's smoothing Lemma. First, we let $a$ involved in Lemma \ref{LM.A2} be sufficiently large, so the second term on the right hand side of Lemma \ref{LM.A2} becomes negligible. Then, similar to \eqref{def_psi_i11}, we study the first term and can obtain that

\begin{eqnarray*}
|F_i(w)-G_i(w)|_\infty=O\left( \Delta_{NT}(4) \vee \frac{1}{T^2}\right).
\end{eqnarray*}
The proof of is now completed. 
\end{proof}

\medskip

\begin{proof}[Proof of Theorem \ref{THM.4}]
\item  
    
(1). For the heterogeneous bootstrap statistics, we can apply a similar approach to what has been used in the proof of Theorem \ref{THM.2}.1 to establish its Edgeworth expansion.
    
Let $\mathcal{P}_{i}^*=  \sum_{t=1}^T\mathbf{e}_i^\top \mathbf{x}_t\zeta_{t}$ and $\sigma_{\mathcal{P},i}^{\ast}=\text{Var}^\ast(\mathcal{P}_{i}^*)^{\frac{1}{2}}$. We need to establish the Edgeworth expansion for $\widetilde{\mathcal{P}}_{i}^*\coloneqq \mathcal{P}_{i}^*/\sigma^\ast_{\mathcal{P},i}$.   
First, we make the following decomposition for $\widetilde{\mathcal{P}}_{i}^*$:
\begin{equation*}
    \widetilde{\mathcal{P}}_{i}^*=\sum_{s=1}^{\mathcal{T}_m} B_{i,s}+ B_{i,\mathcal{T}_m+1},
\end{equation*}
where $\mathcal{T}_m \coloneqq \lfloor \frac{T}{m} \rfloor$,  $B_{i,s} \coloneqq \sigma_{\mathcal{P},i}^{\ast-1}\sum_{t=(s-1)m+1}^{sm} \mathbf{e}_i^\top \mathbf{x}_t\zeta_t$ and $B_{i,\mathcal{T}_m+1} \coloneqq \sigma_{\mathcal{P},i}^{\ast-1}\sum_{t=\mathcal{T}_m+1}^{T} \mathbf{e}_i^\top \mathbf{x}_t\zeta_t$.  Additionally, define $B_{i,s,l}^c=\sum_{|t-s|>l} B_{i,t}$ and $B_{i,s,0}^c =  \widetilde{\mathcal{P}}_{i}^*$
for $l=1,\ldots, 4$ and $s=1,\ldots, \mathcal{T}_m+1$. 

To study the characteristic function $\phi^\ast_{i}(u)=E^\ast [\exp ( \mathsf{i} u\widetilde{\mathcal{P}}_{i}^*  ) ]$, we require the following expansion for $\exp ( \mathsf{i} u\widetilde{\mathcal{P}}_{i}^*  ) $, which can be derived analogously to \eqref{2Edx1}:

\begin{eqnarray*}
\exp ( \mathsf{i} u\widetilde{\mathcal{P}}_{i}^* )&=&\exp ( \mathsf{i} u B_{i,s,1}^c )+ \{\exp ( \mathsf{i} u (\widetilde{\mathcal{P}}_{i}^* - B_{i,s,1}^c ) )-1 \}\exp ( \mathsf{i} u B_{i,s,2}^c)\notag\\
&&+\sum_{l=3}^{4}\prod_{k=1}^{l-1} \{\exp ( \mathsf{i} u (B_{i,s,k-1}^c - B_{i,s,k}^c ) )-1 \}\exp ( \mathsf{i} u B_{i,s,l}^c )\notag\\
&&+\prod_{k=1}^{4} \{\exp ( \mathsf{i} u (B_{i,s,k-1}^c - B_{i,s,k}^c ) )-1 \}\exp ( \mathsf{i} u B_{i,s,4}^c ).
\end{eqnarray*} 

It further yields the following expansion for the first derivative of $\phi^\ast_{i}(u)$:
\begin{eqnarray}\label{2Edex1}
\frac{\mathrm{d}\phi^\ast_{i}(u)}{\mathrm{d}u}
&=& \mathsf{i}\sum_{s=1}^{\mathcal{T}_m+1} E^\ast [B_{i,s}\exp ( \mathsf{i} u  B_{i,s,1}^c ) ]\notag \\
&&+\mathsf{i}\sum_{s=1}^{\mathcal{T}_m+1} E^\ast [B_{i,s}  \{\exp ( \mathsf{i} u (\widetilde{\mathcal{P}}_{i}^* - B_{i,s,1}^c ) )-1 \}\exp ( \mathsf{i} u B_{i,s,2}^c ) ]\notag\\
&&+\mathsf{i}\sum_{s=1}^{\mathcal{T}_m+1}\sum_{l=3}^{4} E^\ast\Big[B_{i,s}\prod_{k=1}^{l-1} \{\exp ( \mathsf{i} u (B_{i,s,k-1}^c - B_{i,s,k}^c ) )-1 \}\exp ( \mathsf{i} u B_{i,s,l}^c )\Big] \notag\\
&&+\mathsf{i}\sum_{s=1}^{\mathcal{T}_m+1} E^\ast\Big[B_{i,s} \prod_{k=1}^{4} \{\exp ( \mathsf{i} u (B_{i,s,k-1}^c - B_{i,s,k}^c ) )-1 \}\exp ( \mathsf{i} u B_{i,s,4}^c )\Big]\notag\\
&\eqqcolon &\mathsf{J}_{i,1}(u)+\cdots+\mathsf{J}_{i,4}(u),
\end{eqnarray}
where $\mathsf{J}_{i,1}(u),\ldots,\mathsf{J}_{i,4}(u)$ are defined obviously. 

For $\mathsf{J}_{i,1}(u)$, it is straightforward to show that $\mathsf{J}_{i,1}(u)=0$. For $\mathsf{J}_{i,2}(u)$, 
\begin{eqnarray*}
\mathsf{J}_{i,2}(u)&=&-u\sum_{s=1}^{\mathcal{T}_m+1} E^\ast [B_{i,s} (\widetilde{\mathcal{P}}_{i}^* - B_{i,s,1}^c )\exp ( \mathsf{i} u B_{i,s,2}^c ) ]\notag \\
&&-\frac{1}{2}\mathsf{i}u^2\sum_{s=1}^{\mathcal{T}_m+1} E^\ast [B_{i,s} (\widetilde{\mathcal{P}}_{i}^* - B_{i,s,1}^c )^2\exp ( \mathsf{i} u B_{i,s,2}^c ) ] \notag\\
&&+\mathsf{i}\sum_{s=1}^{\mathcal{T}_m+1}E^\ast [B_{i,s} R^\ast_{i,s,1} (u)\exp ( \mathsf{i} u B_{i,s,2}^c ) ]\notag\\
&\eqqcolon &\mathsf{J}_{i,2,1}(u)+\mathsf{J}_{i,2,2}(u)+\mathsf{J}_{i,2,3}(u),
\end{eqnarray*}
where $R^\ast_{i,s,1}(u)=\sum_{l=3}^{\infty}\frac{\mathsf{(iu)}^l}{l!} (\widetilde{\mathcal{P}}_{i}^* - B_{i,s,1}^c )^l$ satisfies $|R^\ast_{i,s,1}(u)|\leq \frac{\mathsf{u}^3}{3!} |\widetilde{\mathcal{P}}_{i}^* - B_{i,s,1}^c |^3$, which is a direct application of Lemma \ref{LM.A3}.

By adopting similar arguments to those in the proof of \eqref{Edex4} and involking Lemma \ref{LM.A8}, for each $i$, we obtain 
\begin{eqnarray*}
    \big|\mathsf{J}_{i,2,1}(u)+u\phi^\ast_{i}(u)\big|=O_P\left(\frac{m}{T}\right),
\end{eqnarray*}
and $\Big|\mathsf{J}_{i,2,2}(u)+\frac{1}{2}\mathsf{i}u^2\sum_{s=1}^{\mathcal{T}_m+1} E^\ast [B_{i,s} (\widetilde{\mathcal{P}}_{i}^* - B_{i,s,1}^c )^2 ]\phi^\ast_{i}(u)\Big|=O_P\left(\frac{m}{T}\right)$ for any $u\in(-\infty,\infty)$.

In summary of these results, we can readily obtain
\begin{eqnarray}\label{2Edex2}
   |\mathsf{J}_{i,2}(u)-\mathsf{J}^\ast_{i,2}(u) |=O_P\left(\frac{m}{T}\right),
\end{eqnarray}
where $\mathsf{J}^\ast_{i,2}(u)=-u\phi^\ast_{i}(u)-\frac{1}{2}\mathsf{i}u^2\sum_{s=1}^{\mathcal{T}_m+1} E^\ast [B_{i,s} (\widetilde{\mathcal{P}}_{i}^* - B_{i,s,1}^c )^2 ]\phi^\ast_{i}(u)$.

Using arguments analogous to those for  \eqref{Edex14}, we can further show 
\begin{eqnarray}\label{2Edex3}
 |\mathsf{J}_{i,3}(u)-\mathsf{J}^\ast_{i,3}(u) |=O_P\left(\frac{m}{T}\right)\quad\text{and} \quad |\mathsf{J}_{i,4}(u) |=o_P\left(\frac{m}{T}\right),
\end{eqnarray}
where 

\begin{eqnarray*}
\mathsf{J}^\ast_{i,3}(u)&=&-\mathsf{i}u^2\sum_{s=1}^{\mathcal{T}_m+1}E^\ast [B_{i,s} (\widetilde{\mathcal{P}}_{i}^* - B_{i,s,1}^c ) (B_{i,s,1}^c-  B_{i,s,2}^c ) ]\phi^\ast_{i}(u).
\end{eqnarray*}

By combining \eqref{2Edex1}, \eqref{2Edex2}, and \eqref{2Edex3}, we obtain

\begin{eqnarray}\label{2Edex4}
    \big|\frac{\mathrm{d}\phi^\ast_{i}(u)}{\mathrm{d}u}-\mathsf{J}^\ast_{i,2}(u)-\mathsf{J}^\ast_{i,3}(u)\big|=O_P\left(\frac{m}{T}\right),
\end{eqnarray}
where $\mathsf{J}^\ast_{i,2}(u)$ and $\mathsf{J}^\ast_{i,3}(u)$ satisfy
\begin{eqnarray*}
&&\mathsf{J}^\ast_{i,2}(u)+\mathsf{J}^\ast_{i,3}(u) = -u\phi^\ast_{i}(u)-\frac{1}{2}\mathsf{i}u^2\sum_{s=1}^{\mathcal{T}_m+1} E^\ast [B_{i,s} (\widetilde{\mathcal{P}}_{i}^* - B_{i,s,1}^c )^2 ]\phi^\ast_{i}(u)\notag\\
&&-\mathsf{i}u^2\sum_{s=1}^{\mathcal{T}_m+1}E^\ast [B_{i,s} (\widetilde{\mathcal{P}}_{i}^* - B_{i,s,1}^c ) (B_{i,s,1}^c-  B_{i,s,2}^c ) ]\phi^\ast_{NT}(u) \notag\\
&=&-u\phi^\ast_{i}(u)-\frac{1}{2}\mathsf{i}u^2 \sum_{s=1}^{\mathcal{T}_m+1} E^\ast [B_{i,s} ( (\widetilde{\mathcal{P}}_{i}^* - B_{i,s,1}^c )^2+2 (\widetilde{\mathcal{P}}_{i}^* - B_{i,s,1}^c ) (B_{i,s,1}^c-  B_{i,s,2}^c ) ) ]\phi^\ast_{i}(u)
\notag\\
&=&-u\phi^\ast_{i}(u)-\frac{1}{2}\mathsf{i}u^2 E^\ast [\widetilde{\mathcal{P}}_{i}^{*3} ]\phi^\ast_{i}(u).
\end{eqnarray*}

Finally, in connection with Taylor expansion of $\exp (-\frac{1}{6}\mathsf{i}u^3 E^\ast[\widetilde{\mathcal{S}}_{i}^{*3} ] )$ and Lemma \ref{LM.A8}, integration of \eqref{2Edex4} leads to the following result for the characteristic function $\phi^\ast_{i}(u)$:

\begin{eqnarray*}
    \Big|\phi^\ast_{i}(u)-\exp\big(-\frac{1}{2}u^2\big) \big\{1-\frac{1}{6}\mathsf{i}u^3 E^\ast [\widetilde{\mathcal{P}}_{i}^{*3} ] \big\}\Big| =O_P\left(\frac{m}{T}\right).
\end{eqnarray*}

Using Esseen smoothing inequality, this condition is sufficient to establish Theorem \ref{THM.4}.1.

(2). By Theorem \ref{THM.3} and Theorem \ref{THM.4}.1, it follows for each $i$ that 
\begin{eqnarray*}
    &&\sup_{u\in \mathbb{R}}\left|\text{\normalfont Pr}^*(\widetilde{p}_{i}^* \le u) -\Pr(\widetilde{p}_{i}\le u)\right|\notag \\
    &\leq&\sup_{u\in \mathbb{R}}\left|\text{\normalfont Pr}^*(\widetilde{p}_{i}^* \le u) -\Phi(u)\right|
    +\sup_{u\in \mathbb{R}}\left|\Pr(\widetilde{p}_{i}\le u)-\Phi(u)\right| = O_P\left(\sqrt{\frac{m}{T}}\right).
\end{eqnarray*}

It completes the proof of Theorem \ref{THM.4}.2.

(3) Similar arguments used in the proof of Theorem \ref{THM.2}.3 can be directly applied here to establish the desired result in Theorem \ref{THM.4}.3. Consequently, the details are omitted for brevity.
\end{proof}

\medskip

\begin{proof}[Proof of Theorem \ref{THM.5}]
\item 

By using the BN decomposition in Lemma \ref{LM.A5}, we have 

\begin{eqnarray*}
\frac{1}{\sqrt{T}}\sum_{t=1}^T(\mathbf{x}_{t} -\bm{\mu}) &=& \frac{1}{\sqrt{T}} \sum_{t=1}^T [\mathbf{B} -(1-L)\widetilde{\mathbf{B}}(L)] \pmb{\varepsilon}_{t} \notag \\
&=&\frac{1}{\sqrt{T}}\sum_{t=1}^T\mathbf{B} \pmb{\varepsilon}_{t} - \frac{1}{\sqrt{T}}\sum_{t=1}^T\widetilde{\mathbf{B}}(L)\pmb{\varepsilon}_{t}+ \frac{1}{\sqrt{T}}\sum_{t=1}^T \widetilde{\mathbf{B}}(L)\pmb{\varepsilon}_{t-1}\notag \\
&=&\frac{1}{\sqrt{T}}\sum_{t=1}^T\mathbf{B} \pmb{\varepsilon}_{t}-\frac{1}{\sqrt{T}}\widetilde{\mathbf{B}}(L)\pmb{\varepsilon}_{T}+\frac{1}{\sqrt{T}}\widetilde{\mathbf{B}}(L)\pmb{\varepsilon}_{0}.
\end{eqnarray*}

Note that for any $\epsilon>0$, we have

\begin{eqnarray*}
&&\Pr\left(\left|\frac{1}{\sqrt{T}}\sum_{t=1}^{T}(\mathbf{x}_{t}-\bm{\mu})\right|_{\infty} \leq u \right) \notag \\
&\leq& \Pr\left(\left|\frac{1}{\sqrt{T}}\widetilde{\mathbf{B}}(L)\pmb{\varepsilon}_{T}-\frac{1}{\sqrt{T}}\widetilde{\mathbf{B}}(L)\pmb{\varepsilon}_{0}\right|_{\infty}\geq \epsilon \right)+\Pr\left(\left|\frac{1}{\sqrt{T}}\sum_{t=1}^{T}\mathbf{B} \pmb{\varepsilon}_{t}\right|_{\infty} \leq u + \epsilon\right)
\end{eqnarray*}
and

\begin{eqnarray*}
&&\Pr\left(\left|\frac{1}{\sqrt{T}}\sum_{t=1}^{T}\mathbf{z} _t\right|_{\infty} \leq u\right) \notag \\
&=&\Pr\left(\left|\frac{1}{\sqrt{T}}\sum_{t=1}^{T}\mathbf{z} _t\right|_{\infty} \leq u+\epsilon\right)-\Pr\left(u<\left|\frac{1}{\sqrt{T}}\sum_{t=1}^{T}\mathbf{z} _t\right|_{\infty} \leq u + \epsilon\right).
\end{eqnarray*}

Hence, we have
\begin{eqnarray*}
&&\sup_{u\in \mathbb{R}}\left|\Pr\left(\left|\frac{1}{\sqrt{T}}\sum_{t=1}^{T}(\mathbf{x} _t-\bm{\mu})\right|_{\infty} \leq u\right) - \Pr\left(\left|\frac{1}{\sqrt{T}}\sum_{t=1}^{T}\mathbf{z} _t\right|_{\infty} \leq u \right)\right| \notag \\
&\leq&  \Pr\left(\left|\frac{1}{\sqrt{T}}\widetilde{\mathbf{B}}(L)\pmb{\varepsilon}_{T}-\frac{1}{\sqrt{T}}\widetilde{\mathbf{B}}(L)\pmb{\varepsilon}_{0}\right|_{\infty}\geq \epsilon \right) \notag \\
&&+ \sup_{u\in \mathbb{R}}\left|\Pr\left(\left|\frac{1}{\sqrt{T}}\sum_{t=1}^{T}\mathbf{B} \pmb{\varepsilon}_{t}\right|_{\infty} \leq u\right) - \Pr\left(\left|\frac{1}{\sqrt{T}}\sum_{t=1}^{T}\mathbf{z} _t\right|_{\infty} \leq u \right)\right| \notag\\
&& + \sup_{u\in \mathbb{R}}\left| \Pr\left(\left|\left|\frac{1}{\sqrt{T}}\sum_{t=1}^{T}\mathbf{z} _t\right|_{\infty}-u\right| \leq \epsilon \right)\right| \eqqcolon I_1+I_2+I_3.
\end{eqnarray*}

Consider $I_1$ first. Given $E[\varepsilon_{it}^J] < \infty$ and $\max_{i}\sum_{\ell=1}^{\infty}\ell \|\mathbf{b}_{\ell i}^{\sharp}\|_2<\infty$, we next show that each element in $\widetilde{\mathbf{B}}(L)\pmb{\varepsilon}_{t}$ has the finite $J^{th}$ moment to imply

\[ 
E [ |\widetilde{\mathbf{B}}(L)\pmb{\varepsilon}_{t} |_\infty^J ] = O(N).
\] 

Let $\widetilde{\mathbf{B}}_{\ell} =  (\widetilde{b}_{\ell,ij} )_{1\leq i,j\leq N}$. Then, for the $i^{th}$ element in $\widetilde{\mathbf{B}}(L)\pmb{\varepsilon}_{t}$, we have

\begin{eqnarray*}
&&\left(E(\sum_{\ell=0}^{\infty}\sum_{j=1}^{N}\widetilde{b}_{\ell,ij}\varepsilon_{j,t-\ell})^{J}\right)^{1/J} \leq \sum_{\ell=0}^{\infty}\left(E(\sum_{j=1}^{N}\widetilde{b}_{\ell,ij}\varepsilon_{j,t-\ell})^{J}\right)^{1/J} \notag\\
&\leq& \sum_{\ell=0}^{\infty}\frac{14.5J}{\log J}\left[\left(\sum_{j=1}^{N}E(\widetilde{b}_{\ell,ij}\varepsilon_{j,t-\ell})^J \right)^{1/J} + \left(\sum_{j=1}^{N}E(\widetilde{b}_{\ell,ij}\varepsilon_{j,t-\ell})^2 \right)^{1/2}  \right] \notag\\
&\leq& \frac{29J(E[\varepsilon_{it}^J])^{1/J}}{\log J} \sum_{\ell=0}^{\infty}\ell\|\mathbf{b}_{\ell i}^{\sharp}\|_2 =O(1),
\end{eqnarray*}
where the first inequality follows from the triangle inequality, and the second inequality follows from the Rosenthal inequality for independent variables (e.g., \citealp{Johnson}).

Then choose $\epsilon = \sqrt{\frac{N^{2/J}}{T^{1-2/J}}}$ and by using the Markov inequality, we have

$$
I_1 = O\left(\frac{N/T^{J/2}}{N/T^{J/2-1}} \right) =O(T^{-1}).
$$

Consider $I_2$. Similarly, we can show that each element in $\mathbf{B}\pmb{\varepsilon}_{t}$ has bounded the $J^{th}$ moment and thus $E[|\mathbf{B}\pmb{\varepsilon}_{t}|_\infty^J] = O(N)$. Then by using high-dimensional Gaussian approximations for independent random vectors (cf., Theorem 2.5 of \citealp{chernozhuokov2022improved}), we have  

\begin{eqnarray*}
&&\sup_{u\in \mathbb{R}}\left|\Pr\left(\left|\frac{1}{\sqrt{T}}\sum_{t=1}^{T}\mathbf{B}\pmb{\varepsilon}_{t}\right|_{\infty} \leq u\right) - \Pr\left(\left|\frac{1}{\sqrt{T}}\sum_{t=1}^{T}\mathbf{z} _t\right|_{\infty} \leq u\right)\right| \notag \\
&=&O\left( \left(\frac{ N^{2/J}(\log N)^5}{T}\right)^{1/4}+\sqrt{\frac{N^{2/J}(\log N)^{3-2/J}}{T^{1-2/J}}}\right).
\end{eqnarray*}

Consider $I_3$. Using Lemma A.1 in \cite{chernozhukov2017central}, we have
$$
\sup_{u\in \mathbb{R}}\left| \Pr\left(\left|\left|\frac{1}{\sqrt{T}}\sum_{t=1}^{T}\mathbf{z} _t\right|_{\infty}-u\right| \leq \epsilon \right)\right| = O(\epsilon \sqrt{\log N}) = o \left(\sqrt{\frac{N^{2/J}(\log N)^{3-2/J}}{T^{1-2/J}}}\right)
$$
when choosing $\epsilon = \sqrt{\frac{N^{2/J}}{T^{1-2/J}}}$. The proof is now completed.
\end{proof}

\medskip

\begin{proof}[Proof of Theorem \ref{THM.6}]
\item 

To complete this theorem, it is sufficient to prove 

\[
|\widehat{\bm{\Omega}}-\bm{\Omega}|_{\max} = O_P(\sqrt{\widetilde{m}\log N/T}).
\] 
Then Theorem \ref{THM.6} follows from Theorem \ref{THM.5} and Proposition 2.1 in \cite{chernozhuokov2022improved}.

Define $\mathbf{y}_t = (y_{1t},\ldots,y_{Nt})^\top = \sum_{\ell=0}^{\infty}\mathbf{B}_{\ell}\bm{\varepsilon}_{t-\ell}$. Let $y_{it}^*$ be the coupled version of $y_{it}$ with $\bm{\varepsilon}_0^*$ replacing $\bm{\varepsilon}_0$ and $\mathbf{B}_\ell = \left(b_{\ell,ij} \right)_{1\leq i,j\leq N}$. By using Rosenthal inequality for independent variables, we have

\begin{eqnarray*}
&&\left(E(y_{it}y_{jt}-y_{it}^*y_{jt}^*)^J\right)^{1/J} \leq  O(1)\max_{i} \left(E(y_{it}-y_{it}^*)^J\right)^{1/J}\notag\\
&\leq&O(1)\max_i\left(E(\sum_{j=1}^{N}b_{t,ij}\varepsilon_{j,0})^{J}\right)^{1/J} \notag\\
&\leq& O(1)\max_i\left[\left(\sum_{j=1}^{N}E(b_{t,ij}\varepsilon_{j0})^J \right)^{1/J} + \left(\sum_{j=1}^{N}E(b_{t,ij}\varepsilon_{j0})^2 \right)^{1/2}  \right] \notag\\
&\leq& O(1)\max_i\|\mathbf{b}_{t i}^{\sharp}\|_2.
\end{eqnarray*}

Then by using Lemma A.8 (1) of \cite{gao2024robust}, if $\frac{N^2T\log T}{(T\widetilde{m}\log N)^{J/4}}\to 0$, we have
$$
\max_{1\leq i,j\leq N}\left|\frac{1}{T}\sum_{t,s=1}^{T}a\left(\frac{t-s}{\widetilde{m}}\right)(y_{it}y_{js}-E(y_{it}y_{js})) \right| = O_P\left(\sqrt{\widetilde{m}\log N/T}\right).
$$
Let $\bm{\Omega} = \{\omega_{ij}\}_{1\leq i,j\leq N}$. By using standard arguments for the bias term (e.g., the proof of Theorem 2.2 in \citealp{gao2024robust}), we have 
\[
\left|\frac{1}{T}\sum_{t,s=1}^{T}a\left(\frac{t-s}{\widetilde{m}}\right)E(y_{it}y_{js}) - \omega_{ij} \right| = O(\widetilde{m}^{-q_{\alpha}}).
\]

Then, to complete the proof, it is sufficient to show 
\begin{eqnarray*}
\max_{1\leq i,j\leq N}\left|\frac{1}{T}\sum_{t,s=1}^{T}a\left(\frac{t-s}{\widetilde{m}}\right)((x_{it}-\mu_i)(x_{js}-\mu_j)-(x_{it}-\overline{x}_i)(x_{js}-\overline{x}_j)) \right| = O_P\left(\sqrt{\widetilde{m}\log N/T}\right).
\end{eqnarray*}

By Lemma 1 (1) in \cite{gao2024robust} we have
\begin{eqnarray*}
&&\max_{1\leq i,j\leq N}\left|\frac{1}{T}\sum_{t,s=1}^{T}a\left(\frac{t-s}{\widetilde{m}}\right)((x_{it}-\mu_i)(x_{js}-\mu_j)-(x_{it}-\overline{x}_i)(x_{js}-\overline{x}_j))\right|\notag\\
&\leq&2\max_{1\leq j\leq N}\left|\frac{1}{T}\sum_{t,s=1}^{T}a\left(\frac{t-s}{\widetilde{m}}\right)y_{js}\right|\max_{1\leq i\leq N}\left|\mu_i-\overline{x}_i\right| + \max_{1\leq i\leq N}(\mu_i-\overline{x}_i)^2\left|\frac{1}{T}\sum_{t,s=1}^{T}a\left(\frac{t-s}{\widetilde{m}}\right) \right| \notag\\
&=& O_P\left(\widetilde{m}\log N/T\right). \notag
\end{eqnarray*}

The proof is now completed.
\end{proof}

\medskip

\begin{proof}[Proof of Proposition \ref{LM.group}]
\item     

(1). Using BN decomposition, we obtain that

\begin{eqnarray*}
    \max_i|\overline{x}_i -\mu_i|&=& \frac{1}{T}\left|\mathbf{e}_i^\top \mathbf{B} \sum_{s=1}^T\pmb{\varepsilon}_{s}-\mathbf{e}_i^\top\widetilde{\mathbf{B}}(L)\pmb{\varepsilon}_{T}+\mathbf{e}_i^\top\widetilde{\mathbf{B}}(L)\pmb{\varepsilon}_{0} \right|.
\end{eqnarray*}
By Corollary 2.1 of \cite{JSW_2023}, we can obtain that

\begin{eqnarray*}
    \Pr\left(\max_i\left|\frac{1}{\sqrt{\sum_{s=1}^T E[\zeta_{is}^2]}} \sum_{s=1}^T\zeta_{is}\right| \ge \epsilon\right) &\le &\sum_{i=1}^N \Pr\left(\left|\frac{1}{\sqrt{\sum_{s=1}^T E[\zeta_{is}^2]}} \sum_{s=1}^T\zeta_{is}\right|\ge \epsilon\right)\notag \\
    &\le & 2N \exp(-\epsilon^2/4)= 2N \exp(\log(1/N^c))\notag \\
    &=&2N/N^c\to 0,
\end{eqnarray*}
where $\zeta_{is}\coloneqq \mathbf{e}_i^\top \mathbf{B} \pmb{\varepsilon}_{s}$, and $\epsilon=2\sqrt{\log(N^c)}$ with $c>1$. After neglecting the residuals from the BN decomposition and using Assumption \ref{AS.group}.1, the first result is proved.

\medskip

(2). Let $\overline{\pmb{\mu}} =(\mu_1,\ldots, \mu_{J_0})$. Since $J_0$ is known, we immediately obtain that 

\begin{eqnarray*}
     0\le S(\widehat{\mathcal{G}}, \widehat{\pmb{\nu}})\le S(\mathscr{G}, \overline{\pmb{\mu}})= O_P( \log(N)/T  +c_{N}^2),
\end{eqnarray*}
where the last step is due to Assumption \ref{AS.group}.1 and the first result of this lemma. If one individual is allocated in a wrong group, then we can always show that $S(\mathscr{G}, \overline{\pmb{\mu}})$ is a better option. Therefore, we conclude that $\Pr(\widehat{\mathcal{G}}_j=\mathscr{G}_j)\to 1$ for all $j\in [J_0]$.

\medskip

(3). Without loss of generality, we consider two cases: (i) $J=J_0-1$ and (ii) $J=J_0+1$. For case (i) (i.e.,$J= J_0-1$), we write  

\begin{eqnarray*}
    &&S(\widehat{\mathcal{G}}_{\mid J}, \widehat{\pmb{\nu}}_{\mid J})+\rho_{NT}J \notag \\
    &= &\frac{1}{N}\sum_{j=1}^J\sum_{i\in \widehat{\mathcal{G}}_{j\mid J}}| \overline{x}_i- \sum_{j}\overline{\mu}_j I(i\in \mathscr{G}_j)+\sum_{j}\overline{\mu}_j I(i\in \mathscr{G}_j) - \widehat{\nu}_{j\mid J}|^2+\rho_{NT}J  \notag \\
    &\ge & \frac{1}{N}\sum_{j=1}^J\sum_{i\in \widehat{\mathcal{G}}_{j\mid J}}| \sum_{j}\overline{\mu}_j I(i\in \mathscr{G}_j) - \widehat{\nu}_{j\mid J}|^2 -o_P (1)\notag \\
    &\ge &c^*-o_P (1)\notag \\
    &> &S(\mathscr{G}, \overline{\pmb{\mu}})+\rho_{NT}J_0,
\end{eqnarray*}
where the first inequality follows from the first result of this lemma, and we must be able to find a $c^*$ using Assumption \ref{AS.group}.2 and $J=J_0-1$.

For case (ii) (i.e.,$J= J_0+1$), write

\begin{eqnarray*}
    &&S(\widehat{\mathcal{G}}_{\mid J}, \widehat{\pmb{\nu}}_{\mid J})+\rho_{NT}(J_0+1)\notag \\
    &=& \frac{1}{N}\sum_{j=1}^J\sum_{i\in \widehat{\mathcal{G}}_{j\mid J}}| \sum_{j}\overline{\mu}_j I(i\in \mathscr{G}_j)-\widehat{\nu}_{j\mid J} |^2+\rho_{NT} (J_0+1)+O_P(\sqrt{\log(N)/T} +c_{N})\notag \\
    &> &S(\mathscr{G}, \overline{\pmb{\mu}})+\rho_{NT}J_0,
\end{eqnarray*}
where the last step follows in view of the fact that $\rho_{NT}/(\sqrt{\log(N)/T} +c_{N})\to \infty$.

The proof is now completed.
\end{proof}

\medskip

\begin{proof}[Proof of Proposition \ref{P2}]
\item    

Write

\begin{eqnarray*}
&&\frac{1}{NT^2}\sum_{t=1}^T \mathbf{y}_t ^{\top} \mathbf{y}_t \notag \\
&=&\frac{1}{NT^2}\sum_{t=1}^T \left( \mathbf{B}  \sum_{s=1}^t\pmb{\varepsilon}_{s} -\widetilde{\mathbf{B}} (L)\pmb{\varepsilon}_{t} +\widetilde{\mathbf{B}} (L)\pmb{\varepsilon}_{0} \right)^\top \left( \mathbf{B}  \sum_{s=1}^t\pmb{\varepsilon}_{s} -\widetilde{\mathbf{B}} (L)\pmb{\varepsilon}_{t} +\widetilde{\mathbf{B}} (L)\pmb{\varepsilon}_{0} \right) ,
\end{eqnarray*}
where the leading term is obviously $\frac{1}{NT^2}\sum_{t=1}^T \sum_{s_1=1}^t \sum_{s_2=1}^t \pmb{\varepsilon}_{s_1} ^{\top}\mathbf{B} ^{\top}\mathbf{B} \pmb{\varepsilon}_{s_2}  $.

\medskip

We now focus on the leading term $\frac{1}{NT^2}\sum_{t=1}^T\sum_{s_1=1}^t\sum_{s_2=1}^t\pmb{\varepsilon}_{s_1} ^{\top}\mathbf{B} ^{\top}\mathbf{B} \pmb{\varepsilon}_{s_2} $ in what follows. Firstly, write

\begin{eqnarray*}
&&\frac{1}{NT^2}\sum_{t=1}^TE  \left[\sum_{s_1=1}^t\pmb{\varepsilon}_{s_1} ^{\top}\mathbf{B} ^{\top}\mathbf{B} \sum_{s_2=1}^t\pmb{\varepsilon}_{s_2} \right] = \frac{1}{NT^2}\sum_{t=1}^T\sum_{s=1}^tE  \left[\pmb{\varepsilon}_{s} ^{\top}\mathbf{B} ^{\top}\mathbf{B}  \pmb{\varepsilon}_{s} \right]\notag \\
&=&\frac{1}{NT^2}\sum_{t=1}^T\sum_{s=1}^t \text{vec}(\mathbf{B} ^{\top}\mathbf{B}  )^\top E [\pmb{\varepsilon}_{s} \otimes\pmb{\varepsilon}_{s} ] = \frac{1}{NT^2}\sum_{t=1}^T\sum_{s=1}^t \text{vec}(\mathbf{B} ^{\top}\mathbf{B}  )^\top (\mathbf{e}_1^\top,\ldots, \mathbf{e}_N^\top)^\top \notag  \\
&= &\frac{1}{NT^2}\sum_{t=1}^T\sum_{s=1}^t \|\mathbf{B} \|^2 = \left(\int_0^1x \mathrm{d}x+ O\left(\frac{1}{T}\right) \right) \frac{1}{N}\|\mathbf{B} \|^2 \to \frac{b }{2},
\end{eqnarray*}
where the second equality follows from the vectorization operation, the fifth equality follows from the definition of Riemann integral, and the last step follows  from the condition $\lim_{N}\frac{1}{N}\|\mathbf{B} \|^2\to b$.

We further note a few facts:

\begin{enumerate}[leftmargin=24pt, parsep=2pt, topsep=2pt]
\item $E\big[(\pmb{\varepsilon}_{s_1} ^{\top}\mathbf{B} ^{\top}\mathbf{B} \pmb{\varepsilon}_{s_2} - E[\pmb{\varepsilon}_{s_1} ^{\top}\mathbf{B} ^{\top}\mathbf{B} \pmb{\varepsilon}_{s_2} ]) (\pmb{\varepsilon}_{s_3} ^{\top}\mathbf{B} ^{\top}\mathbf{B} \pmb{\varepsilon}_{s_4} - E[\pmb{\varepsilon}_{s_3} ^{\top}\mathbf{B} ^{\top}\mathbf{B} \pmb{\varepsilon}_{s_4} ])\big]=0$ when 

\begin{enumerate}[leftmargin=24pt, parsep=2pt, topsep=2pt]
  \item three or four of $s_1, s_2, s_3, s_4$ are mutually different;
  \item $s_1=s_2$, and $s_3\ne s_1$ (or $s_4\ne s_1$);
  \item $s_1\ne s_2$, and $s_4=s_3=s_1$.
\end{enumerate}

\item for $s_1\ne s_2$, $E[\pmb{\varepsilon}_{s_1} ^{\top}\mathbf{B} ^{\top}\mathbf{B} \pmb{\varepsilon}_{s_2} ]^2 =E[\pmb{\varepsilon}_{s_1} ^{\top} \mathbf{B} ^{\top}\mathbf{B} \mathbf{B} ^{\top}\mathbf{B} \pmb{\varepsilon}_{s_1}^{u }] \le  \|\mathbf{B} \|_2^4 N =O(N)$.

\item Note that 

\begin{eqnarray*}
&&E|\pmb{\varepsilon}_{s} ^{\top}\mathbf{B} ^{\top}\mathbf{B} \pmb{\varepsilon}_{s} |^2 = \text{vec}(\mathbf{B} ^{\top}\mathbf{B}  )^\top E [(\pmb{\varepsilon}_{s} \otimes\pmb{\varepsilon}_{s}) (\pmb{\varepsilon}_{s}^{\top}\otimes\pmb{\varepsilon}_{s}^{\top})] \text{vec}(\mathbf{B} ^{\top}\mathbf{B}  )\notag \\
&\le & O(N)\text{vec}(\mathbf{B} ^{\top}\mathbf{B}  )^\top \text{vec}(\mathbf{B} ^{\top}\mathbf{B}) = O(N)\tr(\mathbf{B} ^{\top}\mathbf{B} \mathbf{B} ^{\top}\mathbf{B} )\notag \\
&\le &O(N) \text{vec}(\mathbf{B} ^{\top})^\top (\mathbf{B} \otimes \mathbf{B} ^{\top}) \text{vec}(\mathbf{B} ^{\top}) \le O(N)\|\text{vec}(\mathbf{B} ^{\top})\|^2 = O(N^2),
\end{eqnarray*}
where the first inequality follows from Lemma \ref{LM.A6}, the second equality follows from the fact that $\tr(\mathbf{C}^\top\mathbf{D}) = \text{vec}(\mathbf{C})^\top \text{vec}(\mathbf{D})$ for $\forall \mathbf{C},\mathbf{D}\in \mathbb{R}^{N\times N}$, the second inequality follows from the fact that $\tr(\mathbf{A}_1\mathbf{A}_2\mathbf{A}_3\mathbf{A}_4) =\text{vec}(\mathbf{A}_1)^\top(\mathbf{A}_2\otimes \mathbf{A}_4^\top )\text{vec}(\mathbf{A}_3^\top)$ for any conformable matrices $\mathbf{A}_1,\mathbf{A}_2,\mathbf{A}_3,\mathbf{A}_4$ (\citealp[p. 253]{Bernstein}), and the last step follows from the condition $\lim_{N}\frac{1}{N}\|\mathbf{B} \|^2\to b$.
\end{enumerate}

We then write
\begin{eqnarray*}
&&E\left[\frac{1}{NT^2}\sum_{t=1}^T\sum_{s_1=1}^t\sum_{s_2=1}^t(\pmb{\varepsilon}_{s_1} ^{\top}\mathbf{B} ^{\top}\mathbf{B} \pmb{\varepsilon}_{s_2} - E[\pmb{\varepsilon}_{s_1} ^{\top}\mathbf{B} ^{\top}\mathbf{B} \pmb{\varepsilon}_{s_2} ])\right]^2\notag \\
&=&\frac{1}{N^2T^4}\sum_{t_1=1}^T\sum_{t_2=1}^T\sum_{s_1=1}^{t_1}\sum_{s_2=1}^{t_1}\sum_{s_3=1}^{t_2}\sum_{s_4=1}^{t_2}E\big[(\pmb{\varepsilon}_{s_1} ^{\top}\mathbf{B} ^{\top}\mathbf{B} \pmb{\varepsilon}_{s_2} - E[\pmb{\varepsilon}_{s_1} ^{\top}\mathbf{B} ^{\top}\mathbf{B} \pmb{\varepsilon}_{s_2} ]) \notag \\
&&\cdot (\pmb{\varepsilon}_{s_3} ^{\top}\mathbf{B} ^{\top}\mathbf{B} \pmb{\varepsilon}_{s_4} - E[\pmb{\varepsilon}_{s_3} ^{\top}\mathbf{B} ^{\top}\mathbf{B} \pmb{\varepsilon}_{s_4} ])\big] \notag \\
&=&\frac{1}{N^2T^4}\sum_{t_1=1}^T\sum_{t_2=1}^T\sum_{s_1=1}^{t_1} E|\pmb{\varepsilon}_{s_1} ^{\top}\mathbf{B} ^{\top}\mathbf{B} \pmb{\varepsilon}_{s_1} - E[\pmb{\varepsilon}_{s_1} ^{\top}\mathbf{B} ^{\top}\mathbf{B} \pmb{\varepsilon}_{s_1} ]|^2 \notag \\
&&+\frac{1}{N^2T^4}\sum_{t_1=1}^T\sum_{t_2=1}^T\sum_{ s_1,s_2=1, s_1\ne s_2}^{t_1} E|\pmb{\varepsilon}_{s_1} ^{\top}\mathbf{B} ^{\top}\mathbf{B} \pmb{\varepsilon}_{s_2} - E[\pmb{\varepsilon}_{s_1} ^{\top}\mathbf{B} ^{\top}\mathbf{B} \pmb{\varepsilon}_{s_2} ]|^2\notag \\
&\le &O(1)\frac{1}{N^2T}(E|\pmb{\varepsilon}_{s_1} ^{\top}\mathbf{B} ^{\top}\mathbf{B} \pmb{\varepsilon}_{s_1} |^2+E|\pmb{\varepsilon}_{s_1} ^{\top}\mathbf{B} ^{\top}\mathbf{B} \pmb{\varepsilon}_{s_2} |^2) = O(1)\frac{1}{T},
\end{eqnarray*}
where the last step follows from the above facts.

Putting everything together, we immediately obtain that $\frac{1}{NT^2}\sum_{t=1}^T \mathbf{y}_t ^{\top} \mathbf{y}_t \to_P\frac{b }{2}.$
\end{proof}

\medskip

\begin{proof}[Proof of Proposition \ref{P1}]

\item 
    
In what follows, we focus on $\widehat{\pmb{\theta}}_i$, and write

\begin{eqnarray*}
\widehat{\pmb{\theta}}_i-\pmb{\theta}_i &=& (\mathbf{W}_i^\top \mathbf{M}_{\overline{\mathbf{Z}}} \mathbf{W}_i)^{-1} \mathbf{W}_i^\top \mathbf{M}_{\overline{\mathbf{Z}}}\mathbf{F}\pmb{\gamma}_i  +  (\mathbf{W}_i^\top \mathbf{M}_{\overline{\mathbf{Z}}} \mathbf{W}_i)^{-1} \mathbf{W}_i^\top \mathbf{M}_{\overline{\mathbf{Z}}} \pmb{\epsilon}_i.
\end{eqnarray*}

We study $\mathbf{M}_{\overline{\mathbf{Z}}}$ first.

\begin{eqnarray*}
\mathbf{M}_{\overline{\mathbf{Z}}} &=&\mathbf{I}_T-(\mathbf{F}\overline{\mathbf{C}}+\overline{\mathbf{U}})[(\mathbf{F}\overline{\mathbf{C}}+\overline{\mathbf{U}})^\top (\mathbf{F}\overline{\mathbf{C}}+\overline{\mathbf{U}})]^+ (\mathbf{F}\overline{\mathbf{C}}+\overline{\mathbf{U}})^\top.
\end{eqnarray*}
We now investigate $(\mathbf{F}\overline{\mathbf{C}}+\overline{\mathbf{U}})^\top (\mathbf{F}\overline{\mathbf{C}}+\overline{\mathbf{U}})$, and write

\begin{eqnarray*}
(\mathbf{F}\overline{\mathbf{C}}+\overline{\mathbf{U}})^\top (\mathbf{F}\overline{\mathbf{C}}+\overline{\mathbf{U}})=\overline{\mathbf{C}}^\top \mathbf{F}^\top \mathbf{F} \overline{\mathbf{C}}+\overline{\mathbf{U}}^\top \overline{\mathbf{U}} + \overline{\mathbf{C}}^\top \mathbf{F}^\top \overline{\mathbf{U}}+ \overline{\mathbf{U}}^\top  \mathbf{F} \overline{\mathbf{C}}.
\end{eqnarray*}
By Assumption \ref{AS5}.1,

\begin{eqnarray*}
\frac{1}{T} (\mathbf{F}\overline{\mathbf{C}}+\overline{\mathbf{U}})^\top (\mathbf{F}\overline{\mathbf{C}}+\overline{\mathbf{U}}) = \frac{1}{T}\overline{\mathbf{C}}^\top \mathbf{F}^\top \mathbf{F} \overline{\mathbf{C}}+O_P\left(\frac{1}{N}+\frac{1}{\sqrt{NT}}\right).
\end{eqnarray*}

Also, note that by Assumptions \ref{AS5}.1

\begin{eqnarray*}
\left( \overline{\mathbf{C}}^\top \frac{\mathbf{F}^\top \mathbf{F} }{T}\overline{\mathbf{C}}\right)^+ = \overline{\mathbf{C}}^+ \left(\frac{\mathbf{F}^\top \mathbf{F} }{T}\right)^+ (\overline{\mathbf{C}}^\top)^+ ,
\end{eqnarray*}
where $\overline{\mathbf{C}}^+ =\overline{\mathbf{C}}^\top(\overline{\mathbf{C}}\overline{\mathbf{C}}^\top)^{-1}$ and $(\overline{\mathbf{C}}^\top)^+ =(\overline{\mathbf{C}}\overline{\mathbf{C}}^\top)^{-1}\overline{\mathbf{C}}$.
Thus, similar to (40) of \cite{Pesaran2006}, it is straightforward to obtain that

\begin{eqnarray*}
\frac{1}{T} \mathbf{W}_i^\top \mathbf{M}_{\overline{\mathbf{Z}}}\mathbf{F}\pmb{\gamma}_i =O_P\left(\frac{1}{N}+\frac{1}{\sqrt{NT}}\right).
\end{eqnarray*}


We then focus on $\frac{1}{T}\mathbf{W}_i^\top \mathbf{M}_{\overline{\mathbf{Z}}} \pmb{\epsilon}_i$, and write

\begin{eqnarray*}
\frac{1}{T}\mathbf{W}_i^\top \mathbf{M}_{\overline{\mathbf{Z}}} \pmb{\epsilon}_i &=&\frac{1}{T}\mathbf{W}_i^\top \mathbf{M}_{\mathbf{F}} \pmb{\epsilon}_i + O_P\left(\frac{1}{N}+\frac{1}{\sqrt{NT}}\right)\notag \\
&=&\frac{1}{T}\mathbf{V}_i^\top  \pmb{\epsilon}_i + O_P\left(\frac{1}{N}+\frac{1}{T}+\frac{1}{\sqrt{NT}}\right),
\end{eqnarray*}
where the first equality follows from (45) of \cite{Pesaran2006}, and the second equality follows from Assumption \ref{AS5}.1. 

Note further that Assumption \ref{AS5}.2 ensures that $\mathbf{v}_{it}\epsilon_{it}$ admits an MA($\infty$) process via the second order BN decomposition. See \eqref{def.BN2} and \citet[p. 978]{PS1992} for example. Thus, by the proofs of Lemma \ref{LM.A6}, it is easy to know that $\frac{1}{T}\mathbf{V}_i^\top  \pmb{\epsilon}_i$ can be further decomposed via the second order BN decomposition. Also, we note that $\widehat{\pmb{\theta}}-\pmb{\theta} =O_P(\frac{1}{\sqrt{NT}})$ under the null and the condition $N\asymp T$ (see \citealp{Westerlund2018} for detailed development). Finally, putting everything together, we invoke Theorem \ref{THM.6} and Assumption \ref{AS5}.2, and the result follows immediately.
\end{proof}

\section{Proofs of the Preliminary Lemmas}\label{SecA4}

\begin{proof}[Proof of Lemma \ref{LM.A4}]
\item 

First, we note that the Probabilist's Hermite polynomials 

\[
\left\{H_n(x)\mid (-1)^n \exp\left(\frac{x^2}{2}\right) \frac{\mathrm{d}^n}{\mathrm{d}x^n}\widetilde{\phi}(x)\ \text{ for }\ n \ge 0\right\}
\]
have the following generating function:

\[
\exp\left( xu-\frac{u^2}{2}\right)  = \sum_{n=0}^\infty H_n(x)\frac{u^n}{n!}.
\]
Thus,

\begin{eqnarray}\label{def.HMP}
\sum_{n=0}^\infty\phi(x) H_n(x)\frac{u^n}{n!}&=&\frac{1}{\sqrt{2\pi}}\exp\left(-\frac{x^2}{2}+xu-\frac{u^2}{2}\right) \notag \\
&=& \frac{1}{\sqrt{2\pi}}\exp\left(-\frac{(x-u)^2}{2}\right) .
\end{eqnarray}

We now apply the Fourier transformation to both sides of \eqref{def.HMP}, and note that the right hand side can be further written as follows:
\begin{eqnarray*}
&& \int_{\mathbb{R}}\exp(\mathsf{i}w)\cdot\frac{1}{\sqrt{2\pi}}\exp\left(-\frac{(x-u)^2}{2}\right) \mathrm{d}x = \exp\left(\mathsf{i}w u-\frac{1}{2}w^2\right)\notag \\
&& = \exp(\mathsf{i}w u)\widetilde{\phi}(w) = \sum_{n=0}^\infty\widetilde{\phi}(w) \frac{(\mathsf{i}wu)^n}{n!} =  \sum_{n=0}^\infty\widetilde{\phi}(w)(\mathsf{i}w)^n\cdot \frac{u^n}{n!},
\end{eqnarray*}
where the first equality is obvious in view of the characteristic function of the normal distribution, and the third equality follows from \eqref{def.expu}. 

By comparing the Fourier transformation of both sides of \eqref{def.HMP}, the result follows immediately.
\end{proof}

\medskip

\begin{proof}[Proof of Lemma \ref{LM.A5}]
\item

(1).a The expression $\mathbf{B}(L) = \mathbf{B} -(1-L)\widetilde{\mathbf{B}}(L)$ of \eqref{def.BN1} is the so-called BN decomposition from \cite{PS1992}. We then write

\begin{eqnarray*}
\sum_{s=1}^t \mathbf{x}_{s} &=&  \sum_{s=1}^t [ \mathbf{B} -(1-L)\widetilde{\mathbf{B}}(L)] \pmb{\varepsilon}_{s} \notag \\
&=&\mathbf{B} \sum_{s=1}^t\pmb{\varepsilon}_{s} - \sum_{s=1}^t\widetilde{\mathbf{B}}(L)\pmb{\varepsilon}_{s}+ \sum_{s=1}^t \widetilde{\mathbf{B}}(L)\pmb{\varepsilon}_{s-1}\notag \\
&=&\mathbf{B} \sum_{s=1}^t\pmb{\varepsilon}_{s}-\widetilde{\mathbf{B}}(L)\pmb{\varepsilon}_{t}+\widetilde{\mathbf{B}}(L)\pmb{\varepsilon}_{0}.
\end{eqnarray*}

By Assumption \ref{AS1}.2,

\begin{eqnarray*} 
\sum_{\ell=0}^{\infty}\sqrt{\frac{N}{L_N}}\|\widetilde{\mathbf{B}}_\ell\|_2\le \sum_{\ell=0}^{\infty}\sqrt{\frac{N}{L_N}}\sum_{k=\ell+1}^{\infty}\| \mathbf{B}_k\|_2=\sum_{\ell=1}^{\infty}\ell \sqrt{\frac{N}{L_N}}\| \mathbf{B}_\ell\|_2=\sum_{\ell=1}^{\infty}\ell\cdot C_{N\ell}<\infty.
\end{eqnarray*}

\medskip

\noindent (1).b According to (1).a, write

\begin{eqnarray*}
\sum_{s=1}^t \mathbf{x}_{s}&=&\mathbf{B} \sum_{s=1}^t\pmb{\varepsilon}_{s}-\sum_{\ell=0}^{t-1}\widetilde{\mathbf{B}}_\ell \pmb{\varepsilon}_{t-\ell}-\sum_{\ell=t}^{\infty}\widetilde{\mathbf{B}}_\ell \pmb{\varepsilon}_{t-\ell}+\sum_{\ell=0}^{\infty}\widetilde{\mathbf{B}}_\ell \pmb{\varepsilon}_{-\ell} \notag \\
&=&\mathbf{B} \sum_{\ell=1}^t\pmb{\varepsilon}_{\ell}-\sum_{\ell=1}^{t}\widetilde{\mathbf{B}}_{t-\ell} \pmb{\varepsilon}_{\ell}-\sum_{\ell=0}^{\infty}\widetilde{\mathbf{B}}_{t+\ell} \pmb{\varepsilon}_{-\ell}+\sum_{\ell=0}^{\infty}\widetilde{\mathbf{B}}_\ell \pmb{\varepsilon}_{-\ell} \notag \\
&=& \sum_{\ell=1}^t (\mathbf{B}-\widetilde{\mathbf{B}}_{t-\ell})\pmb{\varepsilon}_{\ell}-\sum_{\ell=0}^{\infty}(\widetilde{\mathbf{B}}_{t+\ell} -\widetilde{\mathbf{B}}_\ell)\pmb{\varepsilon}_{-\ell}\notag\\
&=& \sum_{\ell=1}^t (\mathbf{B}-\widetilde{\mathbf{B}}_{t-\ell})\pmb{\varepsilon}_{\ell}-\sum_{\ell=-\infty}^{0}(\widetilde{\mathbf{B}}_{t-\ell} -\widetilde{\mathbf{B}}_{-\ell})\pmb{\varepsilon}_{\ell} \eqqcolon  \sum_{\ell=-\infty}^t \pmb{\mathcal{B}}_{t\ell} \pmb{\varepsilon}_\ell.
\end{eqnarray*}

\medskip

(2).  Note that for a given vector $\mathbf{v}$, $|\mathbf{v}|_2\le |\mathbf{v}|_1$ in which $|\mathbf{v}|_j$ with $j=1,2$ defines its $L^j$ norm. In connection with the fact that $\sum_{\ell=0}^{\infty}\sqrt{\frac{N}{L_N}}\|\widetilde{\mathbf{B}}_\ell\|_2<\infty $ of the first result, we obtain that $\sum_{\ell=0}^{\infty} \frac{N}{L_N}\|\widetilde{\mathbf{B}}_\ell\|_2^2< \infty.$ We are now able to write

\begin{eqnarray*}
E\left|\frac{1}{\sqrt{L_N}} \sum_{\ell=-\infty}^0 \mathbf{1}_N^\top\pmb{\mathcal{B}}_{T\ell} \pmb{\varepsilon}_{\ell}\right|^2 &=&\frac{1}{L_N}\sum_{\ell=-\infty}^0\mathbf{1}_N^\top\pmb{\mathcal{B}}_{T\ell}\pmb{\mathcal{B}}_{T\ell}^\top \mathbf{1}_N\notag \\
&=&\frac{1}{L_N}\sum_{\ell=-\infty}^0\mathbf{1}_N^\top (\widetilde{\mathbf{B}}_{T-\ell} -\widetilde{\mathbf{B}}_{-\ell})(\widetilde{\mathbf{B}}_{T-\ell} -\widetilde{\mathbf{B}}_{-\ell})^\top \mathbf{1}_N\notag \\
&\le &\frac{2}{L_N}\sum_{\ell=T}^{\infty}\|\mathbf{1}_N\|_2^2 \|\widetilde{\mathbf{B}}_{\ell}\|_2^2+\frac{2}{L_N}\sum_{\ell=0}^\infty\|\mathbf{1}_N\|_2^2 \|\widetilde{\mathbf{B}}_{\ell}\|_2^2\notag \\
&\le &\frac{4N}{L_N}\sum_{\ell=0}^\infty \|\widetilde{\mathbf{B}}_{\ell}\|_2^2 <\infty.
\end{eqnarray*}

\medskip

(3).  By the first result of this lemma, we have

\begin{eqnarray*}
E\left| \frac{1}{\sqrt{L_N}}\sum_{t=1}^T \mathbf{1}_N^\top\widetilde{\mathbf{B}}_{T-\ell} \pmb{\varepsilon}_{\ell} \right|^2 &=&\frac{1}{L_N}\mathbf{1}_N^\top \left(\sum_{t=1}^T \widetilde{\mathbf{B}}_{T-\ell}\right) \left(\sum_{t=1}^T \widetilde{\mathbf{B}}_{T-\ell}\right)^\top\mathbf{1}_N\notag\\
&\le &\left(\sqrt{\frac{N}{L_N}}\sum_{\ell=0}^{T-1}\|\widetilde{\mathbf{B}}_{\ell}\|_2\right)^2 \le \left(\sqrt{\frac{N}{L_N}}\sum_{\ell=0}^{\infty}\|\widetilde{\mathbf{B}}_{\ell}\|_2\right)^2<\infty.
\end{eqnarray*}
The proof is now completed.
\end{proof}

\medskip

\begin{proof}[Proof of Lemma \ref{LM.A6}]

\item 

(1). For simplicity, we drop index $t$, so write $\pmb{\varepsilon}=(\varepsilon_1,\ldots, \varepsilon_N)^\top$. It is obvious that

\begin{eqnarray*}
&&\| E[(\pmb{\varepsilon} \otimes \pmb{\varepsilon} )(\pmb{\varepsilon} ^\top \otimes \pmb{\varepsilon} ^\top)]\|_2\le \sqrt{\|E[(\pmb{\varepsilon} \otimes \pmb{\varepsilon} )(\pmb{\varepsilon} ^\top \otimes \pmb{\varepsilon} ^\top)]\|_1\|E[(\pmb{\varepsilon} \otimes \pmb{\varepsilon} )(\pmb{\varepsilon} ^\top \otimes \pmb{\varepsilon} ^\top)]\|_\infty}=O(N).
\end{eqnarray*}

\medskip

(2). Write

\begin{eqnarray*}
\frac{1}{L_NT}\sum_{t=1}^T \mathbf{B}_0^*(L)\text{vec}(\pmb{\varepsilon}_{t}  \pmb{\varepsilon}_{t}^\top)&=& \frac{1}{L_NT}\sum_{t=1}^T\mathbf{B}_0^* \text{vec}(\pmb{\varepsilon}_{t}  \pmb{\varepsilon}_{t}^\top)-\frac{1}{L_NT} \widetilde{\mathbf{B}}_0^*(L) \text{vec}(\pmb{\varepsilon}_{T}  \pmb{\varepsilon}_{T}^\top)\notag \\
&&+ \frac{1}{L_NT} \widetilde{\mathbf{B}}_0^*(L) \text{vec}(\pmb{\varepsilon}_{0} \pmb{\varepsilon}_{0}^\top).
\end{eqnarray*}

Note that
\begin{eqnarray*}
&&E[\widetilde{\mathbf{B}}_0(L) \text{vec}(\pmb{\varepsilon}_{T}  \pmb{\varepsilon}_{T}^\top) \text{vec}(\pmb{\varepsilon}_{T}  \pmb{\varepsilon}_{T}^\top)^\top \widetilde{\mathbf{B}}_0(L)^\top ]\notag \\
&\le &  \lambda_{\max}(E[(\pmb{\varepsilon} \otimes \pmb{\varepsilon} )(\pmb{\varepsilon} ^\top \otimes \pmb{\varepsilon} ^\top)])\sum_{\ell=0}^{\infty}\|\widetilde{\mathbf{B}}_{0\ell}^* \|_2 \notag \\
&\le & O(N)\sum_{\ell=0}^{\infty} \sum_{k=\ell+1}^{\infty}\|(\mathbf{1}_N^\top \mathbf{B}_k)\otimes (\mathbf{1}_N^\top \mathbf{B}_{k})\|_2 \le O(1)N^2 \sum_{\ell=0}^{\infty} \ell  \|\mathbf{B}_{\ell}\|_2^2,
\end{eqnarray*}
where the second inequality follows from the first result of this lemma. Thus,
\begin{eqnarray*}
\frac{1}{L_NT} |\widetilde{\mathbf{B}}_0^*(L) \text{vec}(\pmb{\varepsilon}_{T}  \pmb{\varepsilon}_{T}^\top)|&=&O_P(1)\frac{1}{L_NT} \sqrt{N^2 \sum_{\ell=0}^{\infty} \ell  \|\mathbf{B}_{\ell}\|_2^2}   \notag \\
&\le &O_P(1)\frac{1}{\sqrt{L_N}T} \sum_{\ell=0}^{\infty} \frac{\sqrt{N^2 \ell}}{\sqrt{L_N}} \|\mathbf{B}_{\ell}\|_2 = O_P(1)\frac{1}{ T}.
\end{eqnarray*}

Similarly, we have

\begin{eqnarray*}
\frac{1}{L_NT} |\widetilde{\mathbf{B}}_0^*(L) \text{vec}(\pmb{\varepsilon}_0  \pmb{\varepsilon}_0^\top)|= O_P(1)\frac{1}{ T}.
\end{eqnarray*}

Therefore, we need only to consider $\frac{1}{L_NT}\sum_{t=1}^T\mathbf{B}_0^* \text{vec}(\pmb{\varepsilon}_{t}  \pmb{\varepsilon}_{t}^\top)$ in what follows.  By the proof similar to Proposition \ref{P2}, we obtain that 

\begin{eqnarray*}
\left|\frac{1}{L_NT}\sum_{t=1}^T \mathbf{B}_0^*(L)\text{vec}(\pmb{\varepsilon}_{t}  \pmb{\varepsilon}_{t}^\top) -   \frac{1}{L_N} \sum_{\ell=0}^{\infty} \mathbf{1}_N^\top \mathbf{B}_\ell \mathbf{B}_\ell^\top \mathbf{1}_N \right|=O_P\left(\frac{1}{\sqrt{T}}\right).
\end{eqnarray*}

\medskip

(3). Note that

\begin{eqnarray*}
&&\frac{1}{L_NT}\sum_{t=1}^T\sum_{v=1}^\infty \mathbf{B}_v^*(L)\text{vec}(\pmb{\varepsilon}_{t-v} \pmb{\varepsilon}_{t}^\top ) \notag \\
&=&\sum_{v=1}^\infty \frac{1}{L_NT}\sum_{t=1}^T \mathbf{B}_v^* \text{vec}(\pmb{\varepsilon}_{t-v} \pmb{\varepsilon}_{t}^\top )+\sum_{v=1}^\infty \frac{1}{L_NT} \widetilde{\mathbf{B}}_v^*(L) \text{vec}(\pmb{\varepsilon}_{T-v}  \pmb{\varepsilon}_{T}^\top)\notag \\
&&+\sum_{v=1}^\infty \frac{1}{L_NT} \widetilde{\mathbf{B}}_v^*(L) \text{vec}(\pmb{\varepsilon}_{0-v}  \pmb{\varepsilon}_{0}^\top).
\end{eqnarray*}

In what follows, we consider the three terms on the right hand side one by one.

Write 
\begin{eqnarray*}
&&E\left|\sum_{t=1}^T\sum_{v=1}^\infty \mathbf{B}_v^* \text{vec}(\pmb{\varepsilon}_{t-v} \pmb{\varepsilon}_{t}^\top ) \right|^2 = \sum_{t,s=1}^T\sum_{v,k=1}^\infty\mathbf{B}_v^* E[\text{vec}(\pmb{\varepsilon}_{t-v} \pmb{\varepsilon}_{t}^\top )\text{vec}(\pmb{\varepsilon}_{s-k} \pmb{\varepsilon}_{s}^\top )^\top]\mathbf{B}_k^{*\top}\notag \\
&=&\sum_{t=1}^T\sum_{v,k=1}^\infty\mathbf{B}_v^* E[\text{vec}(\pmb{\varepsilon}_{t-v} \pmb{\varepsilon}_{t}^\top )\text{vec}(\pmb{\varepsilon}_{t-k} \pmb{\varepsilon}_{t}^\top )^\top]\mathbf{B}_k^{*\top}\notag \\
&\le &O(1)T\sum_{v=1}^\infty\|\mathbf{B}_v^*\|_2^2\leq O(1)NT\left(\sum_{v=1}^\infty\sum_{\ell=0}^{\infty}  \| \mathbf{B}_\ell\|_2 \|\mathbf{B}_{\ell+v}\|_2\right)^2   \notag \\
&\le &O(1)NT \left(\sum_{\ell=0}^{\infty}  \| \mathbf{B}_\ell\|_2 \right)^2.
\end{eqnarray*}

Thus, we have

\begin{eqnarray*}
\left|\frac{1}{L_NT}\sum_{t=1}^T  \sum_{t=1}^T\sum_{v=1}^\infty \mathbf{B}_v^* \text{vec}(\pmb{\varepsilon}_{t-v} \pmb{\varepsilon}_{t}^\top )\right|=O_P(1)\frac{\sqrt{N}}{\sqrt{L_N}T}\sum_{\ell=0}^{\infty}  \| \mathbf{B}_\ell\|_2 =O_P(1)\frac{1}{\sqrt{T}},
\end{eqnarray*}
where the second equality follows from Assumption \ref{AS1}.  Similar to the proof of the second result, it is easy to know that the term $\sum_{v=1}^\infty \frac{1}{L_NT}\sum_{t=1}^T \mathbf{B}_v^* \text{vec}(\pmb{\varepsilon}_{t-v} \pmb{\varepsilon}_{t}^\top )$ offers the lowest rate. Thus, we have 

\begin{eqnarray*}
\left|\frac{1}{L_NT}\sum_{t=1}^T\sum_{v=1}^\infty \mathbf{B}_v^*(L)\text{vec}(\pmb{\varepsilon}_{t-v} \pmb{\varepsilon}_{t}^\top ) \right|= O_P(1)\frac{1}{\sqrt{T}}.
\end{eqnarray*}

The proof is now completed.
\end{proof}

\medskip

\begin{proof}[Proof of Lemma \ref{LM.A7}]
\item  
(1). We now show that

\begin{eqnarray*}
E^*[S_{NT}^{*2}]= \sigma_x^2+o_P(1),
\end{eqnarray*}
which then infers the desired result. It suffices to show that

\begin{eqnarray}\label{var_Rate1}
\frac{1}{L_NT} \sum_{t=1}^T \mathbf{x}_{t}^\top \mathbf{1}_N\mathbf{1}_N^\top \mathbf{x}_{t}= \frac{1}{L_NT} \sum_{t=1}^T E[\mathbf{x}_{t}^\top \mathbf{1}_N\mathbf{1}_N^\top \mathbf{x}_{t}]+o_P(1),
\end{eqnarray}
and 

\begin{eqnarray}\label{var_Rate2}
\frac{1}{L_NT}\sum_{k=1}^{T-1}\sum_{t=1}^{T-k} \mathbf{x}_{t}^\top \mathbf{W}_{ts}\mathbf{x}_{t+k}=\frac{1}{L_NT}\sum_{k=1}^{T-1}\sum_{t=1}^{T-k} E[\mathbf{x}_{t}^\top \mathbf{1}_N  \mathbf{1}_N^\top \mathbf{x}_{t+k}]+o_P(1).
\end{eqnarray}

We start with \eqref{var_Rate1}. By \eqref{def.xx} and Lemma \ref{LM.A6}, we immediately obtain that

\begin{eqnarray} \label{rate.xx}
\frac{1}{L_NT} \sum_{t=1}^T \mathbf{x}_{t}^\top \mathbf{1}_N\mathbf{1}_N^\top \mathbf{x}_{t}=\frac{1}{L_N} \sum_{\ell=0}^{\infty} \mathbf{1}_N^\top \mathbf{B}_\ell \mathbf{B}_\ell^\top \mathbf{1}_N+O_P\left(\frac{1}{\sqrt{T}}\right).
\end{eqnarray}

Therefore, the proof of \eqref{var_Rate1} is completed.

We then consider \eqref{var_Rate2}. Let $a_{k/m}\coloneqq a\left(\frac{k}{m}\right)$ for notational simplicity. Firstly, we consider

\begin{eqnarray}\label{var_Rate4}
&&E\left|\frac{1}{L_NT}\sum_{k=1}^{T-1}\sum_{t=1}^{T-k} (\mathbf{x}_{t}^\top \mathbf{W}_{ts}\mathbf{x}_{t+k}-E[\mathbf{x}_{t}^\top \mathbf{W}_{ts}\mathbf{x}_{t+k}])\right|\notag \\
&=&E\left|\frac{1}{L_NT}\sum_{k=1}^{T-1} a_{k/m} \sum_{t=1}^{T-k} (\mathbf{x}_{t}^\top \mathbf{1}_N \mathbf{1}_N^\top\mathbf{x}_{t+k}-E[\mathbf{x}_{t}^\top \mathbf{1}_N \mathbf{1}_N^\top\mathbf{x}_{t+k}])\right|\notag \\
&=&\sum_{k=1}^{T-1} a_{k/m} E\left|\frac{1}{L_NT} \sum_{t=1}^{T-k} (\mathbf{x}_{t}^\top \mathbf{1}_N \mathbf{1}_N^\top\mathbf{x}_{t+k}-E[\mathbf{x}_{t}^\top \mathbf{1}_N \mathbf{1}_N^\top\mathbf{x}_{t+k}])\right|\notag \\
&=&O_P(1)\sum_{k=1}^{T-1} a_{k/m} \cdot \frac{1}{\sqrt{T}} = O_P\left(\frac{m}{\sqrt{T}}\right),
\end{eqnarray}
where the third equality follows from a development similar to that for \eqref{var_Rate1}, and the last equality follows from Assumption \ref{AS2}. Sequentially, we consider
\begin{eqnarray}\label{var_Rate5}
&&\frac{1}{L_NT}\sum_{k=1}^{T-1}\sum_{t=1}^{T-k} E[\mathbf{x}_{t}^\top \mathbf{W}_{ts}\mathbf{x}_{t+k}] = \frac{1}{L_NT}\sum_{k=1}^{T-1}\sum_{t=1}^{T-k} E[\mathbf{x}_{t}^\top \mathbf{1}_N \mathbf{1}_N^\top\mathbf{x}_{t+k}] \notag \\
&&+\frac{1}{L_NT}\sum_{k=1}^{T-1}\sum_{t=1}^{T-k} (a_{k/m}-1) E[\mathbf{x}_{t}^\top \mathbf{1}_N \mathbf{1}_N^\top\mathbf{x}_{t+k}].
\end{eqnarray}

Note that
\begin{eqnarray}\label{var_Rate6}
&&\left|\frac{1}{L_NT}\sum_{k=1}^{T-1}\sum_{t=1}^{T-k} (a_{k/m}-1) E[\mathbf{x}_{t}^\top \mathbf{1}_N \mathbf{1}_N^\top\mathbf{x}_{t+k}]  \right|\notag \\
&\le & \frac{1}{L_N}\sum_{k=1}^{d_T}  |a_{k/m}-1|\cdot |E[\mathbf{x}_{t}^\top \mathbf{1}_N \mathbf{1}_N^\top\mathbf{x}_{t+k}] |+O(1)\frac{1}{L_N}\sum_{k=d_T+1}^{\infty}   |E[\mathbf{x}_{t}^\top \mathbf{1}_N \mathbf{1}_N^\top\mathbf{x}_{t+k}] |\notag \\
&\le & \frac{1}{L_N} |E[\mathbf{x}_{t}^\top \mathbf{1}_N \mathbf{1}_N^\top\mathbf{x}_{t+1}] |\sum_{k=1}^{d_T}  \frac{k}{m} +O(1)\frac{1}{L_N}\sum_{k=d_T+1}^{\infty}   |E[\mathbf{x}_{1}^\top \mathbf{1}_N \mathbf{1}_N^\top\mathbf{x}_{1+k}] | \rightarrow 0
\end{eqnarray}
by letting $d_T^2/m+1/d_T\to 0$, where the second inequality follows from Assumption \ref{AS2}.

By \eqref{var_Rate4}, \eqref{var_Rate5} and \eqref{var_Rate6}, we have proved \eqref{var_Rate2}.
Up to this point, we have verified both \eqref{var_Rate1} and \eqref{var_Rate2}. Thus, the result in  Lemma \ref{LM.A7}.1 follows.

Using arguments similar to those employed in the proof of Lemma \ref{LM.A7}.1, we can also establish the results in Lemmas \ref{LM.A7}.2-3. The detailed steps are omitted here to avoid unnecessary repetition.
\end{proof}

\medskip

\begin{proof}[Proof of Lemma \ref{LM.A8}]
\item  

For the first result, it suffices to show 

\begin{eqnarray}\label{rate.xtxs_i}
\Big|\frac{1}{T}\sum_{s,t=1}^T\mathbf{e}_i^\top \mathbf{x}_t\mathbf{x}_s^\top\mathbf{e}_i  a\Big(\frac{t-s}{m}\Big)-\sigma_{p,i}^2\Big|=o_P(1),
\end{eqnarray}
for each $i$.

In a manner analogous to \eqref{rate.xx}, we can employ similar arguments as those used in \eqref{def.xx} along with Lemma \ref{LM.A6} and Assumption \ref{AS3} to derive the following result:

\begin{eqnarray}\label{rate.xtxt_i}
&&\Big|\frac{1}{T}\sum_{t=1}^T(\mathbf{e}_i^\top \mathbf{x}_t\mathbf{x}_t^\top\mathbf{e}_i-E[\mathbf{e}_i^\top \mathbf{x}_t\mathbf{x}_t^\top\mathbf{e}_i])\Big|\notag \\
&=&\Big|\frac{1}{T}\sum_{t=1}^T\mathbf{e}_i^\top \mathbf{B}(L)(\pmb{\varepsilon}_t\pmb{\varepsilon}_t^\top-E[\pmb{\varepsilon}_t\pmb{\varepsilon}_t^\top])\mathbf{B}(L)^\top\mathbf{e}_i\Big|
=O_P\left(\frac{1}{\sqrt{T}}\right).
\end{eqnarray} 
Recall that we have defined $a_{k/m}= a\left(\frac{k}{m}\right)$. We can write

\begin{eqnarray}\label{rate.xtxtk_i}
&&\left|\frac{1}{T}\sum_{k=1}^{T-1}\sum_{t=1}^{T-k} a_{k/m}\big(\mathbf{e}_i^\top \mathbf{x}_t\mathbf{x}_{t+k}^\top\mathbf{e}_i-E[\mathbf{e}_i^\top \mathbf{x}_t\mathbf{x}_{t+k}^\top\mathbf{e}_i] \big)\right|\notag \\
&=&\sum_{k=1}^{T-1} a_{k/m} \left|\frac{1}{T} \sum_{t=1}^{T-k} \big(\mathbf{e}_i^\top \mathbf{x}_t\mathbf{x}_{t+k}^\top\mathbf{e}_i-E[\mathbf{e}_i^\top \mathbf{x}_t\mathbf{x}_{t+k}^\top\mathbf{e}_i] \big)\right|\notag \\
&=&O_P\left(\sum_{k=1}^{T-1} a_{k/m} \cdot \frac{1}{\sqrt{T}}\right) = O_P\left(\frac{m}{\sqrt{T}}\right).
\end{eqnarray}

Additionally, using arguments similar to those in \eqref{var_Rate6}, we can further show that 

\begin{eqnarray}\label{rate.xtxtk_i2}
\left|\frac{1}{T}\sum_{k=1}^{T-1}\sum_{t=1}^{T-k}(a_{k/m}-1)E[\mathbf{e}_i^\top \mathbf{x}_t\mathbf{x}_{t+k}^\top\mathbf{e}_i] \right|=o_P(1).
\end{eqnarray}

In light of the results in \eqref{rate.xtxt_i}, \eqref{rate.xtxtk_i} and \eqref{rate.xtxtk_i2}, we are ready to establish \eqref{rate.xtxs_i} and conclude that Lemma \ref{LM.A8}.1 holds. Using similar arguments, we can obtain the desired results in Lemmas \ref{LM.A8}.2-3. The detailed steps are omitted here.
\end{proof}

\end{document}